\setlist{nolistsep}
\algnewcommand\algorithmicinput{\textbf{INPUT:}}
\algnewcommand\INPUT{\item[\algorithmicinput]}
\algnewcommand\algorithmicoutput{\textbf{OUTPUT:}}
\algnewcommand\OUTPUT{\item[\algorithmicoutput]}
\newtheorem{theorem}{Theorem}
\newtheorem{lemma}[theorem]{Lemma}
\newtheorem{remark}{Remark}
\newtheorem{assumption}{Assumption}
\newtheorem{definition}{Definition}
\newtheorem{subdefinition}{Definition}[definition]
\newcommand{\neutralize}[1]{\expandafter\let\csname c@#1\endcsname\count@}
\newenvironment{thmbis}[1]
  {
   \neutralize{assumption}\phantomsection
   \begin{assumption} }
  {\end{assumption} }
\DeclareMathOperator*{\argsup}{arg\,sup}
\DeclareMathOperator*{\argmax}{arg\,max}
\def\E{\mathbb{E}}
\def\P{\mathbb{P}}
\def\R{\mathbb{R}}
\title{Multilayer random dot product graphs: \\ Estimation and online change point detection}
\author[1]{Fan Wang}
\author[2]{Wanshan Li}
\author[3]{Oscar Hernan Madrid Padilla}
\author[1]{Yi Yu}
\author[4]{Alessandro Rinaldo}
\affil[1]{Department of Statistics, University of Warwick}
\affil[2]{Wizard Quant Investment}
\affil[3]{Department of Statistics, University of California, Los Angeles}
\affil[4]{Department of Statistics \& Data Science, University of Texas at Austin} 
\begin{document}

\maketitle

\begin{abstract}
We study the multilayer random dot product graph (MRDPG) model, an extension of the random dot product graph to multilayer networks. To estimate the edge probabilities, we deploy a tensor-based methodology and demonstrate its superiority over existing approaches. Moving to dynamic MRDPGs, we formulate and analyse an online change point detection framework. At every time point, we observe a realization from an MRDPG. Across layers, we assume fixed shared common node sets and latent positions but allow for different connectivity matrices. We propose efficient tensor algorithms under both fixed and random latent position cases to minimize the detection delay while controlling false alarms. Notably, in the random latent position case, we devise a novel nonparametric change point detection algorithm based on density kernel estimation that is applicable to a wide range of scenarios, including stochastic block models as special cases.  Our theoretical findings are supported by extensive numerical experiments, with the code available online\footnote{\url{https://github.com/MountLee/MRDPG}}.
\end{abstract}

\section{Introduction}

Statistical network analysis is concerned with studying relationships among a set of individuals.  A network is usually encoded by sets of nodes and edges, representing the units in a population of interest and their pairwise relationships, respectively.  Multilayer networks are generalizations of networks that enable the simultaneous representation of multiple types of interactions, or {\it layers}, among a fixed set of nodes. The popularity of multilayer networks has increased over the last few years in response to the soaring demand from a multitude of application areas.  For example, an air transportation network can be represented by a multilayer network, with nodes as airports, edges as flights, and layers as the flights by the same airline companies \cite[e.g.][]{cardillo2013emergence}.  Such formulation helps to identify the commonly-shared structures across different airlines and provides insights for policy-makers for future regulations or planning.

When complex, multilayer network data are observed sequentially across time, dynamic multilayer network modelling is in order.  For example, \cite{barigozzi2010multinetwork} modelled a data set consisting of 12 international trade networks, one for each year from 1992 to 2003, among 162 countries and 97 commodities, into dynamic multilayer networks.  By treating the international trade of each commodity as a layer, the data can be modelled as a time series of multilayer networks.  
  
In general, to study the dynamics of a data sequence, it is crucial to model how the sequence changes over time.  To that effect, in the recent literature on dynamic multilayer network research, different assumptions have been considered.  Some researchers assumed the time-invariance \cite[e.g][]{chen2022time, bazzi2016community}, some allowed for slow evolution along the time \cite[e.g.][]{guo2020nonparametric, cai2021slowly}, and some considered the case where abrupt changes are present \cite[e.g.][]{zhang2019detecting, rossi2013modularity}. The stationarity and slow-varying assumptions might be reasonable over a short period of time, but empirical evidence suggests that abrupt changes do occur over time.  For instance, in the international trading networks, unexpected shocks such as the Covid-19 pandemic, the Russia-Ukraine War or the 2022 European drought may have a dramatic impact.

In this paper, we study online change point detection in dynamic multilayer networks with fixed or random latent positions.  To be specific, at each time point, we observe a realization of an $L$-layered multilayer network with nodes associated with latent and unobservable positions. When the latent positions are fixed, we assume that the weight matrices, which encode layer-specific information, may change over time.  In the more complex scenario of random latent positions, each multilayer network is characterized by an $L$-dimensional distribution, and we assume that the underlying distributions may change over time.  Our goal is to develop efficient algorithms to detect such changes in a {\it sequential fashion}, i.e.~controlling the false alarms while, at the same time, minimizing the delay in detecting the change.  An important step in our analysis is to estimate a single multilayer network using a tensor-based method, which we are going to show outperforms the state-of-the-art multilayer network estimation methods.

\subsection{Related literature}\label{sec-related-lit-intro}
In our analysis, we investigate three distinct but inter-related themes: (1) multilayer networks, (2) sequential change point analysis, and (3) spectral methods for tensor estimation.  We discuss the relevant literature in these three areas.  More detailed comparisons between our results and those found in the current literature will be provided in the sequel.

\medskip
\noindent \textbf{Statistical networks analysis.}  The Erd\H{o}s--R\'enyi random graphs \citep{erdos1960evolution} are arguably the simplest network models, where all edges are considered as independent and identically distributed Bernoulli random variables.  To incorporate more heterogeneity, a line of attack has been launched including stochastic block models \citep{holland1983stochastic}, degree-corrected block models \citep{karrer2011stochastic}, mixed membership models \citep{airoldi2008mixed}, and random dot product models \citep{young2007random}, to name but a few.  As for this paper, the closest related is the random dot product model, where each node is associated with a latent position - random or fixed.  The connectivity between a pair of nodes is consecutively determined by the distance between their latent positions.  To some extent, the random dot product graph model can be seen as a generalization covering all the aforementioned  models.  We refer the reader to \cite{athreya2017statistical} for a comprehensive survey on random dot product graph models.

Growing demand for more sophisticated approaches to statistical network analysis in many application areas has recently sparked considerable interest in multilayer network modelling.  Below, we mainly focus on models with random latent positions for the nodes.  Assigning each node to a latent position, \cite{zhang2020flexible} assumed that the logit of connectivity is a linear function of the latent positions and of layer-specific parameters.  Likelihood-based methods are then deployed for estimation.  \cite{arroyo2021inference} introduced the common subspace independent edge (COSIE) model, which assumes shared latent positions across layers. For estimation, they proposed a spectral embedding algorithm on an augmented matrix. \cite{jones2020multilayer} introduced the multilayer random dot product graph (MRDPG) model, an extension of the random dot product graph to multilayer networks and exploited spectral methods to uncover the latent positions.  To handle the multilayer nature, they assumed that the latent positions of head nodes are shared across layers and packed all the layer-specific latent positions into the tail nodes.  They develop spectral embedding type methods on an augmented matrix, by combining the adjacency matrices of all layers into a large matrix.  \cite{macdonald2022latent} proposed a model that assumes that the latent positions of the nodes not only possess shared structures across layers but also layer-specific positions. They develop a penalized loss function, along with a proximal gradient descent algorithm, to recover the multilayer connectivity.  Instead of estimating the connectivity, \cite{jing2021community} and \cite{lei2023computational} focused on community detection in multilayer networks.  Different from the aforementioned papers, \cite{jing2021community} formulated a multilayer network as a tensor and summoned tensor estimation methods.  \cite{lei2023computational} investigated both the computational and statistical network density thresholds in multilayer stochastic block models.  It should be noted that all the papers discussed in this paragraph focus on a single multilayer network.

\medskip
\noindent \textbf{Change point analysis.}  Having its debut more than seven decades ago, change point analysis is recently going through a resurgence.  Depending on the data availability, change point analysis can be roughly categorized as online and offline change point analysis, where the full data set is being collected during or before the analysis is conducted.  Online change point analysis is the main focus of this paper.  In the online change point analysis framework, recently univariate mean change \cite[e.g.][]{yu2020note}, multivariate distributional change \cite[e.g.][]{berrett2021locally}, inhomogeneous network change \cite[e.g.][]{yu2021optimal}, dynamic treatment change \cite[e.g.][]{padilla2022dynamic}, high-dimensional Gaussian mean change \cite[e.g.][]{chen2020high} and more general changes \cite[e.g.][]{chu2018sequential, he2018sequential}, among many others have been studied. 

Regarding change point analysis for dynamic networks, in addition to the aforementioned, \cite{wang2021optimal} considered an offline change point analysis problem in dynamic inhomogeneous Bernoulli networks. \cite{li2022network} studied the dynamic network change point analysis problems under local differential privacy.  \cite{padilla2019change} derived an offline change point detection mechanism for dynamic random dot product models, with temporal dependence.  

\medskip
\noindent \textbf{Tensor estimation.}  The emergence of increasingly complex data structures has led to a growing interest in statistical analysis of tensors.  Different methods have been considered, including maximum likelihood estimation- \cite[e.g.][]{richard2014statistical} and singular value decomposition-based methods \cite[e.g.][]{de2000best, de2000multilinear}.  In particular, under the low-rank assumption and with Gaussian noise, \cite{zhang2018tensor} showed that the higher-order orthogonal iteration method, a singular value decomposition type method, is minimax rate-optimal in the task of recovering a low-rank tensor.  In the case of certain sparsity structures, \cite{zhang2019optimal} proposed the sparse tensor alternating thresholding for singular value decomposition methods.  More recently, \cite{han2022optimal} developed the non-convex low-rank tensor optimization method accompanied by a proximal gradient descent algorithm for generalized tensors such as sub-Gaussian and Poisson tensors. They have demonstrated that their proposed method achieves the minimax rate of convergence up to a logarithmic factor.  We refer readers to \cite{auddy2024tensor} - a very recent survey paper, for more discussions.

\subsection{List of contributions}

We investigate the multilayer random dot product graph model introduced in \cite{jones2020multilayer}, where the deterministic or random latent positions are shared across all layers and the layer-specific information is encoded in a weight matrix. This is shown in \Cref{sec-def-mrdpg}.  We then move from static to dynamic MRDPGs and focus on online change point detection. When the latent positions are fixed, we assume that the change, if there is any, occurs in the weight matrix sequence; with random latent positions, the connectivity is instead characterized by an $L$-dimensional distribution, and the change affects the distribution itself.  Our goal is to detect the potential change in an online fashion, while controlling false alarms and minimizing the detection delay.  For the case of fixed and random latent positions, this generic setting allows the model parameters to vary as functions of the change point location, shown in Sections~\ref{sec-dynamic-f} and \ref{sec-dy-non-framework}, respectively.  The inclusion of both fixed and random latent positions, along with discussions on the difference between these two cases, yields a comprehensive picture not only for multilayer network online change point detection problems, but also for random dot product graph problems in general. 

Secondly, to maximize model flexibility, in the case of random latent positions, we do not require that the distribution of the latent positions admits a Lebesgue density, a general setting allowing, in particular, for distributions supported on manifolds. This is achieved by characterizing the distributional change using the expectation of a kernel density estimator \cite[e.g.][]{fasy2014confidence}.  Such quantity can be viewed as a generalization of density functions to distributions supported on sets of Lebesgue measure zero, and has been recently studied in the nonparametric estimation literature \cite[e.g.][]{kim2019uniform}.  The need for such general approach to modelling latent positions, which to the best of our knowledge is new to the network literature can be intuitively explained as follows.  Stochastic block models with random membership can be seen as a special case of the random dot product graph model with random latent positions whose distributions do not possess Lebesgue densities. Further detailed discussions on this topic can be found in \Cref{sec-dy-non-framework}.

Thirdly, in terms of theoretical contributions, with proper controls on false alarms, we provide a sharp upper bound on the detection delay.  In particular, compared to the existing work with theoretical guarantees \cite[e.g.][]{padilla2019change}, we improve the estimation error and the signal-to-noise ratio by a factor of sample-size and $\sqrt{\mbox{sample-size}}$, respectively.  More details will be discussed in \Cref{dynamic-theory}.  At a high level, in dynamic network analysis, when spectral methods are used, eigen-decomposition is typically conducted on every adjacency matrix due to the difficulty in studying the eigenspace of a linear combination of low-rank matrices.   This leads to high computational complexity and, more importantly, a sub-optimal rate in terms of the number of time points.  In this paper, by making appropriate model assumptions and employing more refined analysis, we are able to analyse the averages of matrices/tensors, resulting in improved rates.

To the best of our knowledge, the use of tensor methods for estimating a single multilayer network is a novel contribution. Note that, \citealt{jing2021community} exploited tensor tools but focused on community detection rather than estimation, though we do provide comparisons with their estimation performances in this paper.  We provide in-depth discussions on the rationale of using tensor-based methods and thorough comparisons with other non-tensor-based methods in Sections~\ref{sec_estimation_single} and \ref{sec-comp-single}.

Lastly, the theoretical findings of our paper are accompanied by extensive numerical results in \Cref{sec-numerical}, where we evaluate the performance of our proposed methods against other state-of-the-art algorithms.

\subsection{Notation and organization of the paper}

For any $a, b \in \R$, let $a \vee b = \max\{a, b\}$, $a \wedge b = \min \{a, b\}$ and $(a)_{+} = \max\{a, 0\}$.  For any positive integer $p$, let $[p] = \{1, \ldots, p\}$. For any vector $\mathbf{v}$, let $\|\mathbf{v}\|_2$ and $\| \mathbf{v}\|_{\infty}$ be its $\ell_2$- and entrywise supremum norms.

For any matrix $A \in \R^{p_1 \times p_2}$, let $A_i$ and $A^j$ be the $i$th column and $j$th row of $A$.  Let $A_{i, j}$ denote the entry of $A$ in the $i$th row and $j$th column and let $\sigma_1(A) \geq \dots \geq  \sigma_{p_1 \wedge p_2} (A)\geq 0$ be its singular values. Let $\|A\|$ and $\|A\|_\mathrm{F}$  be the spectral and Frobenius norms of $A$, respectively.  

For any order-3 tensor $\mathbf{P} \in \R^{p_1 \times p_2 \times p_3}$, let 
$\|\mathbf{P}  \|_{\mathrm{F}}^2 = \sum_{i = 1}^{p_1} \sum_{j=2}^{p_2} \sum_{l = 3}^{p_3}   \mathbf{P} _{i, j, l }^2$ and 
for any $l \in [p_3]$, let $\mathbf{P} _{:, ;, l} \in \R^{p_1 \times p_2 }$ with $\left(  \mathbf{P} _{:, ;, l} \right)_{i, j} = \mathbf{P} _{i, j, l}$ for any $i \in [p_1]$ and $j \in [p_2]$.

The rest of the paper is organized as follows. \Cref{sec:background} provides a brief explanation of the tensor representation of multilayer networks and tensor algebra.  Formulation and estimation of single multilayer networks are collected in \Cref{sec-single-mrdpg}. The dynamic counterpart and online change point detection with fixed latent positions and random latent positions are discussed in Sections~\ref{sec-dynamic-f} and \ref{sec-dynamic}, respectively. In the main text, we focus on undirected edges and with directed edge cases that accommodate bipartite graphs collected in \Cref{sec-directed-edges}. Extensive numerical results are available in \Cref{sec-numerical}, with all the technical details deferred to Appendices.  

\subsection{Tensor representation of multilayer networks and tensor algebra} \label{sec:background}

Statistical network analysis usually defines a network as collections of nodes and edges, which can be formulated in an adjacency matrix, with each entry encoding the connection between a pair of nodes.  As for a multilayer network, we formally define its associated quantities below.

\begin{definition}[Multilayer network] \label{def-multi-layer-network}
A multilayer network $\mathcal{G}$ is written as $\mathcal{G} = (\mathcal{V}_1, \mathcal{V}_2, \mathcal{E}, \mathcal{L})$, where $\mathcal{V}_1 = [n_1]$ is the collection of head nodes, $\mathcal{V}_2 = [n_2]$ is the collection of tail nodes, $\mathcal{L} = [L]$ is the collection of layers and $\mathcal{E} \subset \{(i, j, l): i \in \mathcal{V}_1, j \in \mathcal{V}_2, l \in \mathcal{L}\}$ is the collection of edges.  We let $\mathbf{A} \in \mathbb{R}^{n_1 \times n_2 \times L}$ be the associated adjacency tensor, i.e.
\[
    \mathbf{A} _{i, j, l} = 
    \begin{cases}
        1, & (i, j, l) \in \mathcal{E},\\
        0, & \mbox{otherwise},
    \end{cases} \quad i \in \mathcal{V}_1, \, j \in \mathcal{V}_2, \, l \in \mathcal{L}.
\]
\end{definition}

\Cref{def-multi-layer-network} allows for directed multilayer networks, where each edge is associated with a head and a tail node.  We allow for the two node sets to be different, but the undirected case can be seen as a special case of \Cref{def-multi-layer-network}. For the rest of the main text, we focus on the undirected edge cases, with the directed edge cases that accommodate bipartite graphs in \Cref{sec-directed-edges}.  The methodology and theory of the directed case is a simplification of that of the undirected case. Considering the multilayer nature, we encode a multilayer network in an order-3 adjacency tensor.  

The formulation in terms of tensors is further explained in \Cref{sec-def-mrdpg}.  In the rest of this subsection, we introduce some necessary and standard tensor algebra for completeness.  For more detailed tensor algebra, we refer readers to \cite{kolda2009tensor}.  For any order-3 tensor $\mathbf{M} \in \mathbb{R}^{p_1 \times p_2 \times p_3}$ and $s \in [3]$, letting $p_4 = p_1$ and $p_5 = p_2$, following the notation in \cite{zhang2018tensor}, we define the mode-$s$ matricization of $\mathbf{M}$ as $\mathcal{M}_s(\mathbf{M}) \in \mathbb{R}^{p_{s} \times (p_{s+1} p_{s+2})}$, with entries
\begin{equation} \label{eq-matricisation}
    \mathcal{M}_s(\mathbf{M} )_{i_1, (i_2 - 1)p_{s+2} +i_3 } = \begin{cases}
        \mathbf{M} _{i_1, i_2, i_3}, & s = 1,\\
        \mathbf{M} _{i_3, i_1, i_2}, & s = 2,\\
        \mathbf{M} _{i_2, i_3, i_1}, & s = 3,
    \end{cases} \quad i_1 \in [p_s], \, i_2 \in [p_{s+1}], i_3 \in [p_{s+2}].
\end{equation}

An important quantity in tensor algebra is the Tucker ranks.  For any tensor $\mathbf{M} \in \mathbb{R}^{p_1 \times p_2 \times p_3}$, its Tucker ranks $(r_1, r_2, r_3)$ are defined as $r_s = \mathrm{rank}(\mathcal{M}_s(\mathbf{M}))$, $s \in [3]$.  We further let $\sigma_* (\mathbf{M} ) = \min \{\sigma_{r_s} (\mathcal{M}_s(\mathbf{M} )), \, s \in [3]\}$.  For any $s \in [3]$ and $U_s \in \mathbb{R}^{q_s \times p_s}$, the marginal multiplication operator is defined as
\begin{equation}\label{eq-times-def}
    \mathbf{M}  \times_s U_s = \begin{cases}
        \big(\sum_{k = 1}^{p_{1}} \mathbf{M} _{k, j, l} U_{i, k} \big)_{i \in [q_1], \, j \in [p_2], \, l \in [p_3]} \in \mathbb{R}^{q_1 \times p_2 \times p_3}, & s = 1,\\
        \big(\sum_{k = 1}^{p_{2}} \mathbf{M} _{i, k, l} U_{j, k} \big)_{i \in [p_1], \, j \in [q_2], \, l \in [p_3]} \in \R^{p_1 \times q_2  \times p_3}, & s = 2,\\ 
        \big(\sum_{k = 1}^{p_{3}} \mathbf{M} _{i, j, k} U_{l, k} \big)_{i \in [p_1], \, j \in [p_2], \, l \in [q_3]} \in \R^{p_1 \times p_2  \times q_3}, & s = 3.
    \end{cases}
\end{equation}

\section{Multilayer random dot product graphs}\label{sec-single-mrdpg}

In this section, we consider the estimation of a single multilayer network, under multilayer random dot product graph (MRDPG) models \citep{jones2020multilayer}.  For completeness, in \Cref{sec-def-mrdpg}, we detail two sub-model assumptions.  A tensor-based estimation procedure is proposed in \Cref{sec_estimation_single}, with its theoretical guarantees and comparisons with existing literature in Sections~\ref{subsec-theoretical-guarantee-Alg1} and \ref{sec-comp-single}, respectively.   

\subsection{Model description}\label{sec-def-mrdpg}

Recall that a random dot product graph assumes that each node is associated with a latent position $X \in \mathcal{X} \subset \mathbb{R}^d$.  Conditioning on the latent positions, the connectivity between a pair of nodes is assumed to be a function of the inner product of their latent positions.  We extend such latent structure from a single network to a multilayer network.  To formalize this extension, consistent with the notation introduced in \Cref{def-multi-layer-network}, we present two detailed model assumptions in Definitions~\ref{umrdpg-f} and \ref{umrdpg}, depending on the randomness of the latent positions.  

We highlight that, for single multilayer networks, different model assumptions hardly make any difference in terms of estimation and theoretical guarantees.  This is especially seen in \Cref{random_theorem}.  We include them here, since when it comes to dynamic MRDPGs discussed in Sections~\ref{sec-dynamic-f} and \ref{sec-dynamic}, the difference will be prominent.

\begin{definition}[Multilayer random dot product graphs with fixed latent positions, $\mbox{MRDPG}$-Fix]\label{umrdpg-f}
 Given a sequence of deterministic matrices $\{W_{(l)}\}_{l = 1}^L \subset \R^{d \times d}$, let $\{X_i\}_{i=1}^{n} \subset \mathbb{R}^d$ be fixed vectors satisfying for all $i, j \in [n]$  and $l \in [L]$, $X_i^{\top} W_{(l)} X_j \in [0, 1]$.
 
Tensor $\mathbf{A}  \in \R^{n \times n  \times L}$ is an adjacency tensor of an undirected multilayer random dot product graph with fixed latent positions $\{X_i\}_{i=1}^{n}$ if the distribution satisfies that
\begin{align*}
    \mathbb{P} \{\mathbf{A} \} & = \prod_{l = 1}^{L} \prod_{ 1 \leq i\leq j \leq n}\mathbf{P}_{i, j, l}^{\mathbf{A}_{i, j, l}} (1- \mathbf{P}_{i, j, l})^{1- \mathbf{A} _{i, j, l}} \\
    & = \prod_{l = 1}^{L} \prod_{ 1 \leq i \leq j \leq n} \big(X_i^{\top} W_{(l)} X_j\big)^{\mathbf{A} _{i, j, l}} \big(1 - X_i^{\top} W_{(l)} X_j\big)^{1 - \mathbf{A} _{i, j, l}}.
\end{align*}
We write $\mathbf{A} \sim \mathrm{MRDPG}\mbox{-}\mathrm{Fix}(\{X_i\}_{i=1}^{n}, \{W_{(l)}\}_{l\in [L]})$ and write $\mathbf{P} = (\mathbf{P}_{i, j, l}) \in \mathbb{R}^{n \times n \times L}$ as the probability tensor. 
\end{definition}

Moving on from fixed to random latent positions, one usually imposes some additional modelling to enforce that the connectivity probability is between zero and one. This definition is presented in \Cref{ipd}.

\begin{definition}[Inner product distribution pair] \label{ipd}
For positive integers $d$ and $L$, given a sequence of deterministic matrices $\{W_{(l)}\}_{l = 1}^L \subset \R^{d \times d}$, let $\mathcal{F}$ and $\widetilde{\mathcal{F}}$ be two distributions with support $\mathcal{X}, \widetilde{\mathcal{X}} \subset \R^d$, respectively.  We say that $(\mathcal{F}, \widetilde{\mathcal{F}})$ is an inner product distribution pair with weight matrices $\{W_{(l)}\}_{l\in [L]}$, if for all $l \in [L]$, $\mathbf{x} \in \mathcal{X}$ and $\tilde{\mathbf{x}} \in \widetilde{\mathcal{X}}$, it holds that $\mathbf{x}^{\top} W_{(l)} \tilde{\mathbf{x}} \in [0, 1]$. 
\end{definition}

Compared to the assumptions imposed in the existing literature \cite[e.g.][]{athreya2017statistical, padilla2019change}, \Cref{ipd} is more general on two fronts.  First, since we allow the edges to be directed in \Cref{sec-directed-edges}, \Cref{ipd} concerns a pair of distributions rather than a single one, as used in the undirected networks.  Second, due to the multilayer nature, additional weight matrices come to play.  The latter will be clearer after \Cref{umrdpg}.

With latent positions drawn from distributions defined in \Cref{ipd}, the undirected multilayer random dot product graphs with random latent positions are defined in \Cref{umrdpg}.

\begin{definition}[Multilayer random dot product graphs with random latent positions, MRDPG]\label{umrdpg}
Let $(\mathcal{F}, \mathcal{F})$ be an inner product distribution pair with weight matrices $\{W_{(l)}\}_{l\in [L]}$ satisfying \Cref{ipd}.  Let $\{X_i\}_{i=1}^{n} \subset \mathbb{R}^d$ be mutually independent random vectors generated from~$\mathcal{F}$.  

Tensor $\mathbf{A} \in \R^{n \times n  \times L}$ is an adjacency tensor of an undirected multilayer random dot product graph with random latent positions $\{X_i\}_{i=1}^{n}$ if the conditional distribution satisfies that
\begin{align*}
    \mathbb{P} \big\{\mathbf{A} \vert \{X_i\}_{i = 1}^{n}\big\} & = \prod_{l = 1}^{L} \prod_{ 1 \leq i \leq j \leq n} \mathbf{P}_{i, j, l}^{\mathbf{A}_{i, j, l}} (1- \mathbf{P}_{i, j, l})^{1- \mathbf{A}_{i, j, l}} \\
    & = \prod_{l = 1}^{L} \prod_{ 1 \leq i \leq j \leq n} \big(X_i^{\top} W_{(l)} X_j\big)^{\mathbf{A}_{i, j, l}} \big(1 - X_i^{\top} W_{(l)} X_j\big)^{1 - \mathbf{A}_{i, j, l}}.
\end{align*}
We write $\mathbf{A} \sim \mathrm{MRDPG}(\mathcal{F}, \{W_{(l)}\}_{l\in [L]}, n)$ and write $\mathbf{P} = (\mathbf{P}_{i, j, l}) \in \mathbb{R}^{n \times n \times L}$ as the probability tensor. 
\end{definition}

Using the marginal multiplication operator defined in \eqref{eq-times-def}, the probability tensor $\mathbf{P}$ introduced in Definitions~\ref{umrdpg-f} and \ref{umrdpg} can be written in a Tensor representation, i.e.
\begin{equation}\label{Y_tucker_rep}
    \mathbf{P} = \mathbf{S} \times_1 X \times_2 X \times_3 Q,
\end{equation}
where $X = (X_1, \ldots, X_{n})^{\top} \in \mathbb{R}^{n \times d}$, $\mathbf{S} \in \R^{d \times d \times d^2}$ with 
\begin{equation}\label{matrix_S}
    \mathbf{S}_{i, j, l} = \mathbbm{1}\{l = (i-1)d + j\}
\end{equation}  
and 
\begin{equation}\label{matrix Q}
    Q = \begin{pmatrix}
        (W_{(1)})^1 & \cdots & (W_{(1)})^d \\
        \vdots & \vdots & \vdots \\ 
        (W_{(L)})^1 & \cdots & (W_{(L)})^d 
   \end{pmatrix} \in \R^{L \times d^2}.
\end{equation}
For $l \in [L]$, the $l$th layer of the probability tensor is $\mathbf{P}_{:,:,l} = XW_{(l)}X^{\top}$.  

Note that the flexibility of Definitions~\ref{umrdpg-f} and \ref{umrdpg} lies in at least two aspects: (\romannumeral 1)  both fixed and random latent positions are allowed, especially noting that we do not enforce $\mathcal{F}$ to possess density functions; and (\romannumeral 2) with the introduction of layer-specific connectivity matrices, we in fact allow both layer-specific and layer-shared latent positions.  As for (\romannumeral 2), see for instance, when $L = 2$ and $d = 3$, let
\[
    W_{(1)} = \left(\begin{array}{ccc}
         1 & 0 & 0  \\
         0 & 0 & 0  \\
         0 & 0 & 1
    \end{array} \right) \quad \mbox{and} \quad W_{(2)} = \left(\begin{array}{ccc}
         0 & 0 & 0  \\
         0 & 1 & 0  \\
         0 & 0 & 1
    \end{array} \right).
\]
Then for the latent position vector $X_1 = (X_{1,1}, X_{1,2}, X_{1,3})^{\top} \in \mathbb{R}^3$, it holds that $X_1^{\top}W_{(1)}X_1 = X_{1,1}X_{1,1} + X_{1,3}X_{1,3}$ and $X_1^{\top}W_{(2)}X_1 = X_{1,2}X_{1,2} + X_{1,3}X_{1,3}$.  This means that the first two coordinates are layer-specific and the third coordinate is layer-shared.

\begin{remark}\label{msbm}
Since random dot product graphs can be seen as a generalization of the stochastic block model, the MRDPG model in \Cref{umrdpg} may also be seen as a generalization of a multilayer stochastic block model with $d$-communities.  To be specific, let the distributions $\mathcal{F}$ be discrete distributions, supported on the sets $\{\mathbf{e}^1, \ldots, \mathbf{e}^d\}$ - standard basis vectors of $\R^d$.  The latent position matrix $X$ can therefore be seen as the membership matrix and $\{ W_{(l)} \}_{l\in [L]}$ the connectivity matrices encoding the edge probabilities between communities.  This example amplifies the importance of allowing for distributions not admitting densities.
\end{remark}

\begin{remark}[Model comparisons with existing literature]
As mentioned in \Cref{sec-related-lit-intro}, there are a few existing papers modelling multilayer networks assuming latent positions. We elaborate on the model comparisons here.

\cite{zhang2020flexible} formulated the probability tensor $\mathbf{P} \in \R^{n \times n \times L}$ as  $\mathrm{logit}(\mathbf{P}_{i, j, l}) = \alpha^{(l)}_{i} + \alpha^{(l)}_{j} + ( X_i )^{\top}W_{(l)} X_j$ for each $i, j \in [n]$ and $l \in [L]$.  Compared with ours, they assumed the latent positions to be fixed and exploited the logit function to avoid the inner product distribution assumption.
\cite{arroyo2021inference} proposed the COSIE model for analysing multiple heterogeneous networks.  Translated to our notation, each layer in a COSIE model represents a network and the probability tensor $\mathbf{P} \in \R^{n \times n \times L}$ is expressed as $\mathbf{P}_{:, :, l} = X W_{(l)}X^{\top}$ for each $l \in [L]$, aligning with \Cref{umrdpg-f}. 
\cite{jones2020multilayer} introduced the MRDPG model and considered packing all the layer-specific latent positions into the tail nodes. Specifically, they assumed $\mathbf{P} \in \mathbb{R}^{ n_1\times n_2 \times L } $ and for each $l \in [L]$, $\mathbf{P}_{:,:,l} = XW_{(l)}(Y^{(l)})^{\top}$ with $X \in \R^{n_1 \times d}$ and $\{Y^{(l)}\}_{l=1}^{L} \subset \R^{n_2 \times d}$. This approach enabled them to expand a tensor into a large matrix by stacking all layers together. 
\cite{macdonald2022latent} separated the layer-shared and layer-specific contributions from the latent positions by writing the $l$th layer, $l \in [L]$, probability matrix as $\mathbf{P}_{:,:,l} = F + G_{(l)}$, where both $F$ and $G_{(l)}$ are low-rank matrices.  \cite{jing2021community} considered a multilayer stochastic block model and represented it in a tensor form.  To be specific, they let the probability tensor be $\mathbf{P} = \mathbf{B} \times_1 X \times_2 X \times_3 Q$, where $\mathbf{B}$ is the connectivity tensor, $X$ is the global membership matrix and $Q$ is the layer-specific membership matrix.  The matrices  $X$ and $Q$ are treated as fixed.  This can be seen as a special case of \Cref{umrdpg-f}. 

We acknowledge that, in terms of the model, we do share similarities with the existing literature, especially \cite{arroyo2021inference} and \cite{jones2020multilayer}. Fundamental differences in optimization and nonparametric methodology will be unveiled in the subsequent sections.
\end{remark}

\subsection{Estimation of multilayer random dot product graphs} \label{sec_estimation_single}

\begin{algorithm}[!t]
\caption{TH-PCA$(\mathbf{A}, (r_1, r_2, r_3))$} \label{thpca}
\begin{algorithmic}
    \INPUT{tensor $\mathbf{A} \in \mathbb{R}^{p_1 \times p_2 \times p_3}$, ranks $r_1, r_2, r_3 \in \mathbb{N}^*$.}
    \For{each mode $s \in [3]$}
        \State{Compute the mode-$s$ matricization $\mathcal{M}_s(\mathbf{A})$ of tensor $\mathbf{A}$. \Comment{\small{See \eqref{eq-matricisation} for $\mathcal{M}_s(\mathbf{A})$}}}
        \State{Apply H-PCA on the matrix $\mathcal{M}_s(\mathbf{A})\mathcal{M}_s(\mathbf{A})^\top$ with rank $r_s$ to obtain the matrix $\widehat{U}_s$:  
        \[
        \widehat{U}_s \leftarrow \mbox{H-PCA}( \mathcal{M}_s (\mathbf{A})  \mathcal{M}_s (\mathbf{A})^{\top}, r_s).  
        \]
        \Comment{See \small\Cref{hpca} for H-PCA}
        }
    \EndFor
    \State{Construct the tensor approximation $\widetilde{\mathbf{P}}$ by applying marginal multiplications: \Comment{See \eqref{eq-times-def} for $\times_s$}
    \[
    \widetilde{\mathbf{P}} \leftarrow \mathbf{A} \times_1 \widehat{U}_1 \widehat{U}_1^{\top} \times_2 \widehat{U}_2 \widehat{U}_2^{\top} \times_3 \widehat{U}_3  \widehat{U}_3^{\top}.
    \]}
    \For{each entry $\{i, j, l\} \in [p_1] \times [p_2] \times [p_3]$}
        \State{Truncate $\widetilde{\mathbf{P}}$ to obtain $\widehat{\mathbf{P}}$:
        \[
        \widehat{\mathbf{P}}_{i, j, l} \leftarrow  
        \begin{cases}
         1,  & 
         \widetilde{\mathbf{P}}_{i, j, l} > 1,  \\
      \widetilde{\mathbf{P}}_{i, j, l}, & \widetilde{\mathbf{P}}_{i, j, l} \in [0, 1], \\
       0, &\widetilde{\mathbf{P}}_{i, j, l} < 0. 
          \end{cases}
        \]}
    \EndFor
    \OUTPUT{$\widehat{\mathbf{P}} \in \mathbb{R}^{p_1 \times p_2 \times p_3}$}
\end{algorithmic}
\end{algorithm}

\begin{algorithm}[!t] 
\caption{H-PCA$(\Sigma, r)$} \label{hpca}
\begin{algorithmic}
    \INPUT{matrix $\Sigma \in \mathbb{R}^{n \times n}$, rank $r \in \mathbb{N}^*$.}
    \State{Initialize $\widehat{\Sigma}^{(0)}$ by copying $\Sigma$ and setting its diagonals to zero:  $\widehat{\Sigma}^{(0)} \leftarrow \Sigma$, $\mathrm{diag}\big(\widehat{\Sigma}^{(0)}\big) \leftarrow 0.$}
    \State{Set the maximum number of iterations $T$ as $T \leftarrow 5\log\{\sigma_{\min}(\Sigma)/n\}$.}
    \For{each iteration $t \in \{0\} \cup [T-1]$}
        \State{Perform singular value decomposition on $\widehat{\Sigma}^{(t)}$:
        \[
        \widehat{\Sigma}^{(t)} = \sum_{i=1}^n\sigma^{i, (t)} \mathbf{u}^{i,(t)} ( \mathbf{v}^{i, (t)})^{\top},  \quad  \sigma^{1, (t)} \geq \cdots \geq  \sigma^{n, (t)}  \geq 0. 
        \]}
        \State{Construct $\widetilde{\Sigma}^{(t)}$ as the best rank-$r$ approximation of $\widehat{\Sigma}^{(t)}$:
        \[
        \widetilde{\Sigma}^{(t)} \leftarrow \sum_{i = 1}^r \sigma^{i, (t)} \mathbf{u}^{i, (t)} \big(\mathbf{v}^{i, (t)}\big)^{\top}.
        \]}
        \State{Update $\widehat{\Sigma}^{(t+1)}$ by setting its diagonals to those of $\widetilde{\Sigma}^{(t)}$:
        \[
        \widehat{\Sigma}^{(t+1)} \leftarrow \widehat{\Sigma}^{(t)},  \quad \mathrm{diag}\big(\widehat{\Sigma}^{(t+1)}\big) \leftarrow \mathrm{diag}\big(\widetilde{\Sigma}^{(t)}\big).
        \]}
    \EndFor
    \State{Let $\{\mathbf{u}^i\}_{i = 1}^r$ denote left singular vectors of $\widehat{\Sigma}^{(T)}$.}
    \OUTPUT{$U\leftarrow (\mathbf{u}^1, \ldots, \mathbf{u}^r) \in \mathbb{R}^{n \times r}$}
\end{algorithmic}
\end{algorithm}

Recall our close cousin random dot product graphs.  With the involvement of latent positions, one usually exploits the low-rank structure and deploys spectral methods to estimate the probability matrix.  Moving on to the MRDPGs defined in \Cref{sec-def-mrdpg}, we first note that for a tensor, its low-rank structure lies in the values of its Tucker ranks (see \Cref{sec:background}).  With the formulation~\eqref{Y_tucker_rep}, in this subsection, we summon the tensor heteroskedastic principal component analysis (TH-PCA) algorithm introduced in \cite{han2022optimal} to estimate the MRDPG.  For completeness, we include TH-PCA in \Cref{thpca}, with its subroutine heteroskedastic principle component analysis (H-PCA) algorithm in \Cref{hpca}, which is applied to low-rank matrices and is developed in \cite{zhang2018heteroskedastic}.  The core of TH-PCA as well as H-PCA is to iteratively conduct SVD on matrices - this can be heavily parallelized. 

 The entries of the adjacency tensor follow that $\mathbf{A}_{i, j, l} \vert  X \sim \mathrm{Bernoulli}(\mathbf{P}_{i, j, l})$, i.e.~conditional on~$X$, each entry follows a Bernoulli distribution, as detailed in \Cref{sec-def-mrdpg}.  The node- and layer-varying nature of the Bernoulli parameters $\{\mathbf{P}_{i, j, l}\}$ leads us to the choice of TH-PCA, which was put forward by \cite{han2022optimal} to handle the heterogeneity in tensor entries when estimating a low-rank tensor.  As pointed out by \cite{han2022optimal}, compared to the maximum likelihood estimation, the higher order singular value decomposition \citep{de2000multilinear} and the higher order orthogonal iteration \citep{de2000best} methods, TH-PCA is minimax rate-optimal estimating of~$\mathbf{P}$, with Guassianity assumptions on the noise, measured by the tensor Frobenius norm.  We inherit such superiority in the multilayer network estimation, demonstrated in \Cref{sec-comp-single}.

\Cref{thpca} opens with estimating a low-rank singular subspace of each mode of the matricization of $\mathbf{P}$, with a deployment of the sub-routine H-PCA detailed in \Cref{hpca}.  It then glues the low-rank subspaces back to a low-rank tensor, with a final truncation step.  The core of \Cref{thpca} and the secret ingredient in dealing with the heterogeneity is \Cref{hpca}.  It iteratively imputes the diagonal entries by the diagonals of its low-rank approximation.   The tuning parameter selection, especially the rank inputs, is deferred to \Cref{simu_denoising} for practical guidance.

\subsubsection[]{Theoretical guarantees for \Cref{thpca}}\label{subsec-theoretical-guarantee-Alg1}

In this subsection alone, we allow all model parameters, including the latent position dimension~$d$ and the number of layers $L$ to diverge as the size of the network $n$.  Before stating theoretical guarantees on \Cref{thpca} on adjacency tensors defined in \Cref{sec-def-mrdpg}, we collect some necessary assumptions below. 

\begin{assumption}[Fixed latent positions]\label{ass_X_Y_u_f}
Consider an $\mathrm{MRDPG}\mbox{-}\mathrm{Fix}(\{X_i\}_{i=1}^n, \{W_{(l)}\}_{l = 1}^L)$ defined in \Cref{umrdpg-f}. Let $X = (X_1, \ldots, X_{n})^{\top} \in \mathbb{R}^{n \times d}$.
$(a)$ Let a singular value decomposition of $X$ be $X = U_X D_X V_X^{\top}$, with $U_X \in \mathbb{R}^{n \times d}$, $D_X, V_X \in \mathbb{R}^{d \times d}$.  Assume that  $\mathrm{rank}(X) = d$, $\sigma_1(X)/\sigma_d(X) \leq C_{\sigma}$ and  $\sigma_{d}(X) \geq C_{\mathrm{gap}} \sqrt{n}$ with  absolute constants $C_{\mathrm{gap}}, C_{\sigma}>0$. $(b)$ For $Q \in \mathbb{R}^{L \times d^2}$ defined in \eqref{matrix Q}, let $m = \mathrm{rank}(Q)$ and a singular value decomposition of~$Q$ be $Q = U_Q D_Q V_Q^{\top}$, with $U_Q \in \mathbb{R}^{L \times m}$, $D_Q \in \mathbb{R}^{m \times m}$ and $V_Q \in \mathbb{R}^{d^2 \times m}$.  Assume that $ \sigma_1(Q)/\sigma_m(Q) \leq C_{\sigma}$ and  $\sigma_{m}(Q) \geq C_{\mathrm{gap}}$, where $C_{\mathrm{gap}}, C_{\sigma}$ are the same as those in $(a)$.
\end{assumption}

\begin{assumption}[Random latent positions]\label{ass_X_Y}
Consider an $\mathrm{MRDPG}(\mathcal{F}, \{W_{(l)}\}_{l = 1}^L, n)$ defined in \Cref{umrdpg}. Let $X_1 \sim \mathcal{F}$ be a sub-Gaussian random vector with $\Sigma_X = \mathbb{E}(X_1X_1^{\top}) \in \mathbb{R}^{d \times d}$.  Let $\mu_{X, 1} \geq \cdots \geq \mu_{X, d} > 0$ be the eigenvalues of $\Sigma_X$.   $(a)$ Assume that there exist absolute constants $C_{\mathrm{gap}}, C_{\sigma} > 0$, such that $\mu_{X, d} > C_{\mathrm{gap}}$ and $\mu_{X, 1}/\mu_{X, d}\leq C_{\sigma}$.  $(b)$ Further let $\theta_X = \E(X_1)$.  Assume that there exists an absolute constant $C_{\theta} > 0$ such that $\mu_{X, 1}/\| \theta_X\|^2 \leq C_{\theta}$.
    $(c)$ For $Q \in \mathbb{R}^{L \times d^2}$ defined in \eqref{matrix Q}, let $m = \mathrm{rank}(Q)$ and a singular value decomposition of $Q$ be $Q = U_Q D_Q V_Q^{\top}$, with $U_Q \in \mathbb{R}^{L \times m}$, $D_Q \in \mathbb{R}^{m \times m}$ and $V_Q \in \mathbb{R}^{d^2 \times m}$.  Assume that $ \sigma_1(Q)/\sigma_m(Q) \leq C_{\sigma}$ and  $\sigma_{m}(Q) \geq C_{\mathrm{gap}}$, where $C_{\mathrm{gap}}, C_{\sigma}$ are the same as those in $(a)$.
\end{assumption}

\Cref{ass_X_Y_u_f}$(a)$ assumes the full-rankness of $X$, with its smallest eigenvalue lower bounded by the order $\sqrt{n}$ and all the eigenvalues of the same order. This is essentially the same as imposing a constant lower bound on $\mu_{X, d}$ in \Cref{ass_X_Y}$(a)$. 
We note that $X$ encodes latent positions rather than any observed data. The full-rankness condition on $X$ is essentially a condition imposed on the knowledge of the intrinsic dimension $d$.  To be specific, assuming the full-rankness of $X$ is in fact assuming the input ranks $r_1$ and $r_2$ in \Cref{thpca} are no less than the intrinsic dimension $d$.  Further discussion on rank selection is provided at the end of \Cref{subsec-theoretical-guarantee-Alg1}. 

\Cref{ass_X_Y}$(a)$ and $(b)$ impose regularity conditions on the inner product distribution $\mathcal{F}$, especially, both requiring the full-rankness of $\Sigma_X$. \Cref{ass_X_Y}$(a)$ in addition assumes that the smallest eigenvalues are lower bounded by an absolute constant and that all the eigenvalues are of the same order.  This is a condition regularly seen in the RDPG literature when the latent positions are assumed to be random \cite[e.g.][]{athreya2017statistical}.  \Cref{ass_X_Y}$(b)$ essentially requires that the first and second moments of $\mathcal{F}$ exist and are of the same order.  Such a requirement is necessary for the nonparametric analysis conducted in our paper for change point detection.  More discussions are available in \Cref{sec-dynamic}.  We remark that for inner product distributions defined in \Cref{ipd}, a lower bound on the $\ell_2$-norms of mean vectors is reasonable.

Regarding the weight matrices,  both Assumptions~\ref{ass_X_Y_u_f}$(b)$ and \ref{ass_X_Y}$(c)$ impose a low-rank condition on $Q$, with its smallest eigenvalue lower bounded by a constant order and all the eigenvalues of the same order, with the matrix $Q$ being a collection of the weight matrices.  Note that, we assume the rank of $Q$ is $m \leq \min \{L, \, d^2\}$, but we do allow $m, L, d$ to be functions of the node sizes as for now. 

With \Cref{ass_X_Y_u_f} for the case of fixed latent positions and \Cref{ass_X_Y} for the cases of random latent positions in hand, we come in sight of theoretical guarantees for the output of TH-PCA.

\begin{theorem}\label{random_theorem}
Suppose that \Cref{ass_X_Y_u_f} holds with fixed latent positions, or \Cref{ass_X_Y}$(a)$ and $(c)$ hold with random latent positions.
Let $\widehat{\mathbf{P}}$  be the output of TH-PCA (\Cref{thpca}) with inputs
\begin{itemize}
    \item an adjacency tensor $\mathbf{A}$ satisfying either
        $(i)$ $\mathbf{A} \sim \mathrm{MRDPG}\mbox{-}\mathrm{Fix}( \{X_i \}_{i=1}^n,  \{W_{(l)}\}_{l = 1}^L)$ described in \Cref{umrdpg-f},
        $(ii)$ or $\mathbf{A} \sim \mathrm{MRDPG}(\mathcal{F},  \{W_{(l)}\}_{l = 1}^L, n)$ described in \Cref{umrdpg}, and
    \item Tucker ranks $(r_1, r_2, r_3) = (d, d, m)$, where $d$ and $m$ are defined in Assumptions~\ref{ass_X_Y_u_f} or \ref{ass_X_Y}.
\end{itemize}

Letting $\mathbf{P}$ be the probability tensor of the adjacency tensor $\mathbf{A}$, it then holds that 
\begin{align}\label{upper_bound_theorem_unified}
    \mathbb{P}\big\{\|\widehat{\mathbf{P}} - \mathbf{P}\|_{\mathrm{F}}^2 \leq  C(d^2m + nd + Lm)\big\} > 1 - C(n \vee L)^{-c},
\end{align}
where $C, c > 0$  are constants depending on the constants $C_{\mathrm{gap}}, C_{\sigma}>0 $, which are defined in Assumptions~\ref{ass_X_Y_u_f} or \ref{ass_X_Y}.
\end{theorem}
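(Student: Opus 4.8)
The plan is to reduce the multilayer random dot product graph estimation problem to an application of the general low-rank tensor estimation guarantee for TH-PCA established in \cite{han2022optimal}, and then to verify that the structural quantities appearing in that general bound — the Tucker ranks, the least singular values $\sigma_*(\mathbf{P})$, and a sub-exponential/Bernoulli noise condition — are all controlled under Assumptions~\ref{ass_X_Y_u_f} or \ref{ass_X_Y}. Concretely, from the representation $\mathbf{P} = \mathbf{S}\times_1 X\times_2 X\times_3 Q$ in \eqref{Y_tucker_rep}, I would first read off the Tucker ranks: $\mathcal{M}_1(\mathbf{P})$ and $\mathcal{M}_2(\mathbf{P})$ have column spaces contained in $\mathrm{col}(X)$, so $r_1,r_2 \le d$, with equality under the full-rankness of $X$ (or $\Sigma_X$); and $\mathcal{M}_3(\mathbf{P})$ has row space tied to $Q$, giving $r_3 = m$. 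This justifies the rank inputs $(d,d,m)$ and shows the ambient dimensions are $(p_1,p_2,p_3)=(n,n,L)$, so the ``degrees of freedom'' in the bound should read $d\cdot d + n\cdot d + L\cdot m$ up to the core-tensor term $r_1 r_2 r_3 = d^2 m$ — matching the claimed $d^2 m + nd + Lm$.

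Next I would verify the noise model and the signal-strength (singular-value gap) condition demanded by the TH-PCA theorem. The noise $\mathbf{A}-\mathbf{P}$ is, conditionally on $X$ in the random case, an array of independent centered Bernoulli variables, hence uniformly bounded and sub-Gaussian with variance proxy at most $1/4$; this is exactly the heteroskedastic-but-bounded regime TH-PCA was designed for. The more substantive verification is the lower bound $\sigma_*(\mathbf{P}) \gtrsim \sqrt{n}$ (or whatever scaling the general theorem needs relative to the dimensions): since $\mathcal{M}_1(\mathbf{P}) = X\,\mathcal{M}_1(\mathbf{S}\times_3 Q)(X\otimes I)^\top$ type expression, its smallest nonzero singular value factors through $\sigma_d(X)^2 \gtrsim n$ and $\sigma_m(Q)\gtrsim 1$, using Assumptions~\ref{ass_X_Y_u_f}(a),(b); and symmetrically for mode $2$, while mode $3$ gives $\sigma_m(\mathcal{M}_3(\mathbf{P})) \gtrsim \sigma_m(Q)\,\sigma_d(X)^2 \gtrsim n$. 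Chaining these via submultiplicativity of singular values under Tucker multiplication (together with the condition-number bounds $\sigma_1/\sigma_d \le C_\sigma$ to keep the incoherence/conditioning under control) yields $\sigma_*(\mathbf{P})$ of the required order, so the general theorem applies and produces the $\|\widetilde{\mathbf{P}}-\mathbf{P}\|_{\mathrm F}^2$ bound with the stated failure probability $(n\vee L)^{-c}$.

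For the random latent position case I would additionally need a concentration step: Assumption~\ref{ass_X_Y}(a),(c) is phrased in terms of $\Sigma_X = \mathbb{E}(X_1 X_1^\top)$, whereas the singular-value arguments above want to work with the realized matrix $X$. So I would invoke a matrix concentration inequality (sub-Gaussian rows, e.g. a covariance-estimation or matrix-Bernstein bound) to show that $\frac1n X^\top X$ is within a constant factor of $\Sigma_X$ with probability $1-(n)^{-c}$, hence $\sigma_d(X)^2 \asymp n\,\mu_{X,d} \gtrsim n$ and $\sigma_1(X)/\sigma_d(X) \le C_\sigma'$ on that event; then condition on $X$ and apply the fixed-position analysis. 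Finally, since the output $\widehat{\mathbf{P}}$ of \Cref{thpca} is obtained from $\widetilde{\mathbf{P}}$ by entrywise truncation to $[0,1]$ and $\mathbf{P}$ already lies in $[0,1]^{n\times n\times L}$, truncation is a contraction toward $\mathbf{P}$ entrywise, so $\|\widehat{\mathbf{P}}-\mathbf{P}\|_{\mathrm F} \le \|\widetilde{\mathbf{P}}-\mathbf{P}\|_{\mathrm F}$ and the bound transfers verbatim; a union bound over the (at most two) bad events keeps the probability at $1-C(n\vee L)^{-c}$.

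The main obstacle I anticipate is the bookkeeping in the singular-value chaining for $\sigma_*(\mathbf{P})$: one must correctly track how $\sigma_d(X)$, $\sigma_m(Q)$, and the structure of the sparse binary core tensor $\mathbf{S}$ (whose own matricizations are essentially selection/reshaping operators and hence isometric on the relevant subspaces) combine through the three modes, and ensure the resulting rate exactly feeds the hypotheses of the \cite{han2022optimal} theorem so that the error term comes out as $d^2 m + nd + Lm$ rather than something larger; the random-design concentration is comparatively standard but does require the sub-Gaussian assumption on $X_1$ to be used at full strength.
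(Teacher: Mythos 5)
Your proposal is correct and follows essentially the same route as the paper: it invokes the TH-PCA guarantee of Han et al.\ (2022), verifies the Tucker ranks $(d,d,m)$ and the singular-value lower bounds of $\mathcal{M}_s(\mathbf{P})$ by chaining $\sigma_d(X)$, $\sigma_m(Q)$ and the (near-isometric) matricizations of the core tensor $\mathbf{S}$, handles the random-position case by matrix concentration for $n^{-1}X^\top X$ around $\Sigma_X$ followed by conditioning on the good event, and transfers the bound to $\widehat{\mathbf{P}}$ via the entrywise-truncation contraction. This matches the paper's four-step argument (including its Lemma on the singular values of $\mathbf{S}\times_1 X\times_2 X\times_3 Q$), so no further comparison is needed.
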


\Cref{random_theorem} is an application of Theorem 4.1 in \cite{han2022optimal}, where the probability tensor $\mathbf{P}$ and therefore its Tucker ranks are considered as fixed, in contrast to the randomness introduced by the random latent positions we consider here.  The proof of \Cref{random_theorem} is deferred to \Cref{proof-sec2} where efforts have been made to control the randomness in the eigenvalues and ranks of relevant quantities.  As pointed out in \cite{zhang2018tensor} and \cite{han2022optimal}, the high-probability upper bound in \eqref{upper_bound_theorem_unified} is in fact minimax rate-optimal when $\mathbf{P}$ is considered as fixed and $\mathbf{A} - \mathbf{P}$ have Gaussian entries.

We emphasize that in \Cref{random_theorem}, the number of layers $L$, the latent position dimension $d$ and the rank of the matrix $Q$, denoted by $m$, are all allowed to diverge as the network size $n$ diverges. In contrast, \cite{jones2020multilayer} require that $L$ is either fixed or grows at a rate much slower than that of $n$. 
Although \cite{macdonald2022latent}  allow $L$, $d$ and $m$ to grow, they additionally assumed that $dL^{-1} = o(1)$ and $2^ddm^2n^{1-2c} = o(1)$, for some constant $c \in (1/2, 1]$. 
These aforementioned results are hence stricter than those in \Cref{random_theorem}.  
\cite{arroyo2021inference} do allow for diverging parameters without additional constraints, but achieving a larger error control, as detailed in \Cref{sec-comp-single}.
Despite the positive results, \Cref{random_theorem} does rely on the right choice of the Tucker ranks $(d, d$, $m)$. Generally speaking, estimating the ranks of low-rank matrices or tensors with theoretical justification still remains a challenging task. Using values of $d$ and $m$ larger than necessary will inflate the error bound. On the other hand, underestimating $d$ and $m$ is potentially more consequential, as it means missing out on important eigenspaces,  structural properties of the underlying distribution. For this reason, in our numerical experiments, we choose relatively large~$d$ and $m$ as inputs despite the increase in the computational cost. 

\subsection{Comparisons with the existing literature}\label{sec-comp-single}

We have introduced two models: an MRDPG with fixed latent positions in \Cref{umrdpg-f} and an MRDPG with random latent positions in \Cref{umrdpg}.  We proposed to exploit low-rank tensor estimation to estimate the probability tensor in \Cref{thpca} and provide an error bound in \Cref{random_theorem}. A generalization of these results to directed MRDPG with random positions is given in \Cref{sec-directed-edges}.  We have already compared our work with the existing multilayer network literature regarding modelling choices and theoretical guarantees.  In this subsection, we provide more justification for the use of tensors, and highlight the differences of our approach with current approaches in the literature. 

\cite{jing2021community} resorted to low-rank tensor estimation to study a multilayer network, and proposed the Tucker decomposition with integrated singular value decomposition transformation (TWIST) algorithm.  They focused on clustering and node membership estimation, rather than on recovering the underlying connectivity probability.  Because their goals are markedly different from ours, we do not provide direct theoretical comparisons, but only numerical ones.

\cite{jones2020multilayer} proposed the unfolded adjacency spectral embedding (UASE) algorithm, by stacking the adjacency matrices from all layers into a large adjacency matrix and applying spectral methods to recover a low-rank approximation.  Note that the theoretical results in \cite{jones2020multilayer} are on the recovery of the latent positions rather than the underlying connectivity probabilities. Taking inspiration from this approach, we also consider scaled adjacency spectral embedding (SASE) estimators \cite[e.g.][]{sussman2012consistent}.  In detail, the adjacency matrices of each layer are estimated separately and then aggregated into one probability tensor.  Using the universal singular value thresholding algorithm \citep{xu2018rates} as layer-level estimators, Theorem~1 in \cite{xu2018rates} implies that, with high probability, $\|\widehat{\mathbf{P}}^{\mathrm{SASE}} - \mathbf{P}\|_{\mathrm {F}}^2 \lesssim Lnd$, yielding a larger error bound compared to the one obtained in \Cref{random_theorem}.
 
\cite{arroyo2021inference} introduced the multiple adjacency spectral embedding (MASE) algorithm, an alternative method for applying spectral methods to an augmented matrix.
Adapting their notation to ours, Theorems 7 and 11 in \cite{arroyo2021inference} imply that, with high probability, $\|\widehat{\mathbf{P}}^{\mathrm{MASE}} - \mathbf{P} \|_{\mathrm{F}}^2  \lesssim Lnd$. This is a looser upper bound than the one in \Cref{random_theorem}.
Moving away from spectral methods, \cite{zhang2020flexible} considered a maximum likelihood estimator (MLE) for the logistic model they proposed.  They showed that, with high probability, $\|\widehat{\mathbf{P}}^{\mathrm{MLE}}- \mathbf{P}\|_{\mathrm {F}}^2 \lesssim  Ln + d^2(n +L)$.  For any $n \geq m$, \Cref{random_theorem} provides a sharper bound. 
\cite{zhang2020flexible} proposed a projected gradient descent algorithm to provide an approximate evaluation of the maximum likelihood estimators. 
\cite{macdonald2022latent} studied a similar model and proposed a convex optimization approach (COA) based on the nuclear norm penalty, without any tensor formulation.  The authors show that, with high probability, $\|\widehat{\mathbf{P}}^{\mathrm{COA}} - \mathbf{P} \|_{\mathrm{F}}^2  \lesssim Lnd$, which again is looser than \Cref{random_theorem}.

To conclude these comparisons, we provide a simulation study to numerically compare TH-PCA (\Cref{thpca}) with the aforementioned procedures (TWIST, UASE, SASE, MASE, MLE and COA), as well as two additional methodologies, namely higher-order singular value decomposition \cite[HOSVD,][]{de2000multilinear} and higher-order orthogonal iteration \cite[HOOI,][]{zhang2018tensor}.  The last two are popular low-rank tensor estimation methods, but do not take data heterogeneity into consideration.  We show the results in \Cref{Fig_simulation_denoising}; see \Cref{simu_denoising} for details about the simulation settings.  With varying numbers of layers and nodes, we can see clearly that TH-PCA outperforms all algorithms in almost all settings, with the only exception of COA \citep{macdonald2022latent} when the numbers of layers and nodes are relatively small.  

A few more remarks are in order. Firstly, except for the MLE and COA algorithms, all the other methods can be categorized as matrix-based (SASE, UASE and MASE) or tensor-based methods (TH-PCA, HOSVD, HOOI and TWIST). It is interesting to see that, except for MASE, tensor-based methods prominently outperform matrix-based methods.  Secondly, the estimation errors of the matrix-based methods, as well as MLE and COA, scale linearly in terms of $L$ and $n$, as we discussed.  Thirdly, tensor-based methods are robust with respect to the increasing number of layers.  We conjecture this desirable feature stems from the exploitation of the shared layer structures.  Last but not least, we see the superiority of TH-PCA, an algorithm taking the heterogeneity into consideration, over other low-rank tensor estimation methods in this specific scenario.

\begin{figure}[t] 
\centering 
\includegraphics[width=0.9 \textwidth]{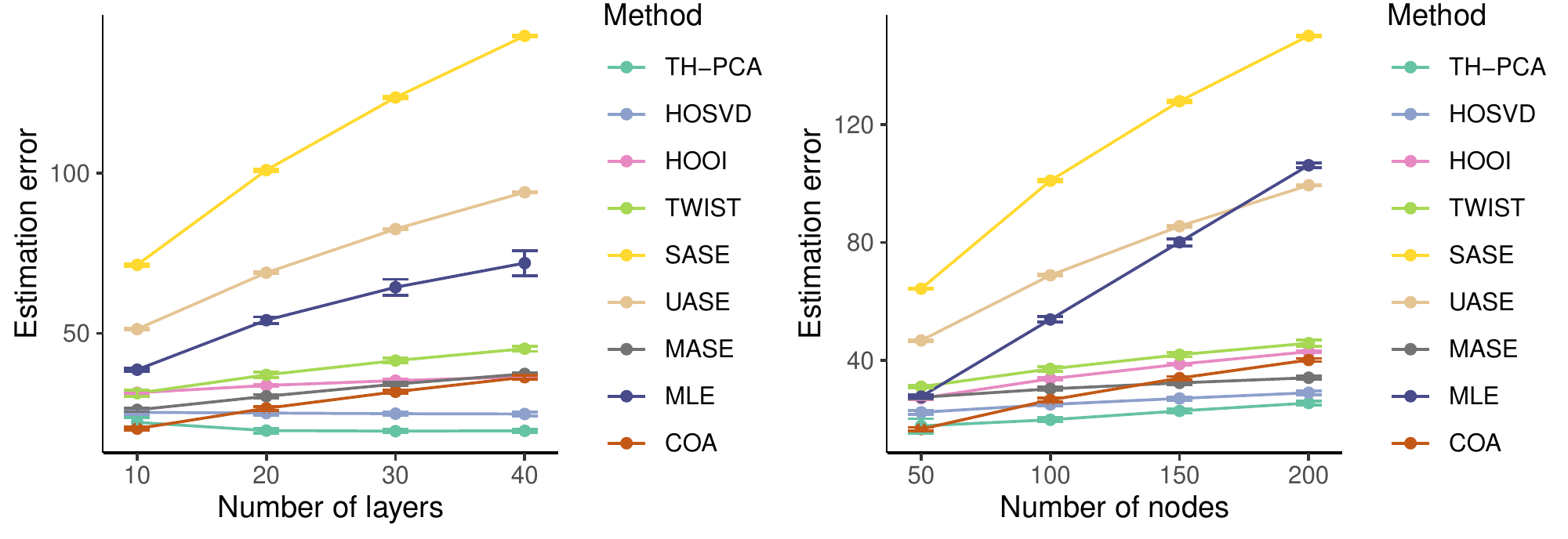}
\caption{Results of estimating a probability tensor with details in Scenario 1 in \Cref{simu_denoising}.  Left panel: $n  = 100$ and $L \in \left\{ 10, 20, 30, 40\right\}$. Right panel: $L = 20$ and $n \in \left\{ 50, 100, 150, 200\right\}$. In both panels, $y$-axis values correspond to the values of estimation error $\| \widehat{\mathbf{P}} - \mathbf{P}\|_{\mathrm{F}}$ of different estimators~$\widehat{\mathbf{P}}$.  The values are in the form of mean and standard deviation over $100$ Monte Carlo trials.} \label{Fig_simulation_denoising}
\end{figure}

\section{Online change point analysis in dynamic multilayer random dot product graphs with fixed latent positions }\label{sec-dynamic-f}

We now consider dynamic (i.e.~time-dependent) MRDPGs and focus on the online change point detection problems.  We begin our analysis with the simplest setting, in which the latent positions are fixed and time-invariant. Thus, the change point is the result of a change in the weight matrix sequence.  This can be seen as a direct extension of the network online change point detection problems \cite[e.g.][]{yu2021optimal}.  

With \Cref{umrdpg-f} as the building block, we formalize the model in \Cref{def-umrdpg-f-dynamic} and the change point structure in Assumptions~\ref{ass_no_change_point_u_f} and \ref{ass_change_point_u_f}.

\begin{definition}[Dynamic multilayer random dot product graphs with fixed latent positions, dynamic MRDPG-Fix]\label{def-umrdpg-f-dynamic}
Let $\{X_i\}_{i=1}^{n} \subset \mathbb{R}^d$ be latent positions and $\{W_{(l)}(t)\}_{l\in[L], t\in \mathbb{N}^*} \subset \mathbb{R}^{d\times d}$ be a weight matrix sequence.  The sequence of mutually independent adjacency tensors  $\{\mathbf{A}(t)\}_{t \in \mathbb{N}^*}$ follow a dynamic undirected MRDPGs with fixed latent positions $\{X_i\}_{i=1}^{n}$ and weight matrix sequence $\{W_{(l)}(t)\}_{l\in[L], t\in \mathbb{N}^*}$, if 
\[
    \mathbf{A}(t) \sim \mathrm{MRDPG}\mbox{-}\mathrm{Fix}( \{X_i\}_{i=1}^{n}, \{W_{(l)}(t)\}_{l \in [L]}), \quad t \in \mathbb{N}^*,
\]
as defined in \Cref{umrdpg-f}.
\end{definition}

\begin{assumption}[No change point]\label{ass_no_change_point_u_f}
Let $\{\mathbf{A}(t)\}_{t \in \mathbb{N}^*}$ be as in \Cref{def-umrdpg-f-dynamic}.  Assume that, for any pair of positive integers $t$ and $s$,
$
 \{W_{(l)}(t)\}_{l = 1}^L =  \{W_{(l)}(s)\}_{l = 1}^L.
$   
\end{assumption}     

\begin{assumption}[One change point] \label{ass_change_point_u_f}
Let $\{\mathbf{A}(t)\}_{t \in \mathbb{N}^*}$ be defined in \Cref{def-umrdpg-f-dynamic}.  Assume that there exists an integer $\Delta \geq 1$ such that 
$\{W_{(l)}(t)\}_{l = 1}^L \neq  \{W_{(l)}(t+1)\}_{l = 1}^L$ if and only if $t = \Delta$.
Let the jump size be $\kappa = \| \mathbf{P}(\Delta) - \mathbf{P}(\Delta+1)\|_{\mathrm {F}} > 0$.
\end{assumption}             

With the change point framework detailed in \Cref{def-umrdpg-f-dynamic}, Assumptions~\ref{ass_no_change_point_u_f} and \ref{ass_change_point_u_f}, we are now ready to describe the online change point detection task and the algorithms we propose to tackle it.  

At a high level, our goal is to detect change points with a control on the false alarm and with a minimum detection delay.  This is mathematically formalized in different ways in the existing literature.  We direct interested readers to \cite{yu2020note} for discussions on the relationship between different criteria.  In this paper, we aim to find a change point estimator $\widehat{\Delta}$ such that (i) for a prescribed false alarm tolerance $\alpha \in (0, 1)$,
\[
    \mathbb{P}_{\infty}\{\widehat{\Delta} < \infty\} \leq \alpha \quad \mbox{and} \quad \sup_{\Delta =1, 2, \ldots} \mathbb{P}_{\Delta}\{\widehat{\Delta} < \Delta \} \leq \alpha,
\]
where $\mathbb{P}_{\infty}$ and $\mathbb{P}_{\Delta}$ indicate the probabilities under the distributions described in Assumptions~\ref{ass_no_change_point_u_f} and~\ref{ass_change_point_u_f}, respectively; and (ii) $(\widehat{\Delta} - \Delta)_+$ possesses a sharp upper bound with high probability, under \Cref{ass_change_point_u_f}.  In the sequel, we will refer to $(\widehat{\Delta} - \Delta)_+$ as the {\it detection delay.}

We propose an online change point detection procedure in \Cref{online_hpca}, which is commonly used in the change point detection literature \cite[e.g.][]{yu2020note, berrett2021locally}.  The main tool of this algorithm is the scan statistic $\widehat{D}_{\cdot, \cdot}$  defined next in \Cref{def-cusum-f}.  With a sequence of pre-specified, time-dependent, threshold values $\{\tau_{s, t} \}_{1 \leq s <t}$, a change point is detected as soon as the scan statistic exceeds its corresponding threshold value. 

\begin{algorithm}[!t] 
\caption{Dynamic MRDPG online change point detection} \label{online_hpca}
\begin{algorithmic}
\INPUT{tensor sequence $\{\mathbf{A}(t)\}_{t \in \mathbb{N}^*} \subset \mathbb{R}^{n \times n \times L}$,  threshold sequence $\{\tau_{s, t}\}_{1 \leq  s < t} \subset \mathbb{R}$, tolerance level $\alpha \in (0, 1)$.}
\State{Initialize time $t \leftarrow  1$ and the detection flag $\mathrm{FLAG} \leftarrow 0$.}
\While{$\mbox{FLAG} = 0$} 
        \State{Increment time $t  \leftarrow  t+1$.}
        \State{Compute scan statistics $\{\widehat{D}_{s, t}\}_{s \in [t-1]}$. \Comment{See \Cref{def-cusum-f}  or \Cref{def-cusum}}}
       \State{Evaluate change point detection $\mbox{FLAG}\leftarrow 1 -\prod_{s = 1}^{t-1} \big\{\widehat{D}_{s, t} \leq \tau_{s, t} \big\}$.}
\EndWhile
\OUTPUT{$t$.}
\end{algorithmic}
\end{algorithm}

\begin{definition}\label{def-cusum-f}
For any integer pair $(s, t)$, $ 1 \leq s < t$, let the scan statistics be $\widehat{D}_{s, t} = \|\widehat{\mathbf{P}}^{0, s} - \widehat{\mathbf{P}}^{s, t}\|_{\mathrm{F}}$, where $\widehat{\mathbf{P}}^{s, t} = \mathrm{TH}\mbox{-}\mathrm{PCA} \big((t-s)^{-1}\sum_{u=s+1}^t \mathbf{A}(u), (d, d, m)\big)$ (see \Cref{thpca}) and the rank parameters~$d$ and $m$ are introduced in \Cref{ass_X_Y_u_f}.  
\end{definition}

Following common heuristics developed in the change point analysis literature, we refer to the quantity 
\[
    \mbox{jump size} \times \sqrt{\mbox{pre-change sample size}}
\]
as the signal-to-noise ratio.  As discussed in the recent literature \citep{yu2020note, berrett2021locally, yu2021optimal}, the above quantity captures the fundamental difficulty of online change point detection problems.  In offline change point analysis problems the quantity $\mbox{jump size} \times \sqrt{\mbox{minimal spacing}}$ plays an analogous role \citep{padilla2019optimal, wang2020univariate, wang2021optimal}.

\begin{assumption}[Signal-to-noise ratio condition]\label{snr_ass_change_point_f} 
Under \Cref{ass_X_Y_u_f}, for any $\alpha  \in (0, 1)$, assume that there exists a large enough absolute constant $C_{\mathrm {SNR}}>0$ such that
\[
\kappa \sqrt{\Delta} \geq C_{\mathrm {SNR}}  \sqrt{ (d^2m  + nd + Lm) \log(\Delta/\alpha)}.
\]
\end{assumption}

To understand \Cref{snr_ass_change_point_f}, we provide a few scenarios.
\begin{itemize}
    \item \textbf{Small pre-change sample size.} Consider $d, m, \alpha \asymp 1$. If $\kappa \asymp n\sqrt{L}$, indicating changes across all layer weight matrices, then \Cref{snr_ass_change_point_f} holds provided $\Delta \gtrsim \log(\Delta)/ (n^2 \wedge nL)$. This implies that, for a large network size $n$ or a large number of layers $L$, \Cref{snr_ass_change_point_f} is satisfied even with a small number of pre-change samples.  Alternatively, if $\kappa \asymp n$, suggesting a change in a single layer weight matrix, then \Cref{snr_ass_change_point_f} holds when $\Delta \gtrsim L \log(\Delta)/ (n^2 \wedge nL)$. This shows that for a large network size~$n$,  \Cref{snr_ass_change_point_f} holds even with a small number of pre-change samples.
    \item \textbf{Large rank tensors.} If $ \alpha \asymp 1$ and $\kappa \asymp n\sqrt{L}$, then \Cref{snr_ass_change_point_f} holds for $d \asymp n$ and $m \asymp L \wedge d^2$, provided that $\Delta \gtrsim \{1 \vee (L \wedge d^2)/n^2\} \log(\Delta)$, allowing for large ranks. 
    \item \textbf{Small jump size.}  Assuming $ d, m, \alpha, \Delta \asymp 1$, \Cref{snr_ass_change_point_f} holds provided that $\kappa \gtrsim \sqrt{n \vee L}$. This implies that the change in each entry can decrease to zero as the network size $n$ diverges.
\end{itemize}

\begin{theorem}\label{main_theorem_f}
Let $\widehat{\Delta}$ be the output of \Cref{online_hpca} applied to an adjacency tensor sequence $\{\mathbf{A}(t)\}_{t \in \mathbb{N}^*}$  following a dynamic $\mathrm{MRDPG}\mbox{-}\mathrm{Fix}$ model as defined in \Cref{def-umrdpg-f-dynamic} and satisfying \Cref{ass_X_Y_u_f}. Let $\alpha \in (0,1)$ and consider the threshold values
\begin{equation}\label{eq-tau-def-thm_f}
        \tau_{s, t} = C_{\tau}   \sqrt{(d^2m + nd +Lm) \log (t/\alpha) }  \left( \frac{1}{\sqrt{s}}+ \frac{1}{\sqrt{t- s}}\right), \quad 1 \leq s \leq t,
    \end{equation}
     where $C_{\tau} > 0$ is an absolute constant. Suppose that the scan statistic sequence used in \Cref{online_hpca} is specified as in \Cref{def-cusum-f}.
Then,
\begin{enumerate}[(i)]
    \item[$(i)$] under \Cref{ass_no_change_point_u_f}, it holds that $\mathbb{P}_{\infty} \{\widehat{\Delta} < \infty\} \leq \alpha$;
    \item[$(ii)$] under Assumptions~\ref{ass_change_point_u_f} and \ref{snr_ass_change_point_f}, it holds that, for an absolute constant  $ C_\epsilon > 0$,
    \[
        \mathbb{P}_{\Delta} \left\{\Delta < \widehat{\Delta} \leq \Delta + C_\epsilon  \frac{(d^2m + nd +Lm) \log (\Delta /\alpha )}{\kappa^2} \right\} \geq 1 - \alpha.
    \]
\end{enumerate}
\end{theorem}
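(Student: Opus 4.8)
The plan is to follow the standard template for online change point detection via a CUSUM-type scan statistic, adapted to the tensor setting, using \Cref{random_theorem} as the key deterministic-error input. The two conclusions correspond to (i) a false alarm bound under the null and (ii) a detection delay bound under the alternative; both rest on a uniform concentration statement for the scan statistic $\widehat{D}_{s,t} = \|\widehat{\mathbf{P}}^{0,s} - \widehat{\mathbf{P}}^{s,t}\|_{\mathrm F}$ around the corresponding population quantity.

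First I would establish the core estimation lemma: for fixed $s<t$, $(t-s)^{-1}\sum_{u=s+1}^t \mathbf{A}(u)$ conditionally has mean $\overline{\mathbf{P}}^{s,t}$ (the average of probability tensors over that window), and — provided no change point falls in $(s,t]$ so this average equals a single $\mathbf{P}$ with the Tucker-rank structure of \eqref{Y_tucker_rep} — \Cref{random_theorem} applied to this averaged tensor (whose effective Bernoulli-type noise has variance scaled by $1/(t-s)$) gives
\[
\|\widehat{\mathbf{P}}^{s,t} - \mathbf{P}\|_{\mathrm F}^2 \leq \frac{C(d^2 m + nd + Lm)\log(t/\alpha)}{t-s}
\]
on an event of probability at least $1 - C(n\vee L)^{-c}$, after inflating the log factor to absorb a union bound over all pairs $(s,t)$ with $t$ up to the (random) stopping time. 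One subtlety here is that $\mathbf{A}(u) - \mathbf{P}$ is not Gaussian; I would invoke the sub-Gaussian/Bernoulli version of the tensor estimation guarantee (the $\log$ factors in \eqref{eq-tau-def-thm_f} already signal that the bounded-entry version of Theorem 4.1 in \cite{han2022optimal} is what is being used), and I would record the averaged-tensor variance proxy $1/(t-s)$ explicitly. Call the intersection of all these events $\mathcal{E}$, with $\mathbb{P}(\mathcal{E}) \geq 1-\alpha$ after calibrating $C_\tau$.

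For part (i), on $\mathcal{E}$ and under \Cref{ass_no_change_point_u_f} every window has the same $\mathbf{P}$, so by the triangle inequality $\widehat{D}_{s,t} \leq \|\widehat{\mathbf{P}}^{0,s}-\mathbf{P}\|_{\mathrm F} + \|\widehat{\mathbf{P}}^{s,t}-\mathbf{P}\|_{\mathrm F} \leq \tau_{s,t}$ for all $1\leq s<t$, hence $\widehat{\Delta}=\infty$; this gives $\mathbb{P}_\infty\{\widehat{\Delta}<\infty\}\leq \alpha$. For part (ii), let $t = \Delta + C_\epsilon (d^2 m + nd + Lm)\log(\Delta/\alpha)/\kappa^2$ and take $s=\Delta$: the pre-change window $(0,\Delta]$ and the post-change window $(\Delta,t]$ are each homogeneous, with probability tensors $\mathbf{P}(\Delta)$ and $\mathbf{P}(\Delta+1)$ at Frobenius distance $\kappa$. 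A reverse triangle inequality gives $\widehat{D}_{\Delta,t} \geq \kappa - \|\widehat{\mathbf{P}}^{0,\Delta}-\mathbf{P}(\Delta)\|_{\mathrm F} - \|\widehat{\mathbf{P}}^{\Delta,t}-\mathbf{P}(\Delta+1)\|_{\mathrm F}$, and I would show that on $\mathcal{E}$ the two error terms plus $\tau_{\Delta,t}$ are together at most $\kappa/2$ (say), using \Cref{snr_ass_change_point_f} to dominate the $1/\sqrt{\Delta}$ pieces and the choice of $C_\epsilon$ to dominate the $1/\sqrt{t-\Delta}$ pieces; then $\widehat{D}_{\Delta,t} > \tau_{\Delta,t}$ forces detection by time $t$. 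Combining with the no-false-alarm-before-$\Delta$ guarantee (which reuses the part-(i) argument restricted to windows inside $(0,\Delta]$) yields the two-sided bound $\Delta < \widehat{\Delta} \leq t$ with probability at least $1-\alpha$.

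The main obstacle I anticipate is not any single inequality but the bookkeeping around the union bound over the unbounded time horizon combined with the randomness of the latent positions: \Cref{random_theorem} is stated for one tensor with a fixed (but random-position-induced) rank structure and a fixed failure probability $(n\vee L)^{-c}$, whereas here I need it simultaneously for $O(t^2)$ windows with $t$ itself random, all on a single good event whose complement has probability $\alpha$. Handling this requires (a) choosing the $\log(t/\alpha)$ scaling in $\tau_{s,t}$ so that $\sum_t t \cdot (\text{failure prob for horizon } t)$ telescopes to something $\leq \alpha$ — this is where a peeling/geometric-blocks argument over dyadic ranges of $t$ is cleanest — and (b) ensuring the event controlling the ranks and eigenvalues of $X$ (needed for \Cref{random_theorem} to even apply) is a single event, not one per window, which is immediate since the latent positions are shared and time-invariant under \Cref{def-umrdpg-f-dynamic}. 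The averaging step — arguing that the rate for $\widehat{\mathbf{P}}^{s,t}$ genuinely improves by the factor $t-s$ because the averaged adjacency tensor has proportionally smaller noise — is the other place where care is needed, since one must verify the averaged tensor still satisfies the hypotheses of the estimation theorem with the same constants.
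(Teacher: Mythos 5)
Your proposal is correct and follows essentially the same route as the paper: a time-uniform estimation bound for $\widehat{\mathbf{P}}^{s,t}$ over homogeneous windows (the paper's \Cref{lemma_psi_f}, proved exactly as you describe — a dyadic-blocks union bound over $t$ combined with a $\log(1/\delta)$-calibrated version of the averaged-tensor estimation error, whose $1/(t-s)$ variance improvement comes from adjusting the sub-Gaussian parameter in the proof of Theorem~4.1 of \cite{han2022optimal}), then the triangle inequality on this good event for part $(i)$ and the reverse triangle inequality at $(s,t)=(\Delta,\widetilde{\Delta})$ with $\widetilde{\Delta}-\Delta \asymp (d^2m+nd+Lm)\log(\Delta/\alpha)/\kappa^2$, using \Cref{snr_ass_change_point_f}, for part $(ii)$. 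The obstacles you flag (the unbounded-horizon union bound and verifying the averaged tensor meets the estimation theorem's hypotheses) are precisely the points the paper's proof addresses, and since the latent positions are deterministic under \Cref{ass_X_Y_u_f} no additional probabilistic control of $X$ is needed.
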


The proof of \Cref{main_theorem_f} can be found in \Cref{proof-sec3}.

When there is no change point, \Cref{main_theorem_f} shows that \Cref{online_hpca} will not raise any false alarm with probability at least $1 - \alpha$. On the other hand, if there is a change point, then  the detection delay is at most of order
\begin{align}
    \frac{(d^2m + nd +Lm) \log (\Delta /\alpha )}{\kappa^2}, \nonumber
\end{align}
with probability at least $1-\alpha$. This rate follows the usual form of detection delay in the online change point literature (or the localization error rate in the offline change point literature), namely
\begin{equation}\label{eq-detection-delay-intuitive_f}
    \mbox{detection delay} \asymp \frac{\mbox{noise variance level}}{\kappa^2}.
\end{equation}
The threshold sequence $\{\tau_{s, t}\}$ was the noise level.  Without any further constraint on the smallest interval size in the change point detection procedure, it can be seen in \eqref{eq-tau-def-thm_f} that 
\begin{equation}\label{eq-sup-tau_f}
    \sup_{1 \leq s < t} \tau_{s, t}\lesssim \sqrt{(d^2m + nd +Lm) \log (t/\alpha) }.
\end{equation}
In view of \eqref{eq-detection-delay-intuitive_f} and \eqref{eq-sup-tau_f}, it can be seen that the detection delay obtained in \Cref{main_theorem_f} follows a pattern consistent with existing literature, including the matrix version of this problem presented in Theorem 1 in \cite{yu2021optimal}.

\section{Online change point analysis in dynamic multilayer random dot product graphs with random latent positions}\label{sec-dynamic}

The problem studied in \Cref{sec-dynamic-f} assumes fixed latent positions that are invariant over time. In turn, this requires a known and time-invariant set of labelled vertices. As a result, this setting rules out missingness and inclusion of new nodes and is not applicable to unlabelled networks.   To relax such stringent assumptions, in this section we consider random latent positions.  Randomness in the latent positions results in  additional complexity, including the need to treat directed and undirected networks separately. 
For ease of presentation, the undirected case is considered below, and the analysis of the directed case is given in \Cref{sec-directed-edges}. 
The general framework is introduced in \Cref{sec-dy-non-framework}.  We describe our online change point detection algorithm and present its theoretical guarantees in Sections~\ref{sec-online-detect-alg} and \ref{dynamic-theory}, respectively.

\subsection{A dynamic and nonparametric framework}\label{sec-dy-non-framework}

Recall that in \Cref{umrdpg}, the connectivity is determined by the inner product distribution $\mathcal{F}$ and the weight matrices $\{W_{(l)}\}_{l = 1}^L$.  In the dynamic setting, we allow all these quantities to be time-varying.  This is detailed in \Cref{def-umrdpg-dynamic}.

\begin{definition}[Dynamic multilayer random dot product graphs with random latent positions, dynamic MRDPG]\label{def-umrdpg-dynamic}
Let $\{W_{(l)}(t)\}_{l \in [L], t\in \mathbb{N}^*} \subset \mathbb{R}^{d\times d}$ be the weight matrix sequence and $\{\mathcal{F}_t\}_{t \in \mathbb{N}^*}$ be the latent position distribution sequence.  We say that $\{\mathbf{A}(t)\}_{t \in \mathbb{N}^*}$ is a sequence of adjacency tensors of dynamic undirected MRDPGs with random latent positions, if they are mutually independent and
\[
    \mathbf{A}(t) \sim \mathrm{MRDPG}(\mathcal{F}_t, \{W_{(l)}(t)\}_{l \in [L]}, n), \quad t \in \mathbb{N}^*,
\]
as defined in \Cref{umrdpg}.
\end{definition}

\begin{remark}
    Note that all the discussions and analysis in the sequel allow $n$ to be time-dependent.  We assume the same node size across time here for notational simplicity.
\end{remark}

We entail the dynamics with a change point component.

\begin{assumption}[No change point]\label{ass_no_change_point}
Let $\{\mathbf{A}(t)\}_{t \in \mathbb{N}^*}$ be defined in \Cref{def-umrdpg-dynamic}.  Assume that, for any pair of positive integers $t$ and $s$, $\mathcal{F}_t = \mathcal{F}_{s} $ and $\{W_{(l)}(t)\}_{l = 1}^L =  \{W_{(l)}(s)\}_{l = 1}^L $. 
\end{assumption}     
    
\begin{thmbis}{ass_change_point}[One change point] \label{ass_change_point-prime}
Let $\{\mathbf{A}(t)\}_{t \in \mathbb{N}^*}$ be defined in \Cref{def-umrdpg-dynamic}.  For $t \in \mathbb{N}^*$, let
\begin{equation}\label{eq-def-L-t}
    H(t) = \Big(\{X_1(t)\}^{\top}W_{(1)}(t)X_2(t), \ldots, \{X_1(t)\}^{\top}W_{(L)}(t)X_2(t)\Big)^{\top}
\end{equation}
denote an $L$-dimensional random vector at time point $t$, with distribution denoted by $\mathcal{H}_t$. 
Assume that there exists an integer $\Delta \geq 1$ such that $\mathcal{H}_{\Delta} \neq \mathcal{H}_{\Delta + 1}$.
\end{thmbis}            

Across time, we assume that the number of layers stays unchanged, but allow for different sets of nodes or layers.  At each time point $t \in \mathbb{N}^*$, we observe an adjacency tensor generated by an MRDPG.  The distribution is determined by $(\mathcal{F}_t,  \{W_{(l)}(t)\}_{l = 1}^L)$.  When there is no change point, we assume that this pair stays the same.  When there does exist a change point, we allow that changes root in either $\mathcal{F}$ or $\{W_{(l)}\}_{l\in[L]}$.  This general setup covers the scenario studied in \cite{padilla2019change}, where they essentially assume that $\{W_{(l)}\}_{l = 1}^L$ is time invariant.  

Despite the generality of \Cref{ass_change_point-prime}, we would like to remark that, the sole observation we have is the adjacency tensor. When the distribution $\mathcal{H}_t$ of the quantity $H(t)$, as defined in \eqref{eq-def-L-t}, is fixed, the distribution of the adjacency tensor is also fixed. This is to say, the unconditional distributions of the adjacency tensor have an equivalent class, determined by \eqref{eq-def-L-t}.  For a special case, \cite{padilla2019change} discussed this phenomenon in Section 3.1 thereof.  Given this observation, we further enforce that $\mathcal{H}_t$ changes at the change point.

In \Cref{ass_change_point-prime}, we envelop the change in the distribution $\mathcal{H}$.  To maximize the flexibility and generality of this setup, we are reluctant to impose more parametric assumptions and resort to nonparametric quantities to further quantify the change size.  In the existing literature, to deal with the multivariate nonparametric distributional change, there are roughly two categories of treatments.  One is to directly assume the density exists and summon some function norm to quantify the differences between two density functions \cite[e.g.][]{padilla2019change, padilla2021optimal}.  The other is to sweep any arbitrary space under the rug of a univariate quantity \cite[e.g.][]{garreau2018consistent, song2022new}.  Despite the generality and flexibility of the latter treatment, the change sensitivity highly depends on the choice of the transformation.  

In our context, we do have a specific $L$-dimensional distribution, each coordinate of which satisfies the inner product distribution pair condition in \Cref{ipd}.  On top of that, we have the multilayer stochastic block model as a special case, where the support of $\mathcal{F}$  is a finite set.  This means that $\mathcal{F}$ does not necessarily possess densities.  In view of these two features, we strengthen \Cref{ass_change_point-prime} by quantifying the change.

\begin{assumption}[One change point]\label{ass_change_point}
Let $\{\mathbf{A}(t)\}_{t \in \mathbb{N}^*}$ be defined in \Cref{def-umrdpg-dynamic}.
Given a kernel function  $\mathcal{K}: \mathbb{R}^L \to \mathbb{R}$ and a bandwidth $h > 0$, for $t \in \mathbb{N}^*$, let $G_t,  \widetilde{G}_t: [0, 1]^L \to \R$ be given by
\[
    G_t(\cdot) = \mathbb{E}\bigg\{h^{-L} \mathcal{K}\bigg(\frac{\cdot - \mathbf{P}_{1, 2, :}(t)}{h}\bigg)\bigg\} \quad \mbox{and} \quad
    \widetilde{G}_t(\cdot) =h^{-L} \mathcal{K}\bigg(\frac{\cdot - \E\{\mathbf{P}_{1, 2, :}(t) \} }{h}\bigg).
\]
Assume that there exists an integer $\Delta \geq 1$ such that $G_{\Delta} \neq G_{\Delta + 1}$ and let the corresponding jump size be
\begin{equation}\label{eq-kappa-def}
    \kappa = \sup_{z \in [0, 1]^L} |G_{\Delta}(z) - G_{\Delta+1}(z)|.
\end{equation}
Assume that there exists an absolute constant $C_{\mathcal{K}} > 2$ such that
\begin{equation}\label{kappa_lower_bound}
    \kappa > C_{\mathcal{K}}\max \bigg\{  \sup_{z \in [0, 1]^L} \big| G_{\Delta}(z) -  \widetilde{G}_{\Delta}(z)\big|, \sup_{z \in [0, 1]^L} \big| G_{\Delta+1}(z) -  \widetilde{G}_{\Delta+1}(z)\big| \bigg\}.
\end{equation}
\end{assumption}

Comparing Assumptions~\ref{ass_change_point-prime} and \ref{ass_change_point}, we see that \Cref{ass_change_point-prime} only asserts that the change lies in the the sequence $\{\mathcal{H}_t\}_{t \in \mathbb{N}^*}$.  If we assume that the densities of $\{\mathcal{H}_t\}_{t \in \mathbb{N}^*}$ exist, then one can directly introduce the change to the density sequences and analyse the change using kernel density estimators, as done in \cite{padilla2021optimal}.  When we want to incorporate the case when the density does not exist, one can instead consider the \emph{point-wise expectation of kernel density estimator}, as denoted by $G_t(\cdot)$ in \Cref{ass_change_point}.

The point-wise expectation of kernel density estimator $G_t(\cdot)$ can be seen as a smoothed density of $\mathbf{P}_{1, 2, :}(t)$, when its density does exist.  As discussed in \cite{fasy2014confidence} and \cite{kim2019uniform}, the importance of $G_t(\cdot)$ goes beyond the case of nonparametric estimation of multivariate distribution when the density does not exist.  We leave interested readers to the aforementioned two papers for more explanations of this quantity.

With the introduction of $G_t(\cdot)$, we naturally quantify the jump size using the supreme norm of the difference between $G_{\Delta}(\cdot)$ and $G_{\Delta+1}(\cdot)$, as a smooth version of the density supreme norm distance \cite[e.g.][]{padilla2021optimal}.  It is worth mentioning that, due to the kernel estimator involved, the jump size itself is a function of the chosen kernel $\mathcal{K}(\cdot)$ and the bandwidth $h$.  This is indeed a caveat, but this is also an improvement over the existing literature.  For instance, \cite{garreau2018consistent} use kernels to transform any type of distributional change to a univariate mean change.  More model-specific features are therefore lost and the reliance on the choice of kernels is, therefore, more severe.  This caveat, however, may also be a blessing in disguise. 
As expected, we will adopt a kernel estimator later in \Cref{def-cusum} in \Cref{sec-online-detect-alg}, where one can select the same kernel and bandwidth as those in \Cref{ass_change_point}. The only constraint for the kernel function is to satisfy the uniform Lipschitz condition defined in \Cref{kernel_function_ass},  which is met by commonly-used kernels like triangular and Gaussian kernels.  The selection of the kernel function and bandwidth is further discussed in numerical experiments in \Cref{change_point_section}.
  
Lastly, we further require $\kappa$ to be lower bounded as in \eqref{kappa_lower_bound}.  This may come across as a bizarre condition, so we elaborate from a few angles.  Firstly, it can be seen as a condition imposed on the kernel function $\mathcal{K}$ and bandwidth $h$.  For instance, if $\mathcal{K}$ is linear or if we can take $h$ to zero in the limit, then $G_t(\cdot) = \widetilde{G}_t(\cdot)$.  Intuitively speaking, the smoothing imposed on the approximation should not be too large to mask the change size in terms of $G_t(\cdot)$.  Secondly, there is an alternative way to present \eqref{kappa_lower_bound} by deriving a high-probability upper bound on $\sup_{z \in [0, 1]^L} | G_t(z) -  \widetilde{G}_t(z)|$.  Without additional assumptions, this has an $O(1)$ upper bound, since essentially it consists of upper bounding one data point's deviation from its mean.  
Having said these, we would like to admit that this is albeit a strong condition.  For commonly-used kernel functions, \eqref{kappa_lower_bound} imposes that $\kappa \gtrsim 1$.  This to a large extent is due to the artefact of our proof.

\subsection{Online change point detection} \label{sec-online-detect-alg}

We will continue to use the online change point detection procedure proposed in \Cref{online_hpca} but will deploy a different, more complex scan statistic given in \Cref{def-cusum}.   
Towards the goal, we will first introduce new quantities. 

\addtocounter{definition}{1}  
\begin{subdefinition}\label{def-cusum-S}
For any $k \in [2n-1]$, let 
\begin{equation}\label{eq-mathcal-S-def_u}
\mathcal{S}_k = \argmax_{\substack{B \subseteq    \widetilde{\mathcal{S}}_k:~\forall  (i, j), (i', j')  \in B, \\i \neq j, \ i' \neq j', \ i \neq j', \ j \neq i'}} |B|,
\end{equation}
with 
\[
     \widetilde{\mathcal{S}}_k = \begin{cases}
        \big\{(i, i + k - 1): \, i \in [n + 1 - k]\big\}, & k \in [n],\\
        \big\{(i + k - n, i): \, i \in [2n - k]\big\}, & k \in [2n -1] \setminus [n].
    \end{cases}
\]
\end{subdefinition}

The collection of subsets $\{\mathcal{S}_k\}_{k=1}^{2n-1}$ involved in \Cref{def-cusum-S} consists of  $2n - 1$  disjoints subsets. This is for technical convenience to bring independence among samples.

\addtocounter{definition}{-1}
\begin{definition}\label{def-cusum}
For any $\alpha \in (0, 1)$, let $\widehat{D}_{s, t} = \max_{z \in \mathcal{Z}_{\alpha, t}}|\widehat{D}_{s, t}(z)|$, where
\begin{enumerate}[(a)]
    \item[$(a)$] $\mathcal{Z}_{\alpha, t} = \{z_v\}_{v = 1}^{M_{\alpha, t}} \sim \mathrm{Uniform}([0, 1]^L)$  is a collection of independent random vectors  with 
        \[
            M_{\alpha, t}  =   C_M L^{-L}(tn)^{L/2}   [ \log\{ (n \vee t )/\alpha\} ]^{-L/2+1},
        \]
    for $C_M > 0$ an absolute constant;
    \item[$(b)$]  for $z \in [0, 1]^L$, 
        \[
            \widehat{D}_{s, t}(z) =  \bigg( \sum_{k \in [2n -1]} \widetilde{S}_k  \bigg)^{-1} h^{-L} \sum_{k \in [2n -1]} \widetilde{S}_k \left\{\mathcal{K} \left(\frac{z - \widehat{\mathbf{P}}^{0, s}_{\mathcal{S}_k, :}}{h} \right) - \mathcal{K} \left(\frac{z - \widehat{\mathbf{P}}^{s, t}_{\mathcal{S}_k, :}}{h} \right) \right\},
        \]
        with $\{\mathcal{S}_k, \widetilde{S}_k = |\mathcal{S}_k|\}_{k \in [2n -1]}$ defined in \Cref{def-cusum-S}; and
    \item[$(c)$]  for any integer pair $(s, t)$, $ 0 \leq s < t$, with $\mathrm{HOSVD}$ detailed in \Cref{alg-hosvd},
        \[
            \widehat{\mathbf{P}}^{s, t} = \mathrm{HOSVD} \bigg((t-s)^{-1}\sum_{u=s+1}^t \mathbf{A}(u)\bigg) \quad \mbox{and} \quad \widehat{\mathbf{P}}^{s, t}_{\mathcal{S}_k, :} = \widetilde{S}_k^{-1} \sum_{(i, j) \in \mathcal{S}_k } \widehat{\mathbf{P}}^{s, t}_{i, j, :}.
        \]
\end{enumerate}
\end{definition}

\begin{remark}
In \Cref{def-cusum}, we assume that the random vectors in $\mathcal{Z}_{\alpha, t}$, $t \in \mathbb{N}^*$, are from a uniform distribution.  In fact, any probability distribution on $[0,1]^L $ admitting a Lebesgue density bounded away from zero would work. 
\end{remark}

The scan statistics defined in  \Cref{def-cusum} is based on the HOSVD algorithm (\Cref{alg-hosvd}), instead of \Cref{thpca}, as the tensor estimation subroutine. 
This choice was made because \Cref{thpca} is computationally more expensive and incurs an additional error in estimating the Tucker ranks in an online fashion.  To overcome these issues,  we instead turn to \Cref{alg-hosvd}, which is essentially the HOSVD algorithm studied in \cite{de2000multilinear} but with two additional, technical tweaks: we require all unit ranks as an input and add a final entrywise truncation. 

\begin{algorithm}[!t]
\caption{Higher-order singular value decomposition. HOSVD$(\mathbf{A})$} 
    \begin{algorithmic}
        \INPUT{tensor $\mathbf{A} \in \R^{p_1 \times p_2 \times p_3}$.}
        \For{each mode $v \in [3]$ }
        \State{Compute the mode-$v$ matricization $\mathcal{M}_v( \mathbf{A})$ of tensor $\mathbf{A}$. \Comment{See \eqref{eq-matricisation} for $\mathcal{M}_v( \mathbf{A})$}
         \State {Let $\widehat{U}_{v}$ denote the eigenvector of $\mathcal{M}_v( \mathbf{A}) \mathcal{M}_v ( \mathbf{A} )^{\top}$ corresponding to the largest eigenvalue.}
         }
        \EndFor
        \State{Construct the tensor approximation $\widetilde{\mathbf{P}}$ by applying marginal multiplications: \Comment{See \eqref{eq-times-def} for $\times_v$} 
        \[
        \widetilde{\mathbf{P}} \leftarrow   \mathbf{A} \times_1 \widehat{U}_{1} \widehat{U}_{1}^{\top}  \times_2  \widehat{U}_2 \widehat{U}_2^{\top}  \times_3 \widehat{U}_3 \widehat{U}_3^{\top}
        \]}
        \For{each entry $\{i, j, l\} \in [p_1] \times [p_2] \times [p_3]$}
        \State{
        Truncate $\widetilde{\mathbf{P}}$ to obtain $\widehat{\mathbf{P}}$:
        \[
        \widehat{\mathbf{P}}_{i, j, l} \leftarrow  
        \begin{cases}
         1,  & 
         \widetilde{\mathbf{P}}_{i, j, l} > 1,  \\
      \widetilde{\mathbf{P}}_{i, j, l}, & \widetilde{\mathbf{P}}_{i, j, l} \in [0, 1], \\
       0, &\widetilde{\mathbf{P}}_{i, j, l} < 0. 
          \end{cases}
        \]}
        \EndFor
        \OUTPUT{$\widehat{\mathbf{P}} \in \R^{p_1\times p_2 \times p_3}$.}
    \end{algorithmic}\label{alg-hosvd}
\end{algorithm}

Below, we further elaborate on \Cref{def-cusum}.  
\begin{enumerate}[(a)]
    \item[$(a)$] In the existing dynamic network literature, it has long been a known challenge that due to the hardness of analysing the eigenspace of low-rank matrices' linear combination \cite[e.g.][]{padilla2019change, athreya2022discovering}, one has to carry out low-rank approximation for every adjacency matrix separately.  This results in an extra time factor in the rate.  We, however, are able to utilize averages of adjacency tensors and use them as inputs in the tensor estimation.  This adjustment stems from the online nature of the problem as well as the use of kernels, see \Cref{dynamic-theory}.
    \item[$(b)$]  Allowing for random latent positions increases the model flexibility but also the difficulty of the problem.  This randomness, together with the fact that we do not enforce node-correspondence across time, implies that  the \emph{de facto} high-probability Tucker rank of the average adjacency tensor is $(1, 1, 1)$; see \Cref{dynamic-theory}.  This enables us to only use the leading eigenvectors and to avoid estimating unknown Tucker ranks in \Cref{alg-hosvd}.
    \item[$(c)$]  Another difference between Algorithms~\ref{thpca} and \ref{alg-hosvd} is that no iterative estimation is required in \Cref{alg-hosvd}.  As we discussed in \Cref{sec-single-mrdpg}, the iterations used in \Cref{hpca} are designed to cope with heterogeneity.  Because of the averaging effect in \Cref{def-cusum}, the level of heterogeneity is instead dampened and does not require any algorithmic modifications.
\end{enumerate}

\subsection{Theoretical analysis of the online change point detection procedure}\label{dynamic-theory}

To analyse the theoretical performance of our methodology, we impose two additional assumptions, one on the kernel function and the other on the signal-to-noise ratio.

\begin{assumption}[The kernel function] \label{kernel_function_ass}
The kernel function $\mathcal{K}: \mathbb{R}^L \to \mathbb{R}$, used in \Cref{ass_change_point} and  \Cref{def-cusum}, is assumed to be Lipschitz, i.e.~there exists an absolute constant $C_{\mathrm{Lip}} > 0$, such that for any $\mathbf{x}, \mathbf{y} \in \mathbb{R}^L$,
   $ 
        |\mathcal{K}(\mathbf{x}) - \mathcal{K}(\mathbf{y})| \leq C_{\mathrm{Lip}} \|\mathbf{x} - \mathbf{y}\|.
    $
\end{assumption}

The above uniform Lipschitz condition is a standard assumption in the nonparametric literature \citep{linton1996estimation,raskutti2014early,amini2021concentration} and holds for commonly-used kernels, e.g.~triangular and Gaussian kernels.

\begin{assumption}[Signal-to-noise ratio condition] \label{snr_ass_change_point}
Under \Cref{ass_X_Y}, for any $\alpha \in (0, 1)$, assume that there exists a large enough absolute constant $C_{\mathrm{SNR}} > 0$ such that
\[
    \kappa\sqrt{\Delta}  > C_{\mathrm{SNR}} h^{-L-1} \sqrt{\frac{(L^2 \vee d) \log \{(n \vee \Delta)/\alpha \}}{  n}}.
\]   
\end{assumption}

Throughout \Cref{sec-dynamic}, we assume $L$ to be fixed.   
Recall that $\kappa$ defined in \eqref{eq-kappa-def} is a function of the kernel function $\mathcal{K}$ and the bandwidth $h$.  One can recast \Cref{snr_ass_change_point} as requiring the quantity
\begin{equation}\label{def-snr-normalised}
    \kappa h^{L+1} \sqrt{\Delta}
\end{equation}
to be away from zero.  In this way, we regard $\kappa h^{L+1}$ as a normalized jump size, capturing all the distributional change after being filtered by the kernel function.

We compare \Cref{snr_ass_change_point} with its counterparts in \cite{padilla2019change} and \cite{padilla2021optimal}, where nonparametric distributional changes are also studied.  For simplicity, we consider the independence case $(\rho = 0)$ in \cite{padilla2019change}.  We note that, \Cref{snr_ass_change_point} is sharper than its counterpart in \cite{padilla2019change} by a factor of $\sqrt{\Delta}$. To be more explicit, \cite{padilla2019change} deals with an offline problem and the $\sqrt{T}$, with $T$ being the total number of time points, can be regarded to play the same role as $\sqrt{\Delta}$ in the online setting. This is a direct benefit of low-rank estimation of averages of adjacency tensors, in contrast to using the averages of low-rank estimators of adjacency matrices in \cite{padilla2019change}.  

\cite{padilla2021optimal} establishes the fundamental limits of localizing nonparametric distributional change, in a multivariate setting where the density exists and in an offline manner.    Adapting their notation to ours, the signal-to-noise ratio is of the form $\kappa^{(L+2)/2}\sqrt{\Delta}$ and the optimal bandwidth is $h \asymp \kappa$.  Using the same arguments, our quantity in \eqref{def-snr-normalised} is $\kappa^{L+2}\sqrt{\Delta}$.  In the more challenging regime where $\kappa < 1$, we essentially require a stronger condition than \cite{padilla2021optimal}.  This is partly due to the more challenging nature of the framework we assume - allowing for densities not to exist -  and partly due to the fact we assume Lipschitz instead of VC-class conditions \citep{gine1999laws, kim2019uniform} for the kernel function.  We conjecture a sharper rate may be obtained if we replace \Cref{kernel_function_ass} with a VC-class condition.

\begin{theorem}\label{main_theorem}
Let $\widehat{\Delta}$ be the output of \Cref{online_hpca} applied to an  adjacency tensor sequence $\{\mathbf{A}(t)\}_{t \in \mathbb{N}^*}$ following a dynamic MRDPG model as defined in \Cref{def-umrdpg-dynamic} and satisfying \Cref{ass_X_Y}. Let $\alpha \in (0,1)$ and consider the threshold values
\begin{equation}\label{eq-tau-def-thm-2}
        \tau_{s, t} = C_{\tau}h^{-L-1} \sqrt{\frac{ (L^2 \vee d) \log\{( n  \vee t) / \alpha\}}{n}}\left( \frac{1}{\sqrt{s}}+ \frac{1}{\sqrt{t- s}}\right), \quad 1 < s < t,    
    \end{equation}
    where $C_{\tau} > 0$ is an absolute constant. Suppose that the scan statistic sequence used in \Cref{online_hpca} is specified in \Cref{def-cusum} with the kernel function $\mathcal{K}(\cdot)$ satisfying \Cref{kernel_function_ass}.
    Then,
\begin{enumerate}[(i)]
    \item[$(i)$] under \Cref{ass_no_change_point}, it holds that $\mathbb{P}_{\infty} \{\widehat{\Delta} < \infty\} \leq \alpha$;
    \item[$(ii)$] under Assumptions~\ref{ass_change_point} and \ref{snr_ass_change_point}, it holds that, for an absolute constant  $ C_\epsilon > 0$,
    \[
        \mathbb{P}_{\Delta} \left\{\Delta < \widehat{\Delta} \leq \Delta + C_\epsilon  \frac{(L^2 \vee d) \log \{(n \vee \Delta) /\alpha \}}{\kappa^2 h^{2L+2} n} \right\} \geq 1 - \alpha.
    \]
\end{enumerate}
\end{theorem}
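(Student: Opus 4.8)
The plan is to follow the standard two-part template for online change point detection: first establish the no-false-alarm guarantee under $\mathbb{P}_\infty$ by showing the scan statistics $\widehat{D}_{s,t}$ stay below the thresholds $\tau_{s,t}$ uniformly with high probability, and then under a change show that the delay bound is reached because at the critical scale the signal exceeds the threshold. The core technical engine in both parts is a uniform concentration statement of the form
\[
    \big|\widehat{D}_{s,t}(z) - \mathbb{E}\widehat{D}_{s,t}(z)\big| \lesssim h^{-L-1}\sqrt{\frac{(L^2\vee d)\log\{(n\vee t)/\alpha\}}{(s\wedge(t-s))\,n}}
\]
holding simultaneously over all $z$ in the finite random grid $\mathcal{Z}_{\alpha,t}$, all $1\le s<t$, and all $t$, with the size $M_{\alpha,t}$ of the grid chosen (via its $L^{-L}(tn)^{L/2}$ scaling) so that a union bound over $\mathcal{Z}_{\alpha,t}$ is affordable. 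I would prove this in a few layers. First, for a fixed $z$ and fixed $(s,t)$, decompose $\widehat{D}_{s,t}(z)$ into a ``population'' part built from $\mathbf{P}^{0,s}, \mathbf{P}^{s,t}$ and fluctuation parts coming from (i) the Bernoulli sampling error $\mathbf{A}(u)-\mathbf{P}(u)$ propagated through the HOSVD and (ii) the randomness of the latent positions $\{X_i(t)\}$. Crucially, by the remark in Section~\ref{sec-online-detect-alg}(b), the averaged adjacency tensor $(t-s)^{-1}\sum_{u}\mathbf{A}(u)$ has, with high probability, a rank-$(1,1,1)$ signal whose leading singular subspace is governed by $\theta_X\theta_X^\top/\|\theta_X\|^2$ type quantities — so HOSVD with unit ranks is the ``right'' estimator and its error can be controlled by a Davis–Kahan / matrix-Bernstein argument applied to $\mathcal{M}_v(\bar{\mathbf{A}})\mathcal{M}_v(\bar{\mathbf{A}})^\top$, yielding an entrywise bound $\|\widehat{\mathbf{P}}^{s,t}_{i,j,:} - \mathbb{E}\{\mathbf{P}_{1,2,:}\}\|$ of order $\sqrt{(L^2\vee d)\log(\cdot)/n}$ after averaging over the $\widetilde S_k \asymp n$ pairs in each block $\mathcal{S}_k$.

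Next I would push this entrywise control through the Lipschitz kernel. Since $|\mathcal{K}(x)-\mathcal{K}(y)|\le C_{\mathrm{Lip}}\|x-y\|$, the $h^{-L}$ prefactor and the chain
\[
    \Big|\mathcal{K}\Big(\tfrac{z-\widehat{\mathbf{P}}^{0,s}_{\mathcal{S}_k,:}}{h}\Big) - \mathcal{K}\Big(\tfrac{z-\mathbb{E}\{\mathbf{P}^{0,s}_{1,2,:}\}}{h}\Big)\Big| \le \frac{C_{\mathrm{Lip}}}{h}\,\big\|\widehat{\mathbf{P}}^{0,s}_{\mathcal{S}_k,:} - \mathbb{E}\{\mathbf{P}^{0,s}_{1,2,:}\}\big\|
\]
convert the tensor-estimation error into the $h^{-L-1}$ factor in $\tau_{s,t}$; the disjointness of the blocks $\{\mathcal{S}_k\}$ (Definition~\ref{def-cusum-S}) is what gives the independence needed to apply Bernstein/Hoeffding at the block level and to recognize that $\widehat{D}_{s,t}(z)$, after centering, concentrates at rate $1/\sqrt{(s\wedge(t-s))n}$. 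Identifying $\mathbb{E}\widehat{D}_{s,t}(z)$ with (a convex combination of) the $G_t(z)$'s — using the definition of $G_t$ as the pointwise expectation of the kernel density estimator and $\widetilde G_t$ as its ``plug-in-the-mean'' surrogate, together with another Lipschitz step to pass from $\mathbf{P}_{1,2,:}$ to $\mathbb{E}\{\mathbf{P}_{1,2,:}\}$ controlled by \eqref{kappa_lower_bound} — is the bridge to the signal analysis. For part (i), under $\mathbb{P}_\infty$ all $G_t$ coincide, so $\mathbb{E}\widehat{D}_{s,t}(z)=0$ for every $z$, and the concentration bound plus the union bound over the (countably many) $(s,t)$ pairs — affordable because $\log(t/\alpha)$ is summable-in-$t$ against the $2^{-t}$-type tail one gets from the Gaussian/sub-Gaussian concentration — gives $\mathbb{P}_\infty\{\widehat\Delta<\infty\}\le\alpha$. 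For part (ii), at times $t$ with $t-\Delta$ at least the claimed delay, taking $s$ near $\Delta$ makes the population part of $\widehat{D}_{s,t}$ at least $c\kappa$ (using \eqref{eq-kappa-def}, \eqref{kappa_lower_bound} and the fact that $\mathcal{Z}_{\alpha,t}$, being a dense-enough random grid, contains a point within $o(\kappa)$ of the maximizer $z^\star$ of $|G_\Delta - G_{\Delta+1}|$ with high probability), while the SNR assumption~\ref{snr_ass_change_point} guarantees $\kappa\sqrt{\Delta} \gg \tau_{s,t}$, so detection is forced; simultaneously the same no-false-alarm bound applied on $[1,\Delta]$ prevents early stopping, giving the two-sided containment.

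The main obstacle I anticipate is the HOSVD error analysis under \emph{random, non-node-aligned} latent positions — specifically, rigorously establishing that the leading singular subspace of $\mathcal{M}_v(\bar{\mathbf{A}})\mathcal{M}_v(\bar{\mathbf{A}})^\top$ is well-separated (spectral gap of the right order, using Assumption~\ref{ass_X_Y}(a),(b),(c) on $\mu_{X,d}$, $\|\theta_X\|$ and $\sigma_m(Q)$) and that the resulting estimator $\widehat{\mathbf{P}}^{s,t}_{i,j,:}$, after block-averaging, concentrates uniformly around the \emph{common} mean vector $\mathbb{E}\{\mathbf{P}_{1,2,:}(t)\}$ at the stated $\sqrt{(L^2\vee d)/n}$ rate — this is where the ``rank degenerates to $(1,1,1)$'' heuristic has to be made precise, and where the interplay between the truncation step in Algorithm~\ref{alg-hosvd}, the averaging over $u\in\{s+1,\dots,t\}$, and the averaging over the blocks $\mathcal{S}_k$ all have to be tracked simultaneously. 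A secondary but delicate point is calibrating $M_{\alpha,t}$ so the discretization error of replacing $\sup_{z\in[0,1]^L}$ by $\max_{z\in\mathcal{Z}_{\alpha,t}}$ is smaller than the signal $\kappa$ while the union bound cost $\log M_{\alpha,t}$ stays of order $L\log(tn)$; the stated form of $M_{\alpha,t}$ with its $[\log\{(n\vee t)/\alpha\}]^{-L/2+1}$ correction is clearly engineered for exactly this trade-off, and verifying it cleanly is essential.
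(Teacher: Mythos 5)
Your overall architecture matches the paper's: concentrate $\widehat{D}_{s,t}$ around a population counterpart uniformly over $(s,t)$ via the block independence of $\{\mathcal{S}_k\}$ and the Lipschitz kernel, control the discretization error of the random grid $\mathcal{Z}_{\alpha,t}$, bound the population statistic under no change, and lower bound it at the critical time $\Delta+\epsilon$ using \eqref{kappa_lower_bound} and the SNR condition; the HOSVD/Davis--Kahan analysis you flag as the main obstacle is indeed carried out in the paper (via an eigengap driven by $\theta_X\theta_X^{\top}$ and a concentration bound for the rank-one projections), though the paper centres the estimator at the tensor built from the \emph{realized averaged} latent positions and then handles the passage to $\mathbf{P}^{s,t}$ and to its expectation as separate terms, rather than centring directly at $\mathbb{E}\{\mathbf{P}_{1,2,:}\}$ as you do.

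There is, however, one genuine gap: your proof of part (i) rests on the claim that under $\mathbb{P}_\infty$ ``all $G_t$ coincide, so $\mathbb{E}\widehat{D}_{s,t}(z)=0$ for every $z$.'' This identity is false. The two kernel arguments in $\widehat{D}_{s,t}(z)$ are sample means computed from \emph{different} effective sample sizes ($s$ versus $t-s$ time points, on top of the block averaging), and the kernel is nonlinear, so even when the underlying distribution never changes the expectations $\mathbb{E}\,\mathcal{K}\bigl((z-\mathbf{P}^{0,s}_{\mathcal{S}_k,:})/h\bigr)$ and $\mathbb{E}\,\mathcal{K}\bigl((z-\mathbf{P}^{s,t}_{\mathcal{S}_k,:})/h\bigr)$ do not cancel; $G_t$ is the expected kernel at a \emph{single} edge-probability vector $\mathbf{P}_{1,2,:}(t)$, not at these averages, so equality of the $G_t$'s does not make the population scan statistic vanish. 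The paper must therefore prove a separate bias-control lemma (its Lemma showing $\sup_z|\widetilde{D}_{s,t}(z)|\lesssim h^{-L-1}\sqrt{L(1+\log L)/n}\,(s^{-1/2}+(t-s)^{-1/2})$ under no change, via McDiarmid/Hoeffding on the block-and-time averages combined with the Lipschitz property and an expected-maximum-over-layers argument), and this bias is of the \emph{same order} as the threshold $\tau_{s,t}$, so it cannot be absorbed for free: the constant $C_\tau$ has to dominate it. Without this step your part (i) is not established; once you add it (and correspondingly replace ``$\mathbb{E}\widehat D=0$'' by ``the population statistic is bounded by a constant multiple of $\tau_{s,t}$''), the remainder of your sketch, including the signal analysis through $\widetilde G_t$ and \eqref{kappa_lower_bound}, aligns with the paper's proof.
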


The proof of \Cref{main_theorem} is given in \Cref{proof-sec4}.
 
With a pre-specified false alarm tolerance $\alpha \in (0, 1)$, \Cref{main_theorem} shows that \Cref{online_hpca} enjoys a uniform control on the false alarms with probability at least $1 - \alpha$.  When there indeed exists a change point, with probability at least $1 - \alpha$, the detection delay is at most of the order
    \begin{equation}\label{eq-detection-delay-rate}
        \frac{(L^2 \vee d) \log \{(n \vee \Delta) /\alpha \}}{\kappa^2   h^{2L+2} n}.
    \end{equation}
Before we conclude this section, a few remarks of \Cref{main_theorem} are in order.

\textbf{The detection delay.}   The detection delay rate presented in \eqref{eq-detection-delay-rate} conforms to the standard form of detection delay in the online change point literature (or the localization error rate in the offline change point literature), as shown in \eqref{eq-detection-delay-intuitive_f}.
The values of threshold sequence $\{\tau_{s, t}\}$ are designed to provide time uniform control on the noise level.  Without any further constraint on the smallest interval size in the change point detection procedure, it can be seen in \eqref{eq-tau-def-thm-2} that 
\[
    \sup_{1 \leq s < t} \tau_{s, t}\lesssim h^{-L-1} \sqrt{\frac{ (L^2 \vee d) \log\{( n  \vee t) / \alpha\}}{n}}.    
\]
The detection delay established in \Cref{main_theorem}, therefore, follows the same pattern as that in the existing literature.  

Compared to the results for the dynamic network change point detection under low-rank assumptions \cite[e.g.][]{padilla2019change}, the detection delay in \Cref{main_theorem} offers an improvement by a factor of $\mbox{sample size}$. 
Specifically, the detection delay upper bound in \cite{padilla2019change}, when translated into our notation, is of order $\Delta \max\{d \log(n \vee \Delta), d^3\}/(\kappa^2 n)$ when considering the independent case $(\rho=0)$. The main ingredient behind the improvement is that we are able to conduct low-rank approximation to averages of tensors/matrices.   

It is also worth mentioning \cite{padilla2021optimal}, where a multivariate nonparametric offline change point localization problem is studied.  As we discussed, due to the more challenging scenario, especially that we do not assume the existence of densities, we have $h^{-(2L+2)}$ instead of $h^{-L}$ (the dependence on $h$ in the detection delay in \citealt{padilla2021optimal}) in the detection delay rate. 
We note that if the underlying distributions possess densities, one could directly replace the kernel density estimator and associated analysis with their counterparts in \cite{padilla2021optimal}, sacrificing some model flexibility to achieve a sharper rate.

\textbf{Rank-1 approximation of averages of adjacency tensors.}  To conclude this section, we elaborate on the difficulty of analysing averages of adjacency tensors and why we switched to rank-1 approximations in constructing the scan statistics \Cref{def-cusum}, instead of using the Tucker rank $(d, d, m)$ as in \Cref{thpca}.  This is a direct consequence of using the averages of adjacency tensors, each of which is associated with low-rank but independent random latent positions.  
In fact, at each time point $t \in \mathbb{N}^*$, with high probability, the probability tensor $\mathbf{P}(t)$ is a low-rank tensor; however that is not the case for the average $\widehat{\mathbf{P}}_{s, t} = (t-s)^{-1} \sum_{u=s+1}^t \mathbf{P}(u)$, for general integer pair $(s, t)$.  We instead approximate $\widehat{\mathbf{P}}_{s, t}$ by $\widetilde{\widehat{\mathbf{P}}}^{s, t} = S \times_1 \overline{X}^{s, t} \times_2 \overline{X}^{s, t} \times_3 Q$, with $\overline{X}^{s, t} = (t-s)^{-1}\sum_{u = s+1}^t X(u)$ and show that, with high probability, $\widetilde{\widehat{\mathbf{P}}}^{s, t}$ has Tucker ranks $(d, d, m)$.  Recall that $X(\cdot)$ is a random latent vector, with a non-zero mean.  As a result, the second-moment matrices
 \[
 \Sigma^{s, t}_X = \mathbb{E}[ \{(\overline{X}^{s, t})^1\}^{\top} (\overline{X}^{s, t})^{1} ]  \in \mathbb{R}^{d \times d}
 \]
possess dominating largest eigenvalues.  To be specific, as shown in \Cref{lemm_eigen} in \Cref{sec-C4},
\begin{equation}\label{eq-sigma-x-decomp}
    \mathrm{rank}(\Sigma^{s, t}_{X}) = d \quad \mbox{and} \quad  \Sigma^{s, t}_{X} = (t-s)^{-1} \Sigma_X + (t-s)^{-1}(t-s-1)   \theta_{X} \left( \theta_{X} \right)^{\top},
\end{equation}
where $\Sigma_X \in \R^{d \times d}$ and $\theta_X \in \R^{d}$ are defined in \Cref{ass_X_Y}.  It is easy to see that the leading eigenvalue of $\Sigma^{s, t}_{X}$ is driven by $\|\theta_{X}\|^2$ and with a large eigen-gap with the second largest one.  On the other hand, due to the average effect, the heterogeneity level in the tensor entries of  $\overline{X}^{s, t}$ is diminished.  This is also seen from the decomposition in \eqref{eq-sigma-x-decomp}. 

These observations enable the use of rank-1 approximation to avoid estimating Tucker ranks, and avoid the necessity of using iterative methods in \Cref{hpca}.  As a final observation, the experiments of \Cref{change_point_section}  indicate a comparable numerical performance when using \Cref{alg-hosvd} instead of \Cref{thpca}.

\section{Numerical experiments}\label{sec-numerical}

In this section, we summarize the results of extensive numerical experiments.  In \Cref{sec-simulation}, we use simulated data while in \Cref{sec-real-data} we analyze two real data sets. The code and data sets for all experiments are available online\footnote{\url{https://github.com/MountLee/MRDPG}}.

\subsection{Simulation studies}\label{sec-simulation}

Sections~\ref{simu_denoising}, \ref{change_point_section_f} and~\ref{change_point_section} corroborate the theoretical findings of Sections~\ref{sec-single-mrdpg}, \ref{sec-dynamic-f} and \ref{sec-dynamic}, respectively.

While in the main text, we only cover undirected MRDPGs with random latent positions and leave the analysis of the directed case that accommodates bipartite graphs to the appendix, in this simulation study we investigate both scenarios. Throughout, we use $n$ to denote the size of an undirected network and  $(n_1, n_2)$ for the size of a directed network.

\subsubsection{Estimation of a single probability tensor}\label{simu_denoising}

To assess the performance of TH-PCA (\Cref{thpca}) in estimating a single probability tensor, we compare it with several other methodologies: higher-order singular value decomposition \cite[HOSVD,][]{de2000multilinear}, higher-order orthogonal iteration \cite[HOOI,][]{zhang2018tensor}, Tucker decomposition with integrated SVD transformation \cite[TWIST,][]{jing2021community}, scaled adjacency spectral embedding estimators \cite[SASE,][]{sussman2012consistent} applied to each layer, unfolded adjacency spectral embedding \cite[UASE,][]{jones2020multilayer}, multiple adjacency spectral embedding \cite[MASE,][]{arroyo2021inference}, maximum likelihood estimators \cite[MLE,][]{zhang2020flexible} and convex optimization approach \cite[COA,][]{macdonald2022latent}. 
For the implementation of  HOSVD, HOOI, TWIST, MASE and COA, we use the R \citep{R} packages \texttt{rTensor} \citep{rTensor}, \texttt{STATSVD} \citep{STATSVD}, \texttt{rMultiNet} \citep{rMultiNet}, \texttt{MASE} \citep{MASE}   and  \texttt{multiness} \citep{multiness}, respectively. As there is no publically available code in \cite{zhang2020flexible} for MLE, we implement the projected gradient descent algorithm.

Note that TH-PCA, HOOI and HOSVD require Tucker ranks of probability tensors as inputs.  As discussed in \Cref{sec_estimation_single}, for models defined in Definitions~\ref{umrdpg-f},  \ref{umrdpg} or \ref{mrdpg}, the probability tensor has Tucker ranks $(r_1, r_2, r_3) = (d, d, m)$ with high probability, where $d$ is the dimension of latent positions and $m$ is the rank of the matrix $Q \in \R^{L \times d^2}$ defined in \eqref{matrix Q} with the number of layers~$L$.  To ensure robustness, we choose relatively large Tucker ranks as inputs, setting $(\widehat{r}_1, \widehat{r}_2, \widehat{r}_3) = (10, 10, 10)$.
TWIST is based on slightly different modelling and we set the input as $ (r_1, r_2, r_3) = (\hat{d}, \hat{d}, L)$, again with $\hat{d} = 10$.
For SASE, UASE and MASE, we set the input dimension of the latent positions as $\hat{d} =10$.  COA requires the estimated dimension of latent positions to obtain the initial estimators, so we let $\hat{d} =10$ and choose its tuning parameters in an adaptive way as instructed in \cite{macdonald2022latent}. MLE also requires the dimension of the latent space as an input, which we set to $\hat{d} = 10$ in all experiments.  To visualize the effect of dimensions, we have also conducted a small-scale simulation where the true dimensions are used. For the step size and the number of iterations in the projected gradient descent algorithm solving for MLE, we set the value that has the best overall empirical performance among our attempts, as there is no official recommendation in \cite{zhang2020flexible}.

We consider two simulation settings below.  In each scenario, each combination of parameters is repeated over $100$ Monte Carlo trials.

\textbf{Scenario 1. Multilayer stochastic block models.} 
To construct multilayer stochastic block models, let the probability tensor $\mathbf{P} \in [0,1]^{n \times n \times L}$ be with
\begin{align}\label{simulation_tensor_P}
    \mathbf{P}_{i, j, l} = \begin{cases}
        p_{1, l}, & i, j \in \mathcal{B}_b, \, b \in [4], \\
        p_{2, l}, & \mbox{otherwise},
   \end{cases}
\end{align}
where $ \mathcal{B}_b, b \in [4]$, are evenly-sized communities of nodes that form a partition of $[n]$, and $p_{1, l}$ and $p_{2, l}$ satisfy that for $l \in [L]$,
\[
    p_{1, l} \stackrel{\mbox{ind.}}{\sim} \mbox{Uniform} \left(\frac{3L+l-1}{4L}, \frac{3L+l}{4L} \right) \quad \mbox{and} \quad p_{2, l} \stackrel{\mbox{ind.}}{\sim} \mbox{Uniform} \left(\frac{2L+l-1}{4L}, \frac{2L+l}{4L} \right).
\]
The data are then generated such that for $i,j \in [n]$, $i \leq j$ and $l \in [L]$, $\mathbf{A}_{i, j, l} \stackrel{\mbox{ind.}}{\sim} \mbox{Bernoulli}\left(\mathbf{P}_{i, j, l}\right)$, and for all $l \in L$, $\mathbf{A}_{:, :, l}$ are symmetric.  We consider the node set size $n \in \{ 50, 100, 150, 200 \}$ and the number of layers $L \in \{ 10, 20, 30, 40 \}$.

\textbf{Scenario 2. Dirichlet distribution models.}  To generate the weight matrices $W_{(l)} \in \R^{d \times d}$, for $l \in [L]$, let
\begin{align}\label{simulation_matrix_W}
    \left( W_{(l)} \right)_{i, j} \stackrel{\mbox{ind.}}{\sim} \mbox{Uniform} \left(\frac{L + l - 1}{2L}, \frac{L + l}{2L}\right), \quad i, j \in[d].
\end{align}
Fix $\big\{ W_{(l)}\big\}_{l \in [L]}$ for all Monte Carlo trials within the same set of parameter combinations.  Let $\mathbf{1}_{d} \in \mathbb{R}^d$ be an all-one vector.  Let $\{X_i\}_{i \in [n_1]}$ be independent and identically distributed as Dirichlet($\mathbf{1}_d$) and $\{Y_j\}_{j \in [n_2]}$ be independent and identically distributed as Dirichlet($10 \times \mathbf{1}_d$).  Let the probability tensor $\mathbf{P} \in [0, 1]^{n_1 \times n_2 \times L}$ be with entries $\mathbf{P}_{i, j, l} = X_i^{\top} W_{(l)} Y_j$, $i \in [n]$, $j \in [n_2]$ and $l \in [L]$. The data are generated as $\mathbf{A}_{i, j, l} \stackrel{\mbox{ind.}}{\sim}\mbox{Bernoulli} \left( \mathbf{P}_{i, j, l}\right)$, $i \in [n_1]$, $j \in [n_2]$ and $l \in [L]$.  We consider  the dimension of the latent positions $d \in \{ 2, 4, 6, 8 \}$, the number of nodes $n_1, n_2  \in \{ 50, 100, 150, 200\}$ and the number of layers  $L \in \{ 10, 20, 30, 40\}$.  Note that this is a directed network example.

\medskip

The simulation results for \textbf{Scenarios 1} and \textbf{2} are depicted in Figures~\ref{Fig_simulation_denoising} and \ref{Fig_simulation_denoising_2}, respectively.  In both figures, we report on the $y$-axis unnormalized estimation errors $\| \widehat{\mathbf{P}} - \mathbf{P}\|_{\mathrm{F}}$, where $\widehat{\mathbf{P}}$ is an estimated probability tensor, presented in the form of mean and standard deviation over $100$ Monte Carlo trials.  \textbf{Scenario 1} produces undirected adjacency tensors having the same block structures but different weight matrices across layers.  More variation is included in \textbf{Scenario 2}. Note that the available code for MASE only handles undirected multilayer networks, making it not applicable to  \textbf{Scenario 2}.

In both scenarios, TH-PCA has the best overall performance, with a few odd exceptions when the numbers of layers and/or nodes are relatively small.  COA is the only one which has better performances in these exceptions. Across both scenarios, the tensor-based methods (TH-PCA, HOOI, HOSVD and TWIST) outperform the matrix-based methods (SASE, UASE), except for MASE in \textbf{Scenario 1}.  Due to the more variation contained in \textbf{Scenario 2}, we see a clearer lead of TH-PCA over the other two tensor-based methods (HOOI and HOSVD).  This aligns with the intuition that TH-PCA is proposed to handle the heterogeneity within entries. 

In Panels (A) and (B) in \Cref{Fig_simulation_denoising_2}, we show the performances of different methods with varying numbers of layers.  Panel (A) is according to the setting described in \textbf{Scenario 2} and  Panel (B) has a small tweak.  Instead of generating layer-specific weight matrices, all entries of all weight matrices are independently drawn from Uniform$(0, 1)$.  It follows from \Cref{random_theorem} and our discussions in \Cref{sec-comp-single} that, one should expect the performances of tensor-based methods to deteriorate as the number of layers increases.  This is however not the case in Panel (A), due to the dependence on the number of layers $L$ in the distribution of weight matrices.  As $L$ increases, the variation of weight matrices entries decreases, which consequently decreases the hardness of the tensor estimation problem.  To single out the effect of $L$, we conduct the analysis summarized in  Panel (B), where we can see the performance of tensor-based methods indeed deteriorates as $L$ grows.  It is also interesting to point out that, as we turn down the heterogeneity in the right panel, by letting all entries of $W$'s have the same distributions, the performances of TH-PCA and HOSVD are almost identical.  This is also our motivation for going for HOSVD for cheaper computational cost when the heterogeneity is less of a concern in \Cref{sec-dynamic}.  Lastly, the performances of two matrix-based methods (SASE and UASE), as well as COA and MLE, do not change much with the twist on weight matrices' distributions.

Panels (C) and (D) in \Cref{Fig_simulation_denoising_2} show the performance of different algorithms with varying latent position dimensions.  As we discussed in \Cref{sec_estimation_single}, if we over-estimate the latent position dimension, then the estimation error will be inflated.  This is apparent from Panel (C), where although true dimensions vary, the inputs stay fixed and over-estimated.  To see the true effect of dimensions, we include Panel (D), where we use the true dimension as the inputs.  The deterioration kicks in for all methods (except COA), but TH-PCA shows more resilience compared to other matrix- or tensor-based methods.  
Another noticeable fact in Panels (C) and (D) is that the estimation errors of COA do not show an increasing pattern as other methods. However, according to \cite{macdonald2022latent}, the upper bound of the estimation error of COA involves a factor $d$. We conjecture that the distribution of tensor entries has a significant impact on the performance of COA, and under some distributions, the upper bound can be loose for a factor up to $d$, as is illustrated by Figure 8 of \cite{macdonald2022latent}.

Lastly, we see the impact of the node sizes $n_1$ and $n_2$ for all methods in Panels (E) and (F) of \Cref{Fig_simulation_denoising_2}. In Panel (E) we set $n_1 = n_2$, while in Panel (F) we fix $n_1$ ad let $n_2$ vary. Because the available code of TWSIT and COA, and the code we developed for MLE can only handle the case when $n_1 = n_2$, TWSIT, COA and MLE are not applicable for the experiments in Panel (F).

\begin{figure}[ht] 
\centering 
\includegraphics[width=0.9 \textwidth]{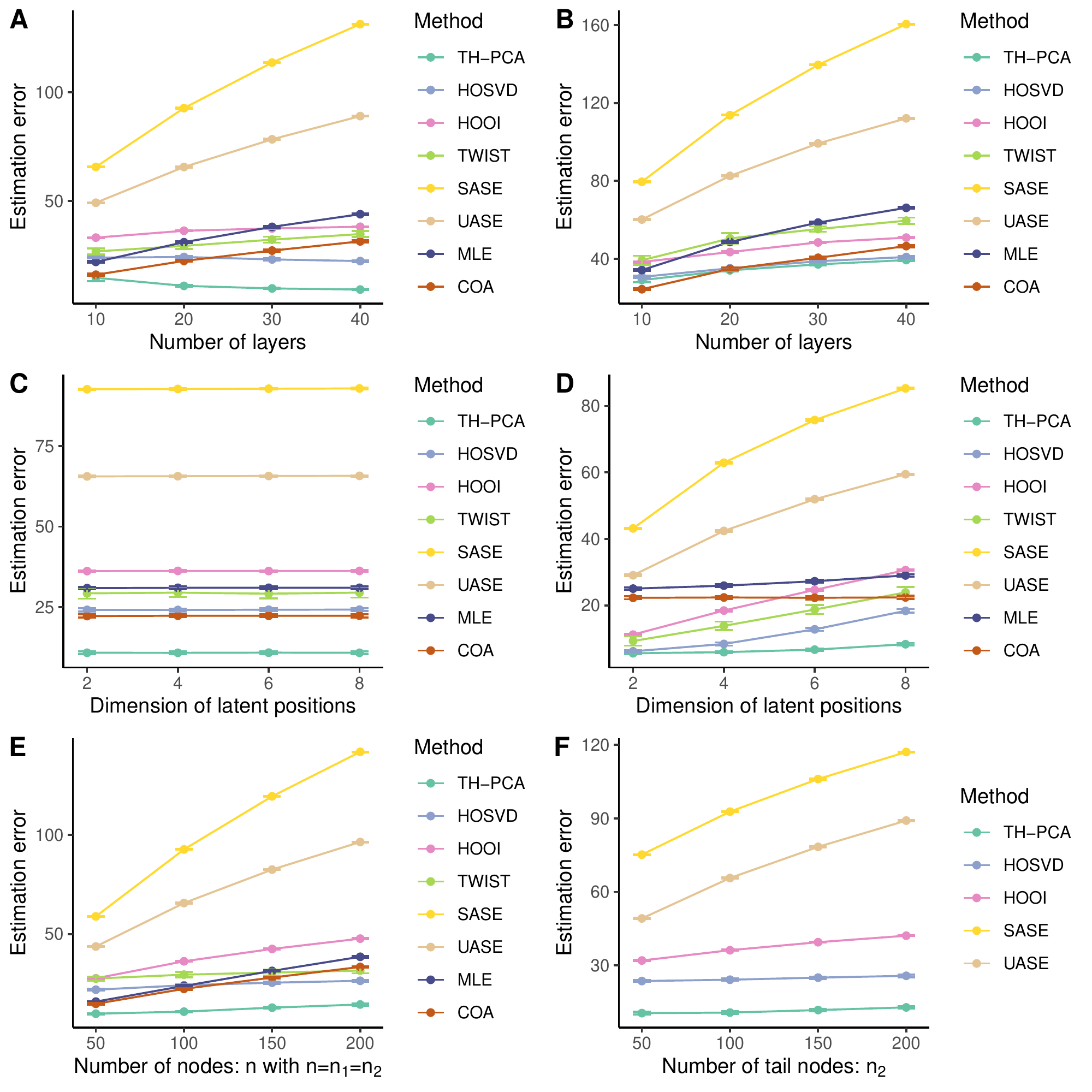}
\caption{Results of estimating a probability tensor in Scenario 2 in \Cref{simu_denoising} and beyond. (A) and (B): $n_1 =n_2 = 100$, $d=4$ and $L \in \{ 10, 20, 30, 40\}$.  (A): Scenario 2.  (B): with a twist of the weight matrices. (C) and (D): $n_1 = n_2 = 100$, $L=20$ and $d \in \{2, 4, 6, 8\}$.  (C): Scenario 2. (D): with the true dimension as the inputs. (E) and (F): Scenario 2 with $d = 4$ and $L=20$. (E): $n_1 = n_2 \in \{50, 100, 150, 200\}$.  (F): $n_1=100$ and $n_2 \in \{50, 100, 150, 200\}$.  In all panels, $y$-axis values correspond to the values of estimation error $\| \widehat{\mathbf{P}} - \mathbf{P}\|_{\mathrm{F}}$ of different estimators $\widehat{\mathbf{P}}$.  The values are in the form of mean and standard deviation over $100$ Monte Carlo trials.}\label{Fig_simulation_denoising_2}
\end{figure}

\subsubsection{Online change point detection with fixed latent positions} \label{change_point_section_f}

We are not aware of any procedure for online change point analysis for multilayer networks. As a result, for this simulation study, we compare the performance of our algorithms with a series of \emph{ad hoc} procedures obtained by modifying existing methods not designed for this problem.  We change the tensor estimation 
routine in \Cref{def-cusum-f} from TH-PCA (\Cref{thpca}) to HOSVD \citep{de2000multilinear}, TWIST \citep{jing2021community}, UASE \citep{jones2020multilayer} and COA \citep{macdonald2022latent}. Furthermore, we also adopt the $k$-nearest neighbours ($k$-NN) method as proposed by \cite{chen2019sequential}. The four alternative methods developed by us are referred to as on-HOSVD, on-TWIST, on-UASE and on-COA, respectively. TWIST, COA and $k$-NN are implemented using the R \citep{R} packages \texttt{rMultiNet} \citep{rMultiNet}, \texttt{multiness} \citep{multiness} and \texttt{gStream} \citep{gStream}, respectively.

To evaluate the performance of different change point detection algorithms we consider the following metrics.  Let $N$ be the number of Monte Carlo trials.  For each trail $u \in [N]$, let $\hat{t}_u$ be the estimated change point location. If there is no change point detected in $[T]$, we let $\hat{t}_u = T$.
Define
\[
    \mbox{Delay} = \frac{\sum_{u=1}^N \mathbbm{1}_{\{ \hat{t}_u \geq \Delta \}}  \left(\hat{t}_u - \Delta \right)}{\sum_{u=1}^N \mathbbm{1}_{\{ \hat{t}_u \geq \Delta \}}}, \,\, \mbox{PFA} = \frac{\sum_{u=1}^N \mathbbm{1}_{\{ \hat{t}_u <\Delta \}}}{N} \,\, \mbox{and} \,\, \mbox{PFN} = \frac{\sum_{u=1}^N \mathbbm{1}_{\{ \hat{t}_u = T\}}}{N},
\]
where PFA and PFN stand for the proportion of false alarms and the proportion of false negatives, respectively.   Throughout this section, we let $\alpha = 0.01$.  Since all settings we consider have a change point, PFA and PFN correspond to Type-I and Type-II errors respectively.

For the tuning parameters, we assume access to an additional sample of size $\widetilde{T}$ from the pre-change distribution, referred to as the training set.   We train the thresholds $\{\tau_{s, t}\}$ defined in~\eqref{eq-tau-def-thm_f} by permuting the training set $B = 100$ times.  To be specific, for the unspecified absolute constant~$C_{\tau}$, we choose the minimal value from a range, such that the proportion of false alarms in these $B$ permutations is upper bounded by pre-specified $\alpha$.  The same threshold selection method is used for on-HOSVD, on-TWIST, on-UASE and on-COA.
To show the sensitivity of the choice of the tuning parameter, we consider a set of values for  $C_{\tau} \in \{0.16, 0.17, 0.18, 0.19, 0.20\}$. The simulation results can be found in \Cref{simu-sec5.2}.
Based on the discussion in  \Cref{simu_denoising}, we let the rank input be $(\hat{d}, \hat{d}, \hat{m}) = (10, 10, L)$ for both \Cref{thpca} and HOSVD, $(\hat{d}, \hat{d}, L) = (10, 10, L)$ for TWIST, and $10$ for both UASE and COA.  For COA, we adopt the same tuning parameter selection strategy as in \cite{macdonald2022latent}.
For the $k$-NN method, we use the max-type edge-count scan statistic \cite[see][]{chen2019sequential}.

We consider a simulation setting where each combination of parameters is repeated over $N = 100$ Monte Carlo trials.  The time horizon is set to be $T = 100$, the change point occurs at $\Delta = 50$ and the size of the training set is $\widetilde{T} = 75$.

\textbf{Multilayer stochastic block models.}
Let $\mathbf{P} \in [0,1]^{n \times n \times L}$ denote the probability tensor before the change point and be defined as \eqref{simulation_tensor_P}. Let $\mathbf{Q} \in [0,1]^{n \times n \times L}$ denote the probability tensor after the change point, which will be specified in each scenario.

The data are then generated such that for all $t \in [T]$ and $l \in [L]$, $\mathbf{A}_{;,;,l}$ are symmetric and  for $t \in [T]$, $i, j \in [n]$, $i\leq j$ and $l \in [L]$, 
\[
    \mathbf{A}_{i, j, l}(t) \stackrel{\mbox{ind.}}{\sim} \begin{cases}
        \mbox{Bernoulli}(\mathbf{P}_{i, j, l}), & t \in [\Delta], \\ 
        \mbox{Bernoulli}(\mathbf{Q}_{i, j, l}), & t \in [T] \setminus [\Delta].
    \end{cases}
\]
We consider the number of nodes $n \in \{25, 50, 75 \}$ and number of layers $L \in \{ 2, 3, 4 \}$.
\begin{itemize}
    \item \textbf{Scenario 1: Layers flipped.}  Let $\mathbf{Q}$ be with entries $\mathbf{Q}_{i, j, l} = \mathbf{P}_{i, j, L-l+1}, i, j \in [n], \, l \in [L]$.
    \item \textbf{Scenario 2: Nodes permuted.}  Apply the same random permutation each time to the rows and columns of each layer of $\mathbf{P}$ to obtain $\mathbf{Q}$.
    \item \textbf{Scenario 3: Dimension changed.}  Let $\mathbf{Q}$ be defined as \eqref{simulation_tensor_P} but with 8 evenly-sized communities instead of 4.
\end{itemize}

The change point specified in \textbf{Scenario 1} follows \Cref{ass_change_point_u_f}, but not for the ones in \textbf{Scenarios 2} and \textbf{3}.  Such designs aim to investigate the robustness of our proposed algorithms.

The simulation results can be found in \Cref{simu-sec5.2}. 
We present a few examples in \Cref{table_1} to convey key messages.  From \Cref{table_1}, we conclude that in this scenario, Alg.~\ref{online_hpca}-Def.~\ref{def-cusum-f}, \Cref{online_hpca} with the scan statistics defined in \Cref{def-cusum-f} and the tensor estimation routine specified in \Cref{thpca},  as well as on-HOSVD, the competitor we created by replacing the tensor estimation routine with HOSVD, demonstrate nearly equal effectiveness. These two algorithms exhibit robustness in \textbf{Scenarios 2} and \textbf{3}, where \Cref{ass_change_point_u_f} does not hold. We highlight that, although on-TWIST has the smallest detection delay, it struggles to control the proportion of false alarms effectively. The runners-up in terms of performance are then on-UASE and $k$-NN.

\setlength\tabcolsep{3pt}
\begin{table}[t]
\caption{Simulation results for \Cref{change_point_section_f}, with the number of nodes $n = 50$ and the number of layers $L = 4$. PFA, the proportion of false alarms; PFN, the proportion of false negatives; Delay, the average of the detection delays; 
Alg.~\ref{online_hpca}-Def.~\ref{def-cusum-f}, on-HOSVD, on-TWIST, on-UASE and  on-COA, \Cref{online_hpca} with the scan statistics defined in \Cref{def-cusum-f} and the tensor estimation routine using \Cref{thpca}, HOSVD \citep{de2000multilinear},  TWIST \citep{jing2021community}, UASE \citep{jones2020multilayer} and  COA \citep{macdonald2022latent} respectively; $k$-NN, \cite{chen2019sequential}.}
\begin{center}
\begin{tabular}{cccccccccc} 
\hline
  Scenario &  Metric & Alg.~\ref{online_hpca}-Def.~\ref{def-cusum-f} & on-HOSVD & on-TWIST & on-UASE & on-COA & $k$-NN \\ \hline
 1  & PFA    & 0.00 & 0.00 & 0.17 &  0.01 &  0.00 & 0.02 \\
    & PFN   & 0.00 & 0.00 & 0.00 &  0.00 &  0.98 & 0.00\\
    &  Delay & 1.00 & 1.00 & 0.00 &  1.00 &  48.27 & 2.36\\ [3pt]
 2 & PFA   & 0.04 & 0.01  & 0.23  & 0.02 & 0.06  &   0.99 \\
   & PFN   & 0.00 & 0.00 & 0.00 & 0.00 &  0.78 &  0.00 \\
   &  Delay & 0.93 & 0.99 & 0.00  & 1.00 &  43.80  &  2.00 \\ [3pt]
 3 & PFA   & 0.00  & 0.00  & 0.26  & 0.01 &  0.00  & 0.02   \\
   & PFN   &0.00  & 0.00 & 0.00  & 0.00 &  0.00  & 0.00  \\
   &  Delay &  1.28  &  1.27 & 0.04  & 3.52  & 0.29   & 25.65   \\ \hline
\end{tabular}\label{table_1}
\end{center}
\end{table}

\subsubsection{Online change point detection with random latent positions} \label{change_point_section}

To assess the numerical performance of the online change point detection algorithm stated in \Cref{online_hpca} using the scan statistics described in \Cref{def-cusum} and tensor estimation routine specified in \Cref{alg-hosvd}, and applied to the case of random latent positions, we sought to compare it against alternative methods. We change the tensor estimation routine in \Cref{def-cusum} from  \Cref{alg-hosvd} to  TH-PCA (\Cref{thpca}), TWIST \citep{jing2021community},
UASE \citep{{jones2020multilayer}} and COA  \citep{macdonald2022latent}. Furthermore,
we also take into account the $k$-nearest neighbours ($k$-NN) method as proposed by \cite{chen2019sequential}. For the
three competitors created by us, we name them on-TH-PCA, on-TWIST, on-UASE and on-COA, respectively.

For the tuning parameters, for each algorithm, we assume access to an additional sample from the pre-change distribution of size $\widetilde{T}$, which we refer to as the training set.  For \Cref{online_hpca} with scan statistics defined in \Cref{def-cusum}, we adopt the Gaussian kernel and set the bandwidth $h = \{500 \log ( T n_1 n_2 )/(Tn_1 n_2)\}^{1/L}$,
a choice guided by \cite{padilla2019optimal}.  We permute the training set $B = 100$ times the use the same threshold selection method as described in \Cref{change_point_section_f} to train the thresholds $\{\tau_{s, t}\}$ defined in \eqref{eq-tau-def-thm-2}.  The choice of the kernels and the threshold chosen methods are inherited for the competitors created by us, i.e.~on-TH-PCA, on-TWIST, on-UASE and on-COA.  Following the observations from \Cref{simu_denoising}, we let the rank inputs in TH-PCA and TWIST be $(10, 10, L)$,  $\hat{d} = 10$ for UASE and COA.  For COA, we adopt the same tuning parameter selection strategy as that in \cite{macdonald2022latent}.  For $k$-NN method, we use the max-type edge-count scan statistic \cite[see][]{chen2019sequential}.

We consider three simulation scenarios below with each combination of parameters repeated over $N = 100$ Monte Carlo trials.  The time horizon is set to be $T = 100$, the change point occurs at $\Delta = 50$ and the size of the training set is $\widetilde{T} = 75$.

\textbf{Scenario 1. Multilayer stochastic block models with unlabelled nodes across time.}  
Let $\mathbf{P}, \mathbf{Q} \in [0,1]^{n \times n \times L}$ be the probability tensors defined in \textbf{Scenario~1} in \Cref{change_point_section_f}.  For $t_1 \in [\Delta]$, we apply the same permutation to the rows and columns of each layer in $\mathbf{P}$ and obtain~$\widetilde{\mathbf{P}}(t_1)$.  The permutations across time are mutually independent.  The same procedure is done on~$\mathbf{Q}$ to obtain $\{\widetilde{\mathbf{Q}}(t_2)\}_{t_2 \in [T] \setminus [\Delta]}$.

Let the entries of the adjacency tensors be generated such that for all $t \in [T]$ and $l \in [L]$, $\mathbf{A}_{:,:,l}(t)$ are symmetric and for any $t \in [T]$, $i, j \in [n]$, $i\leq j$ and $l \in [L]$,
\[
    \mathbf{A}_{i, j, l}(t) \stackrel{\mbox{ind.}}{\sim} \begin{cases}
        \mbox{Bernoulli}\big(\widetilde{\mathbf{P}}_{i, j, l}(t) \big), & t \in [\Delta], \\ 
        \mbox{Bernoulli}\big(\widetilde{\mathbf{Q}}_{i, j, l}(t)\big), & t \in [T] \setminus [\Delta].
    \end{cases}
\]
We consider the number of nodes $n \in \{25, 50, 75 \}$ and number of layers $L \in \{ 2, 3, 4 \}$. 

\textbf{Scenario 2. Undirected Dirichlet distribution models.}  For any $l \in [L]$, let $W_{(l)} \in \R^{d \times d}$ be defined in \eqref{simulation_matrix_W}. Fix $\{ W_{(l)}\}_{l \in [L]}$ for all time points and all Monte Carlo trials.  Let $\mathbf{1}_{d} \in \mathbb{R}^d$ be an all-one vector.  For $t \in [\Delta]$, let $\{X_i(t)\}_{i \in [n]}$ be independent and identically distributed as Dirichlet($\mathbf{1}_d$).  For $t \in [T]\setminus [\Delta]$, let $\{X_i(t)\}_{i \in [n]}$ be independent and identically distributed as Dirichlet($500 \times \mathbf{1}_d$).  For $t \in [T]$, let the probability tensor $\mathbf{P}(t) \in [0, 1]^{n \times n \times L}$ be with entries $(\mathbf{P}(t))_{i, j, l} = X_i(t)^{\top} W_{(l)} X_j(t)$, $i, j \in [n]$ and $l \in [L]$. The data are generated such that for all $t \in [T]$ and $l \in [L]$, $\mathbf{A}_{:,:,l}(t)$ are symmetric and for any $t \in [T]$, $i, j 
\in [n]$, $i\leq j$ and $l \in [L]$, $(\mathbf{A}(t))_{i, j, l} \stackrel{\mbox{ind.}}{\sim}\mbox{Bernoulli} \left( (\mathbf{P}(t))_{i, j, l}\right)$.  We consider the dimension of the latent positions $d \in \left\{ 2, 4, 6\right\}$, the number of nodes $n \in \{25, 50, 75 \}$ and the number of layers $L \in \{ 2, 3, 4 \}$.

\textbf{Scenario 3. Directed Dirichlet distribution models.} For any $l \in [L]$, let $W_{(l)} \in \R^{d \times d}$ be defined in \eqref{simulation_matrix_W}. Fix $\{ W_{(l)}\}_{l \in [L]}$ for all time points and all Monte Carlo trials.  For $t \in [\Delta]$, let $\{X_i(t)\}_{i \in [n_1]}$ be independent and identically distributed as Dirichlet($\mathbf{1}_d$) and $\{Y_j(t)\}_{j \in [n_2]}$ be independent and identically distributed as Dirichlet($10 \times \mathbf{1}_d$).  For $t \in [T]\setminus [\Delta]$, let $\{X_i(t)\}_{i \in [n_1]}$ be independent and identically distributed as Dirichlet($500 \times \mathbf{1}_d$) and $\{Y_j(t)\}_{j \in [n_2]}$ be independent and identically distributed as Dirichlet($1000 \times \mathbf{1}_d$).  For $t \in [T]$, let the probability tensor $\mathbf{P}(t) \in [0, 1]^{n_1 \times n_2 \times L}$ be with entries $(\mathbf{P}(t))_{i, j, l} = X_i(t)^{\top} W_{(l)} Y_j(t)$, $i \in [n_1]$, $j \in [n_2]$ and $l \in [L]$. The data are generated as $(\mathbf{A}(t))_{i, j, l} \stackrel{\mbox{ind.}}{\sim}\mbox{Bernoulli} \left( (\mathbf{P}(t))_{i, j, l}\right)$, $i \in [n_1]$, $j \in [n_2]$ and $l \in [L]$.  We consider  the dimension of the latent positions $d \in \left\{ 2, 4, 6\right\}$, the number of nodes $n_1 \in \{25, 50, 75 \}$, $n_2= 50$ and the number of layers $L \in \{ 2, 3, 4 \}$.

\setlength\tabcolsep{3pt}
\begin{table}[t]
\caption{Simulation results for Scenario 3 in \Cref{change_point_section}: $n_1$ and $n_2$, the number of nodes; $L$, the number of layers; $d$, the dimension of the latent position; PFA, the proportion of false alarm; PFN, the proportion of false negatives; Delay, the average of the detection delays; Alg.~\ref{online_hpca}-Def.~\ref{def-cusum}, on-TH-PCA, on-TWIST
on-UASE and on-COA, \Cref{online_hpca} with the scan statistics defined in \Cref{def-cusum} and the tensor estimation routine using \Cref{alg-hosvd},   \Cref{thpca}, TWIST \citep{jing2021community}, UASE \citep{jones2020multilayer}  and COA \citep{macdonald2022latent}, respectively; $k$-NN, \cite{chen2019sequential}.}
\begin{center}
\begin{tabular}{ccccccccccc}
\hline
$n_1$ & $n_2$ & $L$ & $d$ & Metric & Alg.~\ref{online_hpca}-Def.~\ref{def-cusum} & on-TH-PCA  & on-TWIST&  on-UASE & on-COA & $k$-NN \\ \hline
50 & 50 & 2 & 4   & PFA & 0.00 & 0.00 &0.04  &  0.00& 0.00 &0.01   \\
 & & &            & PFN & 0.00 & 0.00 & 0.00 &  0.12& 0.33 &0.98   \\
 & & &            & Delay& 9.35& 10.08 & 9.97 & 28.37& 35.34 & 49.94  \\ [3pt]
 50 & 50 & 3 & 4 & PFA & 0.00 & 0.00 & 0.00 &  0.00&  0.02  & 0.00 \\
 & & &          & PFN & 0.00 & 0.00& 0.00 &  0.00&  0.00  & 0.45\\
 & & &          & Delay&0.85 & 0.86 & 0.90 &  1.14&  0.50 & 42.56 \\ [3pt]
 50 & 50 & 4 & 4  & PFA& 0.01  & 0.01&0.03 &  0.01 & 0.00 & 0.04   \\
 & & &            & PFN & 0.00 & 0.00& 0.00 &  0.00 & 0.10 & 0.43   \\
 & & &            & Delay& 3.09& 3.11&  3.16&  10.16& 19.99 & 28.89   \\ 
 \hline
\end{tabular}
\label{table_directed_diri}
\end{center}
\end{table}

\medskip

In \textbf{Scenario 1}, at each time point, we allow the misalignment of nodes and after the change point, the order of layers flips. In \textbf{Scenarios 2} and \textbf{3}, the parameter vector of the distributions of latent positions changes after the change point. All simulation results can be found in \Cref{simu-sec5.3}. \Cref{table_directed_diri} presents  selected results from \textbf{Scenario 3}.

As we discussed in Sections~\ref{sec-single-mrdpg} and \ref{sec-dynamic}, TH-PCA algorithm detailed in \Cref{thpca} outperforms other competitors for tensor estimation.  Considering computational complexity and Tucker rank estimation, and especially with the support of theoretical discussions, in \Cref{sec-dynamic}, we adopt the HOSVD in \Cref{alg-hosvd} as the tensor estimation subroutine in the online change point detection procedure (\Cref{online_hpca}) for cases with random latent positions.  Simulation results in \Cref{table_directed_diri} demonstrate that the across-the-board, Alg.~\ref{online_hpca}-Def.~\ref{def-cusum}, \Cref{online_hpca} with the scan statistics defined in \Cref{def-cusum} and the tensor estimation routine specified in \Cref{alg-hosvd},  as well as on-TH-PCA, the competitor we created by replacing the tensor estimation routine with the TH-PCA detailed in \Cref{thpca}, are nearly equally effective. The runners-up in terms of performance are on-TWIST, on-USAE and on-COA.

\subsection{Real data analysis}\label{sec-real-data}
Our analysis incorporates two real data sets, the worldwide agricultural trade network data set presented here and the U.S.~air transport network data set in \Cref{simu-sec5.4}.

When nodes are labelled, we operate under the assumption of fixed latent positions. Utilizing this assumption, we employ the same competitor algorithms and tuning parameter selection methods as previously outlined in \Cref{change_point_section_f}. In particular, we apply  \Cref{online_hpca} with the scan statistics defined in \Cref{def-cusum-f} as well as the four competitors (on-HOSVD, on-TWIST,  on-UASE and on-COA) that we have developed, and $k$-NN. On the other hand, in cases where there are misalignments of nodes across time, we assume that latent positions are random, prompting us to implement the same competitors and tuning parameter selection methods as discussed in \Cref{change_point_section}, including \Cref{online_hpca} with the scan statistics defined in \Cref{def-cusum}, as well as the four competitors (on-TH-PCA, on-TWIST, on-UASE and on-COA) that we have introduced, and $k$-NN.

\medskip 
\noindent \textbf{The worldwide agricultural trade network data set} is collected annually by the Food and Agriculture Organisation of the United Nations from 1986 to 2020 ($35$ years).  The data are publicly available at \cite{FAOSTAT}. The data set has been previously analysed in the multilayer network literature using the data from 2010 \cite[e.g.][]{de2015structural, jing2021community, macdonald2022latent}. Each node represents a country and each layer corresponds to an agricultural product. Directed edges within each layer represent the export-import relationship between countries of a specific agricultural product.

To construct the multilayer networks, we select the top $L = 4$ agricultural products based on trade volume (food preparations not elsewhere classified, crude materials, pastry and non-alcoholic beverages), $n_1 = 75$ countries with the most importing relationship (referred to as reporter countries) and $n_2=75$ countries with the most exporting relationship (referred to as partner countries). For each year, the $(i,j,l)$-th entry of the raw multilayer network represents the total import value of the $l$-th 
agricultural product from the $j$-th partner country to the $i$-th reporter countries. We obtain $T = 35$ directed bipartite multilayer networks. Data from 1986 to 1995 ($10$ years) are used as the training data set, with the rest being the test data set. 

Given the labelled nature of the nodes in these multilayer networks, we assume that latent positions are fixed and employ the same competitors and tuning parameter selection methods as those mentioned in \Cref{change_point_section_f}.  With $\alpha = 0.05$, all algorithms identify Year 2002 as the change point, except that $k$-NN identifies Year 1996 as the change point.  This change point correlates with global commodity price fluctuations.  As illustrated in \cite{WorldEconomicOutlookApril2012}, Year 2002 marked a low point and subsequent turning point in global commodity prices, following a high price period from the 1970s to the late 1990s.  These fluctuations likely influenced trade patterns as countries adjusted their import and export practices in response to changing prices.

In addition, we apply independent random permutations to the nodes of the multilayer networks at each time point. Under this setting, we treat latent positions as random and utilise the same competitor algorithms and tuning parameter selection methods as presented in Section \ref{change_point_section}. All algorithms again identified Year 2002 as the change point except for $k$-NN, which identified Year~1996 as the change point.

\section{Conclusions}

We have formulated and analyzed a change point detection framework involving dynamic multilayer random dot product graphs, with fixed and random latent positions.  We represent a multilayer network as a tensor, and propose a novel estimation procedure based on tensor methods. For estimating a single multilayer network, our algorithm outperforms state-of-the-art methods.  In dynamic settings, where a sequence of multilayer random dot product graphs is observed, we assume models based on both fixed and random latent positions. In particular, for the more challenging case of random latent positions,  we developed a novel kernel-based method for detecting a nonparametric distributional change in the dynamic multilayer network, where the change is characterized by the point-wise expectation of certain kernel density estimators. We derived a bound on the detection delay with guaranteed false alarm control.  To the best of our knowledge,  the type of multilayer network models considered in the paper, the type of tensor techniques for estimating a single multilayer network, and the deployment of point-wise expectation of kernel density estimators for nonparametric change point detection are novel contributions to this field.

Our results and analysis can be improved in several ways.  First, in the online change point detection settings with random latent positions we use a rank-one approximation without iteration, instead of the more powerful but computationally expensive TH-PCA algorithm (\Cref{thpca}). In our numerical experiments,  on-TH-PCA delivers a clear improvement.  It would be interesting in future analysis to provide more refined theoretical guarantees for on-TH-PCA with random latent positions.  Second,  we do not consider temporal dependence.  As a possible future direction, it would be very interesting to allow for temporal dependence and study the performance of our online change point procedures in this setting; see, e.g., \cite[e.g.][]{padilla2019optimal, xu2022change}.

As for the multilayer network modelling, we have excluded inter-layer edges, since we are primarily interested in modelling multiple types of connections within the same set of nodes.  In the more complex scenarios with inter-layer edges, one can rely on order-4 -- instead of order-3 -- tensor to include additional information.  The PCA-based methodology and theory can be extended but with subtle problem-specific nuance.

As a parting note, we reiterate that dynamic network modelling with fixed and random latent positions implies that the nodes are labelled and unlabelled, respectively.  These are two fundamentally different modelling strategies, which in turn lead to  different ways of characterizing change points and, as a result, of measuring the magnitude of the jump in Section~\ref{sec-dynamic-f} and \ref{sec-dynamic}, respectively.  While we have chosen the Frobenius norm and the functional supremum norm respectively, we acknowledge that other metrics for quantifying the magnitude of the change can be used. 

\section*{Acknowledgements}

Wang is supported by Chancellor's International Scholarship, University of Warwick. Li, Madrid Padilla and Rinaldo are partially supported by NSF DMS 2015489.  Yu is partially supported by the EPSRC and Leverhulme Trust.

\clearpage
\bibliographystyle{plainnat}
\bibliography{references.bib}

\appendix

\section*{Appendices}\label{appendices}

All the technical details of this paper are documented in the Appendices, with the proofs of Theorems~\ref{random_theorem}, \ref{main_theorem_f} and \ref{main_theorem} collected in Appendices~\ref{proof-sec2}, \ref{proof-sec3} and \ref{proof-sec4}, respectively. 

We only presented the undirected MRDGPs with fixed and random latent positions in the main text.  As for the fixed latent position case, the directed and undirected MRDPGs can be treated by exactly the same methods and share exactly the same theory.  We therefore omit it.  This is not true for the random latent positions.  If one assumes the tail and head vertex collections are the same in the directed case, then it can be treated the same as the undirected case.  If one considers a bipartite graph, with different tail and head vertex collections, then simpler methodology and theory are in order.  For completeness, we collect corresponding definitions, statistics and theoretical results in 
\Cref{sec-directed-edges}.  We would like to emphasize that this is an easier case than the undirected one we present in the main text.

An analysis of another real data set, the U.S.~air transport network, is mentioned in \Cref{sec-real-data} and detailed in \Cref{supp-sec-num}, along with additional simulation results including tuning parameter sensitivity analysis.

\section[]{Proof of Theorem \ref{random_theorem}}\label{proof-sec2}
We present the proof of \Cref{random_theorem} in \Cref{sec-A2}, with an introduction of additional notation in \Cref{sec-A0} and all necessary auxiliary results, including those from \cite{han2022optimal}, in \Cref{sec-A1}.

\subsection{Additional notation}\label{sec-A0}

For any matrix $A\in \R^{p_1 \times p_2}$,  let $\sigma_{\min}(A)$ denote the smallest non-zero singular value of $A$ if $A \neq 0$ and zero if $A = 0$.  Let $\otimes$ be the matrix Kronecker product operator.

\subsection{Auxiliary results}\label{sec-A1}

The proof of Theorem~4.1 in \cite{han2022optimal} includes an estimation error upper bound on their initial estimator, which is in fact the output of our \Cref{thpca}.  Also, their result can be straightforwardly extended to the undirected case, by adjusting (D.1) in their proofs. We include their results and extended results for the undirected case in \Cref{thm-han-4.1}, on which the proof of \Cref{random_theorem} relies.  The notation in \Cref{thm-han-4.1} is translated to the notation used in this paper.

\begin{theorem}[\citealp{han2022optimal}]\label{thm-han-4.1}
Assume that $\mathbf{A} \in \mathbb{R}^{n_1 \times n_2 \times L}$ satisfies
    \[
        \mathbb{E}\big(\mathbf{A}\big) = \mathbf{P} = \widetilde{\mathbf{S}} \times_1 U_X \times_2 U_Y \times_3 U_Q,
    \]
    with $n_1 = n_2$ and $U_X=  U_Y$ for the undirected case, where each entry of $P$ is fixed, $\widetilde{\mathbf{S}} \in \R^{d \times d \times m}$, $U_X\in \R^{n_1 \times d}, U_Y \in \R^{n_2 \times d}$ and $U_Q \in \R^{L \times m}$ are matrices with orthonormal columns and 
    \[
    \left( \mathrm{rank} \left(\mathcal{M}_1(\mathbf{P})\right), \mathrm{rank} \left(\mathcal{M}_2(\mathbf{P})\right),
    \mathrm{rank} \left(\mathcal{M}_3(\mathbf{P})\right)\right) = (d, d, m).
    \]
    For the undirected case, assume that for all $l \in [L]$, $\mathbf{A}_{:,:,l}$ are symmetric and for all $i, j\in [n_1]$, $i \leq j$, $l \in [L]$, $(\mathbf{A} - \mathbf{P})_{i, j, l}$ are independent mean-zero sub-Gaussian random variables with sub-Gaussian parameter upper bounded by $\sigma > 0$.  For the directed case, assume that all entries of $(\mathbf{A} - \mathbf{P})$ are independent mean-zero sub-Gaussian random variables with sub-Gaussian parameter upper bounded by $\sigma > 0$. 

For $s \in [3]$, let $\bar{\lambda}_s = \|\mathcal{M}_s(\mathbf{P})\|$ and $\underline{\lambda}_s = \sigma_{\min}(\mathcal{M}_s(\mathbf{P}))$.  Let $\widetilde{\mathbf{P}}$ be the intermediate output of \Cref{thpca} with $(r_1, r_2 , r_3) = (d, d, m)$.  It holds that
\begin{align*}
    & \P \Big\{\sigma^{-2}\big\|\widetilde{\mathbf{P}} - \mathbf{P}\big\|_{\mathrm{F}}^2 \leq C\big(\underline{\lambda}_1^{-4} \bar{\lambda}_1^2 (\bar{\lambda}_1^2 n_1 d + n_1 n_2 L d) + \underline{\lambda}_2^{-4} \bar{\lambda}_2^2 (\bar{\lambda}_2^2 n_2 d + n_1 n_2 L d)   \\
    & \hspace{1cm} + \underline{\lambda}_3^{-4} \bar{\lambda}_3^2 (\bar{\lambda}_3^2 Lm + n_1 n_2 L m) + d^2m + n_1d + n_2d + Lm \big)  \Big\} \geq 1 - C_1\exp\{-c_1(n_1 \vee n_2 \vee L)\},  
\end{align*}
where $C, C_1, c > 0$ are absolute constants.
\end{theorem}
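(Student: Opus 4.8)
The plan is to derive the statement from the analysis of the spectral‑initialization step inside the proof of Theorem~4.1 of \cite{han2022optimal}. Set $\mathbf{Z} := \mathbf{A} - \mathbf{P}$ and write also $n_3 := L$, $(r_1,r_2,r_3) := (d,d,m)$, with all mode indices read modulo $3$. In the directed case, where the entries of $\mathbf{Z}$ are independent, mean‑zero and sub‑Gaussian, the intermediate (pre‑truncation) estimator $\widetilde{\mathbf{P}}$ of \Cref{thpca} \emph{is} exactly the initial estimator whose Frobenius error is controlled as an intermediate step towards that theorem --- each mode being handled by their heteroskedastic‑PCA lemma, i.e.~the analysis of the H‑PCA routine of \Cref{hpca} due to \cite{zhang2018heteroskedastic} --- so the displayed inequality is nothing but that intermediate bound and there is nothing new to prove. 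The remaining task is to re‑run the same argument in the undirected case ($n_1 = n_2$, $U_X = U_Y$, each slice $\mathbf{Z}_{:,:,l}$ symmetric, so that $\mathcal{M}_1(\mathbf{Z})$ and $\mathcal{M}_2(\mathbf{Z})$ are no longer matrices with independent entries but carry the constraint tying entry $(i,(j,l))$ to entry $(j,(i,l))$).

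First I would record the deterministic skeleton the bound rests on. Let $P_s$ be the mode‑$s$ projector onto the column space of $\mathcal{M}_s(\mathbf{P})$ (so $P_1 = U_X U_X^\top$, $P_2 = U_Y U_Y^\top$, $P_3 = U_Q U_Q^\top$) and $\widehat{P}_s = \widehat{U}_s\widehat{U}_s^\top$ the estimates from \Cref{hpca}; using $\mathbf{P} = \mathbf{P}\times_1 P_1\times_2 P_2\times_3 P_3$,
\begin{equation*}
    \widetilde{\mathbf{P}} - \mathbf{P} = \big(\mathbf{P}\times_1\widehat{P}_1\times_2\widehat{P}_2\times_3\widehat{P}_3 - \mathbf{P}\big) + \mathbf{Z}\times_1\widehat{P}_1\times_2\widehat{P}_2\times_3\widehat{P}_3 .
\end{equation*}
A telescoping identity over the three modes bounds the first bracket by $\sum_{s=1}^3 \|(\widehat{P}_s - P_s)\mathcal{M}_s(\mathbf{P})\|_{\mathrm F}$, and a first‑order expansion of each spectral projector shows that this is governed by $\bar{\lambda}_s\underline{\lambda}_s^{-2}$ times the Frobenius norm of a projected noise block; this is where the factors $\underline{\lambda}_s^{-4}\bar{\lambda}_s^2\big(\bar{\lambda}_s^2 n_s r_s + n_1 n_2 L\, r_s\big)$ come from, the first summand from the cross term $P_s^\perp\mathcal{M}_s(\mathbf{Z})\mathcal{M}_s(\mathbf{P})^\top U_s$ and the second from the de‑biased Gram fluctuation $P_s^\perp\big(\mathcal{M}_s(\mathbf{Z})\mathcal{M}_s(\mathbf{Z})^\top - \E[\,\cdot\,]\big)U_s$ that the H‑PCA iterations are designed to remove. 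The second bracket equals $\|\mathbf{Z}\times_1\widehat{U}_1^\top\times_2\widehat{U}_2^\top\times_3\widehat{U}_3^\top\|_{\mathrm F}$, an $r_1\times r_2\times r_3$ core; replacing $\widehat{U}_s$ by $U_s$ at a cost absorbed into the $\bar\lambda,\underline\lambda$ terms via $\|\widehat{U}_s - U_s\|$ yields the $\sigma^2 d^2 m$ contribution, while the standalone $n_1 d + n_2 d + Lm$ terms track the leading singly‑projected noise contributions to the subspace perturbations. The logarithmically many H‑PCA iterations and the three modes jointly supply the union bound that produces the failure probability $C_1\exp\{-c_1(n_1\vee n_2\vee L)\}$.

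Only three probabilistic inputs feed this skeleton, and the undirected modification is to check that each survives the symmetric‑slice structure: (i) the operator‑norm bound $\|\mathcal{M}_s(\mathbf{Z})\| \lesssim \sigma\big(\sqrt{n_s} + \sqrt{n_{s+1} n_{s+2}}\big)$ (the inequality labelled (D.1) in the proof of \cite{han2022optimal}); (ii) concentration of the off‑diagonal part of $\mathcal{M}_s(\mathbf{Z})\mathcal{M}_s(\mathbf{Z})^\top$ about its mean; and (iii) the $\chi^2$‑type concentration of $\|\mathbf{Z}\times_1 U_X^\top\times_2 U_Y^\top\times_3 U_Q^\top\|_{\mathrm F}^2$. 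For (i): $\mathcal{M}_3(\mathbf{Z})$ still has independent sub‑Gaussian entries, and $\mathcal{M}_1(\mathbf{Z}),\mathcal{M}_2(\mathbf{Z})$ are symmetry‑constrained sub‑Gaussian random matrices whose operator norm obeys the same bound up to an absolute constant by the standard non‑asymptotic theory for symmetric random matrices. For (ii): fixing $i\neq i'$, $\big(\mathcal{M}_1(\mathbf{Z})\mathcal{M}_1(\mathbf{Z})^\top\big)_{i,i'} = \sum_{j,l}\mathbf{Z}_{i,j,l}\mathbf{Z}_{i',j,l}$ becomes, after deleting the two indices $j\in\{i,i'\}$, a sum over distinct unordered pairs and hence of independent products, so a Bernstein/Hanson--Wright estimate gives the same tail, the two deleted terms being $O(1)$ in number and absorbed; the same holds for mode $2$, and mode $3$ is unchanged. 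For (iii): symmetry only removes coordinates, leaving the sub‑exponential concentration intact.

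The step I expect to be the main obstacle is re‑verifying that the H‑PCA contraction argument of \cite{zhang2018heteroskedastic} --- stated for matrices with independent entries --- goes through under the mild dependence of the symmetric model, together with the bookkeeping of (ii). The resolution is the decoupling already used above: every quantity entering the iteration recursion for $\widehat{\Sigma}^{(t)}$ in \Cref{hpca} splits into an independent sum plus a lower‑order remainder (the $O(1)$‑many ``diagonal‑adjacent'' cross terms, dominated by the diagonal‑imputation error the algorithm already controls), after which the per‑iteration bounds of \cite{zhang2018heteroskedastic} apply verbatim with adjusted absolute constants. Feeding (i)--(iii) through the deterministic decomposition then reproduces the stated inequality in the undirected case; since the directed case is the quoted bound from \cite{han2022optimal}, the proof is complete.
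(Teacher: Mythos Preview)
Your proposal is correct and matches the paper's treatment: the paper does not give its own proof of this statement but simply cites \cite{han2022optimal} for the directed case and remarks that the undirected extension follows ``by adjusting (D.1) in their proofs,'' which is precisely your point (i) on the operator-norm bound $\|\mathcal{M}_s(\mathbf{Z})\|$. Your sketch is in fact more detailed than the paper's --- you spell out the deterministic decomposition and the additional probabilistic inputs (ii), (iii) that must be checked under the symmetry constraint, whereas the paper leaves all of this implicit in the citation.
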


\begin{lemma}\label{fix_lemma1}
Given fixed matrices $X \in \mathbb{R}^{n \times d}$ and $Q \in \mathbb{R}^{L \times d^2}$, assume $\mathrm{rank}(X) = d$ and $\mathrm{rank}(Q) = m$.  Write $\mathbf{P} = \mathbf{S} \times_1 X \times_2 X \times_3 Q$, with $\mathbf{S}$ defined in \eqref{matrix_S}.  For $s \in [3]$, let $r_s = \mathrm{rank} \left(\mathcal{M}_s(\mathbf{P})\right)$, $\overline{\lambda}_s = \|\mathcal{M}_s(\mathbf{P})\|$ and $\underline{\lambda}_s = \sigma_{\min} (\mathcal{M}_i(\mathbf{P}))$.  It holds that  $\left( r_1, r_2, r_3 \right)  = \left(d, d, m \right)$,
\[
 \sqrt{d} \sigma^2_d(X) \sigma_{m} (Q) \leq  \underline{\lambda}_1  \leq \overline{\lambda}_1 \leq  \sqrt{d} \sigma^2_1(X) \sigma_1(Q), \quad
 \sqrt{d} \sigma_d^2(X)\sigma_{m} (Q) \leq  \underline{\lambda}_2  \leq \overline{\lambda}_2 \leq  \sqrt{d} \sigma^2_1(X)\sigma_1(Q),
\] 
and 
\[
 \sigma_d^2(X) \sigma_{m} (Q) \leq  \underline{\lambda}_3  \leq \overline{\lambda}_3 \leq  \sigma^2_1(X)\sigma_1(Q). 
\] 
\end{lemma}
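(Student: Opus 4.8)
The plan is to express each mode-$s$ matricization $\mathcal{M}_s(\mathbf{P})$ explicitly in terms of $X$ and $Q$ using the Tucker representation $\mathbf{P} = \mathbf{S} \times_1 X \times_2 X \times_3 Q$ and the definition \eqref{matrix_S} of $\mathbf{S}$, and then read off the rank and the extremal singular values directly. Recall from the remarks after \eqref{Y_tucker_rep} that for each $l \in [L]$ the $l$-th layer of $\mathbf{P}$ equals $X W_{(l)} X^\top$, where the rows of $W_{(l)}$ stack into the $l$-th block-row of $Q$. The key structural fact is that, because $\mathbf{S}_{i,j,l} = \mathbbm{1}\{l = (i-1)d+j\}$ is essentially the "reshaping" core, contracting $\mathbf{S}$ against $X$ on modes $1$ and $2$ just turns a $d^2$-vector into the matrix $X(\cdot)X^\top$. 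Concretely, I expect to show:
\begin{itemize}
    \item $\mathcal{M}_1(\mathbf{P}) = X \, \mathcal{M}_1(\mathbf{S} \times_2 X \times_3 Q) = X \, \big( (X \otimes Q) \, \mathcal{M}_1(\mathbf{S})^{?}\big)$, which after unwinding the Kronecker bookkeeping becomes $\mathcal{M}_1(\mathbf{P}) = X R_1$ for a matrix $R_1$ whose rows span the same space as a Kronecker-type product of $X$ and $Q$; similarly $\mathcal{M}_2(\mathbf{P}) = X R_2$;
    \item $\mathcal{M}_3(\mathbf{P}) = Q \, R_3$, where $R_3 \in \R^{d^2 \times n^2}$ has full row rank $d^2$ because its rows, suitably arranged, contain the entries of $X \otimes X$ and $\mathrm{rank}(X \otimes X) = d^2$ when $\mathrm{rank}(X)=d$.
\end{itemize}

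From these factorizations the rank claims follow immediately: $\mathrm{rank}(\mathcal{M}_1(\mathbf{P})) = \mathrm{rank}(\mathcal{M}_2(\mathbf{P})) = d$ since $X$ has rank $d$ and the right factor turns out to have full row rank $d$ (using $\mathrm{rank}(Q)=m \ge 1$ and the Kronecker structure), and $\mathrm{rank}(\mathcal{M}_3(\mathbf{P})) = \mathrm{rank}(Q) = m$. For the singular value bounds, I would use the standard submultiplicativity/supermultiplicativity inequalities for singular values of products: for conformable $A, B$ with $B$ of full row rank, $\sigma_{\min}(A)\sigma_{\min}(B) \le \sigma_{\min}(AB)$ and $\sigma_{\max}(AB) \le \sigma_{\max}(A)\sigma_{\max}(B)$ (the lower bound requires restricting to the appropriate non-zero singular values, which is exactly why the statement uses $\sigma_{\min}$ for the smallest \emph{non-zero} one). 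Applying this to $\mathcal{M}_1(\mathbf{P}) = X R_1$: one gets $\underline{\lambda}_1$ sandwiched between $\sigma_d(X)\sigma_{\min}(R_1)$ and $\sigma_1(X)\sigma_{\max}(R_1)$, and then I need $\sigma_{\min}(R_1)$ and $\sigma_{\max}(R_1)$ in terms of $\sigma_\cdot(X)$ and $\sigma_\cdot(Q)$. Because $R_1$ is built from $X$ and $Q$ via a Kronecker product (whose singular values are products $\sigma_i(X)\sigma_j(Q)$) composed with the fixed $0/1$ selection tensor $\mathbf{S}$ which is an isometry on the relevant subspace, I expect $\sigma_{\min}(R_1) = \sigma_d(X)\sigma_m(Q)$ and $\sigma_{\max}(R_1) = \sigma_1(X)\sigma_1(Q)$, giving the stated bound $\sqrt{d}\,\sigma_d^2(X)\sigma_m(Q) \le \underline{\lambda}_1 \le \overline{\lambda}_1 \le \sqrt{d}\,\sigma_1^2(X)\sigma_1(Q)$ — the extra $\sqrt d$ arising from the Frobenius-type norm of the core $\mathbf{S}$ along the contracted mode. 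The mode-2 case is identical by symmetry, and for mode 3 the core contributes no extra factor (hence no $\sqrt d$), yielding $\sigma_d^2(X)\sigma_m(Q) \le \underline{\lambda}_3 \le \overline{\lambda}_3 \le \sigma_1^2(X)\sigma_1(Q)$.

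The main obstacle I anticipate is the careful bookkeeping in the matricization-of-Tucker-product identities: tracking exactly how $\mathcal{M}_s(\mathbf{M}\times_1 U_1 \times_2 U_2 \times_3 U_3) = U_s\,\mathcal{M}_s(\mathbf{M})\,(U_{s+2}\otimes U_{s+1})^\top$ interacts with the non-symmetric index convention in \eqref{eq-matricisation}, and verifying that the selection core $\mathbf{S}$ really does act as a partial isometry so that it contributes exactly a factor $\sqrt d$ on modes $1,2$ and a factor $1$ on mode $3$ (equivalently, that $\mathcal{M}_s(\mathbf{S})$ has all non-zero singular values equal to a known constant). Once the correct form of $\mathcal{M}_s(\mathbf{S})$ and its singular values are pinned down, the remaining inequalities are routine applications of $\sigma_{\min}(AB)\ge \sigma_{\min}(A)\sigma_{\min}(B)$ (on the non-zero part) and $\|AB\|\le\|A\|\|B\|$ together with the multiplicativity of Kronecker-product singular values.
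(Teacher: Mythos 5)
Your high-level route is the same as the paper's: matricize the Tucker product, isolate the action of the core $\mathbf{S}$, and transfer singular-value bounds through the factors. The upper bounds ($\|AB\|\leq\|A\|\|B\|$) and the mode-3 analysis are fine as you sketch them, since $\mathcal{M}_3(\mathbf{S})=I_{d^2}$ gives $\mathcal{M}_3(\mathbf{P})=Q\,\mathcal{M}_3(\mathbf{S})(X\otimes X)^{\top}$ and $\mathrm{rank}(X\otimes X)=d^2$ settles $r_3=m$. The gap is in modes $1$ and $2$, precisely at the point you flag as the "main obstacle" and then dismiss as bookkeeping. Knowing that $\mathcal{M}_1(\mathbf{S})$ is a scaled partial isometry (all nonzero singular values equal to $\sqrt{d}$) is \emph{not} equivalent to, nor sufficient for, the claims you need: the inequality $\sigma_{\min}(AB)\geq\sigma_{\min}(A)\sigma_{\min}(B)$ for smallest \emph{nonzero} singular values is false in general (e.g.\ $A=(1,0)$, $B=(0,1)^{\top}$ gives $AB=0$), and it only holds under rank-compatibility conditions such as those in \Cref{lemma_smallest_singular_value_1}. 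In your factorization $\mathcal{M}_1(\mathbf{P})=X\,\mathcal{M}_1(\mathbf{S})(X\otimes Q)^{\top}$ (up to Kronecker ordering), the right-hand factor $(X\otimes Q)^{\top}\in\R^{d^3\times nL}$ has rank $dm<d^3$, so it is neither of full row rank nor of full column rank, and none of the standard conditions apply directly. In particular, your assertion that "the right factor turns out to have full row rank $d$" is exactly the statement that $\mathcal{M}_1(\mathbf{S})$ composed with the Kronecker factor does not lose rank or shrink singular values through misalignment of subspaces; this is the substantive content of the lemma and does not follow from the partial-isometry property of $\mathcal{M}_1(\mathbf{S})$ alone.

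The paper closes this gap in two moves that your proposal is missing. First, it writes SVDs $X=U_XD_XV_X^{\top}$, $Q=U_QD_QV_Q^{\top}$ and peels the factors in three stages (orthonormal $U$'s, then invertible diagonal $D$'s, then the core contracted against the $V$'s), so that at every stage the product inequalities of \Cref{lemma_smallest_singular_value_1} and the rank-preservation of \Cref{lem-rank-id} are applied with their hypotheses actually verified (orthonormal columns or invertible square factors). Second, and crucially, for the residual object $\mathcal{M}_1(\mathbf{S})(V_X\otimes V_Q)$ it proves by a bespoke computation exploiting the $0/1$ structure of $\mathbf{S}$ and the orthonormality of $V_X,V_Q$ that the rank is exactly $d$ and the smallest singular value is at least $\sqrt{d}$ (\Cref{lemma_smallest_singular_value_2}); only then do the generic inequalities yield $\underline{\lambda}_1\geq\sqrt{d}\,\sigma_d^2(X)\sigma_m(Q)$ and $r_1=d$. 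To repair your argument you would need to supply an analogue of this explicit rank/singular-value computation for the core-times-Kronecker product; without it, the lower bounds and the rank claims for modes $1$ and $2$ are unproved.
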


\begin{proof}[\textbf{Proof of \Cref{fix_lemma1}.}]
We write singular decomposition formulae of $X$ and $Q$ as
\[
    X = U_X D_X V_X^{\top} \quad \mbox{and} \quad Q = U_Q D_Q V_Q^{\top},
\]
with $U_X, U_Q, V_X, V_Q$ being matrices with orthonormal columns, $D_X \in \mathbb{R}^{d \times d}$ and $D_Q \in \mathbb{R}^{m \times m}$.  We can then write 
\[
      \mathbf{P} = \widetilde{\mathbf{S}} \times_1 U_X \times_2 U_X \times_3 U_Q,
\]
where 
\begin{align}
    \widetilde{\mathbf{S}} & = \mathbf{S} \times_1 (D_X V_X^{\top}) \times_2 (D_X V_X^{\top}) \times_3 (D_Q V_Q^{\top}) \nonumber \\
    & = \big(\mathbf{S} \times_1 V_X^{\top} \times_2 V_X^{\top} \times_3 V_Q^{\top}\big) \times_1 D_X \times_2 D_X \times_3 D_Q = \mathbf{R} \times_1 D_X \times_2 D_X \times_3 D_Q. \label{eq-tilde-S-def}
\end{align}

\medskip
\noindent \textbf{Step 1.}  By the tensor matricisation definition and properties of tensor matricisation \citep[e.g. Lemma 4 in ][]{zhang2018tensor}, we see that 
\[
    \mathcal{M}_1(\mathbf{P}) = U_X \mathcal{M}_1 (\widetilde{\mathbf{S}})(U_X \otimes U_Q)^{\top}.
\]
We are to show that $\mathrm{rank}(\mathcal{M}_1(\mathbf{P})) = \mathrm{rank}(\mathcal{M}_1(\widetilde{\mathbf{S}}))$.  It follows from \Cref{lem-rank-id}  that it suffices to show that $U_X \otimes U_Q \in \mathbb{R}^{Ln \times md}$ has $\mathrm{rank}(U_X \otimes U_Q) = md$.  Note that the columns of $U_X \otimes U_Q$ are all of the form $\mathbf{v} \otimes \mathbf{u}$, $\mathbf{v}$ and $\mathbf{u}$ being columns of $U_X$ and $U_Q$, respectively.  For any two different columns $\mathbf{v_1} \otimes \mathbf{u_1}$ and $\mathbf{v_2} \otimes \mathbf{u_2}$, we have that $(\mathbf{v_1} \otimes \mathbf{u_1})^{\top}(\mathbf{v_2} \otimes \mathbf{u_2}) = \mathbf{u_1}^{\top}\mathbf{u_2} \mathbf{v_1}^{\top}\mathbf{v_2} = 0$.  This implies that $\mathrm{rank}(U_X \otimes U_Q) = md$.

Due to the fact that both $U_X$ and $U_Q$  consist of orthonormal columns, 
we have that $\overline{\lambda}_1 \leq \|\mathcal{M}_1(\widetilde{\mathbf{S}})\|$ and $\underline{\lambda}_1 \geq \sigma_{\min}(\mathcal{M}_1(\widetilde{\mathbf{S}}))$, where the latter is also due to \Cref{lemma_smallest_singular_value_1}.

Same arguments lead to that for $s \in [3]$,
\begin{equation}\label{eq-lambda-upper-lower-bound}
    r_s = \mathrm{rank}(\mathcal{M}_s(\mathbf{P})) = \mathrm{rank}(\mathcal{M}_s(\widetilde{\mathbf{S}})),  \quad \underline{\lambda}_s \geq \sigma_{\min}(\mathcal{M}_s(\widetilde{\mathbf{S}})) \quad \mbox{and} \quad  \overline{\lambda}_s \leq \|\mathcal{M}_s(\widetilde{\mathbf{S}})\|. 
\end{equation}

\medskip
\noindent \textbf{Step 2.}  It follows from \eqref{eq-tilde-S-def} and the tensor matricisation definition, we have that 
\[
    \mathcal{M}_1(\widetilde{\mathbf{S}}) = D_X \mathcal{M}_1(\mathbf{R})(D_X \otimes D_Q)^{\top}.
\]
We are to show that $\mathrm{rank}(\mathcal{M}_1(\widetilde{\mathbf{S}})) = \mathrm{rank}(\mathcal{M}_1(\mathbf{R}))$.  It follows from \Cref{lem-rank-id} and the fact that $D_X$ is a full-rank diagonal matrix, it suffices to show that $\mathrm{rank}(D_X \otimes D_Q) = dm$.  Denoting $\{\mathbf{e}^{i, (d)}\}_{i = 1}^d$ and $\{\mathbf{e}^{j, (m)}\}_{j = 1}^m$ as the $0/1$ basis of $\mathbb{R}^d$ and $\mathbb{R}^m$, respectively, we have that the columns of $D_X \otimes D_Q$ are of the form $\mathbf{e}^{i, (d)} \otimes \mathbf{e}^{j, (m)}$, which implies that $\mathrm{rank}(D_X \otimes D_Q) = dm$.  We therefore have that $\mathrm{rank}(\mathcal{M}_1(\widetilde{\mathbf{S}})) = \mathrm{rank}(\mathcal{M}_1(\mathbf{R}))$.

As for the matrix operator norm, we have that
\[
    \|\mathcal{M}_1(\widetilde{\mathbf{S}})\| \leq \|D_X\| \|\mathcal{M}_1(\mathbf{R})\| \|D_Y \otimes D_Q\| = \sigma_1^2(X)\sigma_1(Q) \left\| \mathcal{M}_1(\mathbf{R}) \right\|.
\]
It follows from \Cref{lemma_smallest_singular_value_1} that 
\[
    \sigma_{\min} (\mathcal{M}_1(\widetilde{\mathbf{S}})) \geq \sigma_{\min} (D_X) \sigma_{\min}(\mathcal{M}_1(\mathbf{R})) \sigma_{\min} (D_X \otimes D_Q) = \sigma_d^2(X)  \sigma_m (Q) \sigma_{\min} \left(\mathcal{M}_1(\mathbf{R})\right).
\]
Same arguments and \eqref{eq-lambda-upper-lower-bound} imply that, for $s \in [3]$, $r_s = \mathrm{rank}(\mathcal{M}_s(\mathbf{R}))$,
\[
    \underline{\lambda}_s \geq \sigma_d^2(X) \sigma_m(Q) \sigma_{\min} (\mathcal{M}_s(\mathbf{R})) \quad \mbox{and} \quad \overline{\lambda}_s \leq \sigma_1^2(X) \sigma_1(Q) \|\mathcal{M}_s(\mathbf{R})\|. 
\]  

\medskip
\noindent \textbf{Step 3.} It follows from \eqref{eq-tilde-S-def} and the tensor matricisation definition, we have that 
\[
    \mathcal{M}_1 (\mathbf{R}) = V_{X}^{\top} \mathcal{M}_1 (\mathbf{S}) (V_{X} \otimes V_{Q}).
\]
It follows from \Cref{lem-rank-id} and \Cref{lemma_smallest_singular_value_2} that $\mathrm{rank}(\mathcal{M}_1(\mathbf{R})) = \mathrm{rank}(\mathcal{M}_1(\mathbf{S}) (V_{X} \otimes V_{Q})) = d$.  
From \Cref{lemma_smallest_singular_value_2}, we have that $ \|\mathcal{M}_1(\mathbf{S})\| = \sigma_{\min} (\mathcal{M}_1(\mathbf{S})) = \sqrt{d}$.
We therefore have $\|\mathcal{M}_1(\mathbf{R}) \| \leq \big\|V_{X}^{\top}\big\| \|\mathcal{M}_1(\mathbf{S})\| \|V_{X} \otimes V_{Q}\| = \sqrt{d}$.    It follows from \Cref{lemma_smallest_singular_value_1} and \Cref{lemma_smallest_singular_value_2} that  
\[
    \sigma_{\min}(\mathcal{M}_1(\mathbf{R})) \geq  \sigma_{\min} \big(V_{X}^{\top}\big)  \sigma_{\min} (\mathcal{M}_1(\mathbf{S})(V_{X} \otimes V_{Q})) \geq \sqrt{d}. 
\]
This means that $\|\mathcal{M}_1(\mathbf{R})\| = \sigma_{\min}(\mathcal{M}_1(\mathbf{R})) = \sqrt{d}$.  The same arguments also lead to that 
\[
    \mathrm{rank}(\mathcal{M}_2(\mathbf{R})) = d \quad \mbox{and} \quad \|\mathcal{M}_2(\mathbf{R})\| = \sigma_{\min}(\mathcal{M}_2(\mathbf{R})) = \sqrt{d}.
\]

As for $\mathcal{M}_3(R)$, note that due to \Cref{lemma_smallest_singular_value_2},
\[
    \mathcal{M}_3(\mathbf{R}) = V_{Q}^{\top} \mathcal{M}_3 (\mathbf{S})(V_{X} \otimes V_{X}) = V_{Q}^{\top}(V_{X} \otimes V_{X}),
\]
with 
\[
    \mathrm{rank}(\mathcal{M}_3(\mathbf{R})) = m \quad \mbox{and} \quad \|\mathcal{M}_3(\mathbf{R})\| = \sigma_{\min}(\mathcal{M}_3(\mathbf{R})) = 1, 
\]
where the first result is also due to \Cref{lem-rank-id}.

\medskip
\noindent \textbf{Step 4.}  Gathering all the previous three steps, we have that 
\[
    \underline{\lambda}_1 \geq  \sqrt{d}  \sigma^2_d(X) \sigma_{m} (Q),  \quad \underline{\lambda}_2 \geq  \sqrt{d}  \sigma^2_d(X) \sigma_{m} (Q), \quad \underline{\lambda}_3 \geq   \sigma_d^2(X) \sigma_{m} (Q), 
\]  
and
\[
    \overline{\lambda}_1   \leq  \sqrt{d} \sigma_1^2(X) \sigma_1(Q), \quad \overline{\lambda}_2   \leq  \sqrt{d} \sigma_1^2(X) \sigma_1(Q), \quad \overline{\lambda}_3   \leq   \sigma_1^2(X) \sigma_1(Q).
\]
\end{proof}

\begin{lemma}\label{lemma_smallest_singular_value_2}
For $\mathbf{S} \in \R^{d \times d \times d^2}$ defined in \eqref{matrix_S}, it holds that
\begin{equation}\label{eq-s-matrix-ranks}
    \mathcal{M}_1(\mathbf{S}) \mathcal{M}_1(\mathbf{S})^{\top} = \mathcal{M}_2(\mathbf{S}) \mathcal{M}_2(\mathbf{S})^{\top} = dI_d \quad \mbox{and} \quad \mathcal{M}_3(\mathbf{S}) = I_{d^2}.
\end{equation}
Letting $O \in \R^{d \times d}$ and $R \in \R^{d^2 \times m}$ be matrices with orthonormal columns, we have 
\begin{align}\label{eq_min_singular_1}
    \mathrm{rank} \left( \mathcal{M}_1 (\mathbf{S}) \left( O \otimes R \right) \right) = d,  \quad \sigma_{\min} \left( \mathcal{M}_1 (\mathbf{S}) \left( O \otimes R \right)  \right) \geq \sqrt{d},
\end{align}
\begin{align}\label{eq_min_singular_2}
    \mathrm{rank} \left( \mathcal{M}_2 (\mathbf{S}) \left( R \otimes O \right) \right) = d  \quad \mbox{and} \quad \sigma_{\min} \left( \mathcal{M}_2 (\mathbf{S}) \left( R \otimes O \right)  \right) \geq \sqrt{d}. 
\end{align}
\end{lemma}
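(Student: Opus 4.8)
The plan is to treat the two displays separately: \eqref{eq-s-matrix-ranks} is a direct unpacking of the definitions, while \eqref{eq_min_singular_1}--\eqref{eq_min_singular_2} will be reduced to a single spectral estimate for a $d\times d$ Gram matrix built out of $R$.

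For \eqref{eq-s-matrix-ranks} I would substitute $\mathbf S_{i,j,l}=\mathbbm{1}\{l=(i-1)d+j\}$ into \eqref{eq-matricisation}. For the third mode, $\mathcal M_3(\mathbf S)_{i_1,(i_2-1)d+i_3}=\mathbf S_{i_2,i_3,i_1}=\mathbbm{1}\{i_1=(i_2-1)d+i_3\}$, which is $1$ exactly when the row index equals the column index, so $\mathcal M_3(\mathbf S)=I_{d^2}$. For the first mode, $\bigl(\mathcal M_1(\mathbf S)\mathcal M_1(\mathbf S)^\top\bigr)_{i,i'}=\sum_{j\in[d],\,l\in[d^2]}\mathbf S_{i,j,l}\mathbf S_{i',j,l}=\sum_{j\in[d]}\mathbbm{1}\{(i-1)d+j=(i'-1)d+j\}=d\,\mathbbm{1}\{i=i'\}$, hence $\mathcal M_1(\mathbf S)\mathcal M_1(\mathbf S)^\top=dI_d$; the mode-$2$ identity is the same computation via $\mathcal M_2(\mathbf S)_{i_1,(i_2-1)d+i_3}=\mathbf S_{i_3,i_1,i_2}$. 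Along the way I would record two facts for later: (a) $\mathcal M_1(\mathbf S)$ and $\mathcal M_2(\mathbf S)$ have full row rank $d$ with every singular value equal to $\sqrt d$, so $\|\mathcal M_s(\mathbf S)\|=\sigma_{\min}(\mathcal M_s(\mathbf S))=\sqrt d$ for $s=1,2$; and (b) each row of $\mathcal M_1(\mathbf S)$ has exactly $d$ nonzero entries, all equal to $1$, with distinct rows having disjoint column supports.

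For \eqref{eq_min_singular_1}--\eqref{eq_min_singular_2}: since $O\in\R^{d\times d}$ has orthonormal columns it is orthogonal, $OO^\top=O^\top O=I_d$, and $R^\top R=I_m$, so $(O\otimes R)^\top(O\otimes R)=I_{dm}$; consequently $\|\mathcal M_1(\mathbf S)(O\otimes R)\|\le\|\mathcal M_1(\mathbf S)\|=\sqrt d$. To obtain the rank and the lower bound I would pass to the $d\times d$ positive semidefinite matrix
\[
    G:=\mathcal M_1(\mathbf S)(O\otimes R)(O\otimes R)^\top\mathcal M_1(\mathbf S)^\top=\mathcal M_1(\mathbf S)\,(I_d\otimes RR^\top)\,\mathcal M_1(\mathbf S)^\top,
\]
whose eigenvalues are the squared singular values of $\mathcal M_1(\mathbf S)(O\otimes R)$, so it suffices to show $\mathrm{rank}(G)=d$ and $\lambda_{\min}(G)\ge d$. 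Using fact (b) and the Kronecker block form of $I_d\otimes RR^\top$, a bookkeeping expansion gives $G_{i,i'}=\sum_{j\in[d]}(RR^\top)_{(i-1)d+j,\,(i'-1)d+j}$; writing the $q$-th column of $R$ as the matrix $M_q\in\R^{d\times d}$ with $(M_q)_{i,j}=R_{(i-1)d+j,\,q}$ (the row-major reshaping matching the conventions in \eqref{matrix_S}), this collapses to $G=\sum_{q=1}^m M_qM_q^\top$, where $\{M_q\}_{q=1}^m$ is orthonormal in the Frobenius inner product because $R^\top R=I_m$. Interchanging the roles of the first two modes reduces \eqref{eq_min_singular_2} to the identical statement with $\mathcal M_2$ in place of $\mathcal M_1$.

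The step I expect to be the main obstacle is the spectral lower bound $\lambda_{\min}\bigl(\sum_{q=1}^m M_qM_q^\top\bigr)\ge d$, which also delivers $\mathrm{rank}(G)=d$. I would try to derive it from the orthonormality of the family $\{M_q\}_{q=1}^m$: note that $\sum_q M_qM_q^\top$ is the ``column marginal'' of the orthogonal projection $RR^\top$, so the bound amounts to a quantitative statement that $\sum_q\mathrm{colspace}(M_q)$ exhausts $\R^d$. The resulting singular-value lower bounds would then be propagated through the outer multiplications by $V_X^\top$ and $V_X\otimes V_Q$ using standard product bounds (\Cref{lemma_smallest_singular_value_1}), with \Cref{lem-rank-id} handling the rank bookkeeping in the downstream application. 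This last step is where any structural input on $R$ beyond orthonormality of its columns must enter.
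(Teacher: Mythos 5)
Your handling of \eqref{eq-s-matrix-ranks} is correct and coincides with the paper's computation, and your reduction of \eqref{eq_min_singular_1} to the Gram matrix $G=\mathcal{M}_1(\mathbf{S})(O\otimes R)(O\otimes R)^{\top}\mathcal{M}_1(\mathbf{S})^{\top}=\mathcal{M}_1(\mathbf{S})(I_d\otimes RR^{\top})\mathcal{M}_1(\mathbf{S})^{\top}=\sum_{q=1}^m M_qM_q^{\top}$, with $M_q$ your row-major reshaping of the $q$th column of $R$, is also correct. The genuine gap is precisely the step you postpone: $\lambda_{\min}(G)\geq d$ (which would also give $\mathrm{rank}(G)=d$). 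This step is not merely the hard part; it cannot be closed under the stated hypotheses. Since the columns of $R$ are unit vectors, $\mathrm{tr}(G)=\sum_{q=1}^m\|M_q\|_{\mathrm{F}}^2=m$, hence $\lambda_{\min}(G)\leq m/d$, which is strictly smaller than $d$ whenever $m<d^2$; only in the square case $m=d^2$, where $RR^{\top}=I_{d^2}$, does your identity $G_{i,i'}=\sum_{j=1}^d(RR^{\top})_{(i-1)d+j,(i'-1)d+j}$ immediately give $G=dI_d$ and hence the claimed bounds. A concrete failure for $m<d^2$: take $d=2$, $m=1$, $O=I_2$, $R=(1,0,0,0)^{\top}$, so $M_1=\mathrm{diag}(1,0)$ and $\mathcal{M}_1(\mathbf{S})(O\otimes R)=M_1O$ has rank one and smallest nonzero singular value $1<\sqrt{2}$, violating both assertions of \eqref{eq_min_singular_1}. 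So your closing remark that structural input on $R$ beyond orthonormality of its columns must enter is exactly right: no argument from the stated hypotheses alone will produce the bound, and completing the proof requires either $m=d^2$ or an explicit extra condition on the reshapings, e.g.\ $\sum_{q=1}^m M_qM_q^{\top}\succeq dI_d$.

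For comparison, the paper's proof reaches the rank claim by a different route: it extracts the $d\times d$ submatrix $S^{(1)}$ whose columns are $\mathcal{M}_1(\mathbf{S})(O_c\otimes R_1)=M_1O_c$, asserts that these columns are pairwise orthogonal, concludes $\mathrm{rank}\big(\mathcal{M}_1(\mathbf{S})(O\otimes R)\big)=d$, and then invokes the product bound of \Cref{lemma_smallest_singular_value_1} together with $\sigma_{\min}(\mathcal{M}_1(\mathbf{S}))=\sqrt{d}$ and $\sigma_{\min}(O\otimes R)=1$. The asserted orthogonality reads $O_{u_1}^{\top}M_1^{\top}M_1O_{u_2}=0$ for $u_1\neq u_2$, i.e.\ it requires $O^{\top}M_1^{\top}M_1O$ to be diagonal (for instance $M_1^{\top}M_1\propto I_d$), which again is structure beyond orthonormal columns of $R$ (those only give $\|M_1\|_{\mathrm{F}}=1$), and the example above violates it. In short, your proposal and the paper's argument are blocked at the same substantive point; your Gram-matrix formulation has the merit of making the obstruction quantitative via the trace identity, which is the right starting point if the statement is to be repaired (e.g.\ restricted to $m=d^2$ or augmented with the hypothesis on $\sum_q M_qM_q^{\top}$), and the mode-2 claim \eqref{eq_min_singular_2} is affected in exactly the same way.
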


\begin{proof}[\textbf{Proof of \Cref{lemma_smallest_singular_value_2}.}]
Recall that $\mathbf{S} \in \R^{d \times d \times d^2}$ is defined in \eqref{matrix_S}. We have that $\mathcal{M}_1(\mathbf{S}) \in \R^{d \times d^3}$ satisfy that for any $i \in [d]$, the non-zero entries of $\left(\mathcal{M}_1(\mathbf{S})\right)^i \in \R^{d^3}$ have indices
\[
    (1 + (i-1)d, d^2+2 + (i-1)d, 2d^2+3 + (i-1)d, \ldots, (d-1)d^2 + d + (i-1)d).
\]
This leads to that 
\[
    \mathcal{M}_1(\mathbf{S}) \mathcal{M}_1(\mathbf{S})^{\top} =  dI_d. \nonumber
\]
Similarly, we have 
\[
\mathcal{M}_2(\mathbf{S}) \mathcal{M}_2(\mathbf{S})^{\top} = dI_d \quad \mbox{and} \quad \mathcal{M}_3(\mathbf{S}) = I_{d^2}. \nonumber
\]
 We conclude the proof of \eqref{eq-s-matrix-ranks}.  We then have that $\sigma_{\min}\left( \mathcal{M}_1(\mathbf{S}) \right) = \sqrt{d}$.

To show \eqref{eq_min_singular_1} and \eqref{eq_min_singular_2}, let $B = O \otimes R \in \mathbb{R}^{d^3 \times dm}$. As for $B$, note that the columns of $B$ have the form $\mathbf{v} \otimes \mathbf{u}$, $\mathbf{v}$ and $\mathbf{u}$ being columns of $O$ and $R$, respectively.  For any column of B, it holds that $\|\mathbf{v} \otimes \mathbf{u}\| = (\mathbf{v} \otimes \mathbf{u})^{\top}(\mathbf{v} \otimes \mathbf{u}) = \mathbf{v}^{\top}\mathbf{v} \mathbf{u}^{\top}\mathbf{u} = 1$.  For any two different columns $\mathbf{v_1} \otimes \mathbf{u_1}$ and $\mathbf{v_2} \otimes \mathbf{u_2}$, it holds that $(\mathbf{v_1} \otimes \mathbf{u_1})^{\top}(\mathbf{v_2} \otimes \mathbf{u_2}) = \mathbf{v_1}^{\top}\mathbf{v_2} \mathbf{u_1}^{\top}\mathbf{u_2} = 0$.  We therefore have that $\mathrm{rank}(B) = dm$ and $\sigma_{\min}(B) = 1$.

As for $\mathcal{M}_1(\mathbf{S})B \in \mathbb{R}^{d \times dm}$, we have that 
\begin{align*}
    & \mathcal{M}_1(\mathbf{S})B =\\
    & \begin{pmatrix}
        O_1^{\top}R_{1:d, 1} & \cdots & O_1^{\top}R_{1:d, m} & \cdots & O_d^{\top}R_{1:d, 1} & \cdots & O_d^{\top}R_{1:d, m} \\
        \vdots & \vdots & \vdots & \vdots & \vdots & \vdots & \vdots \\
        O_1^{\top}R_{((d-1)d+ 1):d^2, 1} & \cdots & O_1^{\top}R_{((d-1)d+ 1):d^2, m} & \cdots & O_d^{\top}R_{((d-1)d+ 1):d^2, 1} & \cdots & O_d^{\top}R_{((d-1)d+ 1):d^2, m}
    \end{pmatrix}.
\end{align*}
To show that $\mathrm{rank}(\mathcal{M}_1(\mathbf{S}) B) = d$, we are to show the column space of $\mathcal{M}_1(\mathbf{S}) B$ has dimension at least $d$.  

We let $S^{(1)} \in \mathbb{R}^{d \times d}$ be a submatrix of $\mathcal{M}_1(\mathbf{S})B$, taking columns with indices $\{j: \, j = (i-1)m + 1, \, i \in [d]\}$.  For any $u_1 \neq u_2$, it holds that
\begin{align*}
    \big(S^{(1)}_{u_1}\big)^{\top} S^{(1)}_{u_2} & = \left( O_{u_1}^{\top} R^{(1)}_1, \cdots, O_{u_1}^{\top} R^{(d)}_1 \right) \left(O_{u_2}^{\top} R^{(1)}_1, \cdots, O_{u_2}^{\top} R^{(d)}_1 \right)^{\top} \\
    & = O_{u_1}^{\top} \left(R^{(1)}_1, \cdots, R^{(d)}_1 \right) \left(R^{(1)}_1, \cdots, R^{(d)}_1 \right)^{\top} O_{u_2} = 0,
\end{align*}
where $R^{(k)} = R_{((k-1)d+1): kd, :} \in \R^{d \times m}$, $k \in [d]$. We, therefore, have that the column space of $\mathcal{M}_1(\mathbf{S}) B$ is at least of dimension $d$.  Since $\mathcal{M}_1(\mathbf{S})B \in \mathbb{R}^{d \times dm}$, it holds that $\mathrm{rank}(\mathcal{M}_1(\mathbf{S}) B) = d$. Then following \Cref{lemma_smallest_singular_value_1}, we have $\sigma_{\min} \left( \mathcal{M}_1 (\mathbf{S}) B\right) \geq  \sigma_{\min} \left( \mathcal{M}_1 (\mathbf{S})  \right)  \sigma_{\min} \left(   B \right) \geq \sqrt{d}$, due to 
which proves \eqref{eq_min_singular_1}. Similarly, we can prove \eqref{eq_min_singular_2}.
\end{proof}

\begin{lemma}\label{lem-rank-id}
For any matrices $A \in \mathbb{R}^{p_1 \times p_2}$ and $B \in \mathbb{R}^{p_2 \times p_3}$, if $\mathrm{rank}(B) = p_2$, then $\mathrm{rank}(AB)= \mathrm{rank}(A)$; if $\mathrm{rank}(A) = p_2$, then $\mathrm{rank}(AB)= \mathrm{rank}(B)$.
\end{lemma}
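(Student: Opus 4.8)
The plan is to prove the two claims by exhibiting a one-sided inverse in each case and combining it with the elementary submultiplicativity of rank. First I would record the standard inequality $\mathrm{rank}(AB) \le \min\{\mathrm{rank}(A),\,\mathrm{rank}(B)\}$, which holds for any conformable matrices because the column space of $AB$ is contained in the column space of $A$, and the row space of $AB$ is contained in the row space of $B$. This immediately gives the ``$\le$'' direction in both assertions, so the work is entirely in the reverse inequality.

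For the first claim, suppose $\mathrm{rank}(B)=p_2$, i.e.\ $B\in\mathbb{R}^{p_2\times p_3}$ has full row rank. Then $BB^\top\in\mathbb{R}^{p_2\times p_2}$ is invertible, so setting $C = B^\top(BB^\top)^{-1}\in\mathbb{R}^{p_3\times p_2}$ we have $BC = I_{p_2}$; that is, $B$ admits a right inverse. Hence $A = A\,I_{p_2} = (AB)C$, and applying the submultiplicativity inequality to this factorization yields $\mathrm{rank}(A) \le \mathrm{rank}(AB)$. Together with the ``$\le$'' direction this gives $\mathrm{rank}(AB)=\mathrm{rank}(A)$.

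For the second claim, suppose $\mathrm{rank}(A)=p_2$, i.e.\ $A\in\mathbb{R}^{p_1\times p_2}$ has full column rank. Then $A^\top A$ is invertible, and $D = (A^\top A)^{-1}A^\top$ satisfies $DA = I_{p_2}$, so $A$ admits a left inverse. Thus $B = I_{p_2}B = D(AB)$, which gives $\mathrm{rank}(B)\le\mathrm{rank}(AB)$, and combined with the trivial direction we conclude $\mathrm{rank}(AB)=\mathrm{rank}(B)$.

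There is no real obstacle here: the statement is a textbook fact and the only thing to be slightly careful about is the bookkeeping of which dimension ``$p_2$'' plays (it is simultaneously the number of columns of $A$ and the number of rows of $B$, so ``rank $=p_2$'' genuinely means full rank in the relevant sense in each case). I would keep the argument as short as the three displays above; alternatively one could phrase the same idea via the rank–nullity theorem applied to the linear maps $x\mapsto Ax$ and $x\mapsto Bx$, but the one-sided-inverse argument is the cleanest and is self-contained.
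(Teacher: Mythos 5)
Your argument is correct and is essentially the paper's own proof: the paper also establishes $\mathrm{rank}(AB)\le\mathrm{rank}(A)$ and then uses the right inverse $B^{+}=B^{\top}(BB^{\top})^{-1}$ to write $\mathrm{rank}(A)=\mathrm{rank}(ABB^{+})\le\mathrm{rank}(AB)$, noting the second claim follows by the symmetric argument. Your writing out the left-inverse case explicitly is just a minor difference in presentation.
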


\begin{proof}[\textbf{Proof of \Cref{lem-rank-id}.}]
We only show the first statement and the second statement follows from the same argument.  We first note that
\begin{equation}\label{eq-lem-rank-id-pf-1}
    \mathrm{rank}(AB) \leq \mathrm{rank}(A).
\end{equation}
Since $\mathrm{rank}(B) = p_2$, there exists a Moore--Penrose inverse of $B$, namely $B^+ = B^{\top} (BB^{\top})^{-1}$ such that $BB^+ = I$.  We then have 
\begin{equation}\label{eq-lem-rank-id-pf-2}
    \mathrm{rank}(A) = \mathrm{rank}(ABB^+) \leq \mathrm{rank}(AB).
\end{equation}
Combining \eqref{eq-lem-rank-id-pf-1} and \eqref{eq-lem-rank-id-pf-2} leads to that $\mathrm{rank}(A) = \mathrm{rank}(AB)$.
\end{proof}

\begin{lemma}\label{lemma_smallest_singular_value_1}
Let $A \in\R^{p_1 \times p_2}$ and $B \in \R^{p_2 \times p_3}$. Under either of the following three
        \begin{itemize}
            \item $\mathrm{rank}(A)= p_2$ and $B \neq 0$, or
            \item $\mathrm{rank}(B)= p_2$ and $A \neq 0$, or
            \item $\mathrm{rank}(AB) = \mathrm{rank}(A) = p_1$ and $\mathrm{rank}(B)=p_3$,
        \end{itemize}
        it holds that $\sigma_{\min}(AB) \geq \sigma_{\min}(A)\sigma_{\min}(B)$.
\end{lemma}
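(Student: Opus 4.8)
The plan is to reduce all three cases to the variational description of $\sigma_{\min}$. Recall that for a nonzero matrix $M$, $\sigma_{\min}(M)^{2}$ is the smallest \emph{positive} eigenvalue $\lambda_{\min}^{+}$ of $M^{\top}M$ (equivalently of $MM^{\top}$), and that $\|Mx\|\geq\sigma_{\min}(M)\,\|x\|$ for every $x$ in the row space of $M$. The one non-elementary ingredient I would isolate as a preliminary claim is: if $P\succeq N\succeq 0$ are symmetric positive semidefinite with $\mathrm{range}(P)=\mathrm{range}(N)=:V$, then $\lambda_{\min}^{+}(P)\geq\lambda_{\min}^{+}(N)$; this follows by restricting both quadratic forms to $V$, where they are positive definite, and applying the Courant--Fischer min--max principle. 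I would also first record that the hypotheses make $AB\neq 0$, so $\sigma_{\min}(AB)$ is genuinely a smallest \emph{nonzero} singular value: if $\mathrm{rank}(A)=p_{2}$ then $A$ is injective; if $\mathrm{rank}(B)=p_{2}$ then $B$ has trivial left null space; and in the third case $\mathrm{rank}(AB)=p_{1}\geq 1$.

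For the first bullet ($\mathrm{rank}(A)=p_{2}$, $B\neq 0$), $A$ has full column rank, so $A^{\top}A\succeq\sigma_{\min}(A)^{2}I_{p_{2}}$ is positive definite. Then $(AB)^{\top}(AB)=B^{\top}(A^{\top}A)B\succeq\sigma_{\min}(A)^{2}\,B^{\top}B$, and, since $A^{\top}A$ is invertible, both matrices have range equal to $\mathrm{range}(B^{\top})$. The preliminary claim then gives $\sigma_{\min}(AB)^{2}=\lambda_{\min}^{+}\big((AB)^{\top}AB\big)\geq\sigma_{\min}(A)^{2}\lambda_{\min}^{+}(B^{\top}B)=\sigma_{\min}(A)^{2}\sigma_{\min}(B)^{2}$, and a square root closes the case. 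The second bullet ($\mathrm{rank}(B)=p_{2}$, $A\neq 0$) is dual: now $BB^{\top}\succeq\sigma_{\min}(B)^{2}I_{p_{2}}$ is positive definite, $(AB)(AB)^{\top}=A(BB^{\top})A^{\top}\succeq\sigma_{\min}(B)^{2}AA^{\top}$, both sides have range $\mathrm{range}(A)$, and the same argument yields $\sigma_{\min}(AB)^{2}\geq\sigma_{\min}(B)^{2}\lambda_{\min}^{+}(AA^{\top})=\sigma_{\min}(A)^{2}\sigma_{\min}(B)^{2}$.

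For the third bullet ($\mathrm{rank}(AB)=\mathrm{rank}(A)=p_{1}$ and $\mathrm{rank}(B)=p_{3}$) I would pass to transposes. Since $AB$ has full row rank $p_{1}$, $\sigma_{\min}(AB)=\min_{v\in\mathbb{R}^{p_{1}},\,\|v\|=1}\|(AB)^{\top}v\|=\min_{\|v\|=1}\|B^{\top}(A^{\top}v)\|$; and since $A$ also has full row rank $p_{1}$, $\|A^{\top}v\|\geq\sigma_{\min}(A)\,\|v\|$ with $z:=A^{\top}v\in\mathrm{row}(A)$. It then remains to bound $\|B^{\top}z\|$ below by $\sigma_{\min}(B)\,\|z\|$; since $B$ has full column rank $p_{3}$ this inequality holds for every $z$ lying in $\mathrm{col}(B)$, so the task reduces to checking the subspace inclusion $\mathrm{row}(A)\subseteq\mathrm{col}(B)$, which is what I would have to extract from $\mathrm{rank}(AB)=\mathrm{rank}(A)$ together with $\mathrm{rank}(B)=p_{3}$. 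Chaining the two estimates gives $\|B^{\top}(A^{\top}v)\|\geq\sigma_{\min}(B)\,\|A^{\top}v\|\geq\sigma_{\min}(A)\sigma_{\min}(B)$ uniformly over unit $v$, which is the asserted bound.

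The PSD comparison and the two quadratic-form manipulations are routine bookkeeping. \textbf{The main obstacle is the third case}: there is no general submultiplicativity $\sigma_{\min}(AB)\geq\sigma_{\min}(A)\sigma_{\min}(B)$ to fall back on, so the argument must track row and column spaces precisely, and the delicate point is the subspace-compatibility step that licenses pulling the factor $\sigma_{\min}(B)$ out of $\|B^{\top}z\|$ for $z\in\mathrm{row}(A)$.
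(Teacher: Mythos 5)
Your treatment of the first two bullets is correct and is in substance the same comparison the paper makes: from $A^{\top}A \succeq \sigma_{\min}^2(A)\,I_{p_2}$ (resp.\ $BB^{\top} \succeq \sigma_{\min}^2(B)\,I_{p_2}$) one dominates $(AB)^{\top}(AB)$ by $\sigma_{\min}^2(A)\,B^{\top}B$ (resp.\ $(AB)(AB)^{\top}$ by $\sigma_{\min}^2(B)\,AA^{\top}$) and compares smallest nonzero eigenvalues; the paper runs this through the Courant--Fischer (min--max) formula, while you package it as eigenvalue monotonicity under the Loewner order on a common range, and your explicit verification that the ranges coincide is exactly the point that needs checking.

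The third bullet is where your proposal has a genuine gap, and it is not one you can close. You reduce the case to the inclusion $\mathrm{row}(A)\subseteq\mathrm{col}(B)$ and leave it to be ``extracted'' from the hypotheses, but it does not follow from them: take $A=(1,0)\in\R^{1\times 2}$ and $B=(1,1)^{\top}\in\R^{2\times 1}$. Then $AB=(1)$, so $\mathrm{rank}(AB)=\mathrm{rank}(A)=p_1=1$ and $\mathrm{rank}(B)=p_3=1$, yet $\mathrm{row}(A)=\mathrm{span}\{(1,0)^{\top}\}\not\subseteq\mathrm{col}(B)=\mathrm{span}\{(1,1)^{\top}\}$. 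Worse, in this example $\sigma_{\min}(AB)=1$ while $\sigma_{\min}(A)\sigma_{\min}(B)=\sqrt{2}$, so the third bullet is false as stated, and no completion of your argument (or any other) can establish it. Your skeleton for this case is sound up to that point --- $\sigma_{\min}(AB)=\min_{\|v\|=1}\|B^{\top}A^{\top}v\|$, $\|A^{\top}v\|\geq\sigma_{\min}(A)\|v\|$, and $\|B^{\top}z\|\geq\sigma_{\min}(B)\|z\|$ for $z\in\mathrm{col}(B)$ --- so the claim becomes true, and your proof complete, precisely if $\mathrm{row}(A)\subseteq\mathrm{col}(B)$ is added as a hypothesis. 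You were right to single this step out as the main obstacle, and you should not expect to find the missing ingredient in the paper: its own derivation for this case, after factoring the Rayleigh quotient, replaces the subspaces $BU$ with $\dim U=p_1$ by arbitrary $p_1$-dimensional subspaces of $\R^{p_2}$, and by Courant--Fischer that minimum equals $\lambda_{p_2-p_1+1}(A^{\top}A)$ rather than $\lambda_{p_1}(A^{\top}A)$ (it vanishes in the example above), so the displayed bound $\geq\lambda_{\min}(B^{\top}B)\,\lambda_{p_1}(A^{\top}A)$ is not justified there either.
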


\begin{proof}[\textbf{Proof of \Cref{lemma_smallest_singular_value_1}.}]
We only prove the first statement and the second can be proved using the same argument.  Since $\mathrm{rank}(A) = p_2$, it follows from \Cref{lem-rank-id} that $\mathrm{rank}(AB) = \mathrm{rank}(B) = r$.  For any matrix $M \in \mathrm{R}^{n \times m}$, we have that
\begin{equation}\label{eq-lem-rank-id-2-pf-1}
    \sigma_k(M) = \sqrt{\lambda_k(M^{\top}M)}, \quad k \in [m].
\end{equation}
It follows from the min-max theorem that 
\begin{align}
    \lambda_r(B^{\top}A^{\top}AB) & = \min_U \bigg\{\max_x \bigg\{\frac{x^{\top}B^{\top}A^{\top}ABx}{\|x\|_2^2} \bigg| x \in U, \, x \neq 0\bigg\} \bigg| \mathrm{dim}(U) = r\bigg\} \nonumber \\
    & \geq \min_U \bigg\{\max_x \bigg\{\lambda_{\min}(A^{\top} A) \frac{x^{\top}B^{\top}Bx}{\|x\|_2^2} \bigg| x \in U, \, x \neq 0\bigg\} \bigg| \mathrm{dim}(U) = r\bigg\} \nonumber \\
    & = \lambda_{\min}(A^{\top} A) \min_U \bigg\{\max_x \bigg\{\frac{x^{\top}B^{\top}Bx}{\|x\|_2^2} \bigg| x \in U, \, x \neq 0\bigg\} \bigg| \mathrm{dim}(U) = r\bigg\} \nonumber \\
    & = \lambda_{\min}(A^{\top}A) \lambda_r(B^{\top}B). \label{eq-lem-rank-id-2-pf-2}
\end{align}
Combining \eqref{eq-lem-rank-id-2-pf-1} and \eqref{eq-lem-rank-id-2-pf-2} concludes the proof of the first statement.

Now, we prove the third statement. Since $\mathrm{rank}(AB)  = p_1$, we have $p_1 \leq p_3$ and $\sigma_{\min}(AB) = \sigma_{p_1}(AB)$. Then following the min-max theorem, we have
\begin{align}\label{eq-lem-rank-id-2-pf-3}
    \lambda_{p_1}(B^{\top}A^{\top}AB) & = \min_U \bigg\{\max_x \bigg\{\frac{x^{\top}B^{\top}A^{\top}ABx}{\|x\|_2^2} \bigg| x \in U, \, x \neq 0\bigg\}  \bigg| \mathrm{dim}(U) = p_1\bigg\} \nonumber \\
    & =\min_U \bigg\{\max_x \bigg\{ \frac{\left( Bx\right)^{\top} A^{\top}A\left( Bx \right)}{\|Bx\|_2^2} \frac{x^{\top}B^{\top}Bx}{\|x\|_2^2} \bigg| x \in U, \, x \neq 0\bigg\} \bigg|  \mathrm{dim}(U) = p_1 \bigg\} \nonumber \\
    & \geq \lambda_{\min}(B^{\top} B) \min_{\widetilde{U}} \bigg\{\max_y \bigg\{\frac{y^{\top}A^{\top}Ay}{\|y\|_2^2} \bigg| y \in \widetilde{U}, \, x \neq 0\bigg\} \bigg| \mathrm{dim}(\widetilde{U}) = p_1\bigg\} \nonumber \\
    & \geq \lambda_{\min}(B^{\top}B) \lambda_{p_1}(A^{\top}A),
\end{align}
where the first inequality follows from $\mathrm{rank}(B)=p_3 \geq p_1$.
Combining \eqref{eq-lem-rank-id-2-pf-1} and \eqref{eq-lem-rank-id-2-pf-3}, then we complete the proof.
\end{proof}

\subsection{Proof of Theorem \ref{random_theorem}}\label{sec-A2}
\begin{proof}[\textbf{Proof of \Cref{random_theorem}.}]

This proof consists of four steps. In \textbf{Step 1}, we consider the cases with random latent positions, and we study a high-probability event within which the following proof is conducted.  In \textbf{Step 2}, we check all the necessary conditions in \Cref{thm-han-4.1} using \Cref{fix_lemma1}.  Using \Cref{fix_lemma1}, in \textbf{Step 3}, we simplify the presentation and then complete the proof for the cases with fixed latent positions. We conclude the proof for the cases with random latent positions in \textbf{Step 4}.

\medskip
\noindent \textbf{Step 1.} In this step, we only consider random latent positions.  Writing $X = (X_1, \ldots, X_{n})^{\top} \in \mathbb{R}^{n \times d}$, due to \Cref{ass_X_Y}$(a)$, $\mathbb{P}\{\mathcal{A}\} > 1 - \epsilon$ holds for any $\epsilon > 0$ 
with
\[
    \mathcal{A} = \Big\{\big\|n^{-1} X^{\top}X - \Sigma_X\big\| \leq C\sqrt{n^{-1} ( d +  \log(1/\epsilon))} \Big\},
\]
with an absolute constant $C > 0$, following from matrix Bernstein's inequality \citep[e.g.~Remark~5.40 in][]{vershynin2010introduction}.
In the event $\mathcal{A}$, it follows from Weyl's inequality \citep{weyl1912asymptotische} that
\begin{equation} \label{random_X_singular_min}
    \sigma_d(X) \geq \sqrt{n \mu_{X, d} - C\sqrt{n (d + \log(1/\epsilon))}} \geq\sqrt{ n \mu_{X, d}/2} > 0
\end{equation}
then similarly we have $\sigma_{1}(X)\leq \sqrt{3 n \mu_{X, d}/2}$,
for any 
\begin{align}\label{epsilon_check}
   \epsilon \geq \exp\big\{- 4^{-1}C^{-2} n \mu_{X, d}^2  +d \big\}. 
\end{align}

We therefore have that in the event $\mathcal{A}$, if $\epsilon$ satisfy \eqref{epsilon_check} , then $\mathrm{rank}(X)  = d$. 

\medskip
\noindent \textbf{Step 2.}  In this step, we apply \Cref{thm-han-4.1}. In the case of random latent positions, we assume that the event $\mathcal{A}$ holds.  Recall the tensor representation of $\mathbf{P}$ defined in \eqref{Y_tucker_rep} that $\mathbf{P} = \mathbf{S} \times_1 X \times_2 X \times_3 Q$.  We first derive a similar decomposition of $\mathbf{P}$ as \Cref{thm-han-4.1}.

Let singular decomposition formulae of $X$ and $Q$ be
\[
    X = U_X D_X V_X^{\top} \quad \mbox{and} \quad Q = U_Q D_Q V_Q^{\top},
\]
with $U_X, U_Q, V_X, V_Q$ being matrices with orthonormal columns, $D_X \in \mathbb{R}^{d \times d}$ and $D_Q \in \mathbb{R}^{m \times m}$.  We can then write
\[
    \mathbf{P} = \widetilde{\mathbf{S}} \times_1 U_X \times_2 U_X \times_3 U_Q,
\]
where $\widetilde{\mathbf{S}} \in \mathbb{R}^{d \times d \times m}$ is defined as $\widetilde{\mathbf{S}} = \mathbf{S} \times_1 (D_X V_X^{\top}) \times_2 (D_X V_X^{\top}) \times_3 (D_Q V_Q^{\top})$. From \Cref{fix_lemma1}, 
we  have that  $ \mathrm{rank} \left(\mathcal{M}_1(\mathbf{P})\right) = d$,  $ \mathrm{rank} \left(\mathcal{M}_2(\mathbf{P})\right) = d$, and  $ \mathrm{rank} \left(\mathcal{M}_3(\mathbf{P})\right) = m$.

For $s \in [3]$, let $\bar{\lambda}_s = \|\mathcal{M}_s(\mathbf{P})\|$ and $\underline{\lambda}_s = \sigma_{\min}(\mathcal{M}_s(\mathbf{P}))$.  
 It follows from \Cref{thm-han-4.1} that for the cases with fixed latent positions,
\begin{align}
    & \P \Big\{\big\|\widetilde{\mathbf{P}} - \mathbf{P}\big\|_{\mathrm{F}}^2 \leq \underline{\lambda}_1^{-4} \bar{\lambda}_1^2 (\bar{\lambda}_1^2 n d + n^2 L d) + \underline{\lambda}_2^{-4} \bar{\lambda}_2^2 (\bar{\lambda}_2^2 n d + n^2 L d) + \underline{\lambda}_3^{-4} \bar{\lambda}_3^2 (\bar{\lambda}_3^2 Lm + n^2 L m) \nonumber \\
    & \hspace{1cm} + C_2(d^2m + nd + Lm)  \Big\} \geq 1 - C_1\exp\{-c_1(n \vee L)\},  \label{eq-thm-2-pf-long}
\end{align}
 or for the cases with random latent positions,
\begin{align}
    & \P \Big\{\big\|\widetilde{\mathbf{P}} - \mathbf{P}\big\|_{\mathrm{F}}^2 \leq \underline{\lambda}_1^{-4} \bar{\lambda}_1^2 (\bar{\lambda}_1^2 n d + n^2 L d) + \underline{\lambda}_2^{-4} \bar{\lambda}_2^2 (\bar{\lambda}_2^2 n d + n^2 L d) + \underline{\lambda}_3^{-4} \bar{\lambda}_3^2 (\bar{\lambda}_3^2 Lm + n^2 L m) \nonumber \\
    & \hspace{1cm} + C_2(d^2m + nd + Lm) \big| \mathcal{A} \Big\} \geq 1 - C_1\exp\{-c_1(n \vee L)\},  \label{eq-thm-1-pf-long}
\end{align}
where $C_1, C_2, c > 0$ are absolute constants.

\medskip
\noindent \textbf{Step 3.}  In this step we are to simplify the upper bound in \eqref{eq-thm-2-pf-long} and \eqref{eq-thm-1-pf-long}.  For \eqref{eq-thm-2-pf-long}, we have that with an absolute constant $C_3>0$,
\begin{align*}
    & \underline{\lambda}_1^{-4} \bar{\lambda}_1^2 (\bar{\lambda}_1^2 n d + n^2 L d) 
    \leq  d^{-1} \sigma_d^{-8}(X)  \sigma_m^{-4}(Q) \sigma_1^4(X)  \sigma_1^2(Q)\left(d\sigma_1^4(X)  \sigma_1^2(Q)n d + n^2Ld \right)  \leq C_3 nd + C_3L,
\end{align*}
where the first inequality follows from \Cref{fix_lemma1}, the second inequality follows from \Cref{ass_X_Y_u_f}$(a)$ and  \Cref{ass_X_Y_u_f}$(b)$. The same arguments lead to
\[
    \underline{\lambda}_2^{-4} \bar{\lambda}_2^2 (\bar{\lambda}_2^2 n d + n^2L d) \leq C_3 nd + C_3L \quad
\mbox{and} \quad
\underline{\lambda}_3^{-4} \bar{\lambda}_3^2 (\bar{\lambda}_3^2 Lm + n^2 L m)  \leq C_3Lm.
\]

For \eqref{eq-thm-1-pf-long}, we have that 
\begin{align*}
    & \underline{\lambda}_1^{-4} \bar{\lambda}_1^2 (\bar{\lambda}_1^2 n d + n^2 L d) \\
    \leq & d^{-1} \sigma_d^{-8}(X)  \sigma_m^{-4}(Q) \sigma_1^4(X) \sigma_1^2(Q)\left(d\sigma_1^4(X)  \sigma_1^2(Q)n d + n^2 Ld \right) \\
    \leq & C_3 n d + C_3\frac{n^2L}{\sigma_d^4(X)  \sigma_m^2(Q)} \leq C_3 nd + C_3L,
\end{align*}
where the first inequality follows from \Cref{fix_lemma1}, the second inequality follows from \Cref{ass_X_Y}$(a)$ and $(c)$, the third inequality follows from \eqref{random_X_singular_min}, \Cref{ass_X_Y}$(a)$ and $(c)$. The same arguments lead to
\[
    \underline{\lambda}_2^{-4} \bar{\lambda}_2^2 (\bar{\lambda}_2^2 n d + n^2 L d) \leq C_3 nd + C_3L.
\]
We also have that 
\begin{align*}
    & \underline{\lambda}_3^{-4} \bar{\lambda}_3^2 (\bar{\lambda}_3^2 Lm + n^2 L m) \leq \sigma_d^{-4}(X)  \sigma_m^{-2}(Q) (\sigma_1^4(X)  \sigma_1^2(Q) Lm + n^2 Lm) \leq C_3Lm,
\end{align*}
where the first inequality follows from \Cref{fix_lemma1}, and the second inequality follows from \eqref{random_X_singular_min}, \Cref{ass_X_Y}$(a)$ and $(c)$.

Gathering all the terms together and noticing the fact that $\|\widehat{\mathbf{P}} - \mathbf{P}\|_{\mathrm{F}} \leq \|\widetilde{\mathbf{P}} - \mathbf{P}\|_{\mathrm{F}}$ by \Cref{thpca}, we have that  with an absolute constant $C>0$, for the cases with fixed latent positions,
\[
    \P \Big\{\big\|\widehat{\mathbf{P}} - \mathbf{P}\big\|_{\mathrm{F}}^2 \leq C(d^2m + nd + Lm)   \Big\} \geq 1 - C_1\exp\{-c_1(n \vee L)\},
\]
completing the proof of \eqref{upper_bound_theorem_unified} for the cases with fixed latent positions.  For the cases with random latent positions, we can similarly have that
\[
    \P \Big\{\big\|\widehat{\mathbf{P}} - \mathbf{P}\big\|_{\mathrm{F}}^2 \leq C(d^2m + n d + Lm) \big|\mathcal{A}  \Big\} \geq 1 - C_1\exp\{-c_1(n \vee L)\}.
\]

\medskip
\noindent \textbf{Step 4.}  Setting $\epsilon = C_1 n^{-c_1}$, which satisfies \eqref{epsilon_check} by 
 \Cref{ass_X_Y}$(a)$, we have that
\begin{align*}
    & \mathbb{\mathbf{P}}\Big\{\big\|\widehat{\mathbf{P}} - \mathbf{P}\big\|_{\mathrm{F}}^2 > C(d^2m + n d + Lm)\Big\} \nonumber\\
    =& \mathbb{E}\Big\{\P \Big\{\big\|\widehat{\mathbf{P}} - \mathbf{P}\big\|_{\mathrm{F}}^2 > C( d^2m + nd + Lm) \big| \mathcal{A}\Big\}\Big\} \nonumber\\
    \leq & C_1\exp\{-c_1(n \vee L)\} \mathbb{P}\{\mathcal{A}\} + 1 - \mathbb{P}\{\mathcal{A}\} \leq C_1\exp\{-c_1(n \vee L)\} + C_1n^{-c_1},
\end{align*}
 which completes the proof of \eqref{upper_bound_theorem_unified}.
\end{proof}

\section[]{Proof of Theorem \ref{main_theorem_f}}\label{proof-sec3}
We present the proof of \Cref{main_theorem_f} in \Cref{subsec-proof-thm2}, following all necessary auxiliary results in \Cref{sec-B1}.

\subsection{Auxiliary results}\label{sec-B1}

\begin{lemma}\label{lemma_psi_f}
Let the data be a sequence of adjacency tensors $\{\mathbf{A}(t)\}_{t \in \mathbb{N}^*}$ as defined in \Cref{def-umrdpg-f-dynamic}, satisfying \Cref{ass_X_Y_u_f}. Let $ \widehat{\mathbf{P}}^{\cdot, \cdot}(\cdot)$ be defined in \Cref{def-cusum-f}. 

For any $\alpha \in (0, 1)$, it holds that
\begin{align*}
    \mathbb{P} & \bigg\{\exists s, t \in \mathbb{N},   \, 0 \leq s < t \mbox{ satisfying that there is no change point in } [s+1, t): \,  \\
    & \hspace{1cm}   \big\|    \widehat{\mathbf{P}}^{s, t}   -  \mathbf{P}(t) \big\|_{\mathrm{F}} >  C \sqrt{  \frac{ (d^2m + nd+Lm ) \log(t / \alpha)} {t- s}}\bigg\} \leq  \alpha,
\end{align*}
where $C > 0$ is an absolute constant.
\end{lemma}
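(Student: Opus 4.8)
The plan is to reduce the tensor-estimation error bound to a uniform version of \Cref{random_theorem} applied to the averaged adjacency tensor over the interval $(s,t]$, and then to pay a union-bound price over all interval pairs $(s,t)$ by strengthening the probability estimate. Concretely, fix a pair $(s,t)$ with no change point in $[s+1,t)$. On this interval the tensors $\mathbf{A}(s+1), \ldots, \mathbf{A}(t)$ are i.i.d.\ draws (conditionally on the fixed latent positions $X$, but here $X$ is deterministic) from the same $\mathrm{MRDPG}\mbox{-}\mathrm{Fix}$ distribution with common probability tensor $\mathbf{P}(t)$. Writing $\overline{\mathbf{A}}^{s,t} = (t-s)^{-1}\sum_{u=s+1}^t \mathbf{A}(u)$, the centred tensor $\overline{\mathbf{A}}^{s,t} - \mathbf{P}(t)$ has independent, mean-zero, symmetric entries bounded in $[-1,1]$, hence sub-Gaussian with parameter $\sigma \asymp (t-s)^{-1/2}$ (averaging $t-s$ Bernoulli deviations). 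The probability tensor $\mathbf{P}(t)$ has the Tucker structure \eqref{Y_tucker_rep} with ranks $(d,d,m)$ by \Cref{fix_lemma1}, and its singular-value ratios are controlled by \Cref{ass_X_Y_u_f} exactly as in the proof of \Cref{random_theorem}.

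The first step is therefore to invoke \Cref{thm-han-4.1} (the \cite{han2022optimal} bound, undirected version) with input tensor $\overline{\mathbf{A}}^{s,t}$, ranks $(d,d,m)$, and sub-Gaussian parameter $\sigma \asymp (t-s)^{-1/2}$, and then to simplify the $\bar\lambda_s, \underline\lambda_s$ factors using \Cref{fix_lemma1} and \Cref{ass_X_Y_u_f} verbatim as in \textbf{Steps 2--3} of the proof of \Cref{random_theorem}. This yields, for each fixed $(s,t)$,
\[
    \mathbb{P}\Big\{\big\|\widehat{\mathbf{P}}^{s,t} - \mathbf{P}(t)\big\|_{\mathrm{F}}^2 \leq \frac{C(d^2m + nd + Lm)}{t-s}\Big\} \geq 1 - C_1\exp\{-c_1(n\vee L)\}.
\]
The obstacle is that $\exp\{-c_1(n\vee L)\}$ is not summable over the infinitely many pairs $(s,t)$; this is where the $\log(t/\alpha)$ factor in the statement comes from. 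The plan is to re-run the concentration argument inside the proof of \Cref{thm-han-4.1} while keeping an explicit deviation parameter: replacing the $(n\vee L)$ in the exponent by $x$ inflates the error bound by a factor $\asymp x/(n\vee L)$ (the sub-Gaussian / matrix-Bernstein tail contributes an additive $\sigma^2 x$ term, which against the signal-dominated terms $nd, Lm, d^2m$ only rescales the bound), giving a bound of order $(t-s)^{-1}(d^2m+nd+Lm)(1 + x/(n\vee L))$ with failure probability $C_1 e^{-c_1 x}$. Choosing $x = x_{s,t} \asymp \log(t/\alpha) + (n\vee L)$ makes the failure probability $\lesssim \alpha t^{-2}$, and since $\sum_{t\ge 2}\sum_{s<t} t^{-2} < \infty$ (more carefully, $\sum_{t} (t-1) t^{-2}$ diverges, so one must instead use a per-$t$ budget: union over the $t-1$ values of $s$ for fixed $t$ costs a factor $t$, so take $x_{s,t}\asymp \log(t/\alpha) + \log t + (n\vee L)$ to get $\lesssim \alpha t^{-2}$ per pair and $\lesssim \alpha$ total). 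With this choice the inflation factor is $1 + x/(n\vee L) \asymp \log(t/\alpha)$, producing exactly the claimed bound $\|\widehat{\mathbf{P}}^{s,t} - \mathbf{P}(t)\|_{\mathrm{F}}^2 \lesssim (t-s)^{-1}(d^2m+nd+Lm)\log(t/\alpha)$.

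The main technical care is in verifying that the deviation parameter enters the \cite{han2022optimal} bound only multiplicatively on the low-order side and does not interact badly with the rank/eigenvalue conditions — i.e.\ that the event on which $\overline{\mathbf{A}}^{s,t}$ is a well-behaved perturbation of $\mathbf{P}(t)$ (controlling $\|\mathcal{M}_s(\overline{\mathbf{A}}^{s,t} - \mathbf{P}(t))\|$ at scale $\sigma\sqrt{n\vee L + x}$) still holds with probability $1 - C_1 e^{-c_1 x}$, which follows from a standard matrix sub-Gaussian tail bound. Since the latent positions are fixed, no extra event like $\mathcal{A}$ from the proof of \Cref{random_theorem} is needed, which simplifies matters; the final union bound over $(s,t)$ and the "no change point in $[s+1,t)$" restriction is then routine, and I would finish by summing the geometric-type series in $t$ to conclude the probability is at most $\alpha$.
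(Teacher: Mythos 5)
Your proposal follows essentially the same route as the paper's proof: exploit that the average over a change-point-free interval has mean $\mathbf{P}(t)$ and sub-Gaussian noise of scale $(t-s)^{-1/2}$, re-derive the bound of \cite{han2022optimal} (\Cref{thm-han-4.1}, via \Cref{fix_lemma1} and \Cref{ass_X_Y_u_f} as in \Cref{random_theorem}) with an explicit deviation parameter entering additively as $\sigma^2\log(1/\delta)$, and then union bound over all pairs $(s,t)$ with per-pair budgets of order $\alpha/t^{3}$ (the paper organizes this via a dyadic peeling in $t$, but that is only a cosmetic difference). Your bookkeeping in the union bound is momentarily garbled ("$\lesssim \alpha t^{-2}$ per pair") but the self-correction reaches a valid choice, so the argument is sound.
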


\begin{proof}[\textbf{Proof of \Cref{lemma_psi_f}.}]
To simplify notation, for any integer pair $0 \leq s < t$, let 
\[
\xi_{s, t}  = C \sqrt{  \frac{ (d^2m + nd+Lm ) \log(t / \alpha)} {t- s}}.
\]
For any sequence of positive real values $\{\xi_{s, t}\}_{0 \leq s < t}$, it holds that 
\begin{align}\label{fix_1}
      &\mathbb{P}  \bigg\{\exists s, t \in \mathbb{N},   \, 0 \leq s < t  \mbox{ satisfying that there is no}  \nonumber\\
     & \hspace{1cm}  \mbox{change point in }[s+1, t): \big\|    \widehat{\mathbf{P}}^{s, t}   -  \mathbf{P}(t) \big\|_{\mathrm{F}} >  \xi_{s, t}  \Big\} \nonumber \\
    \leq & \sum_{u =1}^{\infty} \mathbb{P}_{\infty} \left\{\bigcup_{2^u \leq t < 2^{u+1}} \bigcup_{0 \leq s < t} \big\|    \widehat{\mathbf{P}}^{s, t}   -  \mathbf{P}(t) \big\|_{\mathrm{F}} >  \xi_{s, t} \right\} \nonumber  \\
    \leq & \sum_{u =1}^{\infty} 2^u \max_{ 2^u \leq t < 2^{u+1}} \mathbb{P}_{\infty} \left\{   \bigcup_{0 \leq s < t} \big\|    \widehat{\mathbf{P}}^{s, t}   -  \mathbf{P}(t) \big\|_{\mathrm{F}} >  \xi_{s, t} \right\}  \nonumber \\
    \leq & \sum_{u =1}^{\infty} 2^u \max_{ 2^u \leq t < 2^{u+1}} t \max_{0 \leq s < t}  \mathbb{P}_{\infty} \bigg\{\big\|    \widehat{\mathbf{P}}^{s, t}   -  \mathbf{P}(t) \big\|_{\mathrm{F}} >  \xi_{s, t} \bigg\}  \nonumber \\
    \leq & \sum_{u =1}^{\infty} 2^{2u+1} \max_{ 2^u \leq t < 2^{2u+1}} \max_{0 \leq s < t}  \mathbb{P}_{\infty} \bigg\{\big\|    \widehat{\mathbf{P}}^{s, t}   -  \mathbf{P}(t) \big\|_{\mathrm{F}} >  \xi_{s, t} \bigg\}. 
\end{align}

Since there is no change point in $[s+1, t)$, we have that
\begin{align}
     (t-s)^{-1} \sum_{u=s+1}^t \mathbf{P}(u) = \mathbf{P}(t). \nonumber
\end{align}
As indicated in  \Cref{sec-A1}, the proof of \Cref{random_theorem} is based on \Cref{thm-han-4.1}. 
\Cref{thm-han-4.1} concludes the result stated in the proof of Theorem~4.1 in \cite{han2022optimal}, providing an estimation error bound for their initial estimator, which  in fact corresponds to the output of our \Cref{thpca}. 
By adjusting (D.1) in the poof of Theorem~4.1 in \cite{han2022optimal} and then following the proof of \Cref{random_theorem}, for any $\delta \in (0,1)$ to be specified later, we have that 
\begin{align}
     \P \left\{   \big\|  \widehat{\mathbf{P}}^{s,t}  - \mathbf{P}(t) \big\|_{\mathrm {F}} >    C \sqrt{  \frac{ d^2m + nd+Lm +\log(1/\delta) } {t- s}} \right\} \leq \delta. \nonumber 
\end{align}
Then for any  $ \delta \in (0, e^{-1}]$ to be specified later, it holds that
\begin{align}\label{fix_2}
     \P \left\{   \big\|  \widehat{\mathbf{P}}^{s,t}  - \mathbf{P}(t) \big\|_{\mathrm {F}} >    C \sqrt{  \frac{ (d^2m + nd+Lm) \log(1/\delta) } {t- s}} \right\} \leq \delta.
\end{align}
Let $\delta  = \frac{\alpha \log^2 2}{2 \left(\log t +\log 2\right)^2 t^2 } \leq \frac{\alpha}{2t^2}$. Combining \eqref{fix_1} and \eqref{fix_2}, it holds that
\begin{align}
   &\mathbb{P}  \bigg\{\exists s, t \in \mathbb{N},   \, 0 \leq s < t  \mbox{ satisfying that there is no change point in [s+1, t)}:\big\|    \widehat{\mathbf{P}}^{s, t}   -  \mathbf{P}(t) \big\|_{\mathrm{F}} >  \xi_{s, t}  \Big\} \nonumber \\
  & \leq   \sum_{u=1}^{\infty} 2^{2u+1}  \max_{2^u \leq t <2^{u+1}}  \frac{\alpha \log^2 2}{2 (\log t +\log 2 )^2 t^2 }
 \leq \alpha  \sum_{u=1}^{\infty} \frac{1}{(u+1)^2} 
\leq \alpha  \sum_{u=1}^{\infty} \frac{1}{u(u+1)}
  = \alpha, \nonumber
\end{align}
which completes the proof. 
\end{proof}

\subsection[]{Proof of Theorem \ref{main_theorem_f}}\label{subsec-proof-thm2}
\begin{proof}[\textbf{Proof of \Cref{main_theorem_f}.}]
The proof is conducted in the event $\mathcal{B}$, with $\mathcal{B}$  defined as
\begin{align*}
    \mathcal{B} = & \bigg\{\exists s, t \in \mathbb{N}, 0 \leq s < t \mbox{ satisfying that there is no change point in [s+1, t)}: \,  \\
    & \hspace{1cm}   \big\|   \widehat{\mathbf{P}}^{s, t}   -  \mathbf{P}(t) \big\|_{\mathrm{F}} \leq C \sqrt{  \frac{ (d^2m + nd+Lm ) \log(t / \alpha)} {t- s}}\bigg\} ,
\end{align*}
where $C>0$ is an absolute constant.  By \Cref{lemma_psi_f}, it holds that   
\begin{equation}\label{eq-b-1-b-2-cap-f}
    \P\left\{ \mathcal{B} \right\} \geq 1 - \alpha.   
\end{equation}

 For any $s, t \in \mathbb{N}, 0 \leq s < t$, let 
\[
    \varepsilon_{s, t} = C \sqrt{  \frac{ (d^2m + nd+Lm ) \log(t / \alpha)} {t- s}}.
\]
In the event $\mathcal{B}$, for any $ 1 \leq s < t \leq \Delta$, 
\begin{align}
   \widehat{D}_{s, t}  =   \big\| \widehat{\mathbf{P}}^{0, s} - \widehat{\mathbf{P}}^{s, t} \big\|_{\mathrm{F}} \leq \big\|  \widehat{\mathbf{P}}^{0, s} - \mathbf{P}(s)\big\|_{\mathrm{F}} +  \big\| \widehat{\mathbf{P}}^{s, t}  - \mathbf{P}(t)\big\|_{\mathrm{F}} 
   \leq  \varepsilon_{0, s} + \varepsilon_{s, t} \leq \tau_{s, t}, \nonumber
\end{align}  
where the first inequality follows from the triangle inequality and  for any $ 1 \leq s < t \leq \Delta$, $\mathbf{P}(s) = \mathbf{P}(t)$, the second inequality follows from the definition of $\mathcal{B}$ and the final inequality follows from the definition of $\tau_{s, t}$ in \eqref{eq-tau-def-thm_f}.  We, therefore, have that,
    \[
        \mathcal{B} \subset \bigcap_{t \leq \Delta} \big\{\widehat{\Delta} > t \big\}. 
    \]

Under \Cref{ass_no_change_point_u_f} and due to \eqref{eq-b-1-b-2-cap-f}, it holds that
\begin{equation}\label{eq-thm2-pf-state-1_f}
    \mathbb{P}_{\infty} \{\widehat{\Delta} < \infty\} =  \mathbb{P}_{\infty} \{\exists t \in \mathbb{N}: \, \widehat{\Delta} \leq  t \} \leq \mathbb{P}\{\mathcal{B}^c\} \leq \alpha.
\end{equation}
Under \Cref{ass_change_point_u_f}, it holds that 
\begin{equation}\label{eq-no-false-alarm-change-point-f}
    \big(\widehat{\Delta} < \Delta\big) \subset \mathcal{B}^c.
\end{equation}    

We then let 
\[
    \widetilde{\Delta} = \Delta + C_\epsilon  \frac{(d^2m + nd +Lm) \log (\Delta /\alpha )}{\kappa^2},
\]
where $ C_\epsilon > 0$ is an absolute constant. 
In the event $\mathcal{B}$,  we have that
\begin{align}\label{eq-thm-2-proof-change-exists-f}
  &  \widehat{D}_{\Delta, \widetilde{\Delta}} = \big\|  \widehat{\mathbf{P}}^{0, \Delta} -  \widehat{\mathbf{P}}^{\Delta, \widetilde{\Delta}} \big\|_{\mathrm{F}} = \big\|    \widehat{\mathbf{P}}^{0, \Delta} -  \mathbf{P}(\Delta)  -    \widehat{\mathbf{P}}^{\Delta, \widetilde{\Delta}} + \mathbf{P} (\widetilde{\Delta}) +  \mathbf{P}(\Delta) -  \mathbf{P}(\Delta+1)  \big\|_{\mathrm{F}}  \nonumber\\
  \geq &    \big\|\mathbf{P}(\Delta) -  \mathbf{P}(\Delta+1)\big\|_{\mathrm{F}} -  \big\|  \widehat{\mathbf{P}}^{0, \Delta} - \mathbf{P}(\Delta) \big\|_{\mathrm{F}} - \big\|   \widehat{\mathbf{P}}^{\Delta, \widetilde{\Delta}} - \mathbf{P}(\widetilde{\Delta})  \big\|_{\mathrm{F}}  
  \geq  \kappa - \varepsilon_{0, \Delta} - \varepsilon_{\Delta, \widetilde{\Delta}}   \nonumber \\
  \geq & \frac{C_{\mathrm{SNR}}}{4} \sqrt{(d^2m + nd +Lm) \log (\Delta /\alpha )} \left(\sqrt{\frac{1}{ \Delta}} + \sqrt{\frac{1}{\widetilde{\Delta} - \Delta}} \right)> \tau_{\Delta, \widetilde{\Delta}}, 
\end{align}
where
\begin{itemize}
    \item the second equality follows from $\mathbf{P} (\widetilde{\Delta}) = \mathbf{P} (\Delta+1) $, 
    \item the second inequality follows from the definition of $\mathcal{B}$,  
    \item the third inequality follows from \Cref{snr_ass_change_point_f} and the definitions of $\varepsilon_{0, \Delta}$ and $\varepsilon_{\Delta, \widetilde{\Delta}}$, with a sufficiently large $C_{\mathrm{SNR}}$,
    \item and the last inequality follows from the definition of $\tau_{\Delta, \widetilde{\Delta}}$ in \eqref{eq-tau-def-thm_f} and with a sufficiently large $C_{\mathrm{SNR}}$. 
\end{itemize}  

Recalling the design of \Cref{online_hpca}, \eqref{eq-thm-2-proof-change-exists-f} implies that $ \mathcal{B}\subset \big(\widehat{\Delta} \leq \widetilde{\Delta}\big)$, which together with \eqref{eq-no-false-alarm-change-point-f} leads to that, under \Cref{ass_change_point_u_f},
\begin{equation}\label{eq-thm2-state-2_f}
   \mathbb{P}_{\Delta} \left\{ \Delta < \widehat{\Delta} \leq \Delta +C_\epsilon  \frac{(d^2m + nd +Lm) \log (\Delta /\alpha )}{\kappa^2}  \right\} \geq 1 - \alpha. 
\end{equation}

In view of \eqref{eq-thm2-pf-state-1_f} and \eqref{eq-thm2-state-2_f}, we complete the proof.
\end{proof}

\section[]{Proof of Theorem \ref{main_theorem}}\label{proof-sec4}

We present the proof of \Cref{main_theorem} in \Cref{sec-C3}, preceded by additional notation in \Cref{sec-C1}, a sketch of the proof in \Cref{sec-C2} and all necessary auxiliary results in \Cref{sec-C4}.

\subsection{Additional notation}\label{sec-C1}

We first start introducing additional notation required to proceed with the proofs.  Recalling the notation introduced in \Cref{def-cusum}, for any integer pair $1 < s < t$, any $k \in [2n- 1]$ and any $z \in [0, 1]^L$, let 
\[
    \mathbf{P}^{s, t} = (t-s)^{-1} \sum_{u = s+1}^t \mathbf{P}(u), \quad \mathbf{P}^{s, t}_{\mathcal{S}_k, :} = \widetilde{S}_k^{-1} \sum_{(i, j) \in \mathcal{S}_k} \mathbf{P}^{s, t}_{i, j, :},
\]
and 
\begin{equation}\label{eq-def-D-tilde}
    \widetilde{D}_{s, t}(z) =  \bigg( \sum_{k \in [2n -1]} \widetilde{S}_k  \bigg)^{-1} h^{-L} \sum_{k \in [2n -1]} \widetilde{S}_k \bigg[\mathbb{E} \bigg\{\mathcal{K}\left(\frac{z - \mathbf{P}^{0, s}_{\mathcal{S}_k, :}}{h}\right) \bigg\} -  \mathbb{E} \bigg\{\mathcal{K}\left(\frac{z - \mathbf{P}^{s, t}_{\mathcal{S}_k, :}}{h}\right)\bigg\} \bigg].
\end{equation}

\subsection{Sketch of the proof of Theorem \ref{main_theorem}}\label{sec-C2}

The proof essentially consists of two goals: with probability at least $1 - \alpha$, $(a)$ when there is no change point in $[1, t)$, 
\[
    \max_{1 < t_1 \leq t} \max_{1 \leq s < t_1} \widehat{D}_{s, t_1} \leq \tau_{s, t_1},
\]
and $(b)$ when there exists a change point $\Delta$, 
\begin{equation}
    \widehat{D}_{\Delta, \Delta + \varepsilon} > \tau_{\Delta, \Delta + \varepsilon}, \quad \mbox{with } \epsilon = C_\epsilon h^{-2L-2} \frac{(L^2 \vee d) \log \left((n \vee \Delta) /\alpha \right)}{\kappa^2 n}, \nonumber
\end{equation}
where $C_{\epsilon} > 0$ is an absolute constant.

To achieve these two goals, the building blocks of our proofs are the following: we are to 
\begin{itemize}
    \item with high probability upper bound 
    \[
    \sup_{z \in [0, 1]^L} \big|  \widehat{D}_{s, t}(z)   
 - \widetilde{D}_{s, t}(z)  \big|,
    \]
      for any integer pair $1 \leq s < t$, satisfying that there is no change point in either $[1, s)$ or $[s+1, t)$, which is done in \Cref{lemma_psi}.  In order to conduct the proof \Cref{lemma_psi}, three supplementary lemmas have been proposed in \Cref{lemm_eigen}, \Cref{prop_1} and  \Cref{lemma_S_u};

    \item with high probability upper bound  
     \[
    \Bigg|\sup_{z \in [0, 1]^L} \big| \widetilde{D}_{s, t}(z) \big| - \max_{m=1}^{M_{\alpha, t}}\big| \widetilde{D}_{s, t}(z_m) \big|\Bigg|,
    \] 
     for any integer pair $1 \leq s < t$, which is done in \Cref{lemma_z};
     \item upper bound
     \[
        \sup_{z \in [0, 1]^L} \big|  \widetilde{D}_{s, t}(z)  \big|,
     \]
      for any integer pair $1 \leq s < t$, satisfying that there is no change point in $[1, t)$, which is done in \Cref{lemma_no_change_point}; and
      \item lower bound
     \[
        \sup_{z \in [0, 1]^L} \big|  \widetilde{D}_{\Delta, \Delta+\epsilon}(z)  \big|;
     \]
     when the change point $\Delta < \infty$ exists, which is done in \Cref{lemma_one_change_point}.
\end{itemize}

\subsection{Auxiliary results}\label{sec-C4}

\begin{lemma}\label{lemma_psi}
Let the data be a sequence of adjacency tensors  $\{\mathbf{A}(t)\}_{t \in \mathbb{N}^*} \subset \mathbb{R}^{n \times n \times L}$ as defined in \Cref{def-umrdpg-dynamic} satisfying \Cref{ass_X_Y}.  Let $\widehat{D}_{\cdot, \cdot}(\cdot)$  be defined in \Cref{def-cusum} and $\widetilde{D}_{\cdot, \cdot}(\cdot)$ be defined in and \eqref{eq-def-D-tilde}, with the kernel function $\mathcal{K}(\cdot)$ satisfying \Cref{kernel_function_ass}. 

For any $\alpha \in (0, 1)$, it holds that
\begin{align*}
    \mathbb{P} & \bigg\{\exists s, t \in \mathbb{N},   \, 1 \leq s < t \mbox{ satisfying that there is no change point in either [1, s) or [s+1, t)}: \,  \\
    & \hspace{0.2cm} \sup_{z \in [0, 1]^L}  \bigg|  \widehat{D}_{s, t}(z)   
 - \widetilde{D}_{s, t}(z)  \bigg| >  C h^{-L-1} \sqrt{ \frac{(L^2 \vee d) \log\{ (n \vee t) / \alpha\}}{n}} \bigg( \frac{1}{\sqrt{s}}+ \frac{1}{\sqrt{t- s}}\bigg)\bigg\} \leq  \alpha/2,
\end{align*}
where $C > 0$ is an absolute constant.
\end{lemma}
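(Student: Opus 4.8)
The plan is to reduce the uniform-in-$z$ bound on $|\widehat{D}_{s,t}(z)-\widetilde{D}_{s,t}(z)|$ to a handful of one-sided matrix/tensor estimation errors, then use a union bound over a dyadic grid of $(s,t)$ as in \Cref{lemma_psi_f}. First I would observe that, by the Lipschitz assumption \Cref{kernel_function_ass} on $\mathcal{K}$, for each cluster index $k\in[2n-1]$ and each $z$,
\[
\Big| \mathcal{K}\Big(\tfrac{z-\widehat{\mathbf{P}}^{0,s}_{\mathcal{S}_k,:}}{h}\Big) - \mathcal{K}\Big(\tfrac{z-\mathbf{P}^{0,s}_{\mathcal{S}_k,:}}{h}\Big)\Big| \le C_{\mathrm{Lip}} h^{-1}\big\|\widehat{\mathbf{P}}^{0,s}_{\mathcal{S}_k,:} - \mathbf{P}^{0,s}_{\mathcal{S}_k,:}\big\|,
\]
and similarly for the $[s,t]$ block. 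Averaging over $k$ with weights $\widetilde{S}_k$ and multiplying by $h^{-L}$, one gets that $\sup_z|\widehat{D}_{s,t}(z)-\widetilde{D}_{s,t}(z)|$ is controlled (up to $h^{-L-1}$) by the weighted average over $k$ of $\|\widehat{\mathbf{P}}^{\cdot,\cdot}_{\mathcal{S}_k,:}-\mathbf{P}^{\cdot,\cdot}_{\mathcal{S}_k,:}\|$ — a quantity that no longer depends on $z$, so the supremum is for free. Note this step also uses that $\widetilde{D}_{s,t}(z)$ in \eqref{eq-def-D-tilde} is itself the expectation of the $z$-dependent quantity built from $\mathbf{P}^{\cdot,\cdot}_{\mathcal{S}_k,:}$, so the difference pairs up term-by-term with the estimation error.

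Next I would bound the averaged cluster-wise estimation error. By Cauchy--Schwarz, $\sum_k \widetilde{S}_k \|\widehat{\mathbf{P}}^{s,t}_{\mathcal{S}_k,:}-\mathbf{P}^{s,t}_{\mathcal{S}_k,:}\| \le (\sum_k \widetilde{S}_k)^{1/2}(\sum_k \widetilde{S}_k \|\widehat{\mathbf{P}}^{s,t}_{\mathcal{S}_k,:}-\mathbf{P}^{s,t}_{\mathcal{S}_k,:}\|^2)^{1/2}$, and since $\widehat{\mathbf{P}}^{s,t}_{\mathcal{S}_k,:}$ is an average over $(i,j)\in\mathcal{S}_k$, a further Cauchy--Schwarz relates $\sum_k \widetilde{S}_k\|\widehat{\mathbf{P}}^{s,t}_{\mathcal{S}_k,:}-\mathbf{P}^{s,t}_{\mathcal{S}_k,:}\|^2$ to $\|\widehat{\mathbf{P}}^{s,t}-\mathbf{P}^{s,t}\|_{\mathrm{F}}^2$ (and likewise $\sum_k\widetilde S_k\le n^2$). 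Hence the whole thing reduces to a Frobenius-norm bound $\|\widehat{\mathbf{P}}^{s,t}-\mathbf{P}^{s,t}\|_{\mathrm{F}}$, where $\widehat{\mathbf{P}}^{s,t}=\mathrm{HOSVD}((t-s)^{-1}\sum_{u=s+1}^t\mathbf{A}(u))$. This is where the rank-one structure discussed around \eqref{eq-sigma-x-decomp} enters: on the event where $\Sigma^{s,t}_X$ has its leading eigenvalue dominating (driven by $\|\theta_X\|^2$), the average probability tensor $\mathbf{P}^{s,t}$ is well-approximated by a Tucker-$(d,d,m)$ tensor, and in fact the \emph{de facto} high-probability Tucker rank is $(1,1,1)$, so the HOSVD with unit ranks is consistent. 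I would invoke \Cref{lemm_eigen} for the eigenvalue decomposition \eqref{eq-sigma-x-decomp}, and the supplementary perturbation lemmas \Cref{prop_1} and \Cref{lemma_S_u} (referenced in the sketch) to get, with probability at least $1-\delta$, $\|\widehat{\mathbf{P}}^{s,t}-\mathbf{P}^{s,t}\|_{\mathrm{F}} \lesssim \sqrt{(t-s)^{-1}(L^2\vee d)n\log(1/\delta)}$ — the $(L^2\vee d)$ coming from the combination of the three matricization modes, with the extra $n$ factor since we measure in Frobenius norm over an $n\times n\times L$ object but with rank-one-ish structure.

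Combining the two reductions gives, for fixed $(s,t)$, with probability $1-\delta$,
\[
\sup_{z\in[0,1]^L}\big|\widehat{D}_{s,t}(z)-\widetilde{D}_{s,t}(z)\big| \;\lesssim\; h^{-L-1}\sqrt{\frac{(L^2\vee d)\log(1/\delta)}{n}}\Big(\frac{1}{\sqrt{s}}+\frac{1}{\sqrt{t-s}}\Big),
\]
where the split into $1/\sqrt s + 1/\sqrt{t-s}$ comes from applying the one-block bound separately to $\widehat{\mathbf{P}}^{0,s}$ (no change point in $[1,s)$, so $\mathbf{P}^{0,s}=\mathbf{P}(s)$) and to $\widehat{\mathbf{P}}^{s,t}$. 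Finally I would do the dyadic union bound exactly as in \Cref{lemma_psi_f}: choose $\delta = \delta_{s,t}\asymp \alpha\,(\log t)^{-2} t^{-2}$, so that summing over $2^u\le t<2^{u+1}$ and $0\le s<t$ the total probability telescopes to at most $\alpha/2$; absorbing $\log(1/\delta_{s,t})\asymp \log(t/\alpha)$ into the stated bound $h^{-L-1}\sqrt{(L^2\vee d)\log\{(n\vee t)/\alpha\}/n}$ (the $n\vee t$ appearing because the HOSVD concentration also carries an $\exp\{-c\,n\}$-type tail from the latent-position randomness that must be union-bounded against, contributing the $n$ in the logarithm).

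The main obstacle, I expect, is the Frobenius-norm bound on $\|\widehat{\mathbf{P}}^{s,t}-\mathbf{P}^{s,t}\|_{\mathrm{F}}$: unlike the single-time-point case handled by \Cref{random_theorem}/\Cref{thm-han-4.1}, here the averaged probability tensor $\mathbf{P}^{s,t}$ is not exactly low-rank, its leading singular structure depends on $(s,t)$ through \eqref{eq-sigma-x-decomp}, and one must simultaneously (i) show the rank-one HOSVD approximation bias is negligible, (ii) control the eigen-gap of $\Sigma^{s,t}_X$ uniformly over $(s,t)$ with the right high-probability tail, and (iii) propagate a matrix perturbation bound through three matricization modes and then through the marginal multiplications defining HOSVD. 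Getting the clean $(L^2\vee d)/n$ rate, rather than a cruder bound with extra $(t-s)$ or $\log$ factors, is the delicate part, and this is precisely what \Cref{lemm_eigen}, \Cref{prop_1} and \Cref{lemma_S_u} are designed to handle.
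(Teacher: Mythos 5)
There is a genuine gap in your Step~1, namely the claim that after the Lipschitz reduction the bound ``no longer depends on $z$, so the supremum is for free.'' The centering in $\widetilde{D}_{s,t}(z)$ of \eqref{eq-def-D-tilde} is the \emph{expectation over the random latent positions} of $\mathcal{K}\big((z-\mathbf{P}^{0,s}_{\mathcal{S}_k,:})/h\big)$, not the kernel evaluated at a deterministic limit, and the Lipschitz property only compares two kernel evaluations at the same realization. Consequently, any decomposition of $\widehat{D}_{s,t}(z)-\widetilde{D}_{s,t}(z)$ must retain the centered fluctuation terms $\mathcal{K}\big((z-\mathbf{P}^{0,s}_{\mathcal{S}_k,:})/h\big)-\mathbb{E}\,\mathcal{K}\big((z-\mathbf{P}^{0,s}_{\mathcal{S}_k,:})/h\big)$ (and the analogous ones on $[s+1,t]$), which are genuinely $z$-dependent empirical processes. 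In the paper these are the terms $(III.1)$ and $(III.2)$: they are handled by a bounded-differences (McDiarmid) inequality at fixed $z$ followed by an $\epsilon$-net argument over $[0,1]^L$ with covering number of order $(ns/\log)^{L/2}$, and they are responsible for the factor $L\sqrt{\log\{\cdot\}}$, i.e.~the $L^2$ inside the square root of the stated rate. Your reduction simply drops them, so the uniform-in-$z$ control is not established.

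A second, related issue is that you compare $\widehat{\mathbf{P}}^{s,t}$ directly to the averaged true tensor $\mathbf{P}^{s,t}$ and hope the ``rank-one approximation bias'' is negligible. The paper instead inserts the low-rank surrogate $\widetilde{\widehat{\mathbf{P}}}^{s,t}=\mathbf{S}\times_1\overline{X}^{s,t}\times_2\overline{X}^{s,t}\times_3 Q$: \Cref{prop_1} (via \Cref{lemm_eigen}, Davis--Kahan and \Cref{lemma_S_u}) bounds the cluster-aggregated error of $\widehat{\mathbf{P}}^{s,t}$ against $\widetilde{\widehat{\mathbf{P}}}^{s,t}$, not against $\mathbf{P}^{s,t}$; the remaining discrepancy $\widetilde{\widehat{\mathbf{P}}}^{s,t}-\mathbf{P}^{s,t}$ consists of U-statistic-type cross terms $(X_i(u))^{\top}W_{(l)}\{X_j(v)-X_j(u)\}$, which are not a deterministic bias but a stochastic term of the same order as the others; it is controlled separately (terms $(II.1)$, $(II.2)$) by Talagrand/Lipschitz concentration conditional on $X(v)$ plus Hoeffding, contributing the $\sqrt{L}$ factor. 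Without both this surrogate decomposition and the $\epsilon$-net step your argument cannot assemble the claimed $(L^2\vee d)$ rate; your dyadic union bound over $(s,t)$ in the final step does match the paper, but it is applied to an incomplete per-$(s,t)$ bound.
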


\begin{proof}[\textbf{Proof of \Cref{lemma_psi}.}]
This proof consists of a few steps.  In \textbf{Step 1}, we decompose our target quantity into a few additive terms.  In \textbf{Steps 2}, \textbf{3} and \textbf{4}, we deal with these terms separately.  In \textbf{Step 5}, we gather all pieces and conclude the proof. 

\medskip
\noindent \textbf{Step 1.}  To simplify notation, for any integer pair $1 \leq s < t$ satisfy that there is no change point in either $[1, s)$ or $[s+1, t)$ and any $z \in [0, 1]^L$, let $\Psi_{s, t} (z) = \widehat{D}_{s, t}(z)     - \widetilde{D}_{s, t}(z)$. For any $s, t \in \mathbb{N}$ and $1 \leq s < t$, let
\[
    \widetilde{ \widehat{\mathbf{P}}}^{s, t} = \mathbf{S} \times_1 \left(\frac{1}{t-s} \sum_{u = s+1}^t X(u)\right) \times_2 \left(\frac{1}{t-s} \sum_{u = s+1}^t X(u)\right) \times_3 Q.
\]

For any sequence of positive real values $\{\varepsilon_{s, t}\}_{1 \leq s < t}$, it holds that 
\begin{align*}
      \mathbb{P} & \bigg\{\exists s, t \in \mathbb{N},   \, 1 \leq s < t: \sup_{z \in [0, 1]^L} \left|\Psi_{s, t}(z)\right| >  \varepsilon_{s, t} \Big\} \\
    \leq & \sum_{u =1}^{\infty} \mathbb{P}_{\infty} \left\{\bigcup_{2^u \leq t < 2^{u+1}} \bigcup_{1 \leq s < t} \sup_{z \in [0, 1]^L} \left|\Psi_{s, t}(z)\right| > \varepsilon_{s, t} \right\} \\
    \leq & \sum_{u =1}^{\infty} 2^u \max_{ 2^u \leq t < 2^{u+1}} \mathbb{P}_{\infty} \left\{   \bigcup_{1 \leq s < t} \sup_{z \in [0, 1]^L} \left| \Psi_{s, t}(z)  \right| >  \varepsilon_{s, t} \right\}  \\
    \leq & \sum_{u =1}^{\infty} 2^u \max_{ 2^u \leq t < 2^{u+1}} t \max_{1 \leq s < t}  \mathbb{P}_{\infty} \bigg\{\sup_{z \in [0, 1]^L} \left| \Psi_{s, t}(z)  \right| >  \varepsilon_{s, t} \bigg\} \\
    \leq & \sum_{u =1}^{\infty} 2^{2u+1} \max_{ 2^u \leq t < 2^{2u+1}} \max_{1 \leq s < t}  \mathbb{P}_{\infty} \bigg\{\sup_{z \in [0, 1]^L} \left| \Psi_{s, t}(z) \right| > \varepsilon_{s, t} \bigg\}. 
\end{align*}

To proceed, we are to investigate the sequence of events $\{\sup_{z \in [0, 1]^L} \left| \Psi_{s, t}(z) \right| > \varepsilon_{s, t}\}_{1 \leq s < t}$.  For any integer pair $1 \leq s < t$, we have that
\begin{align}\label{Term}
    & \sup_{z \in [0, 1]^L} \left| \Psi_{s, e}^t(z) \right| = \sup_{z \in [0, 1]^L} \Big|\widehat{D}_{s, t}(z) - \widetilde{D}_{s, t}(z) \Big| \nonumber\\ 
    = & \sup_{z \in [0, 1]^L} \bigg( \sum_{k \in [2n -1]} \widetilde{S}_k  \bigg)^{-1} h^{-L}\bigg|\sum_{k \in [2n - 1]} \widetilde{S}_k \bigg[\mathcal{K}\left(\frac{z -  \widehat{\mathbf{P}}^{0, s}_{\mathcal{S}_k, :}}{h}\right) - \mathcal{K}\left(\frac{z -  \widehat{\mathbf{P}}^{s, t}_{\mathcal{S}_k, :}}{h}\right) \nonumber \\
    & \hspace{1cm} - \mathbb{E}\bigg\{\mathcal{K}\left(\frac{z - \mathbf{P}^{0, s}_{\mathcal{S}_k, :}}{h}\right)\bigg\} + \mathbb{E}\bigg\{\mathcal{K}\left(\frac{z - \mathbf{P}^{s, t}_{\mathcal{S}_k, :}}{h}\right)\bigg\}\bigg]\bigg|\nonumber \\
    \leq & \sup_{z \in [0, 1]^L} \frac{C h^{-L}}{n^2}\bigg|\sum_{k \in [2n - 1]} \widetilde{S}_k \bigg[\mathcal{K}\left(\frac{z -  \widehat{\mathbf{P}}^{0, s}_{\mathcal{S}_k, :}}{h}\right) - \mathcal{K}\left(\frac{z -  \widehat{\mathbf{P}}^{s, t}_{\mathcal{S}_k, :}}{h}\right) \nonumber \\
    & \hspace{1cm} - \mathbb{E}\bigg\{\mathcal{K}\left(\frac{z - \mathbf{P}^{0, s}_{\mathcal{S}_k, :}}{h}\right)\bigg\} + \mathbb{E}\bigg\{\mathcal{K}\left(\frac{z - \mathbf{P}^{s, t}_{\mathcal{S}_k, :}}{h}\right)\bigg\}\bigg]\bigg|\nonumber \\
    \leq & \sup_{z \in [0, 1]^L} \frac{Ch^{-L}}{n^2} \bigg|\sum_{k \in [2n- 1]} \widetilde{S}_k \bigg\{\mathcal{K}\left(\frac{z -  \widehat{\mathbf{P}}^{0, s}_{\mathcal{S}_k, :}}{h}\right) - \mathcal{K}\left(\frac{z - \widetilde{ \widehat{\mathbf{P}}}^{0, s}_{\mathcal{S}_k, :}}{h}\right)\bigg\} \bigg| \nonumber \\
    & \hspace{1cm} +  \sup_{z \in [0, 1]^L} \frac{Ch^{-L}}{n^2} \bigg|\sum_{k \in [2n - 1]} \widetilde{S}_k \bigg\{\mathcal{K}\left(\frac{z -  \widehat{\mathbf{P}}^{s, t}_{\mathcal{S}_k, :}}{h}\right) - \mathcal{K}\left(\frac{z - \widetilde{ \widehat{\mathbf{P}}}^{s, t}_{\mathcal{S}_k, :}}{h}\right)\bigg\} \bigg| \nonumber \\
     & \hspace{1cm} + \sup_{z \in [0, 1]^L} \frac{Ch^{-L}}{n^2} \bigg|\sum_{k \in [2n - 1]} \widetilde{S}_k \bigg[\mathcal{K}\left(\frac{z - \widetilde{ \widehat{\mathbf{P}}}^{0, s}_{\mathcal{S}_k, :}}{h}\right) - \mathcal{K}\left(\frac{z - \mathbf{P}^{0, s}_{\mathcal{S}_k, :}}{h}\right)\bigg] \bigg| \nonumber \\
    & \hspace{1cm} + \sup_{z \in [0, 1]^L} \frac{Ch^{-L}}{n^2} \bigg|\sum_{k \in [2n - 1]} \widetilde{S}_k \bigg[\mathcal{K}\left(\frac{z - \widetilde{ \widehat{\mathbf{P}}}^{s, t}_{\mathcal{S}_k, :}}{h}\right) - \mathcal{K}\left(\frac{z - \mathbf{P}^{s, t}_{\mathcal{S}_k, :}}{h}\right)\bigg] \bigg| \nonumber \\ 
    & \hspace{1cm} + \sup_{z \in [0, 1]^L} \frac{Ch^{-L}}{n^2} \bigg|\sum_{k \in [2n - 1]} \widetilde{S}_k \bigg[\mathcal{K}\left(\frac{z - \mathbf{P}^{0, s}_{\mathcal{S}_k, :}}{h}\right) - \mathbb{E}\bigg\{\mathcal{K}\left(\frac{z - \mathbf{P}^{0, s}_{\mathcal{S}_k, :}}{h}\right)\bigg\}\bigg] \bigg| \nonumber \\
    & \hspace{1cm} + \sup_{z \in [0, 1]^L} \frac{Ch^{-L}}{n^2} \bigg|\sum_{k \in [2n - 1]} \widetilde{S}_k \bigg[\mathcal{K}\left(\frac{z - \mathbf{P}^{s, t}_{\mathcal{S}_k, :}}{h}\right) - \mathbb{E}\bigg\{\mathcal{K}\left(\frac{z - \mathbf{P}^{s, t}_{\mathcal{S}_k, :}}{h}\right)\bigg\}\bigg] \bigg| \nonumber \\
    = & (I.1) + (I.2)+ (II.1)+ (II.2) + (III.1) + (III.2),
\end{align}
where $C> 0$ is an absolute constant and the first inequality follows from  \Cref{lemma_S_u}.

\medskip
\noindent \textbf{Step 2.}  In this step, we deal with the terms $(I.1)$ and $(I.2)$ in \eqref{Term}.  

As for $(I.1)$, note that
\begin{align}\label{TermI.1}
    (I.1) = & \sup_{z \in [0, 1]^L} \frac{Ch^{-L}}{n^2} \bigg|\sum_{k \in [2n - 1]} \widetilde{S}_k \bigg\{\mathcal{K}\left(\frac{z -  \widehat{\mathbf{P}}^{0, s}_{\mathcal{S}_k, :}}{h}\right) - \mathcal{K}\left(\frac{z - \widetilde{ \widehat{\mathbf{P}}}^{0, s}_{\mathcal{S}_k, :}}{h}\right)\bigg\} \bigg|\nonumber \\
    \leq & \sup_{z \in [0, 1]^L} \frac{Ch^{-L}}{n^2} \sum_{k \in [2n - 1]} \widetilde{S}_k \bigg|\mathcal{K}\left(\frac{z -  \widehat{\mathbf{P}}^{0, s}_{\mathcal{S}_k, :}}{h}\right) - \mathcal{K}\left(\frac{z - \widetilde{ \widehat{\mathbf{P}}}^{0, s}_{\mathcal{S}_k, :}}{h}\right)\bigg|\nonumber \\
    \leq & \frac{Ch^{-L}}{n^2} \sum_{k \in [2n -1]} \widetilde{S}_k \frac{C_{\mathrm{Lip}}}{h} \big\| \widehat{\mathbf{P}}^{0, s}_{\mathcal{S}_k, :} - \widetilde{ \widehat{\mathbf{P}}}^{0, s}_{\mathcal{S}_k, :}\big\| \nonumber \\
    = & \frac{CC_{\mathrm{Lip}}}{n^2 h^{L+1}} \sum_{k \in [2n -1]} \widetilde{S}_k \sqrt{\sum_{l = 1}^L \left\{\frac{1}{\widetilde{S}_k}\sum_{(i, j) \in \mathcal{S}_k}  \left(  \widehat{\mathbf{P}}^{0, s}_{i, j, l} - \widetilde{ \widehat{\mathbf{P}}}^{0, s}_{i, j, l}\right)\right\}^2} \nonumber \\
    \leq & \frac{CC_{\mathrm{Lip}}}{n^2 h^{L+1}} \sum_{k \in [2n -1]} \sqrt{ \widetilde{S}_k \sum_{l = 1}^L \sum_{(i, j) \in \mathcal{S}_k}  \left(  \widehat{\mathbf{P}}^{0, s}_{i, j, l} - \widetilde{ \widehat{\mathbf{P}}}^{0, s}_{i, j, l}\right)^2},
\end{align}
where the second inequality follows from \Cref{kernel_function_ass}, the third inequality follows from the fact that $\|\mathbf{v}\|_1 \leq \sqrt{p}\|\mathbf{v}\|$, for any $\mathbf{v} \in \mathbb{R}^p$.

By \Cref{prop_1}, we have that  for any $2 \exp\{- C'  n \mu_{X, 1}^2 \}\leq \delta <1$, with an absolute constant $C'>0$, it holds that
\begin{align}\label{tensor_s_t_probability} 
  &\P\left\{ n^{-2}\sum_{k \in [2n -1]} \sqrt{ \widetilde{S}_k \sum_{l = 1}^L \sum_{(i, j) \in \mathcal{S}_k}  \left(  \widetilde{\mathbf{P}}^{0, s}_{i, j, l} -  \widetilde{ \widehat{\mathbf{P}}}^{0, s}_{i, j, l}\right)^2} >
  C_1 \sqrt{\frac{d\log\{(n \vee s)/\delta\}}{ns}} \right\} \leq \delta,
\end{align}
where $C_1> 0$ is an absolute constant. Combining \eqref{TermI.1} and \eqref{tensor_s_t_probability}, we have
\begin{align}\label{Term(I.1)}
		\mathbb{P} \left\{  (I.1) >   CC_1C_{\mathrm{Lip}} h^{-L-1} \sqrt{\frac{d\log\{(n \vee t)/\delta\}}{ns}} \right\} \leq \delta.
\end{align}
Similarly, we have 
\begin{align}\label{Term(I.2)}
		\mathbb{P} \left\{ (I.2) > C C_1C_{\mathrm{Lip}} h^{-L-1} \sqrt{\frac{d\log\{(n \vee t)/\delta\}}{n(t-s)}} \right\} \leq \delta.
\end{align}

\medskip
\noindent \textbf{Step 3.} In this step, we deal with the terms $(II.1)$ and $(II.2)$ in \eqref{Term}.  As for $(II.1)$, 
\begin{align}\label{eq_ii_i}
    (II.1) = &  \sup_{z \in [0, 1]^L} \frac{Ch^{-L}}{n^2} \bigg|\sum_{k \in [2n - 1]} \widetilde{S}_k \bigg[\mathcal{K}\bigg(\frac{z - \widetilde{ \widehat{\mathbf{P}}}^{0, s}_{\mathcal{S}_k, :}}{h}\bigg) - \mathcal{K}\bigg(\frac{z - \mathbf{P}^{0, s}_{\mathcal{S}_k, :}}{h}\bigg)\bigg] \bigg| \nonumber\\
    \leq &  \sup_{z \in [0, 1]^L} \frac{Ch^{-L}}{n^2} \sum_{k \in [2n - 1]} \widetilde{S}_k  \bigg| \bigg[\mathcal{K}\left(\frac{z - \widetilde{ \widehat{\mathbf{P}}}^{0, s}_{\mathcal{S}_k, :}}{h}\right) - \mathcal{K}\left(\frac{z - \mathbf{P}^{0, s}_{\mathcal{S}_k, :}}{h}\right)\bigg] \bigg| \nonumber \\
    \leq &  CC_{\mathrm{Lip}} \frac{h^{-L-1}}{n^2} \sum_{k \in [2n- 1]} \widetilde{S}_k \bigg\|\widetilde{ \widehat{\mathbf{P}}}^{0, s}_{\mathcal{S}_k, :} - \mathbf{P}^{0, s}_{\mathcal{S}_k, :} \bigg\| \nonumber \\
    \leq &C C_{\mathrm{Lip}}   \frac{h^{-L-1} \sqrt{L}}{n^2} \sum_{k \in [2n - 1]} \widetilde{S}_k \max_{l \in [L]}\bigg|\widetilde{ \widehat{\mathbf{P}}}^{0, s}_{\mathcal{S}_k, l} - \mathbf{P}^{0, s}_{\mathcal{S}_k, l} \bigg| \nonumber \\
    = & CC_{\mathrm{Lip}}   \frac{h^{-L-1} \sqrt{L}}{n^2} \nonumber \\
    & \hspace{1cm} \times \sum_{k \in [2n - 1]} \widetilde{S}_k  \max_{l \in [L]}\bigg| \widetilde{S}_k^{-1} s^{-2} \sum_{(i, j) \in \mathcal{S}_k} \sum_{u, v=1}^{s} \left[ \left( X_i(u) \right)^{\top} W_{(l)} X_j(v) - \left(  X_i(u) \right)^{\top} W_{(l)} X_j(u)\right] \bigg| \nonumber \\ 
    \leq &C C_{\mathrm{Lip}} \frac{h^{-L-1} \sqrt{L}}{n^2} \sum_{k \in [2n - 1]} \widetilde{S}_k   \nonumber \\
    & \hspace{1cm} \times \max_{l \in [L]}  \bigg|  s^{-1}\sum_{v=1}^s\widetilde{S}_k^{-1} (s-1)^{-1} \sum_{(i, j) \in \mathcal{S}_k} \sum_{u \in [s] \backslash \{v\}} \left[  \left( X_i(u) \right)^{\top} W_{(l)} \left( X_j(v) - X_j(u) \right)\right] \bigg| \nonumber \\
    \leq & CC_{\mathrm{Lip}} \frac{h^{-L-1} \sqrt{L}}{ n \sqrt{n}}  
 \sum_{k \in [2n - 1]}  \widetilde{S}_k^{1/2} \max_{l \in [L]} \left| s^{-1} \sum_{v = 1}^s \widetilde{T}_{v, k, l} \right|,
\end{align}
where 
\begin{itemize}
    \item the second inequality follows from \Cref{kernel_function_ass},
    \item the third inequality follows from the fact that $\|\mathbf{v} \| \leq \sqrt{p} \| \mathbf{v}\|_{\infty}$, for any $\mathbf{v} \in \R^{p}$,
    \item the second identity follows from the definitions 
    \[
        \mathbf{P}^{0, s}_{\mathcal{S}_k, l} = \widetilde{S}_k^{-1} s^{-1} \sum_{(i, j) \in \mathcal{S}_k} \sum_{u = 1}^s (X_i(u))^{\top} W_{(l)} X_j(u) \mbox{ and } \widetilde{ \widehat{\mathbf{P}}}^{0, s}_{\mathcal{S}_k, l} = \widetilde{S}_k^{-1} s^{-2} \sum_{u = 1}^s \sum_{v = 1}^s (X_i(u))^{\top} W_{(l)} X_j(v), 
    \]    
    \item and the final inequality follows from the cardinality of $\mathcal{S}_k$ implied by  \Cref{lemma_S_u}.
\end{itemize}

As for the $\widetilde{T}_{v, k, l}$, $v \in [s]$, $k \in [2n - 2]$ and $l \in [L]$, we introduced in \eqref{eq_ii_i}, we have that
\begin{align*}
    \widetilde{T}_{v, k, l} & =   \widetilde{S}_k^{-1} (s-1)^{-1} \sum_{(i, j) \in \mathcal{S}_k} \sum_{u \in [s] \backslash \{v\}} \left[  \left( X_i(u) \right)^{\top} W_{(l)} \left( X_j(v) - X_j(u) \right)\right] \nonumber \\
     & =  \widetilde{S}_k^{-1} (s-1)^{-1} \sum_{(i, j) \in \mathcal{S}_k} \sum_{u \in [s] \backslash \{v\}} \widetilde{Q}_{i, j,u, l}^{(v)} .
\end{align*}
Note that $\{\widetilde{Q}_{i, j, u, l}^{(v)}, \, (i, j) \in \mathcal{S}_k, u \in [s] \backslash \{ v\}\} \subset [-1,1]$ are mutually independent given $X(v)$.  We are to deploy Talagrand's concentration inequality \citep[e.g.~Theorem 5.2.16 in][]{vershynin2018high} to upper bound $\widetilde{T}_{v, k, l}$.  In order to do so, we first derive the Lipschitz constant of $\widetilde{T}_{v, k, l}$ as a function of $\widetilde{Q}^{v}_{i, j, u, l}$.  For any sequences 
\[
    \{x_{u, i, j}, \, (i, j) \in \mathcal{S}_k, u \in [s] \setminus \{v\}\}, \, \{y_{u, i, j}, \, (i, j) \in \mathcal{S}_k, u \in [s] \setminus \{v\}\} \subset [-1, 1],
\]
it holds that
\begin{align}
	& \left| \widetilde{S}_k^{-1} (s-1)^{-1}\sum_{(i, j) \in \mathcal{S}_k} \sum_{u \in [s]\backslash\{ v \} }x_{u, i, j}   -  \widetilde{S}_k^{-1} (s-1)^{-1}\sum_{(i, j) \in \mathcal{S}_k} \sum_{u \in [s]\backslash\{ v \} } y_{u, i, j} \right|\nonumber \\
	\leq  &  \widetilde{S}_k^{-1} (s-1)^{-1}\sum_{(i, j) \in \mathcal{S}_k} \sum_{u \in [s]\backslash\{ v \}} \left| x_{u, i, j}  -  y_{u, i, j}  \right|.
 \nonumber
\end{align}
Then by concentration inequality of Lipschitz functions with respect to $l_1$-metric, we have that for any $\delta_1 > 0$,
\begin{align}
    \mathbb{P}\left\{\left. \left| \widetilde{T}_{v, k, l}     -   \mathbb{E}  \left\{ \widetilde{T}_{v, k, l} | X(v)\right\}  \right| \geq \delta_1 \right|  X(v)  \right\}  \leq 2 \exp \left\{ -2\delta_1^2\widetilde{S}_k (s-1)\right\}. \nonumber
\end{align}
Consequently, we have 
\begin{align}
    & \mathbb{P}\left\{ \left| \widetilde{T}_{v, k, l}     -   \mathbb{E}  \left\{ \widetilde{T}_{v, k, l} | X(v)\right\}   \right| \geq \delta_1   \right\}     \nonumber \\
     = & \E \left\{ \mathbb{P}\left\{\left. \left| \widetilde{T}_{v, k, l}     -   \mathbb{E}  \left\{ \widetilde{T}_{v, k, l} | X(v)\right\}  \right| \geq \delta_1 \right|  X(v)  \right\} \right\}   \leq  2 \exp \left\{ -2\delta_1^2\widetilde{S}_k (s-1)\right\}. \nonumber
\end{align}
Let $\delta_1 = C_2 \sqrt{ \log\{(s  \vee n)  /\delta \} / \widetilde{S}_k (s-1)}$ and have  
\begin{align}
    \mathbb{P}\left\{  \left| \widetilde{T}_{v, k, l}     -   \mathbb{E}  \left\{ \widetilde{T}_{v, k, l} | X(v)\right\}  \right|  \geq C_2 \sqrt{\frac{ \log\{(s  \vee n )/\delta \}}{  \widetilde{S}_k (s-1) }}\right\}  \leq \delta (s  \vee n)^{-c_2}/2, \nonumber
\end{align}
where $C_2, c_2 > 0$ is an absolute constant.
Then by a union bound argument, we have
\begin{align}
    \mathbb{P}\left\{  \left| s^{-1}\sum_{v=1}^s \widetilde{T}_{v, k, l}     -   s^{-1}\sum_{v=1}^s  \mathbb{E}  \left\{ \widetilde{T}_{v, k, l} | X(v)\right\}  \right|  \geq C_2 \sqrt{\frac{ \log\{(s  \vee n) /\delta \}}{  \widetilde{S}_k (s-1) }}\right\}  \leq \delta (s  \vee n)^{-c_3}/2, \nonumber
\end{align}
where $c_3 > 0$ is an absolute constant.
Using a union bound argument twice, we have 
\begin{align}\label{eq_II_i_main_1}
   &  \mathbb{P}\left\{ \sum_{k \in [2n - 1]} \widetilde{S}_k^{1/2} \max_{l \in [L]}\left|s^{-1}\sum_{v=1}^s \widetilde{T}_{v, k, l}     -  s^{-1}\sum_{v=1}^s \mathbb{E}  \left\{ \widetilde{T}_{v, k, l} | X(v)\right\}  \right| \right. \nonumber\\
   & \hspace{1cm} \left.
   \geq  C_2n\sqrt{\frac{ \log\{(s  \vee n) /\delta \}}{   s-1 }}\right\}  
    \leq \delta/2.
\end{align}
Note that 
\begin{align}\label{mean_1}
   & s^{-1}\sum_{v=1}^s  \mathbb{E}  \left\{ \widetilde{T}_{v, k, l} | X(v)\right\} \nonumber \\
   = &   s^{-1}\sum_{v=1}^s \mathbb{E}  \left\{ \left[ \widetilde{S}_k^{-1} (s-1)^{-1}\sum_{(i, j) \in \mathcal{S}_k} \sum_{u \in [s]\backslash\{ v \} }   \left(  \left( X_i(u) \right)^{\top} W_{(l)} X_j(v) - \left( X_i(u) \right)^{\top} W_{(l)} X_j(u) \right)   \right] \Bigg| X(v)\right\} \nonumber \\
    = &   s^{-1} \widetilde{S}_k^{-1} \sum_{v=1}^s     \sum_{(i, j) \in \mathcal{S}_k} \mathbb{E}  \left\{  X_i(1) \right\}^{\top} W_{(l)}  X_j(v)  -  \mathbb{E}  \left\{  X_1(1) \right\}^{\top} W_{(l)}  \mathbb{E}  \left\{  X_1(1) \right\}, 
\end{align}
where we get the final equality since $\cup_{(i, j) \in \mathcal{S}_k}\left\{ X_i(t), X_j (t)\right\}$ are mutually independent random vectors. Consequently, we have
\begin{align}\label{mean_2}
   & \E \left\{ s^{-1}\sum_{v=1}^s  \mathbb{E}  \left\{ \widetilde{T}_{v, k, l} | X(v)\right\}  \right\}\nonumber \\
    = & \E \left\{  s^{-1} \widetilde{S}_k^{-1} \sum_{v=1}^s     \sum_{(i, j) \in \mathcal{S}_k} \mathbb{E}  \left\{  X_i(1) \right\}^{\top} W_{(l)}  X_j(v)  -  \mathbb{E}  \left\{  X_1(1) \right\}^{\top} W_{(l)}  \mathbb{E}  \left\{  X_1(1) \right\} \right\} =0. 
\end{align}
Let 
\[
   \mathcal{A} = \left\{ \sum_{k \in [2n - 1]}\widetilde{S}_k^{1/2} \max_{l \in [L]} \Bigg| s^{-1} \sum_{v=1}^s  \mathbb{E}  \left\{ \widetilde{T}_{v, k, l} | X(v)\right\}  \Bigg|\leq C_3 n \sqrt{\frac{\log(n/\delta)}{s}}\right\},
\]
where $C_3 > 0$ is an absolute constant. Note that for any $k \in [2n -1]$, for all pairs $(i, j )\in \mathcal{S}_k$, $j$'s are all distinct.
Combining \eqref{mean_1} and \eqref{mean_2}, by Hoeffding's inequality for general bounded random variables \citep[e.g.~Theorem 2.2.6 in][]{vershynin2018high}, we have that 
\begin{align}
    \P \left\{ \Bigg| s^{-1}\sum_{v=1}^s \mathbb{E}  \left\{ \widetilde{T}_{v, k, l} | X(v)\right\} \Bigg| \geq C_2 \sqrt{ \frac{\log(n/\delta)}{s \widetilde{S}_k}}\right\} \leq  \delta n^{-c_2}/2. \nonumber
\end{align}
By using a union bound argument twice, we have
\begin{align}\label{eq_II_i_main_2}
    \mathbb{P}\left\{  \mathcal{A} \right\}  \geq  1- \delta/2.  
\end{align}
Then combining \eqref{eq_II_i_main_1} and \eqref{eq_II_i_main_2}, we have
\begin{align}\label{eq_II_i_main_3}
   &  \mathbb{P}\left\{ \sum_{k \in [2n - 1]} \widetilde{S}_k^{1/2} \max_{l \in [L]}\left| s^{-1}\sum_{v=1}^s \widetilde{T}_{v, k, l} \right|   \geq 
     C_4 n \sqrt{\frac{ \log\{(s  \vee n ) /\delta \}}{   s }}\right\} \leq \delta,
\end{align}
where $C_4 >0$ is an absolute constant.
Then combining \eqref{eq_ii_i} and \eqref{eq_II_i_main_3}, we have 
\begin{align}\label{Term(II.1)} 
    \P\left\{ (II.1)\leq  C C_{\mathrm{Lip}}C_4 h^{-L-1} \sqrt{\frac{L \log\{(s  \vee n )/\delta \}}{sn}} \right\} \leq \delta.
\end{align}
Similarly, we have 
\begin{align}\label{Term(II.2)} 
    \P\left\{ (II.2)\leq  C C_{\mathrm{Lip}}C_4  h^{-L-1} \sqrt{\frac{L \log\big[ \{(t-s)  \vee n \}/\delta \big]}{(t-s)n}} \right\} \leq \delta.
\end{align}

\medskip
\noindent \textbf{Step 4.} In this step, we deal with the terms $(III.1)$ and $(III.2)$ in \eqref{Term}.  As for $(III.1)$, 
\begin{align}\label{term_iii}
    (III.1) & = \sup_{z \in [0, 1]^L} \frac{Ch^{-L}}{n^2} \bigg|\sum_{k \in [2n - 1]} \widetilde{S}_k \bigg[\mathcal{K}\left(\frac{z - \mathbf{P}^{0, s}_{\mathcal{S}_k, :}}{h}\right) - \mathbb{E}\bigg\{\mathcal{K}\left(\frac{z - \mathbf{P}^{0, s}_{\mathcal{S}_k, :}}{h}\right)\bigg\}\bigg] \bigg| \nonumber \\
	& \leq  \sup_{z \in [0, 1]^L} C \left\{ \sum_{k \in [2n- 1]}\frac{\widetilde{S}_k}{n^2}  \left|   h^{-L} \mathcal{K} \left( \frac{z-  \mathbf{P}^{0, s}_{\mathcal{S}_k, :}  }{h} \right) -  \mathbb{E} \left\{h^{-L} \mathcal{K} \left( \frac{z-  \mathbf{P}^{0, s}_{\mathcal{S}_k, :}   }{h} \right) \right\} \right| \right\} \nonumber \\
    & = \sup_{z \in [0, 1]^L} C \sum_{k \in [2n - 1]}\frac{\widetilde{S}_k}{n^2} T_k^{0, s}(z) \leq  \sup_{z \in [0, 1]^L} C\max_{k \in [2n -1]} \sqrt{\frac{\widetilde{S}_k}{n}} T^{0, s}_k(z),
\end{align}
where the final inequality follows from the fact that $\max_{k \in [2n- 1]} |\widetilde{S}_k| \leq n$.

For any $z \in [0, 1]^L$ and $k\in [2n -1]$, we are to upper bound $T_k^{0, s}(z)$ and then to upper bound term $(III.1)$ using \eqref{term_iii}.  Note that $\{\mathbf{P}_{i, j, :}(u), \, u \in [s], (i, j) \in \mathcal{S}_k\}$ are mutually independent.  For any sequences $\{x_{u, i, j}, \, u \in [s], (i, j) \in \mathcal{S}_k\}, \{y_{u, i, j}, \, u \in [s], (i, j) \in \mathcal{S}_k\}\subset [0, 1]^L$, differing only by one element, we have 
\begin{align}
	& \left| h^{-L} \mathcal{K} \left( \frac{z-  \frac{1}{s \widetilde{S}_k}\sum_{(i, j) \in \mathcal{S}_k}\sum_{u=1}^{s}   x_{u, i, j}   }{h} \right) - h^{-L} \mathcal{K} \left( \frac{z-  \frac{1}{s \widetilde{S}_k}\sum_{(i, j) \in \mathcal{S}_k}\sum_{u=1}^{s}   y_{u, i, j}   }{h} \right) \right|\nonumber \\
	\leq  & C_{\mathrm{Lip}}  h^{-L-1}  \left\|\frac{1}{s \widetilde{S}_k}\sum_{(i, j) \in \mathcal{S}_k}\sum_{u = 1}^{s}  \left( x_{u, i, j}  -  y_{u, i, j} \right)  \right\| \leq  C_{\mathrm{Lip}} h^{-L-1}\frac{\sqrt{L}}{s \widetilde{S}_k}, \nonumber
\end{align}
where the first inequality follows from \Cref{kernel_function_ass} and the second inequality follows from the fact that $\{x_{u, i, j}\}$ and $\{y_{u, i, j}\}$ only differ by one $L$-dimensional vector.  It then follows from the independent bounded differences inequality \citep[e.g.][]{mcdiarmid1989method} that, for any $\delta_2 > 0$,
\begin{align}
	\mathbb{P}\left\{ \left|T_k^{0, s}(z)\right|> \delta_2 \right\}   \leq 2\exp\left\{\frac{-2
	\delta_2^2 s \widetilde{S}_k }{ C_{\mathrm{Lip}}^2 h^{-2L-2} L} \right\}. \nonumber
\end{align}
We then let $\delta_2 = C_3 h^{-L-1}  \sqrt{\frac{ L  \log ( n /\delta_3 ) }{s \widetilde{S}_k }}$, with $\delta_3 > 0$ to be specified, due to a union bound argument we have that, for any $z \in [0, 1]^L$,
\begin{align}\label{eq_iii_i_main}
	\mathbb{P}\left\{  \max_{k \in [2n -1]} \sqrt{\frac{\widetilde{S}_k}{n}} T_k^{0, s}(z)  > C_3 h^{-L-1}  \sqrt{\frac{ L\log (n /\delta_3) }{sn}}\right\}   \leq \delta_3. 
\end{align}

For any $\epsilon \in (0, 1)$, note that $\mathcal{N}([0, 1]^L, \|\cdot\|, \epsilon)$, the cardinality of an $\epsilon$-net of $[0, 1]^L$ with respect to the $\ell_2$-norm, satisfies that 
\[
    \left(\frac{1}{\epsilon}\right)^L \leq \mathcal{N} \left( [0,1]^L, \| \cdot \|, \epsilon \right) \leq  \left(\frac{3}{\epsilon}\right)^L.
\]
let $\epsilon = C_3 \sqrt{\log(n \vee s) / \{ns\}} $, then we have 
\[
 Q =  \mathcal{N} \left( [0,1]^L, \| \cdot \|_2, \epsilon \right) \leq \left(\frac{3}{C_3} \sqrt{\frac{ ns }{\log ( n \vee s )}} \right)^{L}.
\]
We then let $\left\{z_1, \ldots,  z_Q \right\}$ be an $\epsilon$-covering of $ [0,1]^L$ and $I_j = \left\{z\in[0, 1]^L: \|z-z_j \|_2 \leq \epsilon \right\}$, $j \in [Q]$.  It follows from \eqref{term_iii} that 
\begin{align}\label{term_ii}
	 & (III.1) \leq \sup_{z \in [0,1]^L} \max_{k \in [2n -1]} \sqrt{\frac{\widetilde{S}_k}{n}} \left| T_k^{0, s}(z)\right|  \nonumber \\
  \leq   &  \max_{j \in [Q]} \left\{ \max_{k \in [2n -1]} \sqrt{\frac{\widetilde{S}_k}{n}} \left| T_k^{0, s}(z_j)\right|    + \sup_{z \in I_j} \max_{k \in [2n -1]} 
 \sqrt{\frac{\widetilde{S}_k}{n}} \left| T_k^{0, s}(z_j) - T_k^{0, s}(z)\right|\right\} \nonumber \\
    \leq  &  \max_{j \in [Q]} \max_{k \in [2n -1]} \sqrt{\frac{\widetilde{S}_k}{n}} \left| T_k^{0, s}(z_j)\right|    +  \max_{j \in [Q]} \sup_{z \in I_j}\max_{k \in [2n -1]} 
 \sqrt{\frac{\widetilde{S}_k}{n}} \left| T_k^{0, s}(z_j) - T_k^{0, s}(z)\right|. 
\end{align} 

By \eqref{eq_iii_i_main} and a union bound argument we have 
\begin{align}
   &  \mathbb{P} \left\{ \max_{j \in [Q]} \max_{k \in [2n -1]} \sqrt{\frac{\widetilde{S}_k}{n}} \left| T_k^{0, s}(z_j)\right|  \geq  C_3 h^{-L-1}  \sqrt{\frac{ L\log(n/\delta_3) }{sn}}   \right\} \leq Q\delta_3. \nonumber 
\end{align}
For $\delta > 0$ to be specified, letting $\delta_3 = \delta/Q$, we have 
\begin{align}\label{term_ii_1_1}
   &  \mathbb{P} \left\{ \max_{j \in [Q]} \max_{k \in [2n -1]} \sqrt{\frac{\widetilde{S}_k}{n}} \left| T_k^{0, s}(z_j)\right|  \geq  C_4  h^{-L-1} L \sqrt{\frac{ \log \left( (n \vee s )  /\delta\right) }{ sn} }  \right\} \leq \delta,
\end{align}
where $C_4 > 0$ is an absolute constant.

Note that  
 \begin{align}\label{term_ii_1_2}
 	&  \max_{j \in [Q]} \sup_{z \in I_j}\max_{k \in [2n -1]} 
 \sqrt{\frac{\widetilde{S}_k}{n}} \left| T_k^{0, s}(z_j) - T_k^{0, s}(z)\right| \nonumber \\
 	\leq & \max_{j \in [Q]} \sup_{z \in I_j}\max_{k \in [2n -1]}  \sqrt{\frac{\widetilde{S}_k}{n}}\left|  h^{-L} \mathcal{K} \left( \frac{z_j-  \mathbf{P}^{0, s}_{\mathcal{S}_k, :}  }{h} \right) -  h^{-L} \mathcal{K} \left( \frac{z-  \mathbf{P}^{0, s}_{\mathcal{S}_k, :}  }{h} \right) \right|  \nonumber \\
 	&  \hspace{0.2cm}+    \max_{j \in [Q]} \sup_{z \in I_j} \sum_{p \in [2n -1]} \frac{\widetilde{S}_p}{n^2} \left| \mathbb{E} \left\{h^{-L} \mathcal{K} \left( \frac{z_j-  \mathbf{P}^{0, s}_{\mathcal{S}_p, :}   }{h} \right) \right\} -    \mathbb{E} \left\{h^{-L} \mathcal{K} \left( \frac{z-  \mathbf{P}^{0, s}_{\mathcal{S}_p, :}   }{h} \right) \right\}  \right| \nonumber  \\
 	\leq &\max_{j \in [Q]} \sup_{z \in I_j} 2 C_{\mathrm{Lip}} h^{-L-1} \left\|z_j - z \right\|_2
 	\leq 2 C_{\mathrm{Lip}} C_3 h^{-L-1} \sqrt{\frac{\log(n \vee s)}{ns}}.  
 \end{align}
Combining \eqref{term_ii}, \eqref{term_ii_1_1} and \eqref{term_ii_1_2}, we have
\begin{align}\label{Term(III.1)}
    \mathbb{P}	 \left\{ (III.1)> C_5 h^{-L-1}L\sqrt{\frac{\log\{ (n \vee s)/\delta \}}{ns}}\right\} \leq  \delta,
\end{align}
where $C_5 > 0$ is an absolute constant.
Similarly, we have
\begin{align}\label{Term(III.2)}
    \mathbb{P}	 \left\{ (III.2)\leq C_5 h^{-L-1}L\sqrt{\frac{\log [  \{ n \vee (t-s)\}/\delta ]}{n(t-s)} } \right\} \leq  \delta. 
\end{align}

\medskip
\noindent \textbf{Step 5.}
Combining \eqref{Term}, \eqref{Term(I.1)}, \eqref{Term(I.2)}, \eqref{Term(II.1)},  \eqref{Term(II.2)}, \eqref{Term(III.1)} and \eqref{Term(III.2)}, we have
\begin{align}
   \mathbb{P} \left\{   \sup_{z \in [0, 1]^L} \left| \Psi_{s, e}^t(z) \right| >  C_{\Psi} h^{-L-1} \sqrt{ \frac{(L^2 \vee d) \log\{ (n \vee t) / \delta \}}{n} } \left( \frac{1}{\sqrt{s}}+ \frac{1}{\sqrt{t- s}}\right)  \right\}  \leq 6 \delta. \nonumber 
\end{align}
where $C_{\Psi} > 0$ is an absolute constant.

Let 
\[
\varepsilon_{s, t} = C_{\Psi} h^{-L-1} \sqrt{ \frac{(L^2 \vee d) \log\{ (n \vee t) / \delta \}}{n} } \left( \frac{1}{\sqrt{s}}+ \frac{1}{\sqrt{t- s}}\right) 
\]
and
$\delta = \frac{\alpha \log^2 2 }{ 24(\log t + \log 2)^2 t^2 } \leq \frac{\alpha}{48t^2}$, then we have that 
\begin{align}
     &   \mathbb{P}  \left\{ \exists s, t \in \mathbb{N}, t > 1, s \in [1, t): \sup_{z \in [0, 1]^L} \left| \Psi_{s, t}(z)  \right|    >  \varepsilon_{s, t} \right\}  \nonumber \\
     \leq & \sum_{u =1}^{\infty} 2^{2u+1} \max_{ 2^u \leq t < 2^{2u+1}}  \frac{\alpha \log^2 2 }{ 4(\log t + \log 2)^2 t^2 }  \leq \frac{\alpha}{2} \sum_{u =1}^{\infty} \frac{1}{(u+1)^2}  \leq  \frac{\alpha}{2} \sum_{u =1}^{\infty} \frac{1}{u(u+1)} = \alpha/2, \nonumber 
\end{align}
which completes the proof.
\end{proof}

\begin{lemma}\label{lemma_z}
Let the data be a sequence of adjacency tensors  $\{\mathbf{A}(t)\}_{t \in \mathbb{N}^*} \subset \mathbb{R}^{n \times n \times L}$ as defined in \Cref{def-umrdpg-dynamic} satisfying \Cref{ass_X_Y}.  Let  $\widetilde{D}_{\cdot, \cdot}(\cdot)$ be defined in and \eqref{eq-def-D-tilde}, with the kernel function $\mathcal{K}(\cdot)$ satisfying \Cref{kernel_function_ass}. 

Let $\left\{ z_m \right\}_{m=1}^{M_{s, t}}$ be a collection of grid points randomly sampled from a density $g: [0, 1]^L \to \R$ such that $\inf_{z \in [0, 1]^L}g(z) \geq c_g$, where $c_g >0$ is an absolute constant.
Let $\alpha \in (0, 1)$
For any $s, t \in \mathbb{N}$ and $1 \leq s < t$  provided that 
\[
M_{s, t}  \geq   C_M L^{-L} [ n \{ s \wedge (t-s)\} ] ^{L/2}   [ \log\{(n \vee t )/\alpha\}  ]^{-L/2+1},
\]
it holds that
\begin{align}
   \P & \left\{ \exists s, t \in \mathbb{N}, t > 1, s \in [1, t):  \sup_{z \in [0, 1]^L}  \left| \widetilde{D}_{s, t} \left(z\right)\right| -  \max_{m=1}^{M_{s, t}} \left| \widetilde{D}_{s, t}\left( z_m\right) \right| \right.
  \nonumber \\
   &\hspace{1cm} \left. >  C_{M} h^{-L-1} \sqrt{ \frac{L^2 \log\{ (n \vee t)/\alpha \}}{n}} \left( \frac{1}{\sqrt{s}}+ \frac{1}{\sqrt{t- s}}\right) \right\}   \leq \alpha/2, \nonumber
\end{align}
where $C_M >0 $ is an absolute constant.
\end{lemma}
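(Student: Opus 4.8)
\textbf{Proof proposal for Lemma~\ref{lemma_z}.}

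The plan is to control the discretization gap $\sup_{z \in [0,1]^L} |\widetilde{D}_{s,t}(z)| - \max_{m} |\widetilde{D}_{s,t}(z_m)|$ by combining (a) a uniform Lipschitz bound on the map $z \mapsto \widetilde{D}_{s,t}(z)$ with (b) a covering argument showing that a random sample of $M_{s,t}$ points from a density bounded below hits every ball of a suitable radius with high probability. First I would establish that $\widetilde{D}_{s,t}(\cdot)$ is Lipschitz: from the definition in \eqref{eq-def-D-tilde}, $\widetilde{D}_{s,t}(z)$ is an affine combination (with weights $\widetilde{S}_k / \sum_k \widetilde{S}_k$, which sum to one) of differences of expectations of $h^{-L}\mathcal{K}((z - \cdot)/h)$; by \Cref{kernel_function_ass}, each such term is $C_{\mathrm{Lip}} h^{-L-1}$-Lipschitz in $z$, and the Jensen/triangle inequality passes this through the expectations and the convex combination, so $|\widetilde{D}_{s,t}(z) - \widetilde{D}_{s,t}(z')| \leq 2 C_{\mathrm{Lip}} h^{-L-1}\|z - z'\|$ for all $z, z' \in [0,1]^L$. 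Consequently, if the sample points $\{z_m\}$ form an $\eta$-net of $[0,1]^L$ in the sense that every $z \in [0,1]^L$ has some $z_m$ within distance $\eta$, then $\sup_z |\widetilde{D}_{s,t}(z)| - \max_m |\widetilde{D}_{s,t}(z_m)| \leq 2 C_{\mathrm{Lip}} h^{-L-1} \eta$.

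Next I would choose $\eta$ to match the target rate: we want $h^{-L-1}\eta \asymp h^{-L-1} \sqrt{L^2 \log\{(n\vee t)/\alpha\}/n}\,(1/\sqrt{s} + 1/\sqrt{t-s})$. Since $1/\sqrt{s} + 1/\sqrt{t-s} \asymp 1/\sqrt{s \wedge (t-s)}$, this means $\eta \asymp \sqrt{L^2 \log\{(n \vee t)/\alpha\}/(n \{s \wedge (t-s)\})}$. Then I would verify that $M_{s,t}$ uniform (or $g$-distributed, with $g \geq c_g$) points suffice to form such an $\eta$-net with probability at least $1 - \alpha/4$ (reserving another $\alpha/4$ for the union over $s, t$ via the usual dyadic peeling). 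Tile $[0,1]^L$ by $N_\eta \asymp (\sqrt{L}/\eta)^L$ cubes of side $\eta/\sqrt{L}$ (so each has diameter $\leq \eta$); a point falling into a cube certifies that cube is covered. Each cube has $g$-mass at least $c_g (\eta/\sqrt{L})^L$, so by a coupon-collector / Chernoff bound the probability that some cube is missed after $M_{s,t}$ draws is at most $N_\eta (1 - c_g (\eta/\sqrt{L})^L)^{M_{s,t}} \leq N_\eta \exp\{-c_g M_{s,t}(\eta/\sqrt{L})^L\}$. Plugging in $\eta^2 \asymp L^2 \log\{(n\vee t)/\alpha\}/(n\{s\wedge(t-s)\})$ gives $(\eta/\sqrt{L})^L \asymp L^{L/2}[\log\{(n\vee t)/\alpha\}]^{L/2}[n\{s \wedge(t-s)\}]^{-L/2}$, so the hypothesis $M_{s,t} \geq C_M L^{-L}[n\{s\wedge(t-s)\}]^{L/2}[\log\{(n\vee t)/\alpha\}]^{-L/2+1}$ makes the exponent of order $\log\{(n\vee t)/\alpha\}$ up to a constant times $L^{-L/2}$; after absorbing $N_\eta$, which is polynomial in $n, t, 1/\alpha$, and $L$-dependent factors, the failure probability is bounded by $\alpha/4$ for $C_M$ large enough. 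Finally I would handle the quantifier over all $1 \leq s < t$: apply the single-$(s,t)$ bound with $\alpha$ replaced by a summable sequence $\alpha_{s,t} \asymp \alpha \log^2 2 / \{(\log t + \log 2)^2 t^2\}$ (exactly as in the proofs of \Cref{lemma_psi_f} and \Cref{lemma_psi}), and sum over the dyadic blocks $2^u \leq t < 2^{u+1}$ using $\sum_u 1/(u(u+1)) = 1$, which yields the total bound $\alpha/2$.

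The main obstacle I anticipate is bookkeeping the $L$-dependence carefully enough that the covering-number factor $N_\eta \asymp (\sqrt{L}/\eta)^L$ is genuinely absorbed by the $M_{s,t}$ we are given, since $N_\eta$ itself grows like $(n\{s\wedge(t-s)\})^{L/2}$ up to log and $L$ factors — the same order as $M_{s,t}$ — so the exponential decay $\exp\{-c_g M_{s,t}(\eta/\sqrt{L})^L\}$ must beat a polynomially-large prefactor, which works only because the exponent is of order $\log\{(n\vee t)/\alpha\}$ (not $O(1)$) once $M_{s,t}$ is taken at the stated threshold with a sufficiently large constant $C_M$. This is precisely why the stated lower bound on $M_{s,t}$ carries the extra $[\log\{(n\vee t)/\alpha\}]^{-L/2+1}$ factor rather than $[\log]^{-L/2}$: the one surviving power of $\log$ in the exponent is what dominates $\log N_\eta \asymp (L/2)\log(n t)$, and I would need to check the inequality $c_g C_M \log\{(n\vee t)/\alpha\} \gtrsim \log N_\eta + \log(4/\alpha)$ holds with room to spare. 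Everything else — the Lipschitz estimate, the dyadic union bound — is routine and parallels the earlier lemmas in the paper.
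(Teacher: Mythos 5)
Your proposal is correct, but it takes a genuinely different route from the paper's proof. The paper exploits the fact that $\widetilde{D}_{s,t}(\cdot)$ is built entirely from expectations, so its maximizer $z^{*}_{s,t} = \argsup_{z \in [0,1]^L}|\widetilde{D}_{s,t}(z)|$ is a \emph{non-random} point; it then bounds the discretization gap by $\min_m 2C C' C_{\mathrm{Lip}} h^{-L-1}\|z_m - z^{*}_{s,t}\|$ and only needs \emph{one} of the $M_{s,t}$ random grid points to land in a single fixed ball around $z^{*}_{s,t}$, giving failure probability $\bigl(1 - c\,r^L\bigr)^{M_{s,t}} \leq \exp\{-c\,M_{s,t} r^L\}$ with no covering-number prefactor, after which the same dyadic union bound over $(s,t)$ that you describe finishes the argument. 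You instead require the sample to form an $\eta$-net of all of $[0,1]^L$ via a tiling/coupon-collector bound, which forces you to verify that the exponent $c_g M_{s,t}(\eta/\sqrt{L})^L \asymp C_M L^{-L/2}\log\{(n\vee t)/\alpha\}$ dominates $\log N_\eta \asymp (L/2)\log\bigl(n\{s\wedge(t-s)\}\bigr)$ plus the $\log(1/\delta_{s,t})$ from the peeling — a check you correctly identify and which does go through (for fixed $L$ and $C_M$ large enough, exactly because the $[\log\{(n\vee t)/\alpha\}]^{-L/2+1}$ factor in $M_{s,t}$ leaves one full power of the logarithm in the exponent). The trade-off: the paper's single-ball argument is leaner and sidesteps the prefactor bookkeeping entirely, while your net-based argument is more robust — it would survive even if the function being discretized were random or if one wanted the same grid to approximate the supremum of a whole family of $h^{-L-1}$-Lipschitz functions — at the cost of constants with somewhat heavier $L$-dependence (which is harmless here since the paper fixes $L$ throughout Section 4, and its own constants, e.g.\ $C_2^{-L}$, carry the same kind of dependence). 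Your Lipschitz bound on $\widetilde{D}_{s,t}$ and the final summable-sequence union bound match the paper's.
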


\begin{proof}[\textbf{Proof of \Cref{lemma_z}.}]
Let $z^{*}_{s, t} = \argsup_{z \in [0, 1]^L} |  \widetilde{D}_{s, t}(z)|$.  Recalling the definition of $\widetilde{D}_{s, t}(z)$ in \eqref{eq-def-D-tilde}, we note that $z^*_{s, t}$ is non-random.  Let $\delta > 0 $ to be specified later and 
\[
\varepsilon_{s, t} =  C_{M} h^{-L-1} \sqrt{ \frac{L^2 \log\{ (n \vee t) / \delta\}}{n }} \left( \frac{1}{\sqrt{s}}+ \frac{1}{\sqrt{t- s}}\right).
\]
Note that
\begin{align}
    &   \left| \widetilde{D}_{s, t} \left(z^{*}_{s, t}\right)\right| -  \max_{m=1}^{M_{s, t}} \left| \widetilde{D}_{s, t}\left( z_m\right) \right|  \leq \min_{m=1}^{M_{s, t}} \left| \widetilde{D}_{s, t} \left(z^{*}_{s, t}\right)  -  \widetilde{D}_{s, t}\left( z_m \right) \right|  \nonumber \\
    \leq & \min_{m=1}^{M_{s, t}} \sum_{k \in [2n - 1]}\frac{C\widetilde{S}_k}{n^2}\bigg| \mathbb{E} \left\{h^{-L}  \mathcal{K} \left( \frac{ z^{*}_{s, t} -  \mathbf{P}^{0, s}_{\mathcal{S}_k, :}    }{h} \right) - h^{-L}  \mathcal{K} \left( \frac{z_m -  \mathbf{P}^{0, s}_{\mathcal{S}_k, :}    }{h} \right) \right\} \nonumber \\
    & \hspace{0.5cm}  +  \mathbb{E} \left\{h^{-L} \mathcal{K} \left( \frac{z_m-  \mathbf{P}^{s, t}_{\mathcal{S}_k, :}   }{h} \right) - h^{-L} \mathcal{K} \left( \frac{z^{*}_{s, t} -  \mathbf{P}^{s, t}_{\mathcal{S}_k, :}   }{h} \right)\right\} \bigg|\nonumber\\
    \leq &  \min_{m=1}^{M_{s, t}}  2C C^{\prime}h^{-L-1} C_{\mathrm{Lip}} \left\| z_m - z^{*}_{s, t}  \right\|_2,\nonumber
\end{align}
 with absolute constants $C, C^{\prime}> 0$, where the last inequalities follow from \Cref{lemma_S_u} and \Cref{kernel_function_ass}.  For $\varepsilon_{s, t} > 0$ to be defined, we then have that 
\begin{align}
    &\P \left\{ \sup_{z \in [0, 1]^L} \left| \widetilde{D}_{s, t} \left(z\right)\right| -  \max_{m=1}^{M_{s, t}} \left| \widetilde{D}_{s, t}\left( z_m\right) \right|  > \varepsilon_{s, t} \right\} \leq  \P \left\{  \min_{m=1}^{M_{s, t}}  2C C^{\prime}h^{-L-1} C_{\mathrm{Lip}} \left\| z_m - z^{*}_{s, t}  \right\|_2  >\varepsilon_{s, t} \right\} \nonumber \\
    =  & \P \left\{  \left\{ z_m \right\}_{m = 1}^{M_{s, t}} \notin  B \left(z^{*}_{s, t}, \,  C_1\sqrt{ \frac{L^2 \log\{ (n \vee t)/\delta \}}{n}} \left( \frac{1}{\sqrt{s}}+ \frac{1}{\sqrt{t- s}}\right) \right)  \right\} \nonumber \\
    = & \left[ 1 - \P\left\{ z_1  \in  B \left(z^{*}_{s, t}, \, C_1\sqrt{ \frac{L^2  \log\{ (n \vee t)/\delta \})}{n}} \left( \frac{1}{\sqrt{s}}+ \frac{1}{\sqrt{t- s}}\right) \right) \right\} \right]^{M_{s, t}}
    \nonumber \\
    \leq  & \left[ 1 - \left( C_2 \sqrt{\frac{L^2 \log\{ (n \vee t)/\delta \}}{n}}  \max\left\{ \frac{1}{\sqrt{s}},  \frac{1}{\sqrt{t- s}}\right\} \right)^L  \right]^{M_{s, t}} \nonumber\\
    \leq  & \exp \left\{ -M_{s, t} \left( C_2 \sqrt{\frac{ L^2  \log\{ (n \vee t)/\delta \}}{n}}  \max\left\{ \frac{1}{\sqrt{s}},  \frac{1}{\sqrt{t- s}}\right\} \right)^L  \right\}, \nonumber
\end{align}
where $C_1, C_2 > 0$ are absolute constants,  the second inequality follows from the assumption that  the density $\inf_{z \in [0, 1]^L}g(z) \geq c_g$, with $c_g >0$ being an absolute constant.  Thus, if 
\[
M_{s, t}  \geq  C_2^{-L} \left( \frac{n \{s \wedge (t-s)\} }{ L^2 \log\{ (n \vee t)/\delta\} }   \right)^{L/2} \log(1/\delta),
\]
we have 
\[
 \P \left\{ \sup_{z \in [0, 1]^L} \left| \widetilde{D}_{s, t} \left(z\right)\right| -  \max_{m=1}^{M_{s, t}} \left| \widetilde{D}_{s, t}\left( z_m\right) \right|  > \varepsilon_{s, t} \right\} \leq \delta.
\]
Let $\delta = \frac{\alpha \log^2(2) }{ 4\{\log (t) + \log (2)\}^2 t^2 } \leq \frac{\alpha}{16t^2}$, then we have that 
\begin{align}
     &   \mathbb{P}  \left\{ \exists s, t \in \mathbb{N}, t > 1, s \in [1, t):  \sup_{z \in [0, 1]^L} \left| \widetilde{D}_{s, t} \left(z\right)\right| -  \max_{m=1}^{M_{s, t}} \left| \widetilde{D}_{s, t}\left( z_m\right) \right|  > \varepsilon_{s, t}  \right\}  \nonumber \\
     \leq & \sum_{u =1}^{\infty} 2^{2u+1} \max_{ 2^u \leq t < 2^{2u+1}}  \frac{\alpha \log^2 (2) }{ 4\{\log (t) + \log (2)\}^2 t^2 }  \leq  \frac{\alpha}{2} \sum_{u =1}^{\infty} \frac{1}{(u+1)^2} \leq  \frac{\alpha}{2} \sum_{u =1}^{\infty} \frac{1}{u(u+1)} = \alpha/2, \nonumber 
\end{align}
which completes the proof.
\end{proof}

\begin{lemma}\label{lemma_no_change_point}
Let the data be a sequence of adjacency tensors  $\{\mathbf{A}(t)\}_{t \in \mathbb{N}^*} \subset \mathbb{R}^{n \times n \times L}$ as defined in \Cref{def-umrdpg-dynamic} satisfying \Cref{ass_X_Y} and \Cref{ass_no_change_point}. Let $\widetilde{D}_{\cdot, \cdot}(\cdot)$ be defined in and \eqref{eq-def-D-tilde}, with the kernel function $\mathcal{K}(\cdot)$ satisfying \Cref{kernel_function_ass}. 

For any $s, t \in \mathbb{N}$, $1 \leq s< t $, it holds that 
\[
\sup_{z \in [0, 1]^L} \left| \widetilde{D}_{s, t}(z)\right| \leq  C h^{-L-1}  \sqrt{\frac{\{ 1 + \log(L)\} L}{n}} \left( \frac{1}{\sqrt{s}} + \frac{1}{\sqrt{t-s}} \right),
\]
where $C > 0$ is an absolute constant.
\end{lemma}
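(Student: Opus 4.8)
\textbf{Proof strategy for \Cref{lemma_no_change_point}.} The plan is to exploit the no-change-point assumption (\Cref{ass_no_change_point}) so that the probability tensors are all equal, reducing $\widetilde{D}_{s, t}(z)$ to a difference of two expectations of the \emph{same} kernel quantity, one formed from an average over $s$ time points and one from an average over $t-s$ time points. First I would write, for each block-average $\mathbf{P}^{0, s}_{\mathcal{S}_k, :}$ and $\mathbf{P}^{s, t}_{\mathcal{S}_k, :}$, the defining double average over time points and over index pairs $(i,j) \in \mathcal{S}_k$, noting via \Cref{def-cusum-S} that within each $\mathcal{S}_k$ the first coordinates are distinct and the second coordinates are distinct, so the summands $\mathbf{P}_{i,j,:}(u)$ appearing are built from mutually independent latent vectors. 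The key observation is that under \Cref{ass_no_change_point} the common expectation $\mathbb{E}\{h^{-L}\mathcal{K}((z - \mathbf{P}^{0,s}_{\mathcal{S}_k,:})/h)\}$ would equal $\mathbb{E}\{h^{-L}\mathcal{K}((z - \mathbf{P}^{s,t}_{\mathcal{S}_k,:})/h)\}$ \emph{if} the two averages had the same number of terms; the discrepancy is purely a finite-sample smoothing effect coming from different averaging lengths.

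The main computation is then to control, for fixed $z$ and $k$, the quantity $|\mathbb{E}\{h^{-L}\mathcal{K}((z - \mathbf{P}^{0,s}_{\mathcal{S}_k,:})/h)\} - \mathbb{E}\{h^{-L}\mathcal{K}((z - \mathbf{P}^{s,t}_{\mathcal{S}_k,:})/h)\}|$. Here I would use the Lipschitz property of $\mathcal{K}$ (\Cref{kernel_function_ass}) together with a coupling / telescoping argument: both averages can be realized on a common probability space, and their difference in $\ell_2$ norm is, in expectation, of order $h^{-L-1}$ times the standard deviation of a single-term average, which scales like $\sqrt{L/(n \cdot s)}$ for the block over $s$ points (the extra $n$ coming from the averaging over $\widetilde{S}_k \asymp n$ pairs in $\mathcal{S}_k$ — this is why the cardinality bound in \Cref{lemma_S_u} matters) and like $\sqrt{L/(n(t-s))}$ for the block over $t-s$ points. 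The $\{1 + \log(L)\}$ factor enters when passing from the coordinatewise control to the $\ell_2$-norm control over the $L$ coordinates, via a maximal-inequality / union-bound step over layers, since $\mathcal{K}$ depends on all $L$ coordinates simultaneously. Summing the per-$k$ bounds against the weights $\widetilde{S}_k / (\sum_k \widetilde{S}_k) \asymp \widetilde{S}_k / n^2$ and using $\sum_k \widetilde{S}_k \asymp n^2$ with $\widetilde{S}_k \lesssim n$ collapses the sum to a single term of the stated order, and since the resulting bound is uniform in $z$, the supremum over $z \in [0,1]^L$ is controlled at no extra cost (the bound is deterministic once we pass to expectations, so no covering argument over $z$ is needed here — that was only needed in \Cref{lemma_psi} for the empirical part).

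The technical heart, and the step I expect to be the main obstacle, is making the ``difference of two expectations with different averaging lengths'' precise and showing it is genuinely of order $1/\sqrt{s} + 1/\sqrt{t-s}$ rather than merely $O(1)$. The clean way is to compare each of $\mathbf{P}^{0,s}_{\mathcal{S}_k,:}$ and $\mathbf{P}^{s,t}_{\mathcal{S}_k,:}$ to the population object $\mathbb{E}\{\mathbf{P}_{1,2,:}(1)\}$ (equivalently to $\widetilde{G}$ of \Cref{ass_change_point}), using the Lipschitz bound $|\mathbb{E}\{h^{-L}\mathcal{K}((z-\bar{\mathbf{P}})/h)\} - h^{-L}\mathcal{K}((z - \mathbb{E}\bar{\mathbf{P}})/h)| \leq C_{\mathrm{Lip}} h^{-L-1}\mathbb{E}\|\bar{\mathbf{P}} - \mathbb{E}\bar{\mathbf{P}}\|$ and then bounding $\mathbb{E}\|\bar{\mathbf{P}}^{0,s}_{\mathcal{S}_k,:} - \mathbb{E}\bar{\mathbf{P}}^{0,s}_{\mathcal{S}_k,:}\| \lesssim \sqrt{L}\,\mathrm{sd}$ coordinatewise; the sub-Gaussianity of $X_1$ under \Cref{ass_X_Y} and boundedness of the $W_{(l)}$-bilinear forms in $[0,1]$ give each coordinate of $\bar{\mathbf{P}}^{0,s}_{\mathcal{S}_k,:}$ a variance of order $1/(\widetilde{S}_k s)$ (averaging $\widetilde{S}_k s$ weakly-dependent, in fact here conditionally-independent, bounded terms). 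Careful bookkeeping of the $U$-statistic structure inside $\mathbf{P}^{0,s}_{\mathcal{S}_k,:}$ (it is an average of $X_i(u)^\top W_{(l)} X_j(u)$, not of products of independent factors) is needed to avoid losing a factor, but the bound $[0,1]$-boundedness of each summand makes Hoeffding-type control of the variance available directly. Assembling these and triangulating through the common mean yields the claimed rate.
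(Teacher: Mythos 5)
Your proposal is correct and follows essentially the same route as the paper's proof: Jensen together with the Lipschitz property of $\mathcal{K}$ reduces the per-$k$ contribution to $h^{-L-1}\,\mathbb{E}\big\|\mathbf{P}^{0,s}_{\mathcal{S}_k,:}-\mathbf{P}^{s,t}_{\mathcal{S}_k,:}\big\|$, which is then controlled coordinatewise by Hoeffding-type concentration of the $\widetilde{S}_k s$ (resp.\ $\widetilde{S}_k(t-s)$) mutually independent bounded summands, with the $\sqrt{1+\log L}$ factor from the maximum over layers and the $1/\sqrt{n}$ factor from summing $\sqrt{\widetilde{S}_k}/n^2$ over $k$ via \Cref{lemma_S_u}. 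Your triangulation through the common population mean is only a cosmetic variant of the paper's treatment (which conditions on the second block and combines a bounded-differences inequality with Hoeffding before integrating the tail), and your remark that no covering argument over $z$ is needed—since the bound is deterministic and uniform in $z$—matches the paper.
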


\begin{proof}[\textbf{Proof of \Cref{lemma_no_change_point}.}]
Note that
\begin{align}\label{no_change_point_0}
    & \sup_{z \in [0, 1]^L} \left| \widetilde{D}_{s, t}(z) \right|   \nonumber\\
     \leq &  \sup_{z \in [0, 1]^L} \frac{C^{\prime}}{n^2}\left|  \sum_{k \in [2n - 1]}\widetilde{S}_k
      \left[ \mathbb{E} \left\{h^{-L}  \mathcal{K} \left( \frac{z-  \mathbf{P}^{0, s}_{\mathcal{S}_k, :}    }{h} \right) \right\}
    -  \mathbb{E} \left\{h^{-L} \mathcal{K} \left( \frac{z-  \mathbf{P}^{s, t}_{\mathcal{S}_k, :}   }{h} \right)\right\} \right]\right|  \nonumber \\
\leq  &  C^{\prime} C_{\mathrm{Lip}} \sum_{k \in [2n - 1]}\frac{\widetilde{S}_k}{n^2}  h^{-L-1}  \mathbb{E} \left\|   \mathbf{P}^{0,s}_{\mathcal{S}_k, :}     - \mathbf{P}^{s, t}_{\mathcal{S}_k, :}  \right\| \nonumber \\
\leq &  C^{\prime}C_{\mathrm{Lip}} \sum_{k \in [2n - 1]}\frac{\widetilde{S}_k}{n^2}  h^{-L-1} \sqrt{L} \mathbb{E} \max_{l \in [L]}\left|   \mathbf{P}^{0, s}_{\mathcal{S}_k, l}   - \mathbf{P}^{s, t}_{\mathcal{S}_k, l}  \right|,
\end{align}
with an absolute constant $C^{\prime}> 0$, where the first inequality follows from \Cref{lemma_S_u}, the second inequality follows from \Cref{kernel_function_ass} and the third inequality follows from the fact that $\|\mathbf{v}\| \leq \sqrt{p} \| \mathbf{v}\|_{\infty}$, for any $\mathbf{v} \in \R^p$.

Now we consider the term $\mathbb{E} \max_{l \in [L]}\left|   \mathbf{P}^{0, s}_{\mathcal{S}_k, l}   - \mathbf{P}^{s, t}_{\mathcal{S}_k, l}  \right|$.  Note that   $\left\{ \mathbf{P}_{i, j, l}^u, \, u \in [s], \, (i, j) \in \mathcal{S}_k\right\} \subset [0, 1]$ are mutually independent. For any sequences 
\[
    \left\{ x^{u, i, j}, \, u \in [s], \, (i, j) \in \mathcal{S}_k\right\}, \,\left\{ y^{u, i, j}, \, u \in [s], \, (i, j) \in \mathcal{S}_k\right\} \subset [0, 1],
\]
we have that for any $l \in [L]$,
\begin{align}
	& \left|\left( \frac{1}{s \widetilde{S}_k}\sum_{(i, j) \in \mathcal{S}_k} \sum_{u = 1}^{s}x^{u, i, j}  -  \mathbf{P}^{s, t}_{\mathcal{S}_k, l}\right) -  \left( \frac{1}{s \widetilde{S}_k}\sum_{(i, j) \in \mathcal{S}_k} \sum_{u = 1}^{s}y^{u, i, j} -   \mathbf{P}^{s, t}_{\mathcal{S}_k, l} \right)\right|\nonumber \\
	\leq  &  \frac{1}{s \widetilde{S}_k}\sum_{(i, j) \in \mathcal{S}_k} \sum_{u = 1}^{s} \left|x^{u, i, j} -  y^{u, i, j}  \right|. \nonumber 
\end{align}

Then by concentration inequality of Lipschitz functions with respect to $l_1$-metric, we have for any $\delta_1 >0$
\begin{align}
   & \mathbb{P}\left\{\left. \left| \left( \mathbf{P}^{0, s}_{\mathcal{S}_k, l}   -  \mathbf{P}^{s, t}_{\mathcal{S}_k, l}\right)  -   \mathbb{E}  \left\{  \mathbf{P}^{0, s}_{\mathcal{S}_k, l}   -  \mathbf{P}^{s, t}_{\mathcal{S}_k, l} \bigg| \left\{ \mathbf{P}_{i, j, l}(u) \right\}_{\substack{ s+1 \leq u \leq {
 t} \\ (i, j) \in \mathcal{S}_k}} \right\}  \right| \geq \delta_1 \right|  \left\{ \mathbf{P}_{i, j, l}(u) \right\}_{\substack{ s+1 \leq u \leq {
 t} \\ (i, j) \in \mathcal{S}_k}} \right\}  \nonumber\\
 & \hspace{0.5cm} \leq 2 \exp \left\{ -2\delta_1^2s \widetilde{S}_k \right\}. \nonumber
\end{align}
Then we have 
\begin{align}
   & \mathbb{P}\left\{ \left| \left( \mathbf{P}^{0, s}_{\mathcal{S}_k, l}   -  \mathbf{P}^{s, t}_{\mathcal{S}_k, l}\right)  -   \mathbb{E}  \left\{  \mathbf{P}^{0, s}_{\mathcal{S}_k, l}   -  \mathbf{P}^{s, t}_{\mathcal{S}_k, l} \bigg| \left\{ \mathbf{P}_{i, j, l}(u) \right\}_{\substack{ s+1 \leq u \leq {
 t} \\ (i, j) \in \mathcal{S}_k}} \right\}  \right| \geq \delta_1  \right\}
 \nonumber \\ 
   = &\E \left\{  \mathbb{P}\left\{\left. \left| \left( \mathbf{P}^{0, s}_{\mathcal{S}_k, l}   -  \mathbf{P}^{s, t}_{\mathcal{S}_k, l}\right)  -   \mathbb{E}  \left\{  \mathbf{P}^{0, s}_{\mathcal{S}_k, l}   -  \mathbf{P}^{s, t}_{\mathcal{S}_k, l} \bigg| \left\{ \mathbf{P}_{i, j, l}(u) \right\}_{\substack{ s+1 \leq u \leq {
 t} \\ (i, j) \in \mathcal{S}_k}} \right\}  \right| \geq \delta_1 \right|  \left\{ \mathbf{P}_{i, j, l}(u) \right\}_{\substack{ s+1 \leq u \leq {
 t} \\ (i, j) \in \mathcal{S}_k}} \right\}  \right\} \nonumber \\
  \leq & 2 \exp \left\{ -2\delta_1^2s \widetilde{S}_k \right\}. \nonumber
\end{align}
Let $\delta_1 =\delta_2 / \sqrt{s \widetilde{S}_k }$ with $\delta_2 > 0$ to be specified and by a union bound argument, we have
\begin{align}\label{no_change_point_mean_1}
   & \mathbb{P}\left\{ \max_{ l \in [L]} \left| \left( \mathbf{P}^{0, s}_{\mathcal{S}_k, l}   -  \mathbf{P}^{s, t}_{\mathcal{S}_k, l}\right)  -   \mathbb{E}  \left\{  \mathbf{P}^{0, s}_{\mathcal{S}_k, l}   -  \mathbf{P}^{s, t}_{\mathcal{S}_k, l} \bigg| \left\{ \mathbf{P}_{i, j, l}(u) \right\}_{\substack{ s+1 \leq u \leq {
 t} \\ (i, j) \in \mathcal{S}_k}} \right\}  \right| \geq  (s \widetilde{S}_k )^{-1/2}\delta_2     \right\} \nonumber \\
 & \hspace{0.5cm} \leq 2L \exp \left\{ -2\delta_2^2\right\}. 
\end{align}
Now we consider the term $ \Bigg| \mathbb{E}  \left\{  \mathbf{P}^{0, s}_{\mathcal{S}_k, l}   -  \mathbf{P}^{s, t}_{\mathcal{S}_k, l} \bigg| \left\{ \mathbf{P}_{i, j, l}(u) \right\}_{\substack{ s+1 \leq u \leq {
 t} \\ (i, j) \in \mathcal{S}_k}} \right\} \Bigg| $. 
 Let
\begin{align}
\mathcal{A}_1 = \left\{ \Bigg| \mathbb{E}  \left\{  \mathbf{P}^{0, s}_{\mathcal{S}_k, l}   -  \mathbf{P}^{s, t}_{\mathcal{S}_k, l} \bigg| \left\{ \mathbf{P}_{i, j, l}(u) \right\}_{\substack{ s+1 \leq u \leq {
 t} \\ (i, j) \in \mathcal{S}_k}}  \right\} \Bigg|   \leq  ((t-s)\mathcal{S}_k)^{-1/2} \delta_2  \right\}, \nonumber 
 \end{align}
 and 
 \begin{align}\label{no_change_point_mean_2}
\mathcal{A}_2 = \left\{ \max_{l \in [L]}\Bigg| \mathbb{E}  \left\{  \mathbf{P}^{0, s}_{\mathcal{S}_k, l}   -  \mathbf{P}^{s, t}_{\mathcal{S}_k, l} \bigg| \left\{ \mathbf{P}_{i, j, l}(u) \right\}_{\substack{ s+1 \leq u \leq {
 t} \\ (i, j) \in \mathcal{S}_k}}  \right\} \Bigg|   \leq  ((t-s)\mathcal{S}_k)^{-1/2} \delta_2  \right\}.
 \end{align}
 Note that
 \begin{align}
   &  \mathbb{E}  \left\{  \mathbf{P}^{0, s}_{\mathcal{S}_k, l}   -  \mathbf{P}^{s, t}_{\mathcal{S}_k, l} \bigg| \left\{ \mathbf{P}_{i, j, l}(u) \right\}_{\substack{ s+1 \leq u \leq {
 t} \\ (i, j) \in \mathcal{S}_k}} \right\}  \nonumber \\
 = & \mathbb{E}  \left\{
 \frac{1}{s \widetilde{S}_k}
 \sum_{(i, j) \in \mathcal{S}_k}
 \sum_{u = 1}^{s} \mathbf{P}_{i, j, l}(u)
 - \frac{1}{(t-s)\widetilde{S}_k}
 \sum_{(i, j) \in \mathcal{S}_k}
 \sum_{u = s+1}^{t}
 \mathbf{P}_{i, j, l}(u) \Bigg| \left\{ \mathbf{P}_{i, j, l}(u) \right\}_{\substack{ s+1 \leq u \leq  t \\ (i, j) \in \mathcal{S}_k }}
 \right\}  \nonumber \\
 = & \mathbb{E}  \left\{ \mathbf{P}_{1,2,l}(u) \right\} - \frac{1}{(t-s)\widetilde{S}_k}\sum_{(i, j) \in \mathcal{S}_k} \sum_{u = s+1}^{t}\mathbf{P}_{i, j, l}(u).   \nonumber 
 \end{align}
 Thus we have 
 \[
   \mathbb{E}  \left\{ \mathbb{E}  \left\{  \mathbf{P}^{0, s}_{\mathcal{S}_k, l}   -  \mathbf{P}^{s, t}_{\mathcal{S}_k, l} \bigg| \left\{ \mathbf{P}_{i, j, l}(u) \right\}_{\substack{ s+1 \leq u \leq {
 t} \\ (i, j) \in \mathcal{S}_k}} \right\} \right\}  =0,
 \]
an by Hoeffding's inequality for general bounded random variables \citep[e.g. Theorem 2.2.6 in][]{vershynin2018high}, we have that for any $\delta_2 > 0$,
\[
    \P \left\{\mathcal{A}_1 \right\} \geq 1 - 2\exp(-2\delta_2^2) .
\]
Then by a union bound argument, we have
\begin{align}\label{no_change_point_mean_3}
\P \left\{\mathcal{A}_2 \right\} \geq 1 - 2L \exp(-2\delta_2^2).
\end{align}
Combining \eqref{no_change_point_mean_1}, \eqref{no_change_point_mean_2} and \eqref{no_change_point_mean_3}, we have 
\begin{align}
   &  \mathbb{P}\left\{ \max_{l \in [L]} \left|  \mathbf{P}^{0, s}_{\mathcal{S}_k, l}   -  \mathbf{P}^{s, t}_{\mathcal{S}_k, l}   \right| \geq  \delta_2  \widetilde{S}_k^{-1/2} \left(\frac{1}{\sqrt{s}} +\frac{1}{\sqrt{t-s}}  \right)   \right\} \nonumber\\
   \leq &   2L \exp \left\{ -2\delta_2^2\right\}\P \left\{ \mathcal{A}_2\right\}  + 1- \P \left\{ \mathcal{A}_2 \right\} \leq 4  L\exp \left\{ -2\delta_2^2\right\}.\nonumber
\end{align}
We are ready to upper bound the term $\mathbb{E} \max_{l \in [L]}\left|   \mathbf{P}^{0, s}_{\mathcal{S}_k, l}   - \mathbf{P}^{s, t}_{\mathcal{S}_k, l}  \right|$
\begin{align}
    & \mathbb{E}\left\{\frac{\widetilde{S}_k^{1/2}\{ s^{-1/2} + (t-s)^{-1/2}\}^{-1}} {\sqrt{1 + \log(L)}}  \max_{l \in [L]}\left|  \mathbf{P}^{0, s}_{\mathcal{S}_k, l}   -  \mathbf{P}^{s, t}_{\mathcal{S}_k, l}  \right| \right\}   \nonumber\\
    = &  \int_{0}^{\infty} \mathbb{P}\left\{\frac{\widetilde{S}_k^{1/2}  \{ s^{-1/2} + (t-s)^{-1/2}\}^{-1} }{\sqrt{1+ \log(L)}}\max_{l \in [L]}\left|  \mathbf{P}^{0, s}_{\mathcal{S}_k, l}   - \mathbf{P}^{s, t}_{\mathcal{S}_k, l}  \right|\geq \delta   \right\} \, \mathrm{d}\delta \nonumber \\
    \leq & \int_{0}^{1} 1     d\delta +  \int_{1}^{\infty}\mathbb{P}\left\{ \frac{ \widetilde{S}_k^{1/2} \{ s^{-1/2} + (t-s)^{-1/2}\}^{-1} }{\sqrt{1+ \log(L)}} \left|  \mathbf{P}^{0, s}_{\mathcal{S}_k, l}   - \mathbf{P}^{s, t}_{\mathcal{S}_k, l}  \right|\geq \delta   \right\}  d\delta\nonumber \\
     \leq & 1  +  \int_{1}^{\infty}  2L\exp \left\{- \delta^2 \big(1 +\log (L)\big) \right\}  \, \mathrm{d}\delta \leq 1  +  \int_{1}^{\infty}  2\exp \left\{- \delta^2\right\} \, \mathrm{d}\delta  \leq C_1, \nonumber 
\end{align}
where $C_1 > 0$ is an absolute constant. Thus 
\begin{align}\label{no_change_point_1}
     \mathbb{E}\left\{ \max_{l \in [L]}\left|  \mathbf{P}^{s, t}_{\mathcal{S}_k, l}   - \mathbf{P}^{t, e}_{\mathcal{S}_k, l}   \right| \right\} \leq C_1 \sqrt{\frac{1 + \log(L)}{\widetilde{S}_k}}\left( \frac{1}{\sqrt{s}} + \frac{1}{\sqrt{t-s}} \right).
\end{align}
Then combining \eqref{no_change_point_0} and  \eqref{no_change_point_1}, we have  with absolute constants $C_2, C_3 > 0$ that
\begin{align}
    \sup_{z \in [0, 1]^L} \left| \widetilde{D}_{s, t}(z) \right| \leq  &  C_2 \sum_{k \in [2n - 1]}\frac{\sqrt{\widetilde{S}_k}}{n^2}  h^{-L-1} \sqrt{\{ 1 + \log(L)\} L}\left( \frac{1}{\sqrt{s}} + \frac{1}{\sqrt{t-s}} \right)   \nonumber\\
     \leq  & C_3 h^{-L-1}  \sqrt{\frac{ \{ 1 + \log(L)\} L}{n}} \left( \frac{1}{\sqrt{s}} + \frac{1}{\sqrt{t-s}} \right) , \nonumber
\end{align}
where the final inequality follows from  \Cref{lemma_S_u}. We conclude the proof.
\end{proof}

\begin{lemma}\label{lemma_one_change_point}

Let the data be a sequence of adjacency tensors  $\{\mathbf{A}(t)\}_{t \in \mathbb{N}^*} \subset \mathbb{R}^{n \times n \times L}$ as defined in \Cref{def-umrdpg-dynamic} satisfying Assumptions~\ref{ass_X_Y}, \ref{ass_change_point} and \ref{snr_ass_change_point}.  Let  $\widetilde{D}_{\cdot, \cdot}(\cdot)$ be defined in and \eqref{eq-def-D-tilde}, with the kernel function $\mathcal{K}(\cdot)$ satisfying \Cref{kernel_function_ass}.  For $\alpha \in (0, 1)$, let 
\begin{align}\label{tilde_Delta}
   \widetilde{\Delta} = \Delta+ C_\epsilon h^{-2L-2} \frac{(L^2 \vee d) \log \{(n \vee \Delta) /\alpha \}}{\kappa^2 n},
\end{align}
with $C_{\epsilon} > 0$ being an absolute constant.
 It holds that
\[
  \sup_{z \in [0, 1]^L} \left| \widetilde{D}_{\Delta, \widetilde{\Delta}} (z) \right|> \frac{C_{\mathrm{SNR}} h^{-L-1}}{4}  \sqrt{\frac{(L^2 \vee d)\log\{ (n \vee \Delta)/\alpha \} }{n}} \left( \sqrt{\frac{1}{\Delta}}   + \sqrt{\frac{1}{\widetilde{\Delta} - \Delta}}\right),
\]
\end{lemma}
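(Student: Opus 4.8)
\textbf{Proof proposal for Lemma \ref{lemma_one_change_point}.}

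The plan is to produce, for a carefully chosen test point $z^* \in [0,1]^L$, a lower bound on $|\widetilde{D}_{\Delta, \widetilde{\Delta}}(z^*)|$ that dominates the right-hand side. Recalling the definition of $\widetilde{D}_{s,t}(z)$ in \eqref{eq-def-D-tilde}, the quantity $\widetilde{D}_{\Delta, \widetilde{\Delta}}(z)$ is a weighted average over the blocks $\mathcal{S}_k$ of differences of expectations $\mathbb{E}\{\mathcal{K}((z - \mathbf{P}^{0,\Delta}_{\mathcal{S}_k,:})/h)\} - \mathbb{E}\{\mathcal{K}((z - \mathbf{P}^{\Delta,\widetilde{\Delta}}_{\mathcal{S}_k,:})/h)\}$, scaled by $h^{-L}$. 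The first step is to rewrite each of these block expectations: since $\mathbf{P}^{0,\Delta}_{\mathcal{S}_k,:} = \widetilde{S}_k^{-1}\Delta^{-1}\sum_{(i,j)\in\mathcal{S}_k}\sum_{u\le \Delta}\mathbf{P}_{i,j,:}(u)$, I would separate the ``diagonal'' contribution $u=v$ in the double sum over time from the off-diagonal contributions, just as in the treatment of $(II.1)$ in the proof of \Cref{lemma_psi}. The upshot is that $\widetilde{D}_{\Delta,\widetilde{\Delta}}(z)$ is, up to controllable error terms, close to $G_{\Delta}(z) - \frac{1}{\widetilde{\Delta}-\Delta}\sum_{u=\Delta+1}^{\widetilde{\Delta}}$ (the post-change analogue), i.e. close to the difference $\widetilde{G}_\Delta(z) - \widetilde{G}_{\Delta+1}(z)$ of the single-point kernel evaluations defined in \Cref{ass_change_point} — this is exactly where the averaging-over-tensors effect collapses the rank-$(d,d,m)$ structure to rank one and removes the extra $\sqrt{\Delta}$ factor relative to \cite{padilla2019change}.

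Next, I would choose $z^*$ to be (close to) the maximizer of $|G_\Delta(z) - G_{\Delta+1}(z)|$, so that by the definition of $\kappa$ in \eqref{eq-kappa-def} we have $|G_\Delta(z^*) - G_{\Delta+1}(z^*)| = \kappa$ (or arbitrarily close). Then a triangle-inequality decomposition gives
\begin{align*}
|\widetilde{D}_{\Delta,\widetilde{\Delta}}(z^*)| &\ge |G_\Delta(z^*) - G_{\Delta+1}(z^*)| - |G_\Delta(z^*) - \widetilde{G}_\Delta(z^*)| - |G_{\Delta+1}(z^*) - \widetilde{G}_{\Delta+1}(z^*)| \\
&\quad - |\widetilde{D}_{\Delta,\widetilde{\Delta}}(z^*) - (\widetilde{G}_\Delta(z^*) - \widetilde{G}_{\Delta+1}(z^*))|.
\end{align*}
The first difference is $\kappa$; the second and third are each at most $\kappa/C_{\mathcal{K}}$ by the lower bound \eqref{kappa_lower_bound} in \Cref{ass_change_point}, so together they are bounded by $2\kappa/C_{\mathcal{K}} < \kappa$ since $C_{\mathcal{K}} > 2$, leaving a residual of at least $(1 - 2/C_{\mathcal{K}})\kappa$, a positive constant multiple of $\kappa$. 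The remaining task is to bound the last term — the deterministic ``bias'' incurred by replacing the true block-averaged smoothed densities by the single-evaluation surrogates $\widetilde{G}$ — by something of order $h^{-L-1}\sqrt{(L^2\vee d)\log\{(n\vee\Delta)/\alpha\}/n}\,(\Delta^{-1/2} + (\widetilde{\Delta}-\Delta)^{-1/2})$ times a small constant. For this I would reuse the Lipschitz bound of \Cref{kernel_function_ass} together with the eigenvalue decomposition of $\Sigma^{s,t}_X$ from \Cref{lemm_eigen} / \eqref{eq-sigma-x-decomp} to control $\mathbb{E}\|\mathbf{P}^{s,t}_{\mathcal{S}_k,:} - \mathbb{E}\{\mathbf{P}^{s,t}_{\mathcal{S}_k,:}\}\|$ and the off-diagonal-vs-diagonal discrepancy exactly as in Steps 2–4 of the proof of \Cref{lemma_psi}; the averaging makes each such term $O(h^{-L-1}\sqrt{\cdot/(n\cdot\text{length})})$. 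Finally, I would invoke the signal-to-noise assumption \Cref{snr_ass_change_point}, which says $\kappa\sqrt{\Delta} > C_{\mathrm{SNR}}h^{-L-1}\sqrt{(L^2\vee d)\log\{(n\vee\Delta)/\alpha\}/n}$, together with the choice of $\widetilde\Delta - \Delta$ in \eqref{tilde_Delta} (which forces $\kappa\sqrt{\widetilde\Delta-\Delta} \asymp C_\epsilon^{1/2}h^{-L-1}\sqrt{(L^2\vee d)\log\{(n\vee\Delta)/\alpha\}/n}$), to conclude that the constant-times-$\kappa$ main term dominates the bias term and leaves at least $\frac{C_{\mathrm{SNR}}h^{-L-1}}{4}\sqrt{(L^2\vee d)\log\{(n\vee\Delta)/\alpha\}/n}(\Delta^{-1/2} + (\widetilde\Delta-\Delta)^{-1/2})$, as claimed.

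The main obstacle I anticipate is the bookkeeping in the second step: carefully showing that the deterministic surrogate for $\widetilde{D}_{\Delta,\widetilde\Delta}(z)$ is precisely $\widetilde{G}_\Delta(z) - \widetilde{G}_{\Delta+1}(z)$ up to an error of the right order, rather than some coarser object. This requires handling the $u=v$ diagonal terms in $\mathbf{P}^{0,\Delta}_{\mathcal{S}_k,:}$ (which reconstruct $G_\Delta$ in expectation) separately from the off-diagonal terms, and then passing from $\mathbb{E}\{\mathcal{K}((z-\mathbf{P}^{0,\Delta}_{\mathcal{S}_k,:})/h)\}$ to $\mathcal{K}((z - \mathbb{E}\{\mathbf{P}_{1,2,:}\})/h) = h^L\widetilde{G}_\Delta(z)$ via the Lipschitz property plus concentration of the block average around its mean — the latter being where the eigenstructure \eqref{eq-sigma-x-decomp} and the sub-Gaussianity from \Cref{ass_X_Y} enter. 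Everything else is a matter of assembling bounds already proved in \Cref{lemma_psi} and feeding in Assumptions \ref{ass_change_point} and \ref{snr_ass_change_point}.
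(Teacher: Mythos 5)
Your proposal is correct and follows essentially the same route as the paper's proof: lower bound the supremum at a (near-)maximizer by $\kappa$, absorb the $|G-\widetilde G|$ terms via \eqref{kappa_lower_bound}, bound the remaining deterministic bias (block-averaged kernel expectations versus $\widetilde G_\Delta-\widetilde G_{\Delta+1}$) by the Lipschitz property of $\mathcal{K}$ plus concentration of $\mathbf{P}^{0,\Delta}_{\mathcal{S}_k,:}$ and $\mathbf{P}^{\Delta,\widetilde\Delta}_{\mathcal{S}_k,:}$ around $\mathbb{E}\{\mathbf{P}_{1,2,:}\}$, and finish with \eqref{tilde_Delta} and \Cref{snr_ass_change_point}. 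Two small simplifications relative to your plan: since $\mathbf{P}^{s,t}$ involves only same-time products $X_i(u)^{\top}W_{(l)}(u)X_j(u)$, there are no off-diagonal cross-time terms to separate (that issue only arises for $\widetilde{\widehat{\mathbf{P}}}^{s,t}$ in \Cref{lemma_psi}), and the bias bound needs only bounded-differences concentration for $[0,1]$-valued entries, not the eigenstructure of \Cref{lemm_eigen} or sub-Gaussianity.
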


\begin{proof}[\textbf{Proof of \Cref{lemma_one_change_point}.}]
Recall that for any $t \in \mathbb{N}^*$, $G_t(\cdot)$ and $\widetilde{G}_t(\cdot)$ are defined in \Cref{ass_change_point}.  Note that 
\begin{align}\label{one_change_point_0}
    & \sup_{z \in [0, 1]^L} \left| \widetilde{D}_{\Delta, \widetilde{\Delta}} (z) \right| \nonumber \\
    \geq &  C^{\prime}\sup_{z \in [0, 1]^L} \left| \sum_{k \in [2n - 1]}\frac{\widetilde{S}_k}{n^2}\left[ \mathbb{E} \left\{h^{-L}  \mathcal{K} \left( \frac{z-  \mathbf{P}^{0, \Delta}_{\mathcal{S}_k, :}    }{h} \right) \right\} -  \mathbb{E} \left\{h^{-L} \mathcal{K} \left( \frac{z-  \mathbf{P}^{\Delta, \widetilde{\Delta}}_{\mathcal{S}_k, :}   }{h} \right)\right\} \right] \right| \nonumber \\
    \geq &  \sup_{z \in [0, 1]^L} C^{\prime} \left\{ \Bigg|\sum_{k \in [2n - 1]}\frac{\widetilde{S}_k}{n^2} \mathbb{E} \left\{h^{-L}  \mathcal{K} \left( \frac{z-  \mathbf{P}_{1, 2, :}(\Delta)    }{h} \right) \right\} - \mathbb{E} \left\{h^{-L}  \mathcal{K} \left( \frac{z-  \mathbf{P}_{1, 2, :}(\Delta+1)    }{h} \right) \right\} \Bigg| \right.\nonumber \\
    & \hspace{0.2cm}-   \sum_{k \in [2n - 1]}\frac{\widetilde{S}_p}{n^2}\left[ \left| \mathbb{E} \left\{h^{-L}  \mathcal{K} \left( \frac{z-  \mathbf{P}^{0, \Delta}_{\mathcal{S}_k, :}    }{h} \right) \right\} - h^{-L}  \mathcal{K} \left( \frac{z-  \E\left\{ \mathbf{P}_{1, 2, :} (\Delta)\right\}  }{h} \right) \right|\right. \nonumber \\
    &  \hspace{0.2cm}+  \left| \mathbb{E} \left\{h^{-L} \mathcal{K} \left( \frac{z-  \mathbf{P}^{\Delta, \widetilde{\Delta}}_{\mathcal{S}_k, :}   }{h} \right)\right\} - h^{-L}  \mathcal{K} \left( \frac{z-  \E\left\{ \mathbf{P}_{1, 2, :}(\Delta+1)  \right\}  }{h} \right) \right| \nonumber\\
    & \hspace{0.2cm}+    \left| \mathbb{E} \left\{h^{-L}  \mathcal{K} \left( \frac{z-   \mathbf{P}_{1, 2, :}(\Delta)    }{h} \right) \right\} -h^{-L}  \mathcal{K} \left( \frac{z-  \E\left\{ \mathbf{P}_{1, 2, :}(\Delta) \right\}  }{h} \right)  \right| \nonumber \\
    & \hspace{0.2cm} \left.+   \left. \left| \mathbb{E} \left\{h^{-L}  \mathcal{K} \left( \frac{z-  \mathbf{P}_{1, 2, :}(\Delta + 1)     }{h} \right) \right\} -h^{-L}  \mathcal{K} \left( \frac{z-  \E\left\{ \mathbf{P}_{1, 2, :}(\Delta+1) \right\}  }{h} \right)  \right| \right] \right\} \nonumber \\
    \geq & \sup_{z\in [0, 1]^L} C^{\prime}\Bigg \{ \kappa(z) -\sum_{k \in [2n - 1]} \frac{\widetilde{S}_k}{n^2} \bigg[ C_{\mathrm{Lip}} h^{-L-1}  \mathbb{E} \left\|   \mathbf{P}^{0, \Delta}_{\mathcal{S}_k, :}   - \E \left\{ \mathbf{P}_{1, 2, :}(\Delta+1)  \right\} \right\| \nonumber \\
     & \hspace{0.2cm} + C_{\mathrm{Lip}} h^{-L-1} \mathbb{E} \left\|   \mathbf{P}^{ \Delta, \widetilde{\Delta}}_{\mathcal{S}_k, :}   - \E \left\{  \mathbf{P}_{1, 2, :} (\Delta+1)  \right\} \right\|  + \left| G_{\Delta}(z) - \widetilde{G}_{\Delta}(z)\right|  + \left| G_{\Delta + 1}(z) - \widetilde{G}_{\Delta + 1}(z)\right| \bigg] \Bigg\}\nonumber \\
    \geq &C^{\prime}\Bigg \{  \sup_{z\in [0, 1]^L}  \kappa(z) -\sum_{k \in [2n - 1]}\frac{ \widetilde{S}_k}{n^2} \bigg[ C_{\mathrm{Lip}} h^{-L-1} \bigg( \mathbb{E} \left\|   \mathbf{P}^{0, \Delta}_{\mathcal{S}_k, :}   - \E \left\{ \mathbf{P}_{1, 2, :}(\Delta)  \right\} \right\|   \nonumber \\ 
    & \hspace{0.2cm}  + \mathbb{E} \left\|   \mathbf{P}^{ \Delta, \widetilde{\Delta}}_{\mathcal{S}_k, :}   - \E \left\{  \mathbf{P}_{1, 2, :}(\Delta+1)  \right\} \right\| \bigg)  \bigg]  
      -  \sup_{z' \in [0, 1]^L}\left| G_{\Delta}(z') - \widetilde{G}_{\Delta}(z')\right|  \nonumber\\ 
      &\hspace{0.2cm} -  \sup_{z' \in [0, 1]^L}\left| G_{\Delta + 1}(z') - \widetilde{G}_{\Delta + 1}(z')\right|\Bigg \}  \nonumber \\
        \geq  & C^{\prime}\Bigg \{  \kappa -\sum_{k \in [2n - 1]}\frac{C_{\mathrm{Lip}} h^{-L-1}\widetilde{S}_k}{n^2}   \sqrt{L}\bigg( \mathbb{E} \max_{l \in [L]}\left|   \mathbf{P}^{0, \Delta}_{\mathcal{S}_k, l}   - \E \left\{ \mathbf{P}_{1, 2, l}(\Delta)  \right\}  \right|  
        \nonumber \\
       & \hspace{0.2cm}   + \mathbb{E} \max_{l \in [L]} \left|   \mathbf{P}^{ \Delta, \widetilde{\Delta}}_{\mathcal{S}_k, l}   - \E \left\{ \mathbf{P}_{1, 2, l}(\Delta+1) \right\}  \right| \bigg) -   \sup_{z' \in [0, 1]^L}\left| G_{\Delta}(z') - \widetilde{G}_{\Delta}(z')\right| \nonumber \\
      & \hspace{0.2cm}   -\sup_{z' \in [0, 1]^L}\left| G_{\Delta + 1}(z') - \widetilde{G}_{\Delta + 1}(z')\right| \Bigg\}  \nonumber \\
    \geq  &  C^{\prime}\Bigg \{  \frac{\kappa}{2} -\sum_{k \in [2n - 1]}\frac{C_{\mathrm{Lip}} h^{-L-1}\widetilde{S}_k }{n^2}   \sqrt{L}\bigg( \mathbb{E} \max_{l \in [L]}\left|   \mathbf{P}^{0, \Delta}_{\mathcal{S}_k, l}   - \E \left\{ \mathbf{P}_{1, 2, l} (\Delta) \right\}  \right| \nonumber \\
    & \hspace{0.2cm} 
    + \mathbb{E} \max_{l \in [L]} \left|   \mathbf{P}^{ \Delta, \widetilde{\Delta}}_{\mathcal{S}_k, l}   - \E \left\{ \mathbf{P}_{1, 2, l}(\Delta+1)  \right\}  \right| \bigg) \Bigg\}, 
\end{align}
with an absolute constant $C^{\prime}> 0$, where
\begin{itemize}
    \item the first and fourth inequalities follow from  \Cref{lemma_S_u}, 
    \item the third inequality follows from \Cref{kernel_function_ass},
    \item the fifth inequality follows from  the fact that for any $\mathbf{v} \in \R^p$,  $\|\mathbf{v}\| \leq \sqrt{p}\|\mathbf{v} \|_{\infty}$,
     \item and the final inequality follows from \Cref{ass_change_point}.
\end{itemize}

To further lower bound \eqref{one_change_point_0},
We are to deploy Talagrand's concentration inequality \citep[e.g.~Theorem 5.2.16 in][]{vershynin2018high} to upper bound the term $ \mathbb{E} \max_{l \in [L]}\left|   \mathbf{P}^{0, \Delta}_{\mathcal{S}_k, l}   - \E\left\{ \mathbf{P}_{1, 2, l} (\Delta)\right\} \right|  $.   Note that $\{ \mathbf{P}_{i, j, l}(u), u \in [\Delta], (i, j) \in \mathcal{S}_k\} \subset [0, 1]$ are mutually independent. For any sequences $\{ x^{u, i, j}, u \in [\Delta], (i, j) \in \mathcal{S}_k\}, \{ y^{u, i, j}, u \in [\Delta], (i, j) \in \mathcal{S}_k\} \subset [0, 1]$,
we have 
\begin{align}
	& \left|\left( \frac{1}{\widetilde{S}_k\Delta}\sum_{(i, j) \in \mathcal{S}_k} \sum_{u = 1}^{\Delta}x^{u, i, j}  - \E \left\{ \mathbf{P}_{1, 2, l}(\Delta)  \right\} \right) -  \left( \frac{1}{\widetilde{S}_k\Delta}\sum_{(i, j) \in \mathcal{S}_k} \sum_{u = 1}^{\Delta}y^{u, i, j} - \E \left\{ \mathbf{P}_{1, 2, l} (\Delta) \right\} \right)\right|\nonumber \\
	\leq  &  \frac{1}{\widetilde{S}_k\Delta}\sum_{(i, j) \in \mathcal{S}_k} \sum_{u = 1}^{\Delta} \left|x^{u, i, j} -  y^{u, i, j}  \right|. \nonumber
\end{align}
Then by concentration inequality of Lipschitz functions with respect to $l_1$-metric, we have that for any $\delta_1 > 0$ to be specified later
\begin{align}
    \mathbb{P}\left\{ \left| \left( \mathbf{P}^{0, \Delta}_{\mathcal{S}_k, l}   - \E \left\{ \mathbf{P}_{1, 2, l}(\Delta)  \right\} \right)  -   \mathbb{E}  \left\{  \mathbf{P}^{0, \Delta}_{\mathcal{S}_k, l}   - \E \left\{ \mathbf{P}_{1, 2, l}(\Delta)  \right\}  \right\}  \right| \geq \delta_1  \right\}  \leq 2 \exp \left\{ -2\delta_1^2\widetilde{S}_k \Delta\right\}, \nonumber
\end{align}
 Letting $\delta_2 =\delta_1 / \sqrt{\widetilde{S}_k \Delta}$ and since $ \mathbb{E} \{  \mathbf{P}^{0, \Delta}_{\mathcal{S}_k, l}   - \E \{ \mathbf{P}_{1, 2, l}(\Delta)\} \} = 0$, we have
\begin{align}
    \mathbb{P}\left\{  \sqrt{\widetilde{S}_k \Delta} \left|  \mathbf{P}^{0, \Delta}_{\mathcal{S}_p, l}   -\E \left\{ \mathbf{P}_{1, 2, l}(\Delta)  \right\}   \right| \geq  \delta_2  \right\}  \leq 2 \exp \left\{ -2\delta_2^2\right\}.   \nonumber
\end{align}
Consequently, we have
\begin{align}
    & \mathbb{E}\left\{\sqrt{\frac{\widetilde{S}_k \Delta}{1 + \log(L)}} \max_{l \in [L]}\left|  \mathbf{P}^{0, \Delta}_{\mathcal{S}_k, l}   - \E \left\{ \mathbf{P}_{1, 2, l}(\Delta)  \right\}  \right| \right\}   \nonumber \\
    = &   \int_{0}^{\infty} \mathbb{P}\left\{\sqrt{\frac{\widetilde{S}_k \Delta}{1+ \log(L)}} \max_{l \in [L]}\left|  \mathbf{P}^{0, \Delta}_{\mathcal{S}_k, l}   -\E \left\{ \mathbf{P}_{1, 2, l}(\Delta)  \right\}  \right|\geq \delta   \right\}  \,\mathrm{d}\delta \nonumber \\
    \leq & \int_{0}^{1} 1   \,  \mathrm{d}\delta +  \int_{1}^{\infty} \sum_{l \in [L]}\mathbb{P}\left\{ \sqrt{\frac{\widetilde{S}_k \Delta}{1+ \log(L)}} \left|  \mathbf{P}^{0, \Delta}_{\mathcal{S}_k, l}   - \E \left\{ \mathbf{P}_{1, 2, l}(\Delta)  \right\}  \right|\geq \delta   \right\} \, \mathrm{d}\delta\nonumber \\
    \leq & 1  +  2L \int_{1}^{\infty} \exp \left\{- \delta^2(1 +\log (L)) \right\}  \, \mathrm{d}\delta \leq 1  +  2\int_{1}^{\infty}  \exp \left\{- \delta^2\right\}  \, \mathrm{d}\delta  \leq  C_1, \nonumber 
\end{align}
where $C_1 > 0$ is an absolute constant.  We thus have that 
\begin{align}\label{one_change_point_1}
     \mathbb{E}\left\{ \max_{l \in [L]}\left|  \mathbf{P}^{0, \Delta}_{\mathcal{S}_k, l}   -\E \left\{ \mathbf{P}_{1, 2, l}(\Delta)  \right\}   \right| \right\} \leq C_1 \sqrt{\frac{1 + \log(L)}{\widetilde{S}_k \Delta}}.
\end{align}
Following almost identical arguments, we have that
\begin{align}\label{one_change_point_2}
     \mathbb{E} \left\{ \max_{l \in [L]} \left|   \mathbf{P}^{ \Delta, \widetilde{\Delta}}_{\mathcal{S}_k, l}   - \E \left\{ \mathbf{P}_{1, 2, l}(\Delta + 1)  \right\} \right| \right\}\leq C_1 \sqrt{\frac{1 + \log (L)}{\widetilde{S}_k\left(\widetilde{\Delta} - \Delta \right) }}.
\end{align}
Combining \eqref{one_change_point_0}, \eqref{one_change_point_1} and \eqref{one_change_point_2}, we have with $C_2 > 0$ being an absolute constant that
\begin{align}\label{one_change_point_3}
     \sup_{z \in [0, 1]^L} \left|  \widetilde{D}_{\Delta, \widetilde{\Delta}}(z) \right| \geq  &C^{\prime}\Bigg \{   \frac{\kappa}{2} -\sum_{k \in [2n - 1]}\frac{\sqrt{\widetilde{S}_k}}{n^2}  C_2 h^{-L-1} \{ 1 + \log(L)\} \sqrt{L}\left(\sqrt{\frac{1}{\Delta}}   + \sqrt{\frac{1}{\widetilde{\Delta} - \Delta}}\right)   \Bigg\}\nonumber\\
    \geq & C^{\prime}\Bigg \{ \frac{\kappa}{2} - \frac{ C_2 h^{-L-1} \{ 1 + \log(L)\} \sqrt{L}}{\sqrt{n}}\left(\sqrt{\frac{1}{\Delta}}   + \sqrt{\frac{1}{\widetilde{\Delta} - \Delta}}\right) \Bigg\},
\end{align}    
where the final inequality due to the fact that $\max_{k \in [2n -1]} |\widetilde{S}_k| \leq n$ and \Cref{lemma_S_u}.

By the definition of $\widetilde{\Delta}$ in \eqref{tilde_Delta}, we have that
\begin{align}\label{one_change_point_4}
    \kappa  = \sqrt{C_{\epsilon}} h^{-L-1}  \sqrt{\frac{(L^2 \vee d)\log\{ (n \vee \Delta)/\alpha \} }{(\widetilde{\Delta} - \Delta) n }}.
\end{align} 

Combining \eqref{one_change_point_3} and \eqref{one_change_point_4}, due to \Cref{snr_ass_change_point}, we complete the proof. 
\end{proof}

\begin{lemma}\label{lemm_eigen}
Let the data be a sequence of adjacency tensors  $\{\mathbf{A}(t)\}_{t \in \mathbb{N}^*} \subset \mathbb{R}^{n \times n \times L}$ as defined in \Cref{def-umrdpg-dynamic} satisfying \Cref{ass_X_Y}. Let $\theta_X$ be defined \Cref{ass_X_Y} and denote
 \[
    \widetilde{ \widehat{\mathbf{P}}}^{s, t} = \mathbf{S} \times_1  \bigg((t-s)^{-1}\sum_{u = s+1}^t X(u)\bigg)  \times_2 \bigg((t-s)^{-1}\sum_{u = s+1}^t X(u)\bigg) \times_3 Q, \quad s, t \in \mathbb{N}, \, 1 \leq s < t.
\]
Let 
\begin{align}\label{lemma_12_event}
    \mathcal{A}= \Bigg\{ & 
\left( \mathrm{rank} \left(\mathcal{M}_1\left( \widetilde{ \widehat{\mathbf{P}}}^{s, t} \right)\right) , \mathrm{rank} \left(\mathcal{M}_2\left( \widetilde{ \widehat{\mathbf{P}}}^{s, t}\right)\right) , \mathrm{rank} \left(\mathcal{M}_3\left( \widetilde{ \widehat{\mathbf{P}}}^{s, t} \right)\right) \right)  = \left(d, d, m \right),
\nonumber \\
     & \sigma_1\left( \mathcal{M}_1\left( \widetilde{ \widehat{\mathbf{P}}}^{s, t}\right)\right) \leq C_2 n \sqrt{d} \|\theta_{X}\|^2 \sigma_1(Q),
\nonumber \\
     & 
   \sigma_1\left( \mathcal{M}_2\left( \widetilde{ \widehat{\mathbf{P}}}^{s, t}\right)\right) \leq C_2 n \sqrt{d }\|\theta_{X}\|^2 \sigma_1(Q),
\nonumber \\
   &  \sigma_1\left( \mathcal{M}_3\left( \widetilde{ \widehat{\mathbf{P}}}^{s, t}\right)\right) \leq C_2 n \|\theta_{X}\|^2 \sigma_1(Q),
\nonumber \\
&\sigma_1^2\left( \mathcal{M}_1\left( \widetilde{ \widehat{\mathbf{P}}}^{s, t}\right)\right)  - \sigma_2^2\left( \mathcal{M}_1\left( \widetilde{ \widehat{\mathbf{P}}}^{s, t}\right)\right) \geq C_3 dn^2 \|\theta_{X}\|^4 \sigma_1(Q)^2,
\nonumber \\
&\sigma_1^2\left( \mathcal{M}_2\left( \widetilde{ \widehat{\mathbf{P}}}^{s, t}\right)\right)  - \sigma_2^2\left( \mathcal{M}_2\left( \widetilde{ \widehat{\mathbf{P}}}^{s, t}\right)\right) \geq C_3 dn^2 \|\theta_{X}\|^4 \sigma_1(Q)^2,
\nonumber \\ 
& \mbox{and } \sigma_1^2\left( \mathcal{M}_3\left( \widetilde{ \widehat{\mathbf{P}}}^{s, t}\right)\right)  - \sigma_2^2\left( \mathcal{M}_3\left( \widetilde{ \widehat{\mathbf{P}}}^{s, t}\right)\right) \geq C_3 n^2 \|\theta_{X}\|^4 \sigma_1(Q)^2 \Bigg\},
\end{align}
 with absolute constants $C_2 >C_3 > 0$.
For any $0\leq s<t$ satisfying that there is no change point in $[s+1, t)$, 
it holds for any 
\begin{align}
    \exp\big\{- C_1  n \mu_{X, 1}^2 \big\}   \nonumber \leq \delta < 1,
\end{align} 
with an absolute constant $C_1 >0$,  that
\[
  \P \{ \mathcal{A} \} \geq 1 - \delta.
\]

\end{lemma}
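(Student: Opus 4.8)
The plan is to express every quantity appearing in $\mathcal{A}$ through the random matrix $\overline{X}^{s,t} := (t-s)^{-1}\sum_{u=s+1}^t X(u) \in \mathbb{R}^{n\times d}$ together with the fixed matrix $Q$, and then to transfer spectral information from the pair $(\overline{X}^{s,t},Q)$ to $\widetilde{\widehat{\mathbf{P}}}^{s,t}=\mathbf{S}\times_1\overline{X}^{s,t}\times_2\overline{X}^{s,t}\times_3 Q$. Because there is no change point in $[s+1,t)$, the matrices $X(s+1),\dots,X(t)$ are i.i.d., each with i.i.d.\ rows distributed as $\mathcal{F}$, and the weight matrices — hence $Q$ — are constant on this block; thus the rows $\overline{X}^{s,t}_i=(t-s)^{-1}\sum_{u=s+1}^t X_i(u)$, $i\in[n]$, are i.i.d.\ sub-Gaussian vectors with mean $\theta_X$ and covariance $(t-s)^{-1}\Cov(X_1)$, and a direct second-moment computation gives $\mathbb{E}\{(\overline{X}^{s,t})^{\top}\overline{X}^{s,t}\}=n\Sigma^{s,t}_X$ with $\Sigma^{s,t}_X$ exactly as in \eqref{eq-sigma-x-decomp}, i.e.\ $\Sigma^{s,t}_X=\theta_X\theta_X^{\top}+(t-s)^{-1}\Cov(X_1)$.

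First I would apply a matrix-Bernstein inequality to $(\overline{X}^{s,t})^{\top}\overline{X}^{s,t}=\sum_{i=1}^n\overline{X}^{s,t}_i(\overline{X}^{s,t}_i)^{\top}$, a sum of $n$ independent rank-one sub-exponential matrices, to obtain $\|(\overline{X}^{s,t})^{\top}\overline{X}^{s,t}-n\Sigma^{s,t}_X\|\lesssim n\|\Sigma^{s,t}_X\|\sqrt{(d+\log(1/\delta))/n}$ with probability at least $1-\delta$, valid precisely in the sub-exponential regime $\log(1/\delta)\lesssim n\mu_{X,1}^2$, which is what produces the admissible range of $\delta$. Since \Cref{ass_X_Y}$(a)$,$(b)$ force $\mu_{X,1}\asymp\mu_{X,d}\asymp\|\theta_X\|^2$, one has $\|\Sigma^{s,t}_X\|\asymp\|\theta_X\|^2$, a leading eigenvalue of order $\|\theta_X\|^2$ separated by a gap of the same order from the rest, and $\lambda_d(\Sigma^{s,t}_X)\ge(t-s)^{-1}\mu_{X,d}$ from $\Sigma^{s,t}_X\succeq(t-s)^{-1}\Sigma_X$. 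Weyl's inequality then yields, on the good event, $\mathrm{rank}(\overline{X}^{s,t})=d$, $\sigma_1(\overline{X}^{s,t})^2\asymp n\|\theta_X\|^2$, $\sigma_2(\overline{X}^{s,t})^2\lesssim n(t-s)^{-1}\mu_{X,1}$ and $\sigma_d(\overline{X}^{s,t})^2\gtrsim n(t-s)^{-1}\mu_{X,d}$.

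Next, on $\{\mathrm{rank}(\overline{X}^{s,t})=d\}$, \Cref{fix_lemma1} with $X$ replaced by $\overline{X}^{s,t}$ immediately gives the Tucker ranks $(d,d,m)$ and the matching upper bounds $\sigma_1(\mathcal{M}_1(\widetilde{\widehat{\mathbf{P}}}^{s,t}))\le\sqrt{d}\,\sigma_1(\overline{X}^{s,t})^2\sigma_1(Q)\lesssim n\sqrt{d}\,\|\theta_X\|^2\sigma_1(Q)$ (and analogously for modes $2,3$, the latter without the $\sqrt{d}$). For the eigengaps I would write $\overline{X}^{s,t}=\mathbf{1}_n\theta_X^{\top}+Z$ with $Z$ centered-row and $\|Z\|\lesssim\sqrt{n\mu_{X,1}/(t-s)}$ on the good event, and split $\widetilde{\widehat{\mathbf{P}}}^{s,t}=\mathbf{B}^{(0)}+\mathbf{B}^{(1)}$, where $\mathbf{B}^{(0)}_{:,:,l}=(\theta_X^{\top}W_{(l)}\theta_X)\mathbf{1}_n\mathbf{1}_n^{\top}$ has Tucker rank $(1,1,1)$ — so each $\mathcal{M}_s(\mathbf{B}^{(0)})$ has the single nonzero singular value $\|\mathbf{B}^{(0)}\|_{\mathrm{F}}=n\|(\theta_X^{\top}W_{(l)}\theta_X)_{l\in[L]}\|$ — and $\mathbf{B}^{(1)}$ gathers the three terms carrying at least one factor $Z$, whose matricizations I would bound by $n\|\theta_X\|\|Z\|\sigma_1(Q)+\|Z\|^2\sigma_1(Q)$, up to dimensional factors and with a $(t-s)^{-1/2}$ gain. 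Weyl's inequality gives $\sigma_2(\mathcal{M}_s(\widetilde{\widehat{\mathbf{P}}}^{s,t}))\le\|\mathcal{M}_s(\mathbf{B}^{(1)})\|$ and $\sigma_1(\mathcal{M}_s(\widetilde{\widehat{\mathbf{P}}}^{s,t}))\ge\max\{\|\mathcal{M}_s(\mathbf{B}^{(0)})\|,\sqrt{d}\,\sigma_d(\overline{X}^{s,t})^2\sigma_m(Q)\}-\|\mathcal{M}_s(\mathbf{B}^{(1)})\|$, and comparing the lower bound on $\sigma_1^2$ with the more dimension-parsimonious upper bound on $\sigma_2^2$ produces $\sigma_1^2-\sigma_2^2$ of the stated order. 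Finally I would intersect the finitely many good events (the three modes plus those used in the singular-value estimates), note that the number of failure events is a fixed constant each with probability $\lesssim\delta$, and rescale $\delta$ to reach $\mathbb{P}\{\mathcal{A}\}\ge1-\delta$.

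The step I expect to be the main obstacle is the family of eigengap lower bounds, and in particular showing that $\sigma_1(\mathcal{M}_s(\widetilde{\widehat{\mathbf{P}}}^{s,t}))$ attains the full scale $n\cdot(\text{dimensional factor})\cdot\|\theta_X\|^2\sigma_1(Q)$ \emph{uniformly} over $s$ and $t$. For modes $1,2$ one may borrow this from \Cref{fix_lemma1}'s $\sigma_d$-bound only when $t-s$ is bounded, since that bound degrades like $(t-s)^{-1}$; for larger $t-s$ — and for mode $3$ essentially always — one is forced to show that the rank-one mean tensor $\mathbf{B}^{(0)}$ carries a constant fraction of the signal, i.e.\ that $\|(\theta_X^{\top}W_{(l)}\theta_X)_{l\in[L]}\|=\|Q\,\mathrm{vec}(\theta_X\theta_X^{\top})\|$ is comparable to $\|\theta_X\|^2\sigma_1(Q)$, and simultaneously to control $\|\mathcal{M}_s(\mathbf{B}^{(1)})\|$ finely enough that $\sigma_2$ is separated from $\sigma_1$ by a constant fraction even in the small-$(t-s)$ regime. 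This is where the structure of \Cref{ipd} and \Cref{ass_X_Y} must be exploited most carefully; landing the permitted range of $\delta$ on exactly $\exp\{-C_1 n\mu_{X,1}^2\}\le\delta<1$ in the concentration step is a secondary but non-trivial bookkeeping matter.
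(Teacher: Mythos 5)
Your proposal is correct and follows essentially the same route as the paper's proof: matrix Bernstein plus Weyl on $(\overline{X}^{s,t})^{\top}\overline{X}^{s,t}$ using the decomposition of $\Sigma^{s,t}_X$ into $\theta_X\theta_X^{\top}$ plus an $O((t-s)^{-1})$ remainder (which also yields the stated range of $\delta$), then \Cref{fix_lemma1} for the Tucker ranks and the $\sigma_1$ upper bounds, and finally the split of $\widetilde{\widehat{\mathbf{P}}}^{s,t}$ around the rank-one tensor $\mathbf{S}\times_1\Theta_X\times_2\Theta_X\times_3 Q$ with $\Theta_X=\mathbf{1}_n\theta_X^{\top}$ (your $\mathbf{B}^{(0)}$), bounding the perturbation's matricizations and applying Weyl to get the eigengaps. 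The "obstacle" you flag is resolved in the paper exactly as you suggest, by lower bounding the mean tensor's matricization norm via the argument of \Cref{fix_lemma1} together with the condition-number assumptions on $Q$ and $\mu_{X,1}/\|\theta_X\|^2\leq C_{\theta}$.
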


\begin{proof}[\textbf{Proof of \Cref{lemm_eigen}.}]
Let
\[
    \overline{X}^{s, t} = \frac{1}{t-s} \sum_{u = s+1}^t X(u) \in \mathbb{R}^{n \times d} 
\]
This proof consists of two steps. In \textbf{Step 1}, we discuss
the singular values of $\overline{X}^{s, t}$. In \textbf{Step 2}, we complete the proof using \Cref{fix_lemma1} and results obtained from \textbf{Step 1}.

\medskip
\noindent \textbf{Step 1.}
Define $\Sigma^{s, t}_X = \mathbb{E}[ \{(\overline{X}^{s, t})^1\}^{\top} (\overline{X}^{s, t})^{1} ]$, where $(\overline{X}^{s, t})^{1}$  denote the first rows of $\overline{X}^{s, t}$.  Note that, for $(i, j) \in [d] \times [d]$, 
\begin{align}
    & \left( \Sigma^{s, t}_{X} \right)_{i, j}  = \left[ \mathbb{E} \left\{ \left( (t-s)^{-1}\sum_{u = s+1}^t X^{1}(u)  \right)^{\top} \left( (t-s)^{-1}\sum_{u = s+1}^t X^{1}(u)  \right) \right\} \right]_{i, j}  \nonumber\\
    = &  \mathbb{E} \left\{ \left( \frac{1}{t-s}\sum_{u = s+1}^t (X^{1}(u))_{i}  \right) \left( \frac{1}{t-s}\sum_{u = s+1}^t (X^{1}(u))_{j}  \right) \right\} \nonumber \\
    =&  \mathbb{E} \left\{ (t-s)^{-2}\sum_{u, u' = s+1}^t (X^{1}(u))_{i} (X^{1}(u'))_{j} \right\}  \nonumber \\
    = &  (t-s)^{-1} \left( \Sigma_{X} \right)_{i, j} + (t-s)^{-1}(t-s-1) (\theta_{X})_i (\theta_{X})_j. \nonumber 
\end{align}
We thus have that
\begin{align}\label{Sigma_st_X}
    \Sigma^{s, t}_{X} = (t-s)^{-1} \Sigma_X + (t-s)^{-1}(t-s-1)   \theta_{X} \left( \theta_{X} \right)^{\top}.
\end{align}
Since $\Sigma_X$ is positive definite by \Cref{ass_X_Y}$(a)$   and $\theta_{X} \left( \theta_{X} \right)^{\top}$ is positive semi-definite, we have that $\Sigma^{s, t}_{X}$ is positive definite by Weyl's inequality \citep{weyl1912asymptotische} and consequently, $\mathrm{rank}(\Sigma^{s, t}_{X}) = d$.  

By \eqref{Sigma_st_X} and Weyl's inequality \citep{weyl1912asymptotische}, we have that 
\begin{align}\label{theta_mu_1}
    \frac{1}{t-s} \mu_{X, d} + \frac{1}{2}\sigma_1   \left( \theta_{X} \theta_{X} ^{\top} \right)  \leq \sigma_1 (\Sigma^{s, t}_{X}) \leq \frac{1}{t-s} \mu_{X, 1} + \sigma_1   \left( \theta_{X} \theta_{X} ^{\top} \right). 
\end{align}
Combining \eqref{theta_mu_1} and \Cref{ass_X_Y}$(b)$, we have that with absolute constants $C_1 >  C_2 >0$
\[
   C_2 \|\theta_{X}\|^2 \leq \sigma_1 (\Sigma^{s, t}_{X}) \leq C_ 1\|\theta_{X}\|^2. 
\]
By \eqref{Sigma_st_X} and Weyl's inequality \citep{weyl1912asymptotische}, we also have that  
\[
     \sigma_2 ( \Sigma^{s, t}_{X}) \leq  (t-s)^{-1} \mu_{X, 1} +  (t-s)^{-1}(t-s-1) \sigma_2 \left(\theta_{X} \theta_{X}^{\top} \right) = (t-s)^{-1} \mu_{X, 1},
\]
where the equality follows from the fact that $\theta_{X} \theta_{X}^{\top}$ is a rank-1 matrix.  

Since $F$ is a sub-Gaussian distribution, we have $\mathbb{P}\{\mathcal{A}_1\} > 1 - \delta$ holds that for any $\delta > 0$ satisfying $d < \log(1/\delta)$ with
\begin{align}
    \mathcal{A}_1 &= \Big\{\big\|n^{-1} \left(  \overline{X}^{s, t} \right)^{\top} \overline{X}^{s, t} - \Sigma_X^{s, t}\big\| \leq C(t-s)^{-1}\sqrt{n^{-1} \log(1/\delta)} 
    \Big\}. \nonumber
\end{align}
where $C > 0$ is an absolute constant.  The aforementioned result follows  matrix Bernstein's inequality \citep[e.g.~Remark~5.40(1) in][]{vershynin2010introduction}.  To be specific, the constant $C_K$ in Remark 5.40 in \cite{vershynin2010introduction} is specified in Step 3 in the proof of Theorem 5.39 therein, satisfying that $C_K \asymp K^2$, where $K$ is the sub-Gaussian parameter of the rows of $\overline{X}^{s, t}$.  In our case, $K \asymp (t-s)^{-1/2}$.

In the event $\mathcal{A}_1$, with with an absolute constant $C_3 >0$, for any 
\begin{align}
   \delta \geq \exp\big\{- C_3  n \mu_{X, 1}^2,\big\},   \nonumber
\end{align}
it follows from Weyl's inequality \citep{weyl1912asymptotische} that
\begin{align}
    & \sigma_2 \left( \overline{X}^{s, t} \right) \leq \sqrt{n  \sigma_2 \left( \Sigma^{s, t}_{X} \right) + C(t-s)^{-1}\sqrt{n \log(1/\delta)}}   \nonumber \\
    \leq &  \sqrt{n  (t-s)^{-1} \mu_{X, 1} + C(t-s)^{-1}\sqrt{n \log(1/\delta)}} \leq  \sqrt{\frac{3n \mu_{X, 1}}{2(t-s)}}, \nonumber
\end{align}
\begin{align}
    & \sigma_1 \left( \overline{X}^{s, t} \right) \leq \sqrt{n  \sigma_1 \left( \Sigma^{s, t}_{X} \right) + C(t-s)^{-1}\sqrt{n \log(1/\delta)}}   \nonumber \\
    \leq &  \sqrt{n  C_1 \|\theta_{X}\|^2  + C(t-s)^{-1}\sqrt{n \log(1/\delta)}}  \leq  \sqrt{3C_1 n /2 }\|\theta_{X}\| , \nonumber
\end{align}
and
\begin{align}
    & \sigma_1 \left( \overline{X}^{s, t} \right) \geq \sqrt{n  \sigma_1 \left( \Sigma^{s, t}_{X} \right) - C(t-s)^{-1}\sqrt{n \log(1/\delta)}}   \nonumber \\
    \geq &  \sqrt{n C_2\|\theta_{X}\|^2 - C(t-s)^{-1}\sqrt{n \log(1/\delta)}}  \geq  \sqrt{ nC_2 /2}\|\theta_{X}\|. \nonumber
\end{align}
\medskip
\noindent\textbf{Step 2.}
By \Cref{fix_lemma1}, in the event $\mathcal{A}_1$, we have that 
\begin{align}
\left( \mathrm{rank} \left(\mathcal{M}_1\left( \widetilde{ \widehat{\mathbf{P}}}_{s, t} \right)\right) , \mathrm{rank} \left(\mathcal{M}_2\left( \widetilde{ \widehat{\mathbf{P}}}_{s, t}  \right)\right) , \mathrm{rank} \left(\mathcal{M}_3\left( \widetilde{ \widehat{\mathbf{P}}}_{s, t} \right)\right) \right)  = \left(d, d, m \right). \nonumber
\end{align}
\[
   \sigma_1\left( \mathcal{M}_1\left( \widetilde{ \widehat{\mathbf{P}}}^{s, t}\right)\right) \leq  3C_1n \sqrt{ d }\|\theta_{X}\|^2 \sigma_1(Q)/2
\]
\[
    \sigma_1\left( \mathcal{M}_2\left( \widetilde{ \widehat{\mathbf{P}}}^{s, t}\right)\right) \leq   3C_1n\sqrt{ d}\|\theta_{X}\|^2 \sigma_1(Q)/2,
\]  
and 
\[
\sigma_1\left( \mathcal{M}_3\left( \widetilde{ \widehat{\mathbf{P}}}^{s, t}\right)\right) \leq 3 C_1 n \|\theta_{X} \|^2\sigma_1(Q)/2.
\] 

Let $\Theta_{X} = (\theta_X \dots \theta_X )^{\top} \in \R^{n \times d}$. Note that
\begin{align}\label{eq_eigen_two}
   &   \widetilde{ \widehat{\mathbf{P}}}^{s, t} - \mathbf{S} \times_1 \Theta_X \times_2 \Theta_X \times_3 Q   \nonumber\\
    = &     \mathbf{S} \times_1 \left( \overline{X}_{s, t}  -\Theta_X \right) \times_2 \left( \overline{X}_{s, t}-\Theta_X \right) \times_3 Q  +   \mathbf{S}\times_1 \left( \overline{X}_{s, t}-\Theta_X \right) \times_2 \Theta_X  \times_3 Q\nonumber\\
  & \hspace{0.5cm} +   \mathbf{S} \times_1 \Theta_X \times_2 \left( \overline{X}_{s, t}-\Theta_X \right) \times_3 Q. 
\end{align}
Combining \eqref{eq_eigen_two} and  the proof of \Cref{fix_lemma1}, it holds that 
\begin{align}
  &  \left\| \mathcal{M}_1\left(\widetilde{ \widehat{\mathbf{P}}}^{s, t} -  \mathbf{S} \times_1 \Theta_X \times_2 \Theta_X \times_3 Q \right)\right\| 
   \nonumber\\
   \leq &  \sqrt{d} \sigma_1^2(\overline{X}_{s, t} - \Theta_X) \sigma_1(Q)  +  2 \sqrt{d} \sigma_1(\overline{X}_{s, t} - \Theta_X) \sigma_1(\Theta_X )\sigma_1(Q)
   \nonumber\\
    \leq &  3(t-s)^{-1}n \sqrt{d } \mu_{X, 1}\sigma_1(Q)/2 + 2 
    (t-s)^{-1/2}n \sqrt{6dn \mu_{X, 1}} \|\theta_X\|\sigma_1(Q)
       \nonumber\\
     \leq & C_4 n \sqrt{d}\|\theta_X\|^2\sigma_1(Q), \nonumber
\end{align}
with an absolute constant $C_4>0$, 
where the second inequality follows from \textbf{Step 1} and  the proof of \Cref{fix_lemma1}, the third inequality follows from \Cref{ass_X_Y}$(b)$.
Following the proof of \Cref{fix_lemma1}, we also have that with an large enough absolute constant $C_5 >0$,
\begin{align}
    \left\|  \mathcal{M}_1 \left( \mathbf{S} \times_1 \Theta_X \times_2 \Theta_X \times_3 Q  \right) \right\| \geq C_5n \sqrt{d} \|\theta_X\|^2 \sigma_d(Q), \nonumber
\end{align}
and 
\begin{align}
  \mathrm{rank} \left(  \mathcal{M}_1 \left( \mathbf{S} \times_1 \Theta_X \times_2 \Theta_X \times_3 Q  \right) \right) =1. \nonumber
\end{align}
By \eqref{eq_eigen_two}, Weyl's inequality \citep{weyl1912asymptotische} and 
\Cref{ass_X_Y}$(b)$, 
we have that with absolute constants $C_6 > C_4> 0$,
\begin{align}
\sigma_1\left( \mathcal{M}_1\left( \widetilde{ \widehat{\mathbf{P}}}^{s, t}\right)\right)  \geq &  \left\|  \mathcal{M}_1 \left( \mathbf{S} \times_1 \Theta_X \times_2 \Theta_X \times_3 Q  \right) \right\| - \left\| \mathcal{M}_1\left(\widetilde{ \widehat{\mathbf{P}}}^{s, t} -  \mathbf{S} \times_1 \Theta_X \times_2 \Theta_X \times_3 Q \right)\right\| \nonumber\\
\geq & C_6n \sqrt{ d} \|\theta_X\|^2\sigma_1(Q), \nonumber
\end{align}
and
\begin{align}
\sigma_2\left( \mathcal{M}_1\left( \widetilde{ \widehat{\mathbf{P}}}^{s, t}\right)\right) \leq &   \sigma_2 \left(  \mathcal{M}_1 \left( \mathbf{S} \times_1 \Theta_X \times_2 \Theta_X \times_3 Q\right) \right) + \left\| \mathcal{M}_1\left(\widetilde{ \widehat{\mathbf{P}}}^{s, t} -  \mathbf{S} \times_1 \Theta_X \times_2 \Theta_X \times_3 Q \right)\right\| \nonumber\\
\leq &  C_4 n \sqrt{d}\|\theta_X\|^2 \sigma_1(Q). \nonumber 
\end{align} 
Consequently, we have with an absolute constant $C_7 >0$ that
\[
    \sigma_1^2\left( \mathcal{M}_1\left( \widetilde{ \widehat{\mathbf{P}}}^{s, t}\right)\right)   -  \sigma_2^2\left( \mathcal{M}_1\left( \widetilde{ \widehat{\mathbf{P}}}^{s, t}\right)\right) \geq  C_7  dn^2 \|\theta_X\|^4 \sigma_1(Q)^2.
\]
Similarly, we have 
\[
     \sigma_1^2\left( \mathcal{M}_2\left( \widetilde{ \widehat{\mathbf{P}}}^{s, t}\right)\right)   -  \sigma_2^2\left( \mathcal{M}_2\left( \widetilde{ \widehat{\mathbf{P}}}^{s, t}\right)\right)\geq  C_7dn^2 \|\theta_X\|^4 \sigma_1(Q)^2,
\]
and
\[
     \sigma_1^2\left( \mathcal{M}_3\left( \widetilde{ \widehat{\mathbf{P}}}^{s, t}\right)\right)   -  \sigma_2^2\left( \mathcal{M}_3\left( \widetilde{ \widehat{\mathbf{P}}}^{s, t}\right)\right) \geq  C_7  n^2 \|\theta_X\|^4\sigma_1(Q)^2,
\]
completing the proof.
\end{proof}

\begin{lemma}\label{prop_1}
Let the data be a sequence of adjacency tensors  $\{\mathbf{A}(t)\}_{t \in \mathbb{N}^*} \subset \mathbb{R}^{n \times n \times L}$ as defined in \Cref{def-umrdpg-dynamic} satisfying \Cref{ass_X_Y}. 
 
 Let $\theta_X$ be defined \Cref{ass_X_Y}, 
 \[
    \widetilde{ \widehat{\mathbf{P}}}^{s, t} = \mathbf{S} \times_1  \bigg((t-s)^{-1}\sum_{u = s+1}^t X(u)\bigg)  \times_2 \bigg((t-s)^{-1}\sum_{u = s+1}^t X(u)\bigg) \times_3 Q, \quad s, t \in \mathbb{N}, \, 0 \leq s < t
\]
and
    \[
        \bar{\mathbf{A}}^{s, t} = \frac{1}{t-s}\sum_{u = s+1}^{t} \mathbf{A}(u), \quad s, t \in \mathbb{N}, \, 0 \leq s < t.
    \] 
For $v \in [3]$, let $\widehat{U}_v^{s, t}$ be the eigenvector vector of  $\mathcal{M}_v( \bar{\mathbf{A}}^{s, t}) \mathcal{M}_v( \bar{\mathbf{A}}^{s, t})^{\top}$ corresponding to the largest eigenvalue.
Then let
\[
       \widetilde{\mathbf{P}}^{s, t} =  \bar{\mathbf{A}}^{s, t}  \times_1 \widehat{U}_1^{s, t} (\widehat{U}_1^{s, t})^{\top} \times_2 \widehat{U}_2^{s, t} (\widehat{U}_2^{s, t})^{\top} \times_3 \widehat{U}_3^{s, t}  (\widehat{U}_3^{s, t})^{\top},
\]
and $ \widehat{\mathbf{P}} \in \R^{n \times n \times L}$ satisfy for any $i,j \in [n]$ and $l \in [L]$
\[
   \widehat{\mathbf{P}}_{i, j, l}= \mathbbm{1}\{\widetilde{\mathbf{P}}_{i, j, l} > 1\} + \widetilde{\mathbf{P}}_{i, j, l}\mathbbm{1}\{\widetilde{\mathbf{P}}_{i, j, l} \in [0, 1]\}.
\]
For any $0\leq s<t$ satisfying that there is no change point in $[s+1, t)$, it holds for any 
\begin{align}
    2\exp\big\{- C  n \mu_{X, 1}^2\big\}\leq \delta <1,   \nonumber
\end{align} 
that with probability at least $ 1- \delta$,   
    \[
      n^{-2}\sum_{k \in [2n -1]} \sqrt{ \widetilde{S}_k \sum_{l = 1}^L \sum_{(i, j) \in \mathcal{S}_k}  \left(   \widehat{\mathbf{P}}^{s, t}_{i, j, l} -  \widetilde{ \widehat{\mathbf{P}}}^{s, t}_{i, j, l}\right)^2} \leq C\sqrt{\frac{d\log(n \vee (t-s))/\delta)}{n(t-s)}},
    \]
where $C> 0 $ is an absolute constant.
\end{lemma}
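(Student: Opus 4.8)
The strategy is to control $\widehat{\mathbf{P}}^{s,t} - \widetilde{\widehat{\mathbf{P}}}^{s,t}$ in Frobenius norm and then convert the Frobenius bound into the weighted block-averaged sum on the left-hand side. The key observation is that the left-hand quantity is dominated by the Frobenius norm: since each $\mathcal{S}_k$ has $\widetilde{S}_k \le n$ entries and $\{\mathcal{S}_k\}_{k\in[2n-1]}$ form disjoint subsets of $[n]^2$, by Cauchy--Schwarz
\[
    \widetilde{S}_k^{1/2}\Bigg(\sum_{l=1}^L\sum_{(i,j)\in\mathcal{S}_k}\big(\widehat{\mathbf{P}}^{s,t}_{i,j,l}-\widetilde{\widehat{\mathbf{P}}}^{s,t}_{i,j,l}\big)^2\Bigg)^{1/2}
\]
summed over $k$ is at most $\sqrt{\sum_k\widetilde{S}_k}\cdot\|\widehat{\mathbf{P}}^{s,t}-\widetilde{\widehat{\mathbf{P}}}^{s,t}\|_{\mathrm{F}}\le C n\,\|\widehat{\mathbf{P}}^{s,t}-\widetilde{\widehat{\mathbf{P}}}^{s,t}\|_{\mathrm{F}}$, using $\sum_k\widetilde{S}_k\asymp n^2$ from \Cref{lemma_S_u}. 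Hence after dividing by $n^2$ it suffices to show $\|\widehat{\mathbf{P}}^{s,t}-\widetilde{\widehat{\mathbf{P}}}^{s,t}\|_{\mathrm{F}}\lesssim \sqrt{d\log((n\vee(t-s))/\delta)/(t-s)}$ with the stated probability. The truncation step is harmless since $\widetilde{\widehat{\mathbf{P}}}^{s,t}$ has entries in $[0,1]$ (each entry is an inner product $(\overline{X})_i^\top W_{(l)} (\overline{X})_j$ with $\overline{X}$ a convex combination of latent position rows, so the inner product distribution pair property is preserved), and truncation to $[0,1]$ is a contraction toward any point already in $[0,1]$.

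Next I would work on the spectral side. Write $\bar{\mathbf{A}}^{s,t} = \widetilde{\widehat{\mathbf{P}}}^{s,t} + \mathbf{E}$ where $\mathbf{E}$ is a sum of two pieces: the Bernoulli noise $\bar{\mathbf{A}}^{s,t} - \mathbf{P}^{s,t}$ (conditionally mean-zero given the latent positions) and $\mathbf{P}^{s,t} - \widetilde{\widehat{\mathbf{P}}}^{s,t}$. The HOSVD output $\widetilde{\mathbf{P}}^{s,t}$ is the projection of $\bar{\mathbf{A}}^{s,t}$ onto the leading-1 singular subspaces in all three modes. On the high-probability event $\mathcal{A}$ of \Cref{lemm_eigen}, $\widetilde{\widehat{\mathbf{P}}}^{s,t}$ has Tucker ranks $(d,d,m)$ but — crucially — the \emph{leading} singular value of each matricization dominates the others with a gap of order $dn^2\|\theta_X\|^4\sigma_1(Q)^2$, because of the rank-one ``mean'' term $\mathbf{S}\times_1\Theta_X\times_2\Theta_X\times_3 Q$ identified in \eqref{eq_eigen_two}. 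So rank-one projection captures the dominant part, and the residual from discarding modes $2,\dots,d$ of $\widetilde{\widehat{\mathbf{P}}}^{s,t}$ is $O(\sigma_2)=O(n\sqrt d\|\theta_X\|^2\sigma_1(Q))$ per mode, which after normalization contributes at the right order. Then one applies a Davis--Kahan / perturbation argument (as in the $\sin\Theta$ analysis of \citealp{han2022optimal} and \citealp{zhang2018heteroskedastic}) to bound $\|\widetilde{\mathbf{P}}^{s,t}-\widetilde{\widehat{\mathbf{P}}}^{s,t}\|_{\mathrm{F}}$ by the operator norm of $\mathcal{M}_v(\mathbf{E})$ divided by the eigen-gap, times $\sqrt{d}$ for the Frobenius-vs-operator conversion on a rank-$O(d)$ object, summed over the three modes.

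The main work, and the expected obstacle, is bounding $\|\mathcal{M}_v(\bar{\mathbf{A}}^{s,t}-\mathbf{P}^{s,t})\|$ and $\|\mathcal{M}_v(\mathbf{P}^{s,t}-\widetilde{\widehat{\mathbf{P}}}^{s,t})\|$ sharply enough to produce the $1/(t-s)$ (not $1/\sqrt{t-s}$) scaling inside the square root. For the Bernoulli-noise term, $\bar{\mathbf{A}}^{s,t}-\mathbf{P}^{s,t}$ is an average of $t-s$ independent mean-zero bounded tensors, so its matricization operator norm is $O(\sqrt{(n\vee L)\log(\cdot)/(t-s)})$ by matrix Bernstein applied to the $n\times n$-ish matricizations (here one uses that entries are conditionally independent given $X(\cdot)$, then integrates out); dividing by the gap $\asymp n^2\|\theta_X\|^2$ (up to $\sqrt d,\sigma_1(Q)$ factors absorbed by \Cref{ass_X_Y}) and multiplying by $\sqrt d$ gives the advertised rate. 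For the $\mathbf{P}^{s,t}-\widetilde{\widehat{\mathbf{P}}}^{s,t}$ term, one expands $\mathbf{P}^{s,t}=\frac1{t-s}\sum_u \mathbf{S}\times_1 X(u)\times_2 X(u)\times_3 Q$ against $\mathbf{S}\times_1\overline{X}\times_2\overline{X}\times_3 Q$; the difference is a quadratic cross-term $\frac1{(t-s)^2}\sum_{u\ne v}\mathbf{S}\times_1(X(u)-\overline X)\times_2\cdots$ which, using the independence decoupling already carried out in \Cref{lemma_psi} Step 3 (the $\widetilde T_{v,k,l}$ device), concentrates at rate $\sqrt{d\log/(t-s)}$ after normalization by the eigen-gap. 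Gluing these together with union bounds over $v\in[3]$, and then invoking $\P\{\mathcal{A}\}\ge 1-\delta/2$ from \Cref{lemm_eigen} and the constraint $\delta\ge 2\exp\{-Cn\mu_{X,1}^2\}$ to absorb the conditioning, completes the argument. The delicate point throughout is keeping the normalization by $\sigma_1^2-\sigma_2^2\asymp dn^2\|\theta_X\|^4\sigma_1(Q)^2$ exact so that the $n$-dependence cancels to leave only the $1/n$ inside the root of the final display.
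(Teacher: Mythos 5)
Your overall architecture coincides with the paper's: reduce the block-weighted sum to $n^{-1}\|\widehat{\mathbf{P}}^{s,t}-\widetilde{\widehat{\mathbf{P}}}^{s,t}\|_{\mathrm F}$ via Cauchy--Schwarz and \Cref{lemma_S_u}, dismiss the truncation because $\widetilde{\widehat{\mathbf{P}}}^{s,t}$ has entries in $[0,1]$, split the noise $\bar{\mathbf{A}}^{s,t}-\widetilde{\widehat{\mathbf{P}}}^{s,t}$ into the conditionally centred Bernoulli part and the decoupled cross-time part, control the estimated singular subspaces by Davis--Kahan with the eigen-gap event of \Cref{lemm_eigen}, and finish by conditioning on that event using $\delta\ge 2\exp\{-Cn\mu_{X,1}^2\}$. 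These are exactly Steps 1--5 of the paper's proof.

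The genuine gap is in your treatment of the bias from projecting the rank-$(d,d,m)$ tensor $\widetilde{\widehat{\mathbf{P}}}^{s,t}$ onto rank-one subspaces. You write that "the residual from discarding modes $2,\dots,d$ of $\widetilde{\widehat{\mathbf{P}}}^{s,t}$ is $O(\sigma_2)=O(n\sqrt d\,\|\theta_X\|^2\sigma_1(Q))$ per mode, which after normalization contributes at the right order." It does not: dividing by $n$ leaves a quantity of order $\sqrt d\,\|\theta_X\|^2\sigma_1(Q)$, i.e.\ constant in $n$, and even the sharper perturbation bound implicit in the proof of \Cref{lemm_eigen}, $\sigma_2(\mathcal{M}_v(\widetilde{\widehat{\mathbf{P}}}^{s,t}))\lesssim n\sqrt{d\mu_{X,1}/(t-s)}\,\|\theta_X\|\sigma_1(Q)$, only yields order $\sqrt{d/(t-s)}$ after normalization. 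Both are larger than the target $\sqrt{d\log\{(n\vee(t-s))/\delta\}/\{n(t-s)\}}$ by a factor of roughly $\sqrt{n/\log}$, so an additive $O(\sigma_2)$ term cannot be absorbed and your argument, as written, does not deliver the claimed rate. The paper at this point proceeds differently: its term $(III)$, the bias $n^{-1}\|\widetilde{\widehat{\mathbf{P}}}^{s,t}\times_1\widehat U_1\widehat U_1^{\top}\times_2\widehat U_2\widehat U_2^{\top}\times_3\widehat U_3\widehat U_3^{\top}-\widetilde{\widehat{\mathbf{P}}}^{s,t}\|_{\mathrm F}$, is bounded \emph{only} by $n^{-1}\sum_{v\in[3]}\|\sin\Theta(\widehat U_v,U_v)\|\,\|\mathcal{M}_v(\widetilde{\widehat{\mathbf{P}}}^{s,t})\|$ (invoking the computation in \citealp{han2022optimal}), so that no discarded-singular-value term is ever paid, and the projected noise is handled separately through the supremum over unit rank-$(1,1,1)$ tensors rather than through an operator-norm-over-gap bound. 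To repair your proposal you would need either to justify why no additive $\sigma_2$ term arises in the decomposition you use, or to adopt the paper's decomposition of the bias into $\sin\Theta$ products; asserting the residual is "at the right order" is the step that fails.
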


\begin{proof}[\textbf{Proof of \Cref{prop_1}.}]
This proof consists of multiple steps.  
\begin{itemize}
     \item In \textbf{Step 1}, we discuss the tail behaviours of  $|\bar{\mathbf{A}}^{s, t}_{i, j, l} - \widetilde{ \widehat{\mathbf{P}}}^{s, t}_{i, j, l}|$ given $\{ X(u)\}_{u=s+1}^{t}$, and $|\bar{\mathbf{A}}^{s, t}_{i, j, l} - \widetilde{ \widehat{\mathbf{P}}}^{s, t}_{i, j, l}|$.
     \item In \textbf{Step 2}, we decompose our target into a few addictive terms.
    \item In \textbf{Step 3}, we are to upper bound $\|\sin\Theta (\widehat{U}_v^{s, t}, U_v^{s, t})\|$ for any $v \in [3]$ using the Davis--Kahan theorem, where  $U_v^{s, t}$ denotes the eigenvector of  $\mathcal{M}_v(\widetilde{ \widehat{\mathbf{P}}}^{s, t}) \mathcal{M}_v( \widetilde{ \widehat{\mathbf{P}}}^{s, t})^{\top}$ corresponding to the largest eigenvalue, for any $v \in [3]$,.
    \item In \textbf{Step 4}, we are to keep exploiting the sub-Gaussianity of the entries of $\bar{\mathbf{A}}^{s, t} - \widetilde{ \widehat{\mathbf{P}}}^{s, t}$, in order to provide a uniform upper bound of the inner product of $\bar{\mathbf{A}}^{s, t} - \widetilde{ \widehat{\mathbf{P}}}^{s, t}$ and any unit low-rank tensors.
    \item In \textbf{Step 5}, based on all the ingredients we collect in the previous four steps, we conclude the proof.
\end{itemize}
For simplicity, we drop the superscript $s, t$ for all notation. Let $\mathbf{Z}  =\bar{\mathbf{A}} - \widetilde{ \widehat{\mathbf{P}}}$.

\medskip
\noindent \textbf{Step 1.}  
We first discuss the tail behaviour of the entries of $\mathbf{Z}$.  Note that
    \begin{align}
        & |\mathbf{Z}_{i, j, l}| = \bigg|\frac{1}{t-s } \sum_{ u= s+1}^{t} \{\mathbf{A}(u)\}_{i, j, l} - 
        \frac{1}{(t-s)^2} \sum_{u,v = s+1}^{t} \left( X_i(u) \right)^{\top} W_{(l)}  X_j(v)\bigg| \nonumber \\
        \leq & \left|\frac{1}{t-s} \sum_{t = s+ 1}^t \{\mathbf{A}(u)\}_{i, j, l} - \mathbb{E}\left[\frac{1}{t-s} \sum_{t = s+1}^t \{\mathbf{A}(u)\}_{i, j, l}\bigg| \{X(u)\}_{u = s+1}^t\right]\right| \nonumber \\
        & \hspace{1cm} + \left|\mathbb{E}\left[\frac{1}{t-s} \sum_{t = s+1}^t \{\mathbf{A}(u)\}_{i, j, l}\bigg| \{X(u)\}_{u = s+1}^t\right] - \frac{1}{(t-s)^2} \sum_{u,v = s+1}^t \left( X_i(u) \right)^{\top} W_{(l)}  X_j(v)\right| \nonumber \\
        = & (I) + (II).\nonumber
    \end{align}

\medskip 
\noindent \textbf{Step 1.1.} As for $(I)$,  by Hoeffding's inequality for general bounded random variables \citep[e.g.~Theorem 2.2.6 in][]{vershynin2018high},  for any $0 < \delta_1 < 1$, we have that with absolute constants $C_1, c_1 >0$
    \begin{align}
        & \mathbb{P}\Bigg\{ \left|\frac{1}{t-s} \sum_{u = s+1}^t \{\mathbf{A}(u)\}_{i, j, l} - \mathbb{E}\left[\frac{1}{t-s} \sum_{u = s+1}^t \{\mathbf{A}(u)\}_{i, j, l} \bigg| \{X(u)\}_{u = s+1}^t\right]\right|  > \delta_1 \bigg| \{X(u)\}_{u = s+1}^t\Bigg\} \nonumber \\
        \leq & C_1\exp \left\{-\frac{c_1 \delta_1^2 (t-s)^2}{\sum_{u = s+1}^t \{X(u)\}_i W_{(l)} \{X(u)\}_j^{\top}}\right\} \leq C_1\exp\{-c_1\delta_1^2 (t-s)\}, \nonumber
    \end{align}
    where the second inequality is due to the definition of the inner product distribution pair detailed in \Cref{ipd}. Consequently, we have
    \begin{align}\label{step_1.1_result}
     &  \mathbb{P}\left\{\left|\frac{1}{t-s} \sum_{u = s+1}^t \{\mathbf{A}(u)\}_{i, j, l} - \mathbb{E}\left[\frac{1}{t-s} \sum_{u = s+1}^t \{\mathbf{A}(u)\}_{i, j, l} \bigg| \{X(u)\}_{u = s+1}^t\right]\right| > \delta_1 \right\} \nonumber\\
     = & \mathbb{E} \Bigg\{  \mathbb{P}\Bigg\{\bigg|\frac{1}{t-s} \sum_{u = s+1}^t \{\mathbf{A}(u)\}_{i, j, l} - \mathbb{E}\bigg[\frac{1}{t-s} \sum_{u = s+1}^t \{\mathbf{A}(u)\}_{i, j, l} \nonumber \\
     & \hspace{1.5cm} \bigg| \{X(u)\}_{u = s+1}^t\bigg]\bigg| > \delta_1  \bigg| \{X(u)\}_{u = s+1}^t \Bigg\} \Bigg\}  = C_1\exp\{-c_1\delta_1^2 (t-s)\} 
       \end{align}

\medskip 
\noindent \textbf{Step 1.2.}  As for $(II)$, we have that 
    \begin{align} \label{eq-prop-proof-term-II-all}
        (II) & = \left|\frac{1}{(t-s)^2} \sum_{u,v = s+1}^t \left( X_i(u) \right)^{\top} W_{(l)}  X_j(v) - \frac{1}{t-s} \sum_{u = s+1}^t \left( X_i(u)  \right)^{\top} W_{(l)} X_j(u)\right| \nonumber \\
        & \leq \left|\frac{1}{(t-s)(t-s-1)} \sum_{v=s+1}^t \sum_{u \in \{s+1, \dots, t\}\backslash\{v\}} \left( X_i(u) \right)^{\top} W_{(l)} \left( X_j(v)  -X_j(u) \right)\right| \nonumber \\
         & = \left|\frac{1}{(t-s)(t-s-1)} \sum_{ v =s+ 1}^{v} \sum_{u\in \{s+1, \dots, t\}\backslash\{v\}}  \widetilde{Q}_{i, j, l, u}^{(v)} \right|,
    \end{align}
    Note that $\left\{ \widetilde{Q}_{i, j, l, u}^{(v)}, u\in \{s+1, \dots, t\}\backslash\{v\} \right\}$ are mutually independent given $X_j(v)$. We are to deploy Talagrand's concentration inequality \citep[e.g.~Theorem 5.2.16 in][]{vershynin2018high} to upper bound \eqref{eq-prop-proof-term-II-all}. In order to do so, we first derive the Lipschitz constant. For any sequence
\[
    \{x_{u}, \,  u\in \{s+1, \dots, t\}\backslash\{v\}, \, \{y_{t},  \, u\in \{s+1, \dots, t\}\backslash\{v\}\} \subset [0, 1],
\]
it holds that
\begin{align}
	& \left| (t-s-1)^{-1} \sum_{u\in \{s+1, \dots, t\}\backslash\{v\}}x_{u}   -  (t-s-1)^{-1} \sum_{u\in \{s+1, \dots, t\}\backslash\{v\} } y_{u} \right|   \nonumber\\
 & \leq   (t-s-1)^{-1}  \sum_{u\in \{s+1, \dots, t\}\backslash\{v\}} \left| x_{u}  -  y_{u}  \right|. \nonumber
\end{align}
Then by concentration inequality of Lipschitz functions with respect to $l_1$-metric, we have that for any $\delta_1 > 0$,
\begin{align}
   &  \mathbb{P}\Bigg\{ \bigg| \frac{1}{t-s-1}  \sum_{u\in \{s+1, \dots, t\}\backslash\{v\}}  \widetilde{Q}_{i, j, l, u}^{(v)}     -   \mathbb{E}  \Big\{ \frac{1}{t-s-1}  \sum_{u\in \{s+1, \dots, t\}\backslash\{v\}}  \widetilde{Q}_{i, j, l, u}^{(v)} \Big|  X_j(v)  \Big\} \bigg| \geq  \delta_1 \Bigg| X_j(v)\Bigg\}\nonumber \\
   \leq & 2 \exp \left\{ -2\delta_1^2 (t-s-1)\right\}. \nonumber
\end{align}
Consequently, we have 
\begin{align}
   & \mathbb{P}\Bigg\{ \bigg| \frac{1}{t-s-1}  \sum_{u\in \{s+1, \dots, t\}\backslash\{v\}}  \widetilde{Q}_{i, j, l, u}^{(v)}     -   \mathbb{E}  \Big\{ \frac{1}{t-s-1}  \sum_{u\in \{s+1, \dots, t\}\backslash\{v\}}  \widetilde{Q}_{i, j, l, u}^{(v)}  \Big|  X_j(v)  \Big\} \bigg| \geq \delta_1 \Bigg\}  \nonumber \\
   = &  \E \Bigg\{  \mathbb{P}\bigg\{ \bigg| \frac{1}{t-s-1}  \sum_{u\in \{s+1, \dots, t\}\backslash\{v\}}  \widetilde{Q}_{i, j, l, u}^{(v)}     -   \mathbb{E}  \Big\{ \frac{1}{t-s-1}  \sum_{u\in \{s+1, \dots, t\}\backslash\{v\}}  \widetilde{Q}_{i, j, l, u}^{(v)} \Big|  X_j(v)  \Big\} \bigg| \geq \delta_1  \bigg\vert X_j(v) \bigg\}  \Bigg\}   \nonumber\\
   \leq & 2 \exp \left\{ -2\delta_1^2 (t-s-1)\right\} 
. \nonumber
\end{align}
By using a union bound argument, we have 
\begin{align}\label{pro_13_5}
   & \mathbb{P}\Bigg\{ \Big| \frac{1}{(t-s)(t-s-1)}  \sum_{v=s+1}^t \sum_{u\in \{s+1, \dots, t\}\backslash\{v\}}  \widetilde{Q}_{i, j, l, u}^{(v)}     \nonumber\\
    & \hspace{0.8cm} -  \frac{1}{t-s}  \sum_{v=s+1}^t   \mathbb{E}  \Big\{ \frac{1}{t-s-1}  \sum_{u\in \{s+1, \dots, t\}\backslash\{v\}}  \widetilde{Q}_{i, j, l, u}^{(v)}  \Big|  X_j(v)  \Big\} \Big| \geq \delta_1 \Bigg\}  \nonumber\\
  \leq &  2 (t-s)\exp \left\{ -2 \delta_1^2 (t-s-1)\right\}.  
\end{align}
Note that 
\begin{align}\label{pro_13_1}
   &\frac{1}{t-s}  \sum_{v=s+1}^t   \mathbb{E}  \Big\{ \frac{1}{t-s-1}  \sum_{u\in \{s+1, \dots, t\}\backslash\{v\}}  \widetilde{Q}_{i, j, l, u}^{(v)}  \Big|  X_j(v)  \Big\}  \nonumber \\
   = &  \frac{1}{t-s} \sum_{v=s+1}^t  \mathbb{E}  \left\{ \left[  (t-s-1)^{-1}  \sum_{u\in \{s+1, \dots, t\}\backslash\{v\}}   \left(  \left( X_i(u) \right)^{\top} W_{(l)} X_j(v) - \left( X_i(u) \right)^{\top} W_{(l)} X_j(u) \right)   \right] \Bigg| X_j(v)\right\} \nonumber \\
    = &   (t-s)^{-1}  \sum_{v= s+1}^t     \mathbb{E}  \left\{  X_i(t) \right\}^{\top} W_{(l)}  X_j(v)  -  \mathbb{E}  \left\{  X_1(t) \right\}^{\top} W_{(l)}  \mathbb{E}  \left\{ X_1(t) \right\}
\end{align}
where we get the final equality since 
 $\cup_{(i, j) \in \mathcal{S}_k}\left\{ X_i(t), X_j (t)\right\}$ are mutually independent random vectors. 
Consequently, we have
\begin{align}\label{pro_13_2}
   & \E \left\{ (t-s)^{-1}  \sum_{v= s+1}^t     \mathbb{E}  \Big\{ \frac{1}{t-s-1}  \sum_{u\in \{s+1, \dots, t\}\backslash\{v\}}  \widetilde{Q}_{i, j, l, u}^{(v)}  \Big|  X_j(v)  \Big\}  \right\}  =0 
\end{align}
Let 
\[
   \mathcal{A}_1 = \left\{ \Bigg| (t-s)^{-1}  \sum_{v= s+1}^t     \mathbb{E}  \Big\{ \frac{1}{t-s-1}  \sum_{u\in \{s+1, \dots, t\}\backslash\{v\}}  \widetilde{Q}_{i, j, l, u}^{(v)}  \Big|  X_j(v)  \Big\}  \Bigg| \leq  \delta_1 \right\}.
\]
Combining \eqref{pro_13_1} and \eqref{pro_13_2}, by Hoeffding's inequality for general bounded random variables \citep[e.g.~Theorem 2.2.6 in][]{vershynin2018high}, we have that 
\begin{align}\label{pro_13_4}
    \P \left\{ \mathcal{A}_1\right\} \geq 1- 2\exp \left\{-2\delta_1^2(t-s) \right\}.
\end{align}
Then combining \eqref{pro_13_5} and \eqref{pro_13_4}, we have
\begin{align}\label{step_1.2_result}
   & \mathbb{P}\left\{ \left| \frac{1}{(t-s)(t-s-1)}  \sum_{v=s+1}^{t}\sum_{ u\in \{s+1, \dots, t\}\backslash\{v\}}  \widetilde{Q}_{i, j, l, u}^{(v)}  \right| \geq 2\delta_1 \right\}\nonumber\\ 
   \leq &  2 (t-s)\exp \left\{ -2\delta_1^2 (t-s-1)\right\} \P\left\{  \mathcal{A}_1\right\} + 1- \P\left\{  \mathcal{A}_1\right\} \leq C_1 (t-s)\exp \left\{  -c_1\delta_1^2 (t-s)  \right\},  
\end{align}
with absolute constants $C_1, c_1 > 0$.

\medskip
\noindent \textbf{Step 1.3.} Combining  \eqref{step_1.1_result}, \eqref{eq-prop-proof-term-II-all} and \eqref{step_1.2_result},  we then have that for any $\delta_1 > 0$,
    \begin{align*}
          \mathbb{P}\{|\mathbf{Z}_{i, j, l}| > 3 \delta_1 \}  \leq C_1 (t-s) \exp \left\{  -c_1\delta_1^2 (t-s)  \right\}.
    \end{align*}
From \textbf{Step 1.1.}, we also have that for any $\delta_1 >  0$,
    \begin{align}
          \mathbb{P}\{|\mathbf{Z}_{i, j, l} - \E\{\mathbf{Z}_{i, j, l} \big| \{ X(u) \}_{u=s+1}^t\} \}| > \delta_1 \big| \{ X(u) \}_{u=s+1}^t\}  \leq C_2 \exp \left\{ -c_2\delta_1^2 (t-s) \right\}, \nonumber
    \end{align}
with
\[
    \E\{\mathbf{Z}_{i, j, l} \big| \{ X(u) \}_{u=s+1}^t\}\} =  \frac{1}{(t-s)^2} \sum_{u,v = s+1}^t \left( X_i(u) \right)^{\top} W_{(l)}  X_j(v) - \frac{1}{t-s} \sum_{u = s+1}^t \left( X_i(u)  \right)^{\top} W_{(l)} X_j(u) .
\]

\medskip
\noindent \textbf{Step 2.}  
In this step, we decompose our target into a few addictive terms.
Note that 
    \begin{align}\label{pro_13_main}
        & \frac{1}{n^2} \sum_{k \in [2n - 1]} \sqrt{\widetilde{S}_k \sum_{l = 1}^L \sum_{(i, j) \in \mathcal{S}_k} \big( \widehat{\mathbf{P}}_{i, j, l} - \widetilde{ \widehat{\mathbf{P}}}_{i, j, l}\big)^2} \nonumber \\
        \leq & \frac{\sqrt{n}}{n^2} \sum_{k \in [2n-1]} \sqrt{\sum_{l = 1}^L \sum_{(i, j) \in \mathcal{S}_k} \big(\widetilde{\mathbf{P}}_{i, j, l} - \widetilde{ \widehat{\mathbf{P}}}_{i, j, l}\big)^2} 
        \leq \frac{\sqrt{2}}{n} \big\|\widetilde{\mathbf{P}} - \widetilde{ \widehat{\mathbf{P}}}\big\|_{\mathrm{F}} \nonumber \\
        \leq &
         \frac{\sqrt{2}}{n} \big\|\widetilde{ \widehat{\mathbf{P}}} \times_1 \widehat{U}_1 \widehat{U}_1^{\top} \times_2 \widehat{U}_2\widehat{U}_2^{\top} \times_3 \widehat{U}_3 \widehat{U}_3^{\top} - \widetilde{ \widehat{\mathbf{P}}}\big\|_{\mathrm{F}}\nonumber \\
        & \hspace{1cm} + \frac{\sqrt{2}}{n} \big\|\big(\bar{\mathbf{A}} - \widetilde{ \widehat{\mathbf{P}}}\big) \times_1 \widehat{U}_1 \widehat{U}_1^{\top} \times_2 \widehat{U}_2 \widehat{U}_2\times_3 \widehat{U}_3  \widehat{U}_3\big\|_{\mathrm{F}} \nonumber\\
        = & (III) + (IV),
    \end{align}
where the first inequality follows from
\Cref{lemma_S_u}, and the second inequality follows from the Cauchy--Schwarz inequality.

\noindent \textbf{Step 3.}   In this step, we deal with term $(III)$ in \eqref{pro_13_main}.
It follows from the proof of Theorem 4.1 in \cite{han2022optimal} that
\begin{align}\label{eq_step_3_main}
    (III) \leq \sqrt{2}n^{-1} \sum_{v \in [3]}\|\sin\Theta(\widehat{U}_{v}, U_v)\|  \|\mathcal{M}_v (\widetilde{ \widehat{\mathbf{P}}}) \|,
\end{align} 
where  $U_v$ denotes the eigenvector of  $\mathcal{M}_v(\widetilde{ \widehat{\mathbf{P}}}) \mathcal{M}_v( \widetilde{ \widehat{\mathbf{P}}})^{\top}$ corresponding to the largest eigenvalue, for any $v \in [3]$.
We are to apply the Davis--Kahan theorem presented in \citep{yu2015useful} to upper bound $\|\sin \Theta (\widehat{U}_s, U_s)\|$, i.e
\begin{align}
    \|\sin\Theta(\widehat{U}_{1}, U_1)\| \|\mathcal{M}_1 (\widetilde{ \widehat{\mathbf{P}}}) \| &\leq \frac{2\|\mathcal{M}_1(\bar{\mathbf{A}}) \mathcal{M}_1(\bar{\mathbf{A}})^{\top} - \mathcal{M}_1(\widetilde{ \widehat{\mathbf{P}}})\mathcal{M}_1(\widetilde{ \widehat{\mathbf{P}}})^{\top} \| }{\sigma_1^2(\mathcal{M}_1(\widetilde{ \widehat{\mathbf{P}}})) - \sigma_2^2(\mathcal{M}_1(\widetilde{ \widehat{\mathbf{P}}}))} \sigma_1(\mathcal{M}_1(\widetilde{ \widehat{\mathbf{P}}})\nonumber 
\end{align}
Let $\mathcal{A}$ be defined in \eqref{lemma_12_event} in \Cref{lemm_eigen} and from now on assume that the event $\mathcal{A}$ holds. Then we have 
\begin{align}\label{eq_1}   
\|\sin\Theta(\widehat{U}_{1}, U_1)\| \|\mathcal{M}_1 (\widetilde{ \widehat{\mathbf{P}}}) \| 
    &\leq \frac{C_{\sigma}\|\mathcal{M}_1(\bar{\mathbf{A}}) \mathcal{M}_1(\bar{\mathbf{A}})^{\top} - \mathcal{M}_1(\widetilde{ \widehat{\mathbf{P}}})\mathcal{M}_1(\widetilde{ \widehat{\mathbf{P}}})^{\top} \| }{n \sqrt{d}\| \theta_X\|^2  \sigma_1(Q) }, 
\end{align}
with an absolute constant $C_{\sigma} >0$.

We are to upper bound
    \[
        \|\mathcal{M}_1(\bar{\mathbf{A}}) \mathcal{M}_1(\bar{\mathbf{A}})^{\top} - \mathcal{M}_1(\widetilde{ \widehat{\mathbf{P}}})\mathcal{M}_1(\widetilde{ \widehat{\mathbf{P}}})^{\top} \|.
    \]
Note that 
\begin{align}\label{eq_2}
     \|\mathcal{M}_1(\bar{\mathbf{A}}) \mathcal{M}_1(\bar{\mathbf{A}})^{\top} - \mathcal{M}_1(\widetilde{ \widehat{\mathbf{P}}})\mathcal{M}_1(\widetilde{ \widehat{\mathbf{P}}})^{\top} \| 
   \leq & 2 \|\mathcal{M}_1(\mathbf{Z}) \mathcal{M}_1(\widetilde{ \widehat{\mathbf{P}}})^{\top} \| +
   \|\mathcal{M}_1(\mathbf{Z}) \mathcal{M}_1(\mathbf{Z})^{\top} \| \nonumber \\
   =  & (III.1) + (III.2).
\end{align}
\medskip
\noindent \textbf{Step 3.1.}
As for $(III.1)$ in \eqref{eq_2}, by \textbf{Step 1} and  Lemma 2 in \cite{zhang2018heteroskedastic}, for any $ 0 < \delta_2 < 1$ to be specified later, we have that with absolute constants $C_2>0$, 
\[
  \P \left\{  \|\mathcal{M}_1(\mathbf{Z}) \mathcal{M}_1(\widetilde{ \widehat{\mathbf{P}}})^{\top} \| 
 \geq C_2 \sigma_1\left( \mathcal{M}_1(\widetilde{ \widehat{\mathbf{P}}})\right) \sqrt{n/(t-s)} \log(1/ \delta_2) \big| \left\{X(u) \right\}_{u=s+1}^t\right\} \leq \delta_2.
\]
Then by the event $\mathcal{A}$, we have 
\[
  \P \left\{  \|\mathcal{M}_1(\mathbf{Z}) \mathcal{M}_1(\widetilde{ \widehat{\mathbf{P}}})^{\top} \| 
 \geq C_2  n\sqrt{dn/(t-s)} \| \theta_X\|^2  \sigma_1(Q) \log(1/ \delta_2) \big| \left\{X(u) \right\}_{u=s+1}^t\right\} \leq \delta_2.
\]
Consequently, we have 
\begin{align}\label{eq_3}
 & \P \left\{  \|\mathcal{M}_1(\mathbf{Z}) \mathcal{M}_1(\widetilde{ \widehat{\mathbf{P}}})^{\top} \| 
 \geq C_2  n\sqrt{dn /(t-s)} \| \theta_X\|^2 \sigma_1(Q) \log(1/ \delta_2) \right\}\nonumber\\
 = & \E \left\{ \P \left\{  \|\mathcal{M}_1(\mathbf{Z}) \mathcal{M}_1(\widetilde{ \widehat{\mathbf{P}}})^{\top} \| 
 \geq C_2  n\sqrt{dn /(t-s)} \| \theta_X\|^2 \sigma_1(Q) \log(1/ \delta_2) \big| \left\{X(u) \right\}_{u=s+1}^t\right\} \right\} \nonumber\\
 \leq & \delta_2.
\end{align}
\medskip
\noindent \textbf{Step 3.2.}
 As for $(III.2)$ in \eqref{eq_2}, following the proof of \textbf{Step 1}, the Hoeffding inequality \citep[e.g.~Theorem 2.6.3 in ][]{vershynin2018high}, a union bound argument and the proof of Theorem 4.4.5 in \cite{vershynin2018high}, for any  $ 0 < \delta_2 < 1$ to be specified later, we have that 
\begin{align}\label{eq_7}
     \P \left\{ (III.2) \geq   C_3 \sqrt{\frac{n^3 \log(n \vee (t-s))/\delta_2 )}{t-s}}  \right\} \leq \delta_2,
\end{align}
where $C_3>0$ is an absolute constant.

\medskip
\noindent \textbf{Step 3.3.}
Combining \eqref{eq_1}, \eqref{eq_2}, \eqref{eq_3} and \eqref{eq_7}, we have 
\begin{align}\label{eq_8}
 \P  \left\{  \|\sin\Theta(\widehat{U}_{1}, U_1)\| \|\mathcal{M}_1 (\widetilde{ \widehat{\mathbf{P}}}) \| \geq C_4 \sqrt{\frac{n   \log(n \vee (t-s))/\delta_2 )}{t-s} }  \right\} \leq 2 \delta_2,
\end{align}
where $C_4 > 0$ is an absolute constant.
Similarly, we have 
\begin{align}\label{eq_9}
 \P  \left\{  \|\sin\Theta(\widehat{U}_{2}, U_2)\| \|\mathcal{M}_2 (\widetilde{ \widehat{\mathbf{P}}}) \| \geq C_4 \sqrt{ \frac{n   \log(n\vee (t-s))/\delta_2 )}{t-s} }  \right\} \leq 2 \delta_2,
\end{align}
and
\begin{align}\label{eq_10}
 \P  \left\{  \|\sin\Theta(\widehat{U}_{3}, U_3)\| \|\mathcal{M}_3 (\widetilde{ \widehat{\mathbf{P}}}) \| \geq  C_4 \sqrt{  \frac{d n   \log(n\vee (t-s))/\delta_2 )}{t-s} }  \right\} \leq 2 \delta_2.
\end{align}
Combining \eqref{eq_step_3_main}, \eqref{eq_8}, \eqref{eq_9},  \eqref{eq_10}, we have 
\begin{align}\label{pro_13_main_1}
  \P \left\{   (III) \geq C_5 \sqrt{ \frac{   d \log(n \vee (t-s))/\delta_2 ) }{n(t-s)}} \right\} \leq 6 \delta_2,
\end{align}
where $C_5 > 0$ is an absolute constant.

\medskip
\noindent \textbf{Step 4.}
 This step deals with term $(IV)$ in \eqref{pro_13_main}. Following the proof of Theorem 4.1 in \cite{han2022optimal}, we have 
\begin{align}\label{step_4_1}
     \big\|   \mathbf{Z}\times_1 \widehat{U}_1 \widehat{U}_1^{\top} \times_2 \widehat{U}_2 \widehat{U}_2^{\top} \times_3 \widehat{U}_3  \widehat{U}_3^{\top} \big\|_{\mathrm{F}} \leq  \sup_{\substack{\mathbf{E} \in \R^{n \times n \times L}, \\ \mathrm{rank}(\mathbf{E}) \leq (1,1,1)\\ \|\mathbf{E}\|_{\mathrm{F}} \leq 1} } \sum_{i, j, l = 1}^{n, n, L}\mathbf{Z}_{i, j, l} \mathbf{E}_{i, j, l}.
\end{align}

 We are to apply the tail behaviours of $|\mathbf{Z}_{i, j, l}|$ we proved in the \textbf{Step 1}.  Following the proof of \textbf{Step 1}, the Hoeffding inequality \citep[e.g.~Theorem 2.6.3 in ][]{vershynin2018high}, a union bound argument and the proof of Lemma E.5 in \cite{han2022optimal},
we have that  for any $0 <\delta_3 <1$ to be specified later,
 \begin{align}\label{step_4_2}
     & \P \bigg\{ \sup_{\substack{\mathbf{E} \in \R^{n \times n \times L}, \\ \mathrm{rank}(\mathbf{E}) \leq (1,1,1)\\ \|\mathbf{E}\|_{\mathrm{F}}}} \sum_{i, j, l = 1}^{n, n, L}\mathbf{Z}_{i, j, l} \mathbf{E}_{i, j, l}
 \leq C_6 \sqrt{\frac{ n \log(n \vee (t-s))/\delta_3)}{t-s}}   \bigg\} \leq \delta_3,
 \end{align}
 where $C_6>0$ is an absolute constant.
Combining \eqref{step_4_1} and \eqref{step_4_2}, then we have 
\begin{align}
     & \P \bigg\{  \big\|   \mathbf{Z}\times_1 \widehat{U}_1 \widehat{U}_1^{\top} \times_2 \widehat{U}_2 \widehat{U}_2^{\top} \times_3 \widehat{U}_3  \widehat{U}_3^{\top} \big\|_{\mathrm{F}}
 \leq C_6 \sqrt{\frac{ n \log(n \vee (t-s))/\delta_3)}{t-s}}   \bigg\} \leq \delta_3.  \nonumber
\end{align}
Consequently, we have that with an absolute constant $C_7>0$
\begin{align}\label{pro_13_main_2}
     & \P \bigg\{  (IV)
 \leq C_7 \sqrt{\frac{ \log(n \vee (t-s))/\delta_3)}{n(t-s)}}   \bigg\} \leq \delta_3. 
\end{align}

\medskip
\noindent \textbf{Step 5.} By \Cref{lemm_eigen}, we have that for any 
\begin{align}
    2 \exp\big\{- C_8  n \mu_{X, 1}^2 \big\} \leq \delta < 1,   \nonumber
\end{align} 
with an absolute constant $C_8 >0$, 
\[
  \P \{ \mathcal{A} \} \geq 1 - \delta/2.
\]
Combining \eqref{pro_13_main}, \eqref{pro_13_main_1} and \eqref{pro_13_main_2}, let $\delta_2 = \delta/24$ and $\delta_3 = \delta/4$, then we have that 
\begin{align}
   & \P \Bigg\{ n^{-2}\sum_{k \in [2n -1]} \sqrt{ \widetilde{S}_k \sum_{l = 1}^L \sum_{(i, j) \in \mathcal{S}_k}  \left(   \widehat{\mathbf{P}}^{s, t}_{i, j, l} -  \widetilde{ \widehat{\mathbf{P}}}^{s, t}_{i, j, l}\right)^2} > C_9\sqrt{\frac{d\log(n \vee (t-s))/\delta)}{n(t-s)}} \Bigg\} \nonumber\\
   &  \leq \delta/2  \P \{ \mathcal{A} \} + 1- \P \{ \mathcal{A} \}  \leq \delta \nonumber
  \end{align}
where $C_9 >0$ is an absolute constant, completing the proof.    
\end{proof}

\begin{lemma}\label{lemma_S_u}
For any $k \in [2n-1]$, let $\mathcal{S}_k$ be defined in \eqref{eq-mathcal-S-def_u} with $\widetilde{S}_k = \vert  \mathcal{S}_k \vert$. For any $n > 2$,  we have that $\widetilde{S}_k \leq n$ and
\[
   C_2 n^2  \leq \sum_{k \in [2n-1 ]}\widetilde{S}_k \leq C_1 n^2,
\]
where  $C_1 > C_2 >0$ are absolute constants.
\end{lemma}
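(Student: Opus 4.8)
The plan is to translate the combinatorial definition of $\mathcal{S}_k$ into a standard ``forbidden--difference'' independent set problem on an integer interval, and then bound the size of such sets from both sides. The bound $\widetilde{S}_k \le n$ is immediate, since $\mathcal{S}_k \subseteq \widetilde{\mathcal{S}}_k$ and, by inspection of \eqref{eq-mathcal-S-def_u}, $|\widetilde{\mathcal{S}}_k| = n+1-k \le n$ when $k \in [n]$ and $|\widetilde{\mathcal{S}}_k| = 2n-k \le n-1$ when $k \in [2n-1]\setminus[n]$. The same observation yields the upper bound on the sum: $\sum_{k \in [2n-1]} \widetilde{S}_k \le \sum_{k=1}^{n}(n+1-k) + \sum_{k=n+1}^{2n-1}(2n-k) = \tfrac{n(n+1)}{2} + \tfrac{n(n-1)}{2} = n^2$, so $C_1 = 1$ works.

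For the lower bound I would first rewrite the constraint. Fix $k$ and write $\widetilde{\mathcal{S}}_k = \{(\phi(a), \psi(a)) : a \in [m_k]\}$ with $m_k = |\widetilde{\mathcal{S}}_k|$, taking $\phi(a) = a$, $\psi(a) = a+k-1$, $d_k = k-1$ for $k \in [n]$, and $\phi(a) = a+k-n$, $\psi(a) = a$, $d_k = k-n$ for $k \in [2n-1]\setminus[n]$. In both cases the two coordinates of the $a$-th pair differ by $d_k \ge 1$ whenever $k \ge 2$, so the self-loop condition $i\neq j$ is automatic; and for two distinct pairs with indices $a, b \in [m_k]$, the cross-pair conditions $i \ne j'$ and $j \ne i'$ reduce exactly to $|a-b| \ne d_k$. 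Hence, for $k \ge 2$, $\widetilde{S}_k$ equals the maximum size of a set $A \subseteq [m_k]$ with no two elements at distance exactly $d_k$. Partitioning $[m_k]$ into the arithmetic progressions of common difference $d_k$ (at most $d_k$ of them, each an interval in the induced path metric), such a maximal $A$ restricts on each progression of length $\ell$ to a maximum independent set of the path $P_\ell$, of size $\lceil \ell/2\rceil \ge \ell/2$; summing over the progressions gives $\widetilde{S}_k \ge m_k/2$ for every $k \ge 2$.

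Finally I would assemble the pieces. The $k=1$ term is $0$ because $\widetilde{\mathcal{S}}_1$ consists entirely of diagonal pairs $(i,i)$, which violate $i\neq j$, so it contributes nothing and causes no harm. For $k\ge 2$,
\[
    \sum_{k \in [2n-1]} \widetilde{S}_k \;\ge\; \sum_{k=2}^{2n-1} \widetilde{S}_k \;\ge\; \frac{1}{2}\sum_{k=2}^{2n-1} m_k \;=\; \frac{1}{2}\Bigl(\sum_{j=1}^{n-1} j + \sum_{j=1}^{n-1} j\Bigr) \;=\; \frac{n(n-1)}{2} \;\ge\; \frac{n^2}{4},
\]
the last inequality holding for all $n > 2$, so $C_2 = 1/4$ works. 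The only mildly delicate point, and the one I would be most careful to verify explicitly, is that the cross-pair conditions in \eqref{eq-mathcal-S-def_u} collapse precisely to the single forbidden-difference constraint $|a-b|\neq d_k$ (so that $\widetilde{S}_k$ is genuinely the extremal independent-set size, not merely bounded by it); once that is checked, the remaining arguments are elementary counting.
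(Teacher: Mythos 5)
Your proof is correct, and it arrives at essentially the same bounds ($n(n-1)/2$ below, a constant multiple of $n^2$ above), but the key per-$k$ estimate is obtained by a genuinely different argument. The paper asserts, directly from the block structure of a maximum admissible subset, the exact formula $\widetilde{S}_{k+1} = k\lfloor (n-k)/(2k)\rfloor + \min\{k,\ n-k-2k\lfloor (n-k)/(2k)\rfloor\}$ for $k\in[n-1]$ (stated without derivation), reads off $(n-k)/2\le \widetilde{S}_{k+1}\le n$, and treats $k>n$ via the symmetry $\widetilde{S}_k=\widetilde{S}_{k-n+1}$ before summing. You never compute the exact value: after verifying -- correctly, and more explicitly than the paper -- that for $k\ge 2$ the admissibility constraints in \eqref{eq-mathcal-S-def_u} collapse to the single forbidden-difference condition $|a-b|\neq d_k$ on the index set $[m_k]$, you partition $[m_k]$ into residue classes modulo $d_k$, note that the forbidden-difference graph is a disjoint union of paths, and use that a path on $\ell$ vertices has a maximum independent set of size $\lceil \ell/2\rceil\ge \ell/2$ to get $\widetilde{S}_k\ge m_k/2$; the upper bound $\widetilde{S}_k\le|\widetilde{\mathcal{S}}_k|\le n$ and the final summation are then immediate, with no need for the $k\mapsto k-n+1$ symmetry. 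Your route is more self-contained (every step is argued, whereas the paper's exact block formula is only asserted), it handles the degenerate case $d_k\ge m_k$ automatically, and it gives the slightly tighter upper constant $C_1=1$ versus the paper's $2$; what it gives up is the exact value of each $\widetilde{S}_k$, which is never needed. The point you flagged as delicate is indeed fine: taking the two elements equal in the defining condition only reproduces $i\neq j$, which holds since $d_k\ge 1$, and for distinct indices $a,b$ the conditions read exactly $a\neq b\pm d_k$, so $\widetilde{S}_k$ is precisely the extremal forbidden-difference set size you analyze.
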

\begin{proof}[\textbf{Proof of \Cref{lemma_S_u}.}]
By the construction of $\mathcal{S}_k$ defined in \eqref{eq-mathcal-S-def_u}, we have that $\widetilde{S}_1 = 0$, $\widetilde{S}_k = \widetilde{S}_{k-n+1}$ for any $k \in [2n-1] \setminus [n]$, and 
\begin{equation}\label{card_1}
    \widetilde{S}_{k+1} = k \lfloor \frac{n-k}{2k} \rfloor + \min \left\{k, n-k - 2k\lfloor \frac{n-k}{2k} \rfloor \right\},
\end{equation}
for any $k \in [1:n-1]$. By \eqref{card_1}, it holds for any $k \in [1:n-1]$ that 
\begin{equation}
  \frac{n-k}{2} \leq   \widetilde{S}_{k+1} \leq n. \nonumber
\end{equation}
Thus, we have
\begin{align}
  2 \sum_{k \in [n-1 ]}  \frac{n-k}{2}  \leq  \sum_{k \in [2n-1 ]}\widetilde{S}_k  \leq 2(n-1)n, \nonumber
\end{align}
which leads to 
\begin{align}\label{card_2}
    \frac{n(n-1)}{2}  \leq  \sum_{k \in [2n-1 ]}\widetilde{S}_k  \leq 2n(n-1). 
\end{align}
By \eqref{card_2} and $n>2$, there exist absolute constants $C_1 > C_2 >0$ such that 
\[
   C_2 n^2  \leq \sum_{k \in [2n-1 ]}\widetilde{S}_k \leq C_1 n^2,
\]
which completes the proof.
\end{proof}

\subsection{Proof of Theorem \ref{main_theorem}}\label{sec-C3}

\begin{proof}[\textbf{Proof of \Cref{main_theorem}.}]
The proof is conducted in the event $\mathcal{B}_1 \cap \mathcal{B}_2$, with $\mathcal{B}_1$ and $\mathcal{B}_2$ defined as
\begin{align*}
    \mathcal{B}_1 = & \bigg\{ \forall s, t \in \mathbb{N}, \, 1 \leq s < t \mbox{ satisfying that there is no change point in either [1, s) or [s+1, t)}:   \\
    & \hspace{0.5cm} \sup_{z \in [0, 1]^L} \big|\widehat{D}_{s, t}(z) - \widetilde{D}_{s, t}(z)  \big| 
  \leq  Ch^{-L-1}  \sqrt{\frac{(L^2 \vee d) \log\{ (n  \vee t) / \alpha\}}{n}}\left( \frac{1}{\sqrt{s}}+ \frac{1}{\sqrt{t- s}}\right)\bigg\},
\end{align*}
and 
\begin{align*}
   \mathcal{B}_2 = & \bigg\{  \forall s, t \in \mathbb{N}, 1 \leq s < t: \, \sup_{z \in [0, 1]^L} \left| \widetilde{D}_{s, t} \left(z\right)\right| -  \max_{m=1}^{M_{\alpha, t}} \left| \widetilde{D}_{s, t}\left( z_m\right) \right| 
    \\
   & \hspace{1cm} 
   \leq  C h^{-L-1} \sqrt{ \frac{L^2  \log\{ (n  \vee t) / \alpha\}}{n}} \left( \frac{1}{\sqrt{s}}+ \frac{1}{\sqrt{t- s}}\right) \bigg\},
\end{align*}
respectively, $C>0$ being an absolute constant.  By Lemmas~\ref{lemma_psi} and \ref{lemma_z}, we have  
\begin{equation}\label{eq-b-1-b-2-cap}
    \P\left\{ \mathcal{B}_1 \cap \mathcal{B}_2  \right\} \geq 1 - \alpha.   
\end{equation}
 
Let 
\[
    \varepsilon_{1, s, t} = Ch^{-L-1}  \sqrt{\frac{(L^2 \vee d) \log\{ (n  \vee t) / \alpha\}}{n}}\left( \frac{1}{\sqrt{s}}+ \frac{1}{\sqrt{t- s}}\right)
\]
and
\[
    \varepsilon_{2, s, t} = C h^{-L-1} \sqrt{ \frac{L^2 \log\{ (n  \vee t) / \alpha\}}{n }} \left( \frac{1}{\sqrt{s}}+ \frac{1}{\sqrt{t- s}}\right).
\]

In the event $\mathcal{B}_1 \cap \mathcal{B}_2$, for any $ 1 \leq s < t \leq \Delta$, 
\begin{align}
    \max_{m=1}^{M_{\alpha, t}} \left| \widehat{D}_{s, t}\left( z_m\right)\right|
    \leq &  \max_{m=1}^{M_{\alpha, t}}  \left| \widetilde{D}_{s, t}\left( z_m\right) \right| + \varepsilon_{1, s, t} \leq \sup_{z \in [0, 1]^L}  \left| \widetilde{D}_{s, t}\left(z\right) \right| + \varepsilon_{1, s, t} \nonumber \\
    \leq & C_1 h^{-L-1}\sqrt{\frac{\{1 + \log(L)\} L}{ns}} +   \varepsilon_{1, s, t}
    \leq \tau_{s, t}, \nonumber
\end{align}  
with an absolute constant $C_1 >0$, where the first inequality follows from the definition of $\mathcal{B}_1$, the third inequality follows from \Cref{lemma_no_change_point} and the final inequality follows from the definition of $\tau_{s, t}$ in \eqref{eq-tau-def-thm-2}.  We, therefore, have that,
    \[
        \big(\mathcal{B}_1 \cap \mathcal{B}_2\big) \subset \bigcap_{t \leq \Delta} \big\{\widehat{\Delta} > t \big\}. 
    \]

Under \Cref{ass_no_change_point} and due to \eqref{eq-b-1-b-2-cap}, it holds that
\begin{equation}\label{eq-thm2-pf-state-1}
    \mathbb{P}_{\infty} \{\widehat{\Delta} < \infty\} =  \mathbb{P}_{\infty} \{\exists t \in \mathbb{N}: \, \widehat{\Delta} \leq  t \} \leq \mathbb{P}\{\mathcal{B}_1^c \cup \mathcal{B}_2^c\} \leq \alpha.
\end{equation}
Under \Cref{ass_change_point}, it holds that 
\begin{equation}\label{eq-no-false-alarm-change-point}
    \big(\widehat{\Delta} < \Delta\big) \subset (\mathcal{B}_1^c \cup \mathcal{B}_2^c).
\end{equation}    

We then let 
\[
    \widetilde{\Delta} = \Delta + C_\epsilon h^{-2L-2} \frac{(L^2 \vee d) \log \{(n \vee \Delta) /\alpha \}}{\kappa^2 n},
\]
where $ C_\epsilon > 0$ is an absolute constant.  It follows from \Cref{lemma_one_change_point} that
\[
 \sup_{z \in [0, 1]^L} \left| \widetilde{D}_{\Delta, \widetilde{\Delta}}(z)  \right| > \frac{C_{\mathrm{SNR}}}{2} h^{-L-1}  \sqrt{\frac{(L^2 \vee d)\log\{(n \vee \Delta)/\alpha \} }{n}} \left(\sqrt{\frac{1}{ \Delta}} + \sqrt{\frac{1}{\widetilde{\Delta} - \Delta}} \right).
\]
In the event $\mathcal{B}_1 \cap \mathcal{B}_2$, we have that
\begin{align}
  & \max_{m=1}^{M_{\alpha, t}} \left| \widehat{D}_{\Delta, \widetilde{\Delta}}\left( z_m\right)\right| \geq   \max_{m=1}^{M_{\alpha, t}} \left| \widetilde{D}_{\Delta, \widetilde{\Delta}}\left( z_m\right)\right|  - \varepsilon_{1, \Delta, \widetilde{\Delta}}  \geq \sup_{z \in [0, 1]^L}\left| \widetilde{D}_{\Delta, \widetilde{\Delta}}\left( z\right)\right|  - \varepsilon_{1, \Delta, \widetilde{\Delta}} - \varepsilon_{2, \Delta, \widetilde{\Delta}} \nonumber \\
  \geq & \frac{C_{\mathrm{SNR}} h^{-L-1} }{8}  \sqrt{\frac{(L^2 \vee d)\log\{ (n \vee \Delta)/\alpha \} }{n}} \left(\sqrt{\frac{1}{ \Delta}} + \sqrt{\frac{1}{\widetilde{\Delta} - \Delta}} \right)> \tau_{\Delta, \widetilde{\Delta}}, \label{eq-thm-2-proof-change-exists}
\end{align}
where the first inequality follows from the definition of $\mathcal{B}_1$, the second inequality follows from the definition of $\mathcal{B}_2$, the third inequality follows from \Cref{lemma_one_change_point} and the definitions of $\varepsilon_{1, \Delta, \widetilde{\Delta}}$ and $\varepsilon_{2, \Delta, \widetilde{\Delta}}$, with a sufficiently large $C_{\mathrm{SNR}}$, and the last inequality follows from the definition of $\tau_{\Delta, \widetilde{\Delta}}$ in \eqref{eq-tau-def-thm-2} and with a sufficiently large $C_{\mathrm{SNR}}$.  

Recalling the design of \Cref{online_hpca}, \eqref{eq-thm-2-proof-change-exists} implies that $\big(\mathcal{B}_1 \cap \mathcal{B}_2\big) \subset \big(\widehat{\Delta} \leq \widetilde{\Delta}\big)$, which together with \eqref{eq-no-false-alarm-change-point} leads to that, under \Cref{ass_change_point},
\begin{equation}\label{eq-thm2-state-2}
   \mathbb{P}_{\Delta} \left\{ \Delta < \widehat{\Delta} \leq \Delta + C_\epsilon h^{-2L-2} \frac{(L^2 \vee d) \log \{(n \vee \Delta) /\alpha \}}{\kappa^2 n} \right\} \geq 1 - \alpha. 
\end{equation}

In view of \eqref{eq-thm2-pf-state-1} and \eqref{eq-thm2-state-2}, we complete the proof.
\end{proof}

\section{Directed multilayer networks}\label{sec-directed-edges}

In this section, we include the counterparts of the undirected MRDPG with random latent positions, in the directed multilayer network cases.  We collect the corresponding methodology and theory.  Note that the proofs of the theoretical results collected here follow from exactly the same proofs of their counterparts.  We thus only present the results without repetitive proofs.

\Cref{mrdpg} is the counterpart of \Cref{umrdpg}.
\begin{definition}[Directed multilayer random dot product graphs with random latent positions, $\mbox{directed}$-$\mbox{MRDPG}$]\label{mrdpg}
Let $(F, \widetilde{F})$ be an inner product distribution pair with weight matrices $\{W_{(l)}\}_{l\in [L]}$ satisfying \Cref{ipd}.  Let $\{X_i\}_{i=1}^{n_1}$, $\{Y_j\}_{j=1}^{n_2} \subset \mathbb{R}^d$ be mutually independent random vectors generated from $F$ and $\widetilde{F}$, respectively.  

Tensor $\mathbf{A} \in \R^{n_1 \times n_2 \times L}$ is an adjacency tensor of a directed multilayer random dot product graph with random latent positions $\{X_i\}_{i=1}^{n_1}$ and $\{Y_j\}_{j=1}^{n_2}$ if the conditional distribution satisfies that
\begin{align*}
    \mathbb{P} \big\{\mathbf{A} \vert \{X_i\}_{i = 1}^{n_1}, \{Y_j\}_{j = 1}^{n_2}\big\} & = \prod_{i, j, l = 1}^{n_1, n_2, L} \mathbf{P}_{i, j, l}^{\mathbf{A}_{i, j, l}} (1- \mathbf{P}_{i, j, l})^{1- \mathbf{A}_{i, j, l}} \\
    & = \prod_{i, j, l = 1}^{n_1, n_2, L} \big(X_i^{\top} W_{(l)} Y_j\big)^{A_{i, j, l}} \big(1 - X_i^{\top} W_{(l)} Y_j\big)^{1 - A_{i, j, l}}.
\end{align*}
We write $\mathbf{A} \sim \mathrm{directed}$-$\mathrm{MRDPG}(F, \widetilde{F}, \{W_{(l)}\}_{l\in [L]}, n_1, n_2)$ and write $\mathbf{P} = (\mathbf{P}_{i, j, l}) \in \mathbb{R}^{n_1 \times n_2 \times L}$ as the probability tensor. 
\end{definition}

The following is the counterpart of \eqref{Y_tucker_rep}. The probability tensor introduced in \Cref{mrdpg} can be written in a Tensor representation, i.e.
\[
      \mathbf{P} = \mathbf{S} \times_1 X \times_2 Y \times_3 Q,
\]
where $X = (X_1, \ldots, X_{n_1})^{\top} \in \mathbb{R}^{n_1 \times d}$, $Y = (Y_1, \ldots, Y_{n_2})^{\top} \in \mathbb{R}^{n_2 \times d}$, $\mathbf{S} \in \R^{d \times d \times d^2}$ defined in \eqref{matrix_S} and $Q \in \R^{L \times d^2}$ defined in \eqref{matrix Q}. For $l \in [L]$, the $l$th layer of the probability tensor is $\mathbf{P}_{:,:,l} = XW_{(l)}Y^{\top}$.

\Cref{ass_X_Y-app} is the counterpart of \Cref{ass_X_Y}.
\begin{assumption}[Random latent positions]\label{ass_X_Y-app}
Consider a $\mathrm{directed}$-$\mathrm{MRDPG}$ $(F, \widetilde{F}, \{W_{(l)}\}_{l = 1}^L, n_1, n_2)$ defined in \Cref{mrdpg}. Let $X_1 \sim F$ and $Y_1 \sim \widetilde{F}$ be sub-Gaussian random vectors with $\Sigma_X = \mathbb{E}(X_1X_1^{\top}) \in \mathbb{R}^{d \times d}$ and $\Sigma_Y = \mathbb{E}(Y_1Y_1^{\top}) \in \mathbb{R}^{d \times d}$.  Let $\mu_{X, 1} \geq \cdots \geq \mu_{X, d} > 0$ and $\mu_{Y, 1} \geq \cdots \geq \mu_{Y, d} > 0$ be the eigenvalues of $\Sigma_X$ and $\Sigma_Y$, respectively.  
\begin{enumerate}
    \item [$(a)$] Assume that there exist absolute constants $C_{\mathrm{gap}}, C_{\sigma} > 0$, such that $\min\{\mu_{X, d}, \, \mu_{Y, d}\} > C_{\mathrm{gap}}$ and $\max\{\mu_{X, 1}/\mu_{X, d}, \mu_{Y, 1}/\mu_{Y, d}\} \leq C_{\sigma}$.
    \item [$(b)$] Further let $\theta_X = \E\{ X_1 \}$ and $\theta_Y =  \E\{ Y_1 \}$.  Assume that there exists an absolute constant $C_{\theta} > 0$ such that $\max\{\mu_{X, 1}/\| \theta_X\|^2, \, \mu_{Y, 1}/\| \theta_Y\|^2\} \leq C_{\theta}$.
    \item [$(c)$] For $Q \in \mathbb{R}^{L \times d^2}$ defined in \eqref{matrix Q}, let $m = \mathrm{rank}(Q)$ and a singular value decomposition of~$Q$ be $Q = U_Q D_Q V_Q^{\top}$, with $U_Q \in \mathbb{R}^{L \times m}$, $D_Q \in \mathbb{R}^{m \times m}$ and $V_Q \in \mathbb{R}^{d^2 \times m}$.  Assume that $ \sigma_1(Q)/\sigma_m(Q) \leq C_{\sigma}$ and  $\sigma_{m}(Q) \geq C_{\mathrm{gap}}$, where $C_{\mathrm{gap}}, C_{\sigma}$ are the same as those in $(a)$.
\end{enumerate}
\end{assumption}

\Cref{random_theorem-app} is the counterpart of \Cref{random_theorem}.
\begin{theorem}\label{random_theorem-app}
Suppose that \Cref{ass_X_Y-app}$(a)$ and $(c)$ hold with random latent positions.
Let $\widehat{\mathbf{P}}$  be the output of TH-PCA (\Cref{thpca}) with inputs
\begin{enumerate}[(a)]
    \item [$(a)$] an adjacency tensor $\mathbf{A}$ satisfying that $\mathbf{A} \sim \mathrm{directed}$-$\mathrm{MRDPG}(F, \widetilde{F}, \{W_{(l)}\}_{l = 1}^L, n_1, n_2)$ described in \Cref{mrdpg}, and
    \item [$(b)$] Tucker ranks $(r_1, r_2, r_3) = (d, d, m)$, where $d$ and $m$ are defined in \Cref{ass_X_Y-app}.
\end{enumerate}

Letting $\mathbf{P}$ be the probability tensor of the adjacency tensor $\mathbf{A}$, it then holds that 
\[
    \mathbb{P}\big\{\|\widehat{\mathbf{P}} - \mathbf{P}\|_{\mathrm{F}}^2 \leq  C(d^2m + n_1d + n_2d + Lm)\big\} > 1 - C(n_1 \vee n_2 \vee L)^{-c},
\]
where $C, c > 0$  are constants depending on the constants $C_{\mathrm{gap}}, C_{\sigma}>0 $, which are defined in \Cref{ass_X_Y-app}.
\end{theorem}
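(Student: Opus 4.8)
\textbf{Proof proposal for \Cref{random_theorem-app}.}

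The plan is to mirror exactly the proof of \Cref{random_theorem} (the undirected case), taking advantage of the fact that in the directed/bipartite setting the two latent position matrices $X \in \R^{n_1 \times d}$ and $Y \in \R^{n_2 \times d}$ play symmetric and independent roles, and that \Cref{thm-han-4.1} is already stated for the directed case with $n_1 \neq n_2$ and $U_X \neq U_Y$. First I would set up the high-probability event on which the argument runs. Writing $X = (X_1,\dots,X_{n_1})^\top$ and $Y = (Y_1,\dots,Y_{n_2})^\top$, define
\[
    \mathcal{A} = \Big\{\big\|n_1^{-1} X^{\top}X - \Sigma_X\big\| \leq C\sqrt{n_1^{-1}(d+\log(1/\epsilon))}\Big\} \cap \Big\{\big\|n_2^{-1} Y^{\top}Y - \Sigma_Y\big\| \leq C\sqrt{n_2^{-1}(d+\log(1/\epsilon))}\Big\},
\]
and by matrix Bernstein's inequality together with \Cref{ass_X_Y-app}$(a)$ and a union bound, $\P\{\mathcal{A}\} > 1 - 2\epsilon$ provided $\epsilon \gtrsim \exp\{-c(n_1 \wedge n_2)\mu_{\min}^2 + d\}$ with $\mu_{\min} = \min\{\mu_{X,d},\mu_{Y,d}\}$. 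On $\mathcal{A}$, Weyl's inequality gives $\sigma_d(X) \geq \sqrt{n_1\mu_{X,d}/2}$, $\sigma_1(X) \leq \sqrt{3n_1\mu_{X,d}/2}$, and the analogous bounds for $Y$ with $n_2$; in particular $\mathrm{rank}(X) = \mathrm{rank}(Y) = d$.

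Second, I would establish the Tucker-rank and singular-value control of $\mathbf{P} = \mathbf{S}\times_1 X \times_2 Y \times_3 Q$. The directed analogue of \Cref{fix_lemma1} follows by the same three-step matricization-peeling argument (peeling off $U_X, U_Y, U_Q$, then $D_X, D_Y, D_Q$, then reducing to $\mathbf{S}$ via \Cref{lemma_smallest_singular_value_2}, \Cref{lem-rank-id}, \Cref{lemma_smallest_singular_value_1}); the only change is that $\sigma_1^2(X), \sigma_d^2(X)$ become $\sigma_1(X)\sigma_1(Y)$, $\sigma_d(X)\sigma_d(Y)$ in the bounds for $\bar\lambda_1, \underline\lambda_1, \bar\lambda_2, \underline\lambda_2, \bar\lambda_3, \underline\lambda_3$. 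This yields $\big(\mathrm{rank}(\mathcal{M}_1(\mathbf{P})), \mathrm{rank}(\mathcal{M}_2(\mathbf{P})), \mathrm{rank}(\mathcal{M}_3(\mathbf{P}))\big) = (d,d,m)$ and the desired two-sided control on each $\bar\lambda_s, \underline\lambda_s$. Then I would invoke \Cref{thm-han-4.1} in its directed form with $(r_1,r_2,r_3) = (d,d,m)$, conditionally on $\mathcal{A}$: since the adjacency entries $(\mathbf{A}-\mathbf{P})_{i,j,l}$ are independent bounded (hence sub-Gaussian with parameter $\sigma \asymp 1$) given the latent positions, we get
\[
    \P\Big\{\|\widetilde{\mathbf{P}} - \mathbf{P}\|_{\mathrm{F}}^2 \leq \textstyle\sum_{s=1}^3 \underline\lambda_s^{-4}\bar\lambda_s^2(\bar\lambda_s^2 p_s r_s + n_1 n_2 L \cdot) + C(d^2m + n_1 d + n_2 d + Lm)\,\big|\,\mathcal{A}\Big\} \geq 1 - C_1 e^{-c_1(n_1\vee n_2\vee L)},
\]
where $p_1 = n_1, p_2 = n_2, p_3 = L$ and $r_1 = r_2 = d, r_3 = m$.

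Third, I would simplify this bound exactly as in Step 3 of the proof of \Cref{random_theorem}: substituting the singular-value bounds from the directed \Cref{fix_lemma1} and using $\sigma_d(X)\sigma_d(Y) \gtrsim \sqrt{n_1 n_2}$ together with \Cref{ass_X_Y-app}$(a),(c)$, each term $\underline\lambda_s^{-4}\bar\lambda_s^2(\dots)$ collapses to $O(n_1 d + n_2 d + Lm)$. Combining with $\|\widehat{\mathbf{P}} - \mathbf{P}\|_{\mathrm{F}} \leq \|\widetilde{\mathbf{P}} - \mathbf{P}\|_{\mathrm{F}}$ (the truncation step of \Cref{thpca} is a contraction toward $[0,1]$, and $\mathbf{P} \in [0,1]^{n_1\times n_2\times L}$), this gives the conditional bound $\P\{\|\widehat{\mathbf{P}} - \mathbf{P}\|_{\mathrm{F}}^2 \leq C(d^2m + n_1 d + n_2 d + Lm)\,|\,\mathcal{A}\} \geq 1 - C_1 e^{-c_1(n_1\vee n_2\vee L)}$. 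Finally, as in Step 4, choosing $\epsilon = C_1(n_1\vee n_2\vee L)^{-c_1}$ (which satisfies the requirement on $\epsilon$ by \Cref{ass_X_Y-app}$(a)$) and averaging over $\mathcal{A}$ via $\P(E) \leq \P(E\,|\,\mathcal{A}) + \P(\mathcal{A}^c)$ yields the unconditional statement.

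The only genuinely non-routine point, and hence the main thing to be careful about, is the directed analogue of \Cref{fix_lemma1}: one must check that the peeling arguments using Kronecker products ($U_X \otimes U_Q$, $D_X \otimes D_Q$, $V_X \otimes V_Q$, and the $\mathbf{S}$-specific computations in \Cref{lemma_smallest_singular_value_2}) still go through when the two spatial modes carry genuinely different orthonormal matrices $U_X \neq U_Y$, $V_X \neq V_Y$ of the same rank $d$ — which they do, since all those lemmas only use orthonormality of columns and the combinatorial structure of $\mathbf{S}$, not any relation between the first two factors. Everything else is a verbatim transcription of the undirected proof with the bookkeeping $n \rightsquigarrow (n_1, n_2)$ and the symmetric treatment of $X$ and $Y$.
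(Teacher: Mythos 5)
Your proposal is correct and matches the paper's intended argument: the appendix explicitly states that the directed results follow from exactly the same proofs as their undirected counterparts, and your adaptation (two-matrix concentration event, directed analogue of the key rank/singular-value lemma with $\sigma(X)\sigma(Y)$ replacing $\sigma^2(X)$, then the directed form of the auxiliary theorem and the same simplification and deprobabilization steps) is precisely that transcription. Your flagged check that the Kronecker-product peeling only uses orthonormality, not equality of the two spatial factors, is the right point to verify and it indeed goes through.
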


\Cref{def-mrdpg-dynamic} is the counterpart of \Cref{def-umrdpg-dynamic}.
\begin{definition}[Dynamic directed multilayer random dot product graphs, dynamic MRDPG]\label{def-mrdpg-dynamic}
Let $\{W_{(l)}(t)\}_{l \in [L], t\in \mathbb{N}^*} \subset \mathbb{R}^{d\times d}$ be weight matrix sequence and $\{(F_t, \widetilde{F}_t)\}_{t \in \mathbb{N}^*}$ be latent position distributions pair sequence.  We say that $\{\mathbf{A}(t)\}_{t \in \mathbb{N}^*}$ is a sequence of adjacency tensors of dynamic directed MRDPGs with random latent positions, if they are mutually independent and
\[
    \mathbf{A}(t) \sim \mathrm{directed}\mbox{-}\mathrm{MRDPG}(F_t, \widetilde{F}_t, \{W_{(l)}(t)\}_{l \in [L]}, n_1, n_2), \quad t \in \mathbb{N}^*,
\]
as defined in \Cref{mrdpg}.
\end{definition}

\Cref{ass_no_change_point-app} is the counterpart of \Cref{ass_no_change_point}.
\begin{assumption}[No change point]\label{ass_no_change_point-app}
Let $\{\mathbf{A}(t)\}_{t \in \mathbb{N}^*}$ be defined in \Cref{def-mrdpg-dynamic}.  Assume that 
\[
    F_1 = F_2 = \cdots, \quad \widetilde{F}_1 = \widetilde{F}_2 = \cdots \quad \mbox{and} \quad \{W_{(l)}(1)\}_{l = 1}^L =  \{W_{(l)}(2)\}_{l = 1}^L = \cdots.
\]    
\end{assumption}     

\Cref{def-cusum-S-app} is the counterpart of \Cref{def-cusum-S}.
\addtocounter{definition}{1}  
\begin{subdefinition}\label{def-cusum-S-app}
For any $k \in [n_1+n_2-1]$, let
\[
    \mathcal{S}_k = \begin{cases}
        \big\{(i, i + k - 1): \, i \in [(n_2 + 1 - k) \wedge n_1]\big\}, & k \in [n_2],\\
        \big\{(i + k - n_2, i): \, i \in [(n_1 + n_2 - k) \wedge n_2)]\big\}, & k \in [n_1 + n_2 -1] \setminus [n_2].
    \end{cases}
\]
\end{subdefinition}

\Cref{def-cusum-app} is the counterpart of \Cref{def-cusum}.

\addtocounter{definition}{-1}
\begin{definition}\label{def-cusum-app}
For any $\alpha \in (0, 1)$, let $\widehat{D}_{s, t} = \max_{z \in \mathcal{Z}_{\alpha, t}}|\widehat{D}_{s, t}(z)|$, where
\begin{itemize}
    \item $\mathcal{Z}_{\alpha, t} = \{z_v\}_{v = 1}^{M_{\alpha, t}} \sim \mathrm{Uniform}([0, 1]^L)$  is a collection of independent random vectors  with 
        \[
            M_{\alpha, t}  =   C_M L^{-L}\left(t ( n_1 \wedge n_2) \right) ^{L/2}   \left( \log((n_1 \vee n_2 \vee t )/\alpha)   \right)^{-L/2+1},
        \]
    for $C_M > 0$ an absolute constant;
    \item for $z \in [0, 1]^L$, 
        \[
            \widehat{D}_{s, t}(z) =  \bigg( \sum_{k \in [n_1 + n_2 -1]} \widetilde{S}_k  \bigg)^{-1} h^{-L} \sum_{k \in [n_1 + n_2 -1]} \widetilde{S}_k \bigg\{\mathcal{K} \left(\frac{z - \widehat{P}^{0, s}_{\mathcal{S}_k, :}}{h} \right) - \mathcal{K} \left(\frac{z - \widehat{P}^{s, t}_{\mathcal{S}_k, :}}{h} \right) \bigg\},
        \]
        with $\{\mathcal{S}_k, \widetilde{S}_k = |\mathcal{S}_k|\}_{k \in [n_1 + n_2 -1]}$ defined in \Cref{def-cusum-S-app}; and
    \item for any integer pair $(s, t)$, $ 0 \leq s < t$, with $\mathrm{HOSVD}$ detailed in \Cref{alg-hosvd},
        \[
            \widehat{\mathbf{P}}^{s, t} = \mathrm{HOSVD} \bigg((t-s)^{-1}\sum_{u=s+1}^t \mathbf{A}(u)\bigg) \quad \mbox{and} \quad \widehat{\mathbf{P}}^{s, t}_{\mathcal{S}_k, :} = \widetilde{S}_k^{-1} \sum_{(i, j) \in \mathcal{S}_k } \widehat{\mathbf{P}}^{s, t}_{i, j, :},
        \]
\end{itemize}

\end{definition}

\Cref{snr_ass_change_point-app} is the counterpart of \Cref{snr_ass_change_point}.

\begin{assumption}[Signal-to-noise ratio condition] \label{snr_ass_change_point-app}
Under \Cref{ass_X_Y}, for any $\alpha \in (0, 1)$, assume that there exists a large enough absolute constant $C_{\mathrm{SNR}} > 0$ such that
\[
    \kappa\sqrt{\Delta}  > C_{\mathrm{SNR}} h^{-L-1} \sqrt{\frac{(L^2 \vee d) \log \{(n_1 \vee n_2 \vee \Delta)/\alpha \}}{  n_1 \wedge n_2}}.
\]   
\end{assumption}

\Cref{main_theorem-app} is the counterpart of \Cref{main_theorem}.
\begin{theorem}\label{main_theorem-app}
Let $\widehat{\Delta}$ be the output of the dynamic multilayer random dot product graph online change point detection algorithm detailed in \Cref{online_hpca}, with 
\begin{enumerate}[(a)]
    \item [$(a)$] adjacency tensor sequence $\{\mathbf{A}(t)\}_{t \in \mathbb{N}^*}$ be defined in \Cref{def-mrdpg-dynamic}, satisfying \Cref{ass_X_Y-app};    
    \item [$(b)$] tolerance level $\alpha \in (0, 1)$ and threshold sequence $\{\tau_{s, t}\}_{1 \leq s < t}$ with
    \[
        \tau_{s, t} = C_{\tau}h^{-L-1} \sqrt{\frac{ (L^2 \vee d) \log( (n_1 \vee n_2  \vee t) / \alpha)}{n_1 \wedge n_2}}\left( \frac{1}{\sqrt{s}}+ \frac{1}{\sqrt{t- s}}\right), \quad 1 < s < t,    
    \]
    where $C_{\tau} > 0$ is an absolute constant; and
    \item [$(c)$] scan statistic sequence $\{\widehat{D}_{s, t}\}_{1 \leq s < t}$ is defined in \Cref{def-cusum-app} with the kernel function $\mathcal{K}(\cdot)$ satisfying \Cref{kernel_function_ass}.
\end{enumerate}

We have the following.
\begin{enumerate}[(i)]
    \item[$(i)$] Under \Cref{ass_no_change_point-app}, it holds that $\mathbb{P}_{\infty} \{\widehat{\Delta} < \infty\} \leq \alpha$.
    \item[$(ii)$] Under Assumptions~\ref{ass_change_point} and \ref{snr_ass_change_point-app}, with $ C_\epsilon > 0$ being an absolute constant, it holds that
    \[
        \mathbb{P}_{\Delta} \left\{\Delta < \widehat{\Delta} \leq \Delta + C_\epsilon  \frac{(L^2 \vee d) \log \left((n_1 \vee n_2 \vee \Delta) /\alpha \right)}{\kappa^2 h^{2L+2} (n_1 \wedge n_2)} \right\} \geq 1 - \alpha.
    \]
\end{enumerate}
\end{theorem}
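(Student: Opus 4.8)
The plan is to mirror, essentially verbatim, the proof of \Cref{main_theorem}, since \Cref{main_theorem-app} is its directed counterpart and the underlying estimation/concentration machinery carries over with only bookkeeping changes. First I would establish the directed analogues of the auxiliary lemmas: \Cref{random_theorem-app} (already stated) plays the role of \Cref{random_theorem}; and I would note that Lemmas~\ref{lemma_psi}, \ref{lemma_z}, \ref{lemma_no_change_point}, \ref{lemma_one_change_point}, \ref{lemm_eigen}, \ref{prop_1} and \ref{lemma_S_u} all have directed versions in which every occurrence of $n$ is replaced by $n_1 \wedge n_2$ in the sample-complexity factors and by $n_1 \vee n_2$ in the union-bound/log factors, with $\mathcal{S}_k$ now defined as in \Cref{def-cusum-S-app}. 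In particular, \Cref{lemma_S_u} must be re-proved for the rectangular case: the disjoint blocks $\mathcal{S}_k$, $k \in [n_1+n_2-1]$, satisfy $\widetilde S_k \le n_1 \wedge n_2$ and $\sum_k \widetilde S_k \asymp n_1 n_2$, which is a straightforward counting argument. The tensor representation $\mathbf{P} = \mathbf{S}\times_1 X \times_2 Y \times_3 Q$ now has distinct head/tail factors, so in the eigen-gap lemma the second-moment matrices $\Sigma^{s,t}_X$ and $\Sigma^{s,t}_Y$ are handled symmetrically using \Cref{ass_X_Y-app}$(a)$--$(b)$, and \Cref{fix_lemma1} applies with $X$ and $Y$ in modes $1$ and $2$ respectively (this is exactly the directed/undirected distinction already accommodated in \Cref{thm-han-4.1}).

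Having the directed auxiliary lemmas, I would then run the two-part argument. Define the good events
\begin{align*}
    \mathcal{B}_1 = & \bigg\{ \forall\, 1 \le s < t \text{ with no change point in }[1,s)\text{ or }[s+1,t):\\
    & \quad \sup_{z \in [0,1]^L} \big|\widehat{D}_{s,t}(z) - \widetilde{D}_{s,t}(z)\big| \le C h^{-L-1}\sqrt{\tfrac{(L^2\vee d)\log\{(n_1\vee n_2\vee t)/\alpha\}}{n_1\wedge n_2}}\Big(\tfrac{1}{\sqrt{s}}+\tfrac{1}{\sqrt{t-s}}\Big)\bigg\},
\end{align*}
and the analogous $\mathcal{B}_2$ controlling $\sup_z|\widetilde D_{s,t}(z)| - \max_{m}|\widetilde D_{s,t}(z_m)|$, with the grid size $M_{\alpha,t}$ as in \Cref{def-cusum-app}. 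The directed versions of Lemmas~\ref{lemma_psi} and \ref{lemma_z} give $\mathbb{P}\{\mathcal B_1 \cap \mathcal B_2\} \ge 1-\alpha$. On this event, for $1 \le s < t \le \Delta$ the directed \Cref{lemma_no_change_point} bounds $\sup_z|\widetilde D_{s,t}(z)|$ by a term dominated by $\tau_{s,t}$, so no false alarm is triggered before $\Delta$; combined with \Cref{ass_no_change_point-app} this yields part~$(i)$, $\mathbb{P}_\infty\{\widehat\Delta<\infty\}\le\alpha$, and also shows $(\widehat\Delta<\Delta)\subset(\mathcal B_1^c\cup\mathcal B_2^c)$ under \Cref{ass_change_point}. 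For the delay bound, I set $\widetilde\Delta = \Delta + C_\epsilon h^{-2L-2}\tfrac{(L^2\vee d)\log\{(n_1\vee n_2\vee\Delta)/\alpha\}}{\kappa^2(n_1\wedge n_2)}$, apply the directed \Cref{lemma_one_change_point} to lower bound $\sup_z|\widetilde D_{\Delta,\widetilde\Delta}(z)|$, and on $\mathcal B_1\cap\mathcal B_2$ pass from $\sup_z$ to $\max_m$ on the grid, losing at most $\varepsilon_{1,\Delta,\widetilde\Delta}+\varepsilon_{2,\Delta,\widetilde\Delta}$; choosing $C_{\mathrm{SNR}}$ large and invoking \Cref{snr_ass_change_point-app} gives $\max_m|\widehat D_{\Delta,\widetilde\Delta}(z_m)| > \tau_{\Delta,\widetilde\Delta}$, hence $\widehat\Delta \le \widetilde\Delta$, which is part~$(ii)$.

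The only genuinely new work is the directed version of \Cref{prop_1}, the uniform low-rank approximation bound for averages of rectangular adjacency tensors: one repeats the Davis--Kahan plus sub-Gaussian spectral-norm argument on $\mathcal M_v(\bar{\mathbf A}^{s,t})\mathcal M_v(\bar{\mathbf A}^{s,t})^\top$ for $v\in[3]$, using the eigen-gap bounds $\sigma_1^2-\sigma_2^2 \gtrsim d(n_1\wedge n_2)^2\|\theta_X\|^2\|\theta_Y\|^2\sigma_1(Q)^2$ (etc.) now driven by \emph{both} $\theta_X$ and $\theta_Y$, and the rank-one net bound of Lemma~E.5 in \cite{han2022optimal}. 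I expect this step — keeping track of which of $n_1,n_2$ appears where in the Davis--Kahan denominators and in the $\ell_1$-Lipschitz/Talagrand concentration for the entries of $\mathbf Z = \bar{\mathbf A}^{s,t} - \widetilde{\widehat{\mathbf P}}^{s,t}$ — to be the main (though routine) obstacle; everything downstream is a transcription of \Cref{sec-C3} with $n \rightsquigarrow n_1\wedge n_2$ in rates and $n \rightsquigarrow n_1\vee n_2$ in logarithmic/union-bound factors. As the excerpt notes, the directed case is strictly easier than the undirected one because the blocks $\mathcal S_k$ no longer need the extra distinctness constraint $i\neq j'$, $j\neq i'$ from \eqref{eq-mathcal-S-def_u}, so independence across $\mathcal S_k$ comes essentially for free.
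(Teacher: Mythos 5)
Your proposal matches the paper's treatment: the paper gives no separate proof of this theorem, stating only that the directed results "follow from exactly the same proofs of their counterparts," i.e.\ a transcription of the proof of the undirected result (\Cref{main_theorem} together with the auxiliary lemmas of \Cref{proof-sec4}) with $n$ replaced by $n_1\wedge n_2$ in the rates, $n_1\vee n_2$ in the logarithmic factors, and $\mathcal{S}_k$ as in \Cref{def-cusum-S-app}, which is exactly what you carry out. Your additional observations (the rectangular counting $\sum_k \widetilde S_k \asymp n_1 n_2$, the eigen-gap now driven by both $\theta_X$ and $\theta_Y$, and the simpler independence structure in the directed case) are consistent with the paper and fill in the bookkeeping it omits.
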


\section[]{Additional details and results in Section \ref{sec-numerical}} \label{supp-sec-num}

\subsection{Additional results in Section \ref{change_point_section_f}}\label{simu-sec5.2}
The simulation results for \textbf{Scenarios 1, 2} and \textbf{3} in \Cref{change_point_section_f} are presented in Tables~\ref{table_fix_supp-1}, \ref{table_fix_supp-2} and~\ref{table_fix_supp-3}, respectively.  

We also include a sensitivity analysis of $C_{\tau}$ in \Cref{table_fix_supp-4}.  Instead of assuming access to training data to tune $C_{\tau}$, we directly conduct analysis based on a range of $C_{\tau}$.  In terms of PFA and PFN, our proposed method along with all competitors are robust against the choice of $C_{\tau}$.  In terms of the detection delay, unfortunately, we as well as all the competitors are sensitive.

\begin{table}[ht]
\caption{Scenario 1 in \Cref{change_point_section_f}: $n$, the number of nodes; $L$, the number of layers; PFA, the proportion of false alarm; PFN, the proportion of false negatives; Delay, the average of the detection delays; 
Alg.~\ref{online_hpca}-Def.~\ref{def-cusum-f}, on-HOSVD, on-TWIST, on-UASE and  on-COA, \Cref{online_hpca} with the scan statistics defined in \Cref{def-cusum-f} and the tensor estimation routine using \Cref{thpca}, HOSVD \citep{de2000multilinear},  TWIST \citep{jing2021community}, UASE \citep{jones2020multilayer} and  COA \citep{macdonald2022latent} respectively; $k$-NN, \cite{chen2019sequential}.}
\begin{center}
\begin{tabular}{ccccccccc}
\hline
   $n$ & $L$ & Metric & Alg.~\ref{online_hpca}-Def.~\ref{def-cusum-f} & on-HOSVD & on-TWIST&  on-UASE & on-COA & $k$-NN \\ \hline
    50 &  3  & PFA   & 0.07 & 0.08 & 0.10  & 0.07 &  0.00 & 0.01   \\
    &        & PFN   & 0.00 & 0.00 & 0.00  & 0.00 &  0.99 & 0.00   \\
    &        & Delay & 1.66 & 1.86 & 0.59 & 3.55 &  48.54 & 4.51  \\ [3pt]
    25 & 3 & PFA  & 0.00 & 0.00 &  0.01  & 0.06 &  0.06 & 0.02   \\
 &        & PFN   & 0.00 & 0.00 & 0.00  & 0.00 &  0.90 & 0.00  \\
 &        & Delay & 2.60 & 2.77 &  2.59  & 3.66 &  48.14 & 5.21  \\ [3pt]
 75& 3   & PFA    & 0.03 & 0.03 & 0.04 &  0.01 &  0.02 & 0.01 \\
 &        & PFN   & 0.00 & 0.00 & 0.00  &  0.00 &  0.02 & 0.00 \\
 &        & Delay & 1.00 & 1.00 & 0.90 &  1.00 &  17.52 & 2.27 \\ [3pt]
 50 & 2  & PFA &  0.05 &  0.04 &  0.07  &  0.02& 0.06 & 0.00\\
 &        & PFN & 0.00  & 0.00 & 0.00& 0.00   & 0.00 & 0.00 \\
 &       & Delay & 2.07 & 2.20 & 2.20 & 3.06  & 0.00 & 4.09  \\ [3pt]
 50 & 4  & PFA    & 0.00 & 0.00 &  0.17 &  0.01 &  0.00 & 0.02 \\
 &        & PFN   & 0.00 & 0.00 & 0.00 &  0.00 &  0.98 & 0.00\\
 &       &  Delay & 1.00 & 1.00 & 0.00 &  1.00 &  48.27 & 2.36\\\hline
\end{tabular}\label{table_fix_supp-1}
\end{center}
\end{table}

\begin{table}[ht]
\caption{Scenario 2 in \Cref{change_point_section_f}: $n$, the number of nodes; $L$, the number of layers; PFA, the proportion of false alarm; PFN, the proportion of false negatives; Delay, the average of the detection delays; 
Alg.~\ref{online_hpca}-Def.~\ref{def-cusum-f}, on-HOSVD, on-TWIST, on-UASE and  on-COA, \Cref{online_hpca} with the scan statistics defined in \Cref{def-cusum-f} and the tensor estimation routine using \Cref{thpca}, HOSVD \citep{de2000multilinear},  TWIST \citep{jing2021community}, UASE \citep{jones2020multilayer} and  COA \citep{macdonald2022latent} respectively; $k$-NN, \cite{chen2019sequential}.}
\begin{center}
\begin{tabular}{ccccccccc}
\hline
   $n$ & $L$ & Metric & Alg.~\ref{online_hpca}-Def.~\ref{def-cusum-f} & on-HOSVD & on-TWIST & on-UASE & on-COA & $k$-NN \\ \hline
50 & 3  & PFA & 0.13  & 0.11  & 0.08 &0.07 &  0.11 &  0.99    \\
 &        & PFN & 0.00 & 0.00 & 0.00 &0.00 & 0.79 &   0.00   \\
 &       & Delay & 0.99 & 1.00 & 0.14 & 1.01 & 46.36 &  2.00\\ [3pt]
25 & 3 & PFA  & 0.05  & 0.06 & 0.16  & 0.05 & 0.18  & 0.99  \\
 &     & PFN  & 0.00 & 0.00 & 0.00 & 0.00  & 0.75  & 0.00  \\
 &     & Delay& 2.00 & 2.01 & 1.94 & 2.69 &  46.84 & 4.00  \\ [3pt]
 75& 3   & PFA   & 0.01  & 0.01 & 0.06  & 0.01  &  0.08  & 0.99  \\
 &        & PFN   & 0.00  & 0.00 & 0.00 & 0.00 & 0.18  & 0.00  \\
 &        & Delay & 0.03 & 0.05  & 0.00 & 0.99  & 28.26  & 2.00   \\[3pt]
50 &  2  & PFA   &  0.04  & 0.04 & 0.21  &0.07  & 0.02   & 0.99   \\
&        & PFN   & 0.00 & 0.00 & 0.00 & 0.00 & 0.89  &  0.00  \\
 &        & Delay  & 0.96 & 0.99 & 0.00 & 1.00  & 46.18   &  2.00  \\
    [3pt]
 50 & 4  & PFA   & 0.04 & 0.01  & 0.23  & 0.02 & 0.06  &   0.99 \\
 &        & PFN   & 0.00 & 0.00 & 0.00 & 0.00 &  0.78 &  0.00 \\
 &       &  Delay & 0.93 & 0.99 & 0.00  & 1.00 &  43.80  &  2.00 \\\hline
\end{tabular}\label{table_fix_supp-2}
\end{center}

\end{table}

\begin{table}[ht]
\caption{Scenario 3 in \Cref{change_point_section_f}: $n$, the number of nodes; $L$, the number of layers; PFA, the proportion of false alarm; PFN, the proportion of false negatives; Delay, the average of the detection delays; 
Alg.~\ref{online_hpca}-Def.~\ref{def-cusum-f}, on-HOSVD, on-TWIST, on-UASE and  on-COA, \Cref{online_hpca} with the scan statistics defined in \Cref{def-cusum-f} and the tensor estimation routine using \Cref{thpca}, HOSVD \citep{de2000multilinear},  TWIST \citep{jing2021community}, UASE \citep{jones2020multilayer} and  COA \citep{macdonald2022latent} respectively; $k$-NN, \cite{chen2019sequential}.}
\begin{center}
\begin{tabular}{ccccccccc}
\hline
   $n$ & $L$ & Metric & Alg.~\ref{online_hpca}-Def.~\ref{def-cusum-f} & on-HOSVD  & on-TWIST & on-UASE & on-COA & $k$-NN \\ \hline
    50 &  3  & PFA   & 0.07  & 0.08  & 0.10  & 0.07  & 0.00   & 0.01  \\
    &        & PFN   & 0.00 &  0.00 & 0.00 & 0.00 & 0.00  &  0.00 \\
    &        & Delay  & 1.40 & 1.46 & 0.40 & 3.31  & 0.73  &  27.36 \\
    [3pt]
25 & 3 & PFA  & 0.00 & 0.00 &  0.01  &0.06  & 0.06  & 0.02   \\
&      & PFN   & 0.00 & 0.00 & 0.00 & 0.00 & 0.00  & 0.06  \\
&      & Delay & 7.32 & 8.19 & 9.94 & 15.97 & 2.86  & 32.12  \\ [3pt]
75& 3   & PFA    & 0.03 & 0.03 & 0.14 &0.01  & 0.02  & 0.01  \\
&        & PFN   & 0.00 & 0.00 & 0.00 &0.00  & 0.00  & 0.00  \\
&        & Delay & 1.00 & 1.00 & 0.01 &1.02  & 0.04  & 23.96  \\[3pt]
50 & 2  & PFA & 0.05 & 0.04  & 0.10  & 0.02 & 0.06  & 0.00   \\
&        & PFN & 0.00 & 0.00 & 0.00 & 0.00 & 0.00  & 0.00  \\
&       & Delay & 1.18 & 1.18 & 1.06 & 1.42  & 0.53  & 20.90  \\ [3pt]
 50 & 4  & PFA   & 0.00  & 0.00  & 0.26  & 0.01 &  0.00  & 0.02   \\
 &        & PFN   &0.00  & 0.00 & 0.00  & 0.00 &  0.00  & 0.00  \\
 &       &  Delay &  1.28  &  1.27 & 0.04  & 3.52  & 0.29   & 25.65   \\\hline
\end{tabular}\label{table_fix_supp-3}
\end{center}

\end{table}

\begin{table}[ht]
\caption{Scenario 1 in \Cref{change_point_section_f} with no additional sample, the number of nodes $n=50$ and the number of layers $L = 2$:  $L$, the number of layers; $C_{\tau}$, the constant related to the thresholds;   PFA, the proportion of false alarm; PFN, the proportion of false negatives; Delay, the average of the detection delays; Alg.~\ref{online_hpca}-Def.~\ref{def-cusum-f}, on-HOSVD, on-TWIST, on-UASE and on-COA, \Cref{online_hpca} with the scan statistics defined in \Cref{def-cusum-f} and the tensor estimation routine using \Cref{thpca}, HOSVD \citep{de2000multilinear},  TWIST \citep{jing2021community}, UASE \citep{jones2020multilayer} and  COA \citep{macdonald2022latent} respectively.}
\begin{center}
\begin{tabular}{ccccccccc}
\hline
$C_{\tau}$ & Metric & Alg.~\ref{online_hpca}-Def.~\ref{def-cusum-f} & on-HOSVD & on-TWIST&  on-UASE & on-COA \\ \hline
0.16 & PFA   & 0.00  & 0.00 & 0.02  & 0.03  & 0.00   \\
     & PFN  & 0.00 & 0.00 &  0.00 & 0.00 &  0.00  \\
     & Delay & 4.18 & 4.07  &1.90 & 1.10 &  11.57   \\ [3pt]
0.17 & PFA  & 0.00 & 0.00  & 0.05  & 0.07  & 0.00   \\ 
     & PFN  & 0.00 &0.00  & 0.00  & 0.00 &  0.00  \\
     & Delay &6.04  &5.91  &  4.07 & 2.63 &  14.12  \\ [3pt]
0.18 & PFA   & 0.00  & 0.00  & 0.04 &0.09  & 0.00     \\ 
     & PFN  & 0.00  & 0.00 &0.00 &0.00  &  0.00   \\
     & Delay& 8.09 & 8.02  & 6.14 &4.59 &  17.07    \\ [3pt]
0.19  &PFA & 0.00 & 0.00  & 0.04 &0.10  & 0.00     \\
     & PFN & 0.00 & 0.00 & 0.00 &0.00  & 0.00    \\
     & Delay &10.46 & 10.33 & 8.10 & 6.93  & 20.71    \\[3pt]
0.20  & PFA  &0.00  & 0.00  & 0.07 & 0.12  &0.00      \\
     & PFN   &0.00  & 0.00 & 0.00 &0.00 &  0.00    \\
     & Delay &13.13 & 13.06 &10.55 & 9.51  & 24.94  \\\hline
\end{tabular}\label{table_fix_supp-4}
\end{center}
\end{table}

\subsection{Additional results in Section \ref{change_point_section}}\label{simu-sec5.3}
The simulation results based on \textbf{Scenarios 1}, \textbf{2} and \textbf{3} in \Cref{change_point_section} are collected in Tables~\ref{table_1_random_sbm_supp}, \ref{table_undirected_diri_supp} and \ref{table_directed_diri_supp}, respectively.

\begin{table}[ht]
\caption{Scenario 1 in \Cref{change_point_section}: $n$, the number of nodes; $L$, the number of layers; PFA, the proportion of false alarm; PFN, the proportion of false negatives; Delay, the average of the detection delays;  Alg.~\ref{online_hpca}-Def.~\ref{def-cusum}, on-TH-PCA, on-TWIST,
on-UASE and on-COA, \Cref{online_hpca} with the scan statistics defined in \Cref{def-cusum} and the tensor estimation routine using \Cref{alg-hosvd},   \Cref{thpca}, TWIST \citep{jing2021community}, UASE \citep{jones2020multilayer}  and COA \citep{macdonald2022latent}, respectively; $k$-NN, \cite{chen2019sequential}.}
\begin{center}
\begin{tabular}{ccccccccc}
\hline
   $n$ & $L$ & Metric & Alg.~\ref{online_hpca}-Def.~\ref{def-cusum} & on-TH-PCA  & on-TWIST & on-UASE  & on-COA & $k$-NN \\ \hline
    50 &  3  & PFA   & 0.01 & 0.00 & 0.04   & 0.03 &  0.00 & 0.03   \\
    &        & PFN   & 0.00 & 0.00 & 0.00 & 0.00 &  1.00 & 0.00   \\
    &        & Delay & 0.00 & 0.00 & 0.00 & 0.00 &  49.00 & 4.34  \\ [3pt]
    25 & 3 & PFA   & 0.00 & 0.01 & 0.04 & 0.06 &  0.06 & 0.01   \\
 &         & PFN   & 0.00 & 0.00 & 0.00& 0.00 &  0.90 & 0.00  \\
 &         & Delay & 0.41 & 0.40 & 0.38 & 0.28 &  48.41 & 5.64  \\ [3pt]
 75& 3    & PFA   & 0.04 & 0.04  & 0.08 & 0.02 & 0.01  & 0.01 \\
 &        & PFN   & 0.00 & 0.00  & 0.00 & 0.00 &  0.00 & 0.00 \\
 &        & Delay & 0.00 & 0.00  & 0.00  & 0.00 & 6.61  & 2.19 \\ [3pt]
 50 & 2  & PFA    & 0.01 & 0.01  &  0.05 & 0.00 &  0.04 & 0.05\\
 &       & PFN    & 0.00 & 0.00  & 0.00 & 0.00 &  0.00 & 0.00 \\
 &       & Delay  & 0.02 & 0.01  & 0.01& 0.06 &  0.00 & 4.77  \\ [3pt]
 50 & 4   & PFA    & 0.02 & 0.02  &0.07 & 0.01 &  0.03 & 0.00 \\
 &        & PFN    & 0.00 & 0.00  & 0.00& 0.00 &  0.91 & 0.00 \\
 &        &  Delay & 0.00 & 0.00  &0.00 &  0.00 &  47.53 & 2.32 \\\hline
\end{tabular}\label{table_1_random_sbm_supp}
\end{center}

\end{table}

\begin{table}[ht]
\caption{Scenario 2 in \Cref{change_point_section}: $n$, the number of nodes; $L$, the number of layers; $d$, the dimension of the latent position; PFA, the proportion of false alarm; PFN, the proportion of false negatives; Delay, the average of the detection delays;   Alg.~\ref{online_hpca}-Def.~\ref{def-cusum}, on-TH-PCA, on-TWIST,
on-UASE and on-COA, \Cref{online_hpca} with the scan statistics defined in \Cref{def-cusum} and the tensor estimation routine using \Cref{alg-hosvd},   \Cref{thpca}, TWIST \citep{jing2021community}, UASE \citep{jones2020multilayer}  and COA \citep{macdonald2022latent}, respectively; $k$-NN, \cite{chen2019sequential}.}
\begin{center}
\begin{tabular}{cccccccccccc}
\hline
$n$ & $L$ & $d$ & Metric & Alg.~\ref{online_hpca}-Def.~\ref{def-cusum} & on-TH-PCA & on-TWIST &  on-UASE & on-COA & $k$-NN \\ \hline
50  & 3 & 4 & PFA   & 0.05 & 0.04 & 0.05 &  0.02     & 0.13 & 0.02 \\
 &  &           & PFN  & 0.00 & 0.00 & 0.00   &  0.00 & 0.00 & 0.81\\
 &  &           & Delay& 1.56 & 1.56 & 1.58 &  1.91 & 0.70 & 47.87 \\ [3pt]
25  & 3 & 4 & PFA &  0.08   & 0.05 &  0.11 &  0.06     &   0.14  & 0.01   \\
 &  &           & PFN &  0.55  &  0.71 & 0.37 &  0.71  &   0.58  & 0.98   \\
 &  &           & Delay& 39.91 & 49.92 & 33.26 & 44.55  &   40.22  & 49.67  \\ [3pt]
 75  & 3 & 4  & PFA & 0.01  & 0.01 & 0.01 &  0.02    & 0.11 & 0.01  \\
 &  &           & PFN &  0.00  &  0.00 & 0.00 &  0.00 & 0.00 & 0.98   \\
 &  &           & Delay & 3.20 & 3.43 &  4.45 &  7.01 & 5.74 & 49.74   \\ [3pt]
50  & 2 & 4  & PFA & 0.14  & 0.08 & 0.14 &  0.08     &  0.05 & 0.04   \\
 &  &           & PFN & 0.03  & 0.05 & 0.03\textbf{} &  0.24  &  0.01 & 0.96   \\
 &  &           & Delay& 13.70& 16.03 & 14.16 & 25.43   &  15.81& 50.00  \\ [3pt]
 50  & 4 & 4  & PFA& 0.02  & 0.04 & 0.07 &  0.06&     0.10  & 0.00   \\
 &  &            & PFN & 0.00 & 0.00 & 0.00 &  0.00&  0.01  & 0.71   \\
 &  &            & Delay& 5.67& 5.60 & 5.20 &  8.74&  9.68  & 40.02   \\[3pt]
50  & 3 & 2  & PFA & 0.00 & 0.00 & 0.00 &  0.00     & 0.00  & 0.00   \\
 &  &           & PFN & 0.00 & 0.00  & 0.00&  0.00  & 0.00  & 0.24   \\
 &  &           & Delay& 0.86 & 0.86 & 0.53 &  0.95 & 1.13  & 40.26   \\[3pt]
50  & 3 & 6 & PFA   & 0.00 & 0.00 & 0.01 &  0.00    & 0.01 & 0.02   \\
 &  &          & PFN   &  0.00& 0.00 & 0.00 &  0.00 & 0.00 & 0.43\\
 &  &          & Delay & 5.71 & 6.01 & 5.78 &  12.24 & 7.65 & 42.61  \\ \hline
\end{tabular}
\label{table_undirected_diri_supp}
\end{center}
\end{table}

\begin{table}[ht]
\caption{Scenario 3 in \Cref{change_point_section}: $n_1$, $n_2$, the number of nodes; $L$, the number of layers; $d$, the dimension of the latent position; PFA, the proportion of false alarm; PFN, the proportion of false negatives; Delay, the average of the detection delays;   Alg.~\ref{online_hpca}-Def.~\ref{def-cusum}, on-TH-PCA, on-TWIST,
on-UASE and on-COA, \Cref{online_hpca} with the scan statistics defined in \Cref{def-cusum} and the tensor estimation routine using \Cref{alg-hosvd},   \Cref{thpca}, TWIST \citep{jing2021community}, UASE \citep{jones2020multilayer}  and COA \citep{macdonald2022latent}, respectively; $k$-NN, \cite{chen2019sequential}; NA, not applicable.}
\begin{center}
\begin{tabular}{ccccccccccc}
\hline
$n_1$ & $n_2$ & $L$ & $d$ & Metric & Alg.~\ref{online_hpca}-Def.~\ref{def-cusum} & on-TH-PCA & on-TWIST  &  on-UASE & on-COA & $k$-NN \\ \hline
50 & 50 & 3 & 4 & PFA & 0.00 & 0.00 & 0.00 &  0.00&  0.02  & 0.00 \\
 & & &          & PFN & 0.00 & 0.00 & 0.00 &  0.00&  0.00  & 0.45\\
 & & &          & Delay&0.85 & 0.86 & 0.90&  1.14&  0.50 & 42.56 \\ [3pt]
25 & 50 & 3 & 4 & PFA &  0.08  & 0.08 & NA &  0.06 & NA  &0.03   \\
 & & &          & PFN &   0.01 &  0.01 & NA  &  0.07 & NA  &0.95   \\
 & & &          & Delay&  15.24&  15.08 & NA & 21.95 & NA  & 49.08  \\ [3pt]
 75 & 50 & 3 & 4  & PFA & 0.12  & 0.13 & NA  &  0.04 & NA  & 0.03  \\
 & & &            & PFN &  0.00 &  0.00 &NA  &   0.00& NA  & 0.95   \\
 & & &            & Delay & 0.95& 0.97 & NA &   2.92& NA  & 49.54   \\ [3pt]
50 & 50 & 2 & 4   & PFA & 0.00 & 0.00  & 0.04&  0.00& 0.00 &0.01   \\
 & & &            & PFN & 0.00 & 0.00 & 0.00 &  0.12& 0.33 &0.98   \\
 & & &            & Delay& 9.35& 10.08 & 9.97 & 28.37& 35.34 & 49.94  \\ [3pt]
 50 & 50 & 4 & 4  & PFA& 0.01  & 0.01 & 0.03 & 0.01 & 0.00 & 0.04   \\
 & & &            & PFN & 0.00 & 0.00 &0.00 & 0.00 & 0.10 & 0.43   \\
 & & &            & Delay& 3.09& 3.11 & 3.16 & 10.16& 19.99 & 28.89   \\[3pt]
50 & 50 & 3 & 2   & PFA & 0.00 & 0.00 & 0.04 &0.02 & 0.04 & 0.02   \\
 & & &            & PFN & 0.00 & 0.00 & 0.00 & 0.00& 0.00 & 0.23   \\
 & & &            & Delay& 0.29& 0.33 & 0.00 & 0.29& 0.40 & 40.03   \\[3pt]
50 & 50 & 3 & 6   & PFA & 0.00  & 0.00 & 0.25 & 0.00& 0.41 & 0.01   \\
 & & &            & PFN&  0.00  & 0.00 &0.00  & 0.00& 0.00 & 0.25\\
 & & &            & Delay & 1.75& 1.74 &0.87 & 5.69&  3.15 & 39.59  \\ \hline
\end{tabular}
\label{table_directed_diri_supp}
\end{center}
\end{table}

\subsection{Additional details and results in Section \ref{sec-real-data}}\label{simu-sec5.4}
In this section, we analyse the U.S.~air transport network data set.

\medskip
\noindent \textbf{The U.S.~air transportation network data set} includes monthly data from January 2015 to July 2022 ($91$ months) and is available at \cite{BTS2022}. Each node here represents an airport and each layer a commercial airline.   Directed edges at each layer represent direct flights of a specific commercial airline connecting two airports. 

To construct multilayer networks, we select  $L \in \{3, 4\}$ airlines with the most flights (Southwest Airlines, Sky West Airlines, United Airlines and Delta Airlines),  $n_1 \in \{50, 75\}$ airports with the most departing flights and $n_2 \in \{50, 75\}$ airports with the most arriving flights. We obtain $T = 91$ multilayer networks, each corresponding to data within a month.  Data from January 2015 to June 2017 ($30$ months) are used as the training data set, with the rest being the test set. 

Since all airports are named and the multilayer networks are labelled multilayer networks,
we assume that latent positions are fixed and adopt the same competitors and tuning parameter selection methods as those mentioned in \Cref{change_point_section_f}. 
With $\alpha = 0.05$ and with different combinations of $n$ and $L$, \Cref{online_hpca} with the scan statistics defined in \Cref{def-cusum-f}, along with on-HOSVD, on-TWIST and on-UASE, identify May 2020 as the change point, while on-COA identifies April 2020 as the change point, and $k$-NN method detects a change point of July 2018 or December 2017 depending on the combination of $(n, L)$ used. The May 2020 change point is related to the outbreak of Covid-19. According to \cite{COVID}, Covid-19 cases in the U.S.~exceeded $10,000$ on 23 March 2020 and the cumulative number of cases in the U.S.~exceeded $200,000$ on 1 April 2020.

In addition, we apply independent random permutations to nodes of multilayer networks at each time point. We then treat latent positions as random and apply the same competitors and tuning parameter selection methods as those presented in \Cref{change_point_section}. In this setting, \Cref{online_hpca} with the scan statistics defined in \Cref{def-cusum}, on-TH-PCA, on-TWIST and on-UASE all identify April 2020 as the change point, while on-COA identifies May 2020 as the change point, and $k$-NN does not identify any change points.

\end{document}